
\documentclass[10pt]{article}

%%% Load packages
%\usepackage{amsthm,amsmath}
\RequirePackage{natbib}
\RequirePackage{hyperref}
\usepackage[utf8]{inputenc} %unicode support

\usepackage{amsmath, amssymb, amsthm}
\usepackage{algorithmic}
\usepackage{url}
\usepackage{algorithm}
\usepackage{color}
\newtheorem{theorem}{Theorem}
\newtheorem{lemma}{Lemma}

\usepackage{graphicx}
\usepackage{rotating}
\usepackage[ampersand]{easylist}
\usepackage{morefloats}

\usepackage[table]{xcolor}
\definecolor{lightgray}{gray}{0.9}

\newcommand{\red}{\prec}
\newcommand{\redeq}{\preccurlyeq}
\usepackage{mathtools}
\DeclarePairedDelimiter{\ceil}{\lceil}{\rceil}
%\usepackage[applemac]{inputenc} %applemac support if unicode package fails
%\usepackage[latin1]{inputenc} %UNIX support if unicode package fails

%%% Begin ...
\begin{document}

\title{A Unifying Model of Genome Evolution Under Parsimony}
%\title{A Graph Based Model for Genome Evolution Incorporating Substitutions, Double-cut-and-join Rearrangements and Duplications}
\author{Benedict Paten$^{1*}$, Daniel R. Zerbino$^{2}$, Glenn Hickey$^1$,\\ and David Haussler$^{1,3*}$}
\date{}

\maketitle
\bibliographystyle{abbrvnat}

$^1$ Center for Biomolecular Sciences and Engineering, CBSE/ITI, UC Santa Cruz, 1156 High St, Santa Cruz, CA 95064, USA.\\
$^2$ EMBL-EBI, Wellcome Trust Genome Campus, CB10 1SD Cambridge, UK\\
$^3$ Howard Hughes Medical Institute, University of California, Santa Cruz, CA 95064, USA.\\

$^*$ To whom correspondence should be addressed\\

%%%%%%%%%%%%%%%%%%%%%%%%%%%%%%%%%%%%%%%%%%%%%%
%%                                          %%
%% The Abstract begins here                 %%
%%                                          %%
%% Please refer to the Instructions for     %%
%% authors on http://www.biomedcentral.com  %%
%% and include the section headings         %%
%% accordingly for your article type.       %%
%%                                          %%
%%%%%%%%%%%%%%%%%%%%%%%%%%%%%%%%%%%%%%%%%%%%%%

\section{Abstract}

We present a data structure called a history graph that offers a practical basis for the analysis of genome evolution. It conceptually simplifies the study of parsimonious evolutionary histories by representing both substitutions and double cut and join (DCJ) rearrangements in the presence of duplications. The problem of constructing parsimonious history graphs thus subsumes related maximum parsimony problems in the fields of phylogenetic reconstruction and genome rearrangement.
We show that tractable functions can be used to define upper and lower bounds on the minimum number of substitutions and DCJ rearrangements needed to explain any history graph. These bounds become tight for a special type of unambiguous history graph called an ancestral variation graph (AVG), which constrains in its combinatorial structure the number of operations required.  We finally demonstrate that for a given history graph $G$, a finite set of AVGs describe all parsimonious interpretations of $G$, and this set can be explored with a few sampling moves. 
%\parttitle{First part title} %if any
%Text for this section.

%%%%%%%%%%%%%%%%%%%%%%%%%%%%%%%%%%%%%%%%%%%%%%
%%                                          %%
%% The Main Body begins here                %%
%%                                          %%
%% Please refer to the instructions for     %%
%% authors on:                              %%
%% http://www.biomedcentral.com/info/authors%%
%% and include the section headings         %%
%% accordingly for your article type.       %%
%%                                          %%
%% See the Results and Discussion section   %%
%% for details on how to create sub-sections%%
%%                                          %%
%% use \cite{...} to cite references        %%
%%  \cite{koon} and                         %%
%%  \cite{oreg,khar,zvai,xjon,schn,pond}    %%
%%  \nocite{smith,marg,hunn,advi,koha,mouse}%%
%%                                          %%
%%%%%%%%%%%%%%%%%%%%%%%%%%%%%%%%%%%%%%%%%%%%%%

%%%%%%%%%%%%%%%%%%%%%%%%% start of article main body
% <put your article body there>

%%%%%%%%%%%%%%%%
%% Background %%
%%
\section{Introduction}

In genome evolution there are two interacting relationships between nucleotides of DNA resulting from two key features: DNA nucleotides descend from common ancestral nucleotides, and they are covalently linked to other nucleotides.  In this paper we explore the combination of these two relationships in a simple graph model, allowing for change by the process of replication, where a complete sequence of DNA is copied, by substitution, in which the chemical characteristics of a nucleotide are changed, and by the coordinated breaking and rematching of covalent adjacencies between nucleotides in rearrangement operations. These processes have quite different dynamics: DNA molecules replicate essentially continuously, much more rarely substitutions occur and more rarely still rearrangement operations take place. For this reason, and because of inherent complexity issues, a wealth of models, data structures and algorithms have studied these processes either in isolation or in a more limited combination. 

Such evolutionary methods generally start with a set of observed sequences in an alignment, an alignment being a partitioning of elements in the sequences into equivalence classes, each of which represents elements that are homologous, i.e. that share a recognisably recent common ancestor. Though alignments represent an uncertain inference, and though there optimisation for standard models is intractable for multiple sequences (\cite{Elias:2006fe}), we make the common assumption that the alignment is given, as efficient heuristics exist to compute reasonable genome alignments (\cite{Miller:2007gx}, \cite{Darling:2010iu}, \cite{Paten:2011fva}). 

If the sequences in an alignment only differ from one another by substitutions and rearrangements that delete subsequences, or insert novel subsequences (collectively indels), then the alignment data structure is naturally a 2D matrix.  In such a matrix, by convention, the rows represent the sequences and the columns represent the equivalence classes of elements. The sequences are interspersed with ``gap'' symbols to indicate where elements are missing from a column due to indels. From such a matrix alignment, phylogenetic methods infer a history of replication (\cite{Felsenstein:2004ws}). Such a history is representable as a phylogenetic tree, whose internal nodes represent the most recent common ancestors (MRCA) of subsets of the input sequences. To create a history including the MRCA sequences, additional rows can be added to the matrix (\cite{Blanchette:2004bi}, \cite{Kim:2007en}, \cite{Paten:2008bp}). Both the problem of imputing maximum parsimony phylogenetic trees from matrix alignments and calculating maximum parsimony MRCA sequences given a phylogenetic tree and a matrix alignment are NP-hard (\cite{DAY:1987cs}, \cite{Chindelevitch:2006vm}).

In addition to substitutions and short indels, homologous recombination operations are a common modifier of individual genomes within a population. The alignment of long DNA sequences related by these operations is also representable as a matrix. However, the history of replication of such an alignment is no longer generally representable as a single phylogenetic tree, as each column in the matrix may have its own distinct tree. To represent the MRCAs of such an alignment requires a more complex data structure, termed an ancestral recombination graph (ARG) (\cite{Song:2005vj}, \cite{Westesson:2009ja}). It is NP-hard under the infinite sites model (no repeated or overlapping changes) to determine the minimum number of homologous recombinations needed to explain the evolutionary history of a given set of sequences, and probably NP-hard under more general models (\cite{Wang:2001cs}).

Larger DNA sequences, or complete genomes, are often permuted by more complex rearrangements, such that the matrix alignment representation is insufficient. Instead, the alignment naturally forms a graph called a breakpoint graph (\cite{Bergeron:2006uj,Alekseyev:2008vv}). Assuming rearrangements are balanced (neither involving the gain or loss of material), inferring parsimonious rearrangement histories between two genomes has polynomial or better time complexity, whether based upon inversions (\cite{Hannenhalli:1999tb}), translocations (\cite{Bergeron:2006gw}) or double-cut-and-join (DCJ) operations (\cite{Yancopoulos:2005bm}). However, for three or more genomes with balanced rearrangements (\cite{Caprara:1999uj}) or when rearrangements are unbalanced (involving the gain or loss of material) leading to duplications (additional copies of subsequences resulting from rearrangement), these exact parsimony methods are intractable. Exact solutions in the most general case are therefore only feasible for relatively small problems (\cite{Xu:2009ct}) before heuristics become necessary (\cite{Bourque:2002vr, Ma:2008js}). 

Despite the hardness of the general case,  there has been substantial work on computing maximum parsimony results, allowing for a wider repertoire of rearrangements.  El-Mabrouk studied inversions and indels, though gave no exact algorithm for the general case (\cite{ElMabrouk:2000ux}). Recently Yancopoulous (\cite{Yancopoulos:2009ej}) then Braga (\cite{Braga:2011kz}) considered the distance between pairs of genomes differing by DCJ operations and indels, the latter providing the first linear-time algorithm for balanced rearrangements and indels, and the former proposing a data-structure to model duplications. Many methods have been proposed that deal with the combination of rearrangements and duplications, for good recent reviews see (\cite{ElMabrouk:2012bu,Chauve:2013uu}), however until recently there were no algorithms to our knowledge that explicitly unified both duplications and genome rearrangements as forms of general unbalanced rearrangement. First \cite{Bader:2010dd} provided a model allowing for a subset of duplications and deletions as well as balanced DCJ operations, giving a lower bound approximation, while \cite{Shao:2012fn} studied a model allowing atomic (single gene) duplications, insertions and deletions, but arrived at no closed-form formula for the total number of rearrangements. 

The graph model introduced in this paper is capable of representing a general evolutionary history for any combination of replication, substitution and rearrangement  operations, including duplications and homologous recombinations. It therefore generalises phylogenetic trees, graphs representing histories with indels, ancestral recombination graphs and breakpoint graphs, building upon the methods described above. We start by introducing this graph and then develop a maximum parsimony problem that, somewhat imperfectly, generalises maximum parsimony variants of all the problems mentioned, facilitating the study of all these subproblems in one unified domain.  We adopt the common assumption that all substitutions and rearrangements occur independently of one another, and account for tradeoffs between them by independent rearrangement and substitution costs, which are themselves essentially sums over the numbers of inferred events. Importantly, replications that are combined with unbalanced rearrangements are costed by the underlying rearrangement cost. We finally provide a bounded sampling approach to cope with the NP-hardness of the general maximum parsimony problem.

 %\cite{koon,oreg,khar,zvai,xjon,schn,pond,smith,marg,hunn,advi,koha,mouse}

\section{Results}

\subsection{Sequence Graphs and Threads}
\label{threads}

Sequence graphs are used extensively in  comparative genomics, in rearrangement theory typically under the name (multi or master) breakpoint graph (\cite{Bergeron:2006uj}, \cite{Alekseyev:2008vv}, \cite{Ma:2008js}) and in alignment under the name A-bruijn (\cite{Raphael:2004hj}) or adjacency graph (\cite{Paten:2011fv}). We use the following bidirected form, which is similar to that used by \cite{Medvedev:2009fb} for sequence assembly. 

%\begin{definition}
A \emph{(bidirected) sequence graph} $G = (V_G, E_G)$ is a graph in which a set $V_G$ of vertices are connected by a set $E_G$ of bidirected edges (\cite{Edmonds:2003jq}), termed \emph{adjacencies}.
A vertex represents a subsequence of DNA termed a \emph{segment}.
A vertex $x$ is oriented, having a \emph{tail side} and a \emph{head side}, respectively denoted $x_{head}$ and $x_{tail}$. These categories $\{ head, tail \}$ are called \emph{orientations}.
%A vertex $x$ and an orientation $\alpha \in \{ head, tail \}$ are referred to as a \emph{side}, $x_{\alpha}$.
An adjacency, which represents the covalent bond between adjacent nucleotides of DNA, is a pair set of sides.  We refer to the two sides contained in an adjacency as its \emph{endpoints}.
Adjacencies are bidirected, in that each endpoint is not just a vertex, but a vertex with an independent orientation (either head or tail). 
For convenience, we say a side is \emph{attached} if it is contained in an adjacency, else it is \emph{unattached}.
By extension, we say a vertex is \emph{attached} if either of its sides are attached, else it is \emph{unattached}.
%\end{definition}

%The sequence graphs in this paper are labelled. 
Associated with a sequence graph is a \emph{labeling}, 
%\begin{definition}
\emph{i.e.} a function $l : V_G \rightarrow \Sigma^* \mbox{ }\cup \mbox{ } \{ \emptyset \}$ where $\Sigma = \{ A/T, C/G, G/C, T/A \}$ is the alphabet of \emph{bases}, which are oriented, paired nucleotides of DNA, and $\Sigma^*$ is the set of all possible \emph{labels} consisting of finite sequences of bases in $\Sigma$. 
%Members of $\Sigma^*$ represent sequences of oriented, paired nucleotides of DNA. 
Bases and labels are directed. For $\rho/\tau \in \Sigma$, $\rho$ is the \emph{forward complement} and $\tau$ is the \emph{reverse complement}. 
If a vertex is traversed from its tail to its head side, its label is read as the sequence of its forward complements. Conversely, if traversed from head to tail, the label is read as the reverse sequence of the reverse complements. A vertex $x \in V_G$ for which $l(x) = \emptyset$ is \emph{unlabeled}. A label represents a multibase allele. 
%A path through the sequence therefore represents a single DNA sequence (and its reverse complement) whose bases are encoded by the labels of the vertices,  where unlabeled vertices represent missing information.
%\end{definition}

%\begin{definition}
A \emph{thread} is a connected component in a sequence graph in which each side is connected to at most one adjacency. A \emph{thread graph} is a sequence graph in which every connected component is a thread. 
%\end{definition}
In this paper we limit ourselves to investigating thread graphs. 
A thread may be a simple cycle, representing a circular DNA molecule, or have two unattached sides, in which case it represents a linear DNA molecule or fragment of a larger DNA molecule.  An example thread graph is shown in Figure \ref{threadGraphs}.
A thread graph is phased, in that each thread is assigned a maximal DNA sequence (and its reverse complement), and any path though that thread corresponds to a subsequence of these maximal sequences. In contrast, a sequence graph that is not a thread graph may be unphased, in that there exist many possible maximal sequences for each of its connected components.
%the sense that overlapping maximal traversals through it may encode many possible sequences.

\begin{figure}[h!]
\begin{center}
\includegraphics[width=13cm]{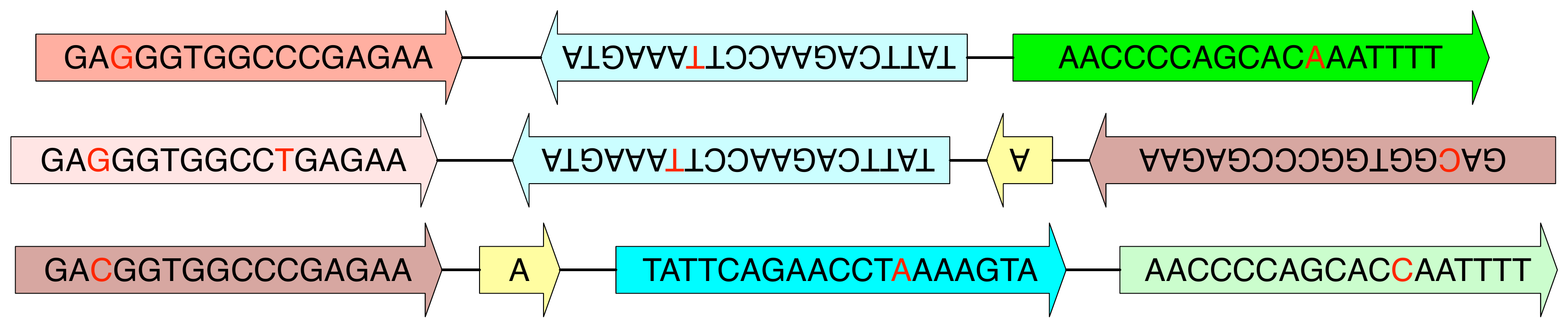}
\caption{A thread graph. For visual appeal, vertices are the arrow shapes with the sides indicated by the ends of the arrows. Labels within the arrows represent the subsequence of DNA when traversed from the tail to the head side of the arrow, and are read as the reverse complement when traversed from the head to the tail side. Adjacencies are the lines connecting the ends of the arrow shapes.  They are bidirected, i.e. there are 3 unordered types: head-tail (symmetrically tail-head), tail-tail and head-head adjacencies. In prior illustrations of bidirected graphs (\cite{Medvedev:2009fb}) orientations were drawn on the lines, however the semantics of the graph are still the same, in that head and tail orientations are properties of the endpoints of the adjacencies, not the vertices.
The graph contains three linear threads. As an example, because the middle vertex is attached in the opposite direction and therefore reverse-complemented when traversed left-to-right, the top thread represents the sequence 
``GAGGGTGGCCCGAGAA TACTTTAAGGTTCTGAATA AACCCCAGCACAAATTTT''
%``\textcolor{red}{GAGGGTGGCCCGAGAA}\textcolor{blue}{TACTTTAAGGTTCTGAATA}\textcolor{green}{AACCCCAGCACAAATTTT}''
(from left-to-right, spaces used to distinguish vertex labels) and its reverse complement, 
%``\textcolor{green}{AAAATTTGTGCTGGGGTT}\textcolor{blue}{TATTCAGAACCTTAAAGTA}\textcolor{red}{TTCTCGGGCCACCCTC}''
``AAAATTTGTGCTGGGGTT TATTCAGAACCTTAAAGTA TTCTCGGGCCACCCTC" (from right-to-left). The colours (red, blue, green and yellow) of the arrows represent homologies between the vertices, these are not part of the thread graph itself, but are used in subsequent figures that build on this example. Different hues of a colour and the red letters represent differences between labels of the same colour.
%Similarly the middle thread represents  GAGGGTGGCCTGAGAATACTTTAAGGTTCTGAATATTTCTCGGGCCACCGTC and ``GACGGTGGCCCGAGAAATATTCAGAACCTTAAAGTATTCTCAGGCCACCCTC''. The bottommost thread represents  GACGGTGGCCCGAGAAATATTCAGAACCTAAAAGTAAACCCCAGCACCAATTTT and AAAATTGGTGCTGGGGTTTACTTTTAGGTTCTGAATATTTCTCGGGCCACCGTC.
}
\label{threadGraphs}
\end{center}
\end{figure}

\subsection{History Graphs}

Nucleotides of DNA derive from one another by a process of replication. This replication process is represented in history graphs, which add ancestry relationships to thread graphs.

%\begin{definition}
A \emph{history graph} $G = (V_G, E_G, B_G)$ is a thread graph with an additional set $B_G$ of directed edges between vertices, termed \emph{branches}. 
Each vertex is incident with at most one incoming branch. 
The \emph{event graph} $D(G)$ is the directed graph formed by the contraction\footnote{The contraction of an edge $e$ is the removal of $e$ from the graph and merger of the vertices $x$ and $y$ incident with $e$ to create new vertex $z$, such that edges incident with $z$ were incident either with $x$ or $y$ or both, in the latter case becoming a loop edge on $z$.}  of adjacencies in $E_G$.
For $G$ to be a history graph $D(G)$ must be a directed acyclic graph (DAG), a property we term \emph{acyclicity}. 
%\end{definition}
Example history graphs are shown in Figure \ref{dnaHistoryGraphs}(A,B), along with an event graph in \ref{dnaHistoryGraphs}(C) for the history graph shown in \ref{dnaHistoryGraphs}(B). 

To avoid confusion we define terminology to discuss branch relationships.
%\begin{definition}
Each weakly connected component of branches forms a \emph{branch-tree}.
Two vertices are \emph{homologous} if they are in the same branch-tree.
A vertex $y$ is a \emph{descendant} of a vertex $x$, and conversely $y$ is an \emph{ancestor} of $x$, if $y$ is reachable by a directed path of branches from $x$.
If two homologous vertices do not have an ancestor/descendant relationship then they are \emph{indirectly related}.
For a branch $e=(x, y)$, $x$ is the \emph{parent} of $e$ and $y$, and $y$ is the \emph{child} of $e$ and a \emph{child} of $x$.
Similarly, $e$ is the \emph{parent branch} of $y$ and a \emph{child branch} of $x$.
A vertex is a \emph{leaf} if it has no incident outgoing branches, a \emph{root} if it has no incident incoming branches, else it is \emph{internal}. 
We reuse the terminology of parent, child, homologous, ancestor, descendant and indirectly related with sides. Two sides have a given relationship if their vertices have the relationship and they have the same orientation. Similarly, a side is a leaf (resp. root) if its vertex is a leaf (resp. root).
%\end{definition}

\begin{figure}[h!]
\begin{center}
\includegraphics[width=12cm]{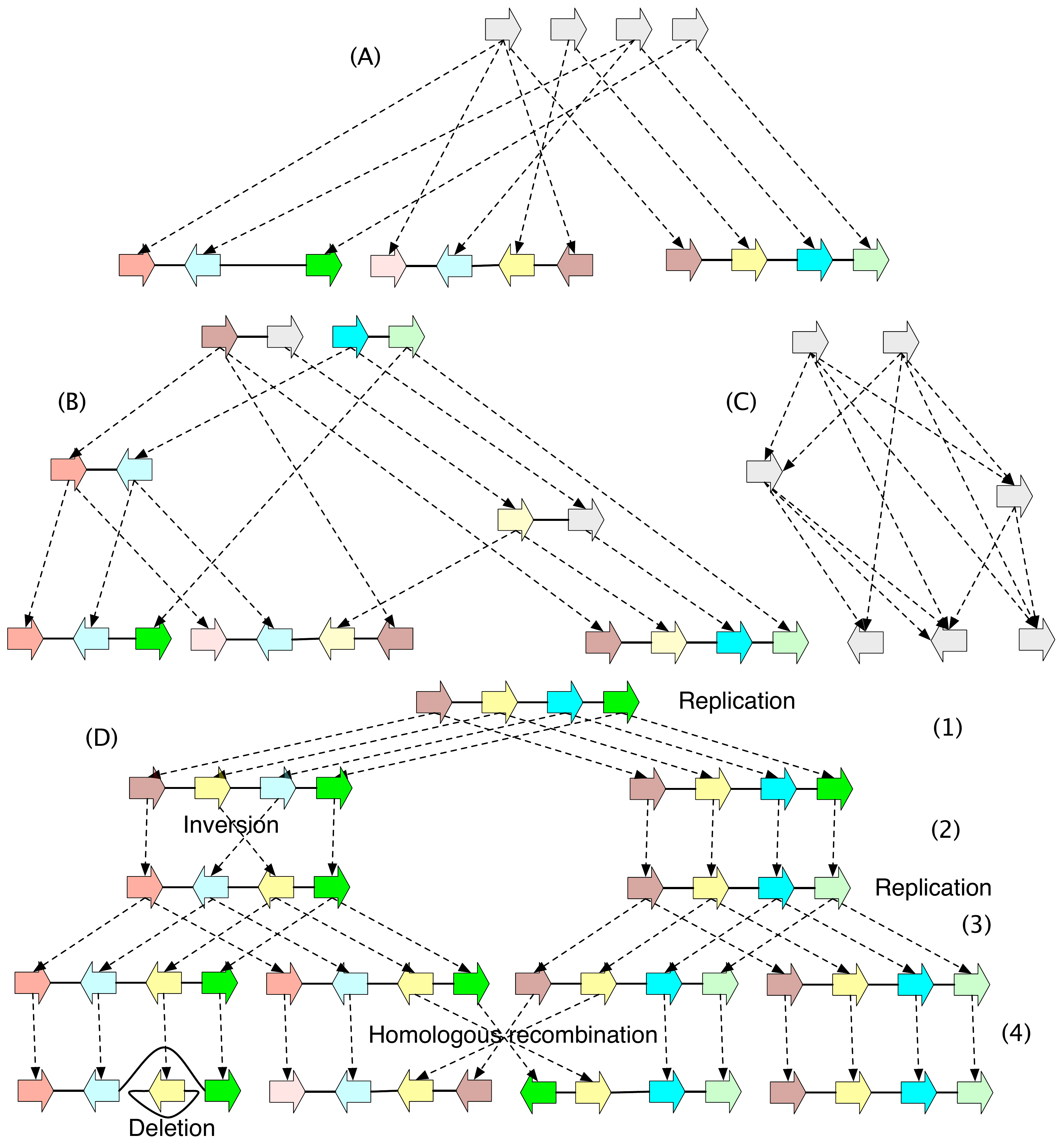}
\caption{\textbf{(A)} A history graph representing homology relationships between the vertices in Figure \ref{threadGraphs}. Due to space, colours are used as labels (and match those in Figure \ref{threadGraphs}), with unlabeled vertices shaded grey. Two vertices have the same colour shade if they have identical labels. The dotted arrows represent branches. Four ancestral vertices are added relative to Figure \ref{threadGraphs} to represent the common ancestral vertices of the subsets of homologous vertices in Figure \ref{threadGraphs}. \textbf{(B)} An extension of (A). \textbf{(C)} The event graph for (B).  \textbf{(D)} A simple history with four epochs (1 - 4), and rearrangements given names corresponding to their type. It is a realisation for the graphs in (A) and (B). 
}
%\textbf{(E)} The module graph for the most recent epoch (nearest leaves) in (D). } %Epochs 1 and 3 are duplication epochs, epochs 2 and 4 are rearrangement epochs.}
\label{dnaHistoryGraphs}
\end{center}
\end{figure}

\subsection{Simple Histories}

We formally define a class of history graphs, called \emph{simple histories}, for which parsimonious sequences of substitutions and rearrangements can be trivially derived.

%\begin{definition}

A \emph{bilayered history graph} is a history graph whose threads can be partitioned into \emph{root} and \emph{leaf} layers, such that every branch connects a vertex in the root layer with a vertex in the leaf layer. A \emph{rearrangement epoch} is a bilayered history graph in which every branch tree is a root with 1 child, every vertex is labeled, and any set of homologous sides are either all attached or all unattached. For $n \geq 2$, an \emph{$n$-way replication epoch} is a bilayered history graph in which every branch tree is a root with $n$ children, every vertex is labeled, any set of homologous sides are either all attached or all unattached, if two root sides $x_\alpha$ and $y_\beta$ are attached by an adjacency then each child of $x_\alpha$ is attached to a child of $y_\beta$, and a root vertex has at most one child with a label different from its own. An \emph{epoch} is either a rearrangement epoch or an $n$-way replication epoch for some $n \geq 2$. 
%An \emph{epoch} is a history graph in which: 
%\begin{itemize}  
%\item Every branch-tree is identical, composed of a root vertex with $n$ children.
%\item Every vertex is labelled.
%\item A root vertex has at most one child with a label different to its own.
%\item A side is attached if any homologous side is attached.
%\item If two root sides $x_\alpha$ and $y_\beta$ are attached by an adjacency, and both have $n>1$ children, then each child of $x_\alpha$ is attached to a child $y_\beta$.
%the contraction of $G$ is a thread graph, i.e. if the root vertices have multiple children then for each adjacency connecting two leaf sides there exists an adjacency connecting all adjacencies connecting leaf sides are  %deletion of labels, contraction of the branches and deletion of duplicate adjacencies results in a thread graph.
%\end{itemize}
%The number of leaves $n$ that each branch-tree has determines the type of an epoch:
%\begin{itemize}
%\item If $n = 0$ it's a \emph{trivial epoch}.
%\item If $n = 1$ it's a \emph{rearrangement epoch}.
%\item Else it's a \emph{duplication epoch}.
%\end{itemize}
%The ancestral threads are called the \emph{parent layer}, while the descendant threads are called the \emph{child layer}.
%\end{definition}
%\begin{definition}
A \emph{layered history graph} is a history graph that can be edge partitioned into a finite sequence of bilayered history graphs, such that the leaf layer of a contained bilayered history graph is the root layer of the following bilayered history graph.
A \emph{simple history} is a layered history graph whose bilayered subgraphs are all epochs. %the $i$th epoch is the root layer of the $i+1$st epoch.
%\end{definition}
An example simple history with epoch subgraphs is shown in Figure \ref{dnaHistoryGraphs}(D). 
%of as a history graph in which, under a parsimony assumption, there is no missing information about the events that took place, considering each multi-breakpoint (more than 2 breakpoints) rearrangement as a single event and using the simple substitution function described below. 
%Note that there is no general requirement in this model for replications to produce only two copies of a thread, though parsimonious histories will often involve them.
%For a simple history it is straight-forward to define its parsimony cost.

%\begin{definition}
A \emph{substitution} occurs on a branch if the labels of its endpoints are not identical. Note that a substitution can occur either in a rearrangement or a replication epoch. The \emph{substitution cost} of a simple history $\textbf{H}$ is the total number of substitutions, denoted $s(\textbf{H})$. 
%\end{definition}
The example simple history in Figure \ref{dnaHistoryGraphs}(D) has substitution cost 4. Note the requirement that all homologous sides in a simple history be either all attached or all unattached does not forbid rearrangements involving the observed ends of chromosomes (linear threads), because it is always possible to add material to a simple history at zero cost that attaches such unattached sides and allows them to participate in rearrangements.

The substitution cost defined deals, abstractly, with changes of alleles in which any change between alleles is scored equally. However for the case $\Sigma^* = \Sigma$, i.e. single base labels, the substitution cost is the minimum number of single base changes. Furthermore, any history graph in which all homologous labels have the same length can easily be converted to a semantically equivalent history graph for which $\Sigma^* = \Sigma$. More complex substitution costs to deal with the case where the alphabet represents the alleles of genes, as is commonly dealt with in rearrangement theory, are straightforward but not pursued here for simplicity.

%, we do not investigate more complex substitution costs. However, generalizations of the simple notion of substitution costs used here are relatively straightforward 

A \emph{rearrangement cycle} in a rearrangement epoch is a circular path consisting of one or more repetitions of the basic pattern consisting of an adjacency edge in the root layer, a forward branch to the leaf layer, an adjacency edge in the leaf layer and a reverse branch to the root layer. Its \emph{size} is the number of repetitions in it of this basic pattern minus 1. A linear path that follows this same basic pattern but does not complete every pattern and return to the original vertex is a \emph{degenerate rearrangement cycle}. Its size is the size of the smallest rearrangement cycle that can be obtained from it by adding edges.  The \emph{rearrangement cost} of a simple history $\textbf{H}$ is the total size of all rearrangement cycles in it, denoted $r(\textbf{H})$. This cost is known to be the number of double-cut-and-join (DCJ) operations needed to achieve all the rearrangements.

\begin{lemma}
The rearrangement cost of an epoch is the minimum number of double-cut-and-join (DCJ) operations required to convert the root layer's adjacencies into the leaf layer's adjacencies.
\end{lemma}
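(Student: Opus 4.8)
The plan is to reduce the statement to the classical DCJ distance formula after re-expressing the epoch as a comparison of two genomes on a shared extremity set. I first dispose of the trivial case: for an $n$-way replication epoch there are no rearrangement cycles (these are defined only inside rearrangement epochs), so the rearrangement cost is $0$; and the replication constraints force the leaf layer's adjacencies to be $n$ disjoint copies of the root layer's -- each root adjacency $\{x_\alpha,y_\beta\}$ induces, via the thread-graph property, a perfect matching between the $n$ children of $x_\alpha$ and the $n$ children of $y_\beta$ -- so once the cost-free replication is performed no DCJ operation is needed, and both sides equal $0$. Henceforth I assume the epoch is a rearrangement epoch.

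Next I would build the alternating multigraph that makes the rearrangement cost visible. Using the unique branch at each leaf vertex, identify each root side with its child side, and on this shared set of sides form the multigraph $H$ whose edges are the root-layer adjacencies (``root edges'') and the leaf-layer adjacencies (``leaf edges''). Because every side of a thread graph lies in at most one adjacency, and because the epoch demands that homologous sides be jointly attached or jointly unattached, a side carries a root edge exactly when it carries a leaf edge; thus each vertex of $H$ has degree $0$ (the unattached sides, which are isolated) or $2$, so $H$ is a disjoint union of isolated vertices and cycles alternating between root and leaf edges. These alternating cycles are precisely the rearrangement cycles of the epoch, and there are no degenerate rearrangement cycles, since a maximal alternating walk can never get stuck -- attachment is preserved in both layers -- and so must close up. A rearrangement cycle through $k$ root edges has $k$ leaf edges and $k$ repetitions of the basic pattern, hence size $k-1$; summing over the $c$ such cycles and using that each root adjacency lies in exactly one of them, the rearrangement cost of the epoch equals $M-c$, where $M$ is the common number of root- and leaf-layer adjacencies (equal, since the number of attached sides is preserved) and $c$ is the number of rearrangement cycles.

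Finally I would identify $M-c$ with the DCJ distance $d$ between the root genome $R$ and the leaf genome $L$ by passing to the standard adjacency graph $AG(R,L)$ of \cite{Yancopoulos:2005bm, Bergeron:2006uj}. Its cycles are dual to those of $H$ and hence number exactly $c$; and because $R$ and $L$ share the same set of $I$ telomeres (the unattached sides, by the homology constraint), each path component of $AG(R,L)$ is just the single edge joining the two equal copies of one telomere -- an odd path -- so there are exactly $I$ of them. The known distance formula then reads $d = N-(c+I/2)$ with $N$ the number of segments, and since each adjacency occupies two attached sides and each telomere one unattached side we have $2N = 2M+I$, whence $d = M-c$. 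Self-containedly, the lower bound (every transforming DCJ sequence has length at least $d$) follows from the standard fact that one DCJ changes $c+I/2$ by at most one, and the matching upper bound from the usual sorting that clears a cycle of $k$ root edges with $k-1$ DCJ operations. The step I expect to cost the most care is this translation layer -- matching sides, branches and adjacency-edges to the classical extremities, adjacencies and telomeres and getting the shared-telomere bookkeeping right (this is what collapses every path component to a trivial odd path) -- together with the degenerate cases of circular threads and adjacency-free segments, so that the clean identity ``rearrangement cost $=M-c=d$'' holds with no correction term.
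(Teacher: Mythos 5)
Your proof is correct and takes essentially the same approach the paper intends: the paper's entire proof of this lemma is a one-line deferral to the argument of \cite{Yancopoulos:2005bm}, i.e., exactly the reduction to the classical DCJ distance formula that you carry out. You merely supply the translation details (the alternating root/leaf cycle decomposition, the observation that equal telomere sets collapse all path components to trivial odd paths so that $d = M - c$, and the trivial replication-epoch case) that the paper leaves implicit.
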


\begin{proof}
Similar to that given in \cite{Yancopoulos:2005bm}.
\end{proof}

The example simple history in Figure \ref{dnaHistoryGraphs}(D) has rearrangement cost 3.

Because different studies lay different emphases on substitution or rearrangement (e.g. because of the available data) and because the events do not have the same probability in practice, we allow for a degree of freedom in the definition of the overall cost function.
%\begin{definition}
A \emph{(simple history) cost function} for a simple history is any monotone function on the substitution and rearrangement costs in which both substitutions and rearrangements have non-zero cost.
%\end{definition}

\subsection{Reduction}

Not all history graphs are as detailed as simple histories. We define below a partial order relationship that describes how one graph can be a generalization of another graph, so for example, a less detailed history graph can be used to subsume multiple simple histories. 

%This will then allow us to extend the cost function to history graphs that are not evolutionary histories. 

%we first define the process of adding and removing information from history graphs.

%%\begin{definition}
%Two history graphs $G$ and $G'$ are \emph{isomorphic} if and only if there exists a bijection between their vertex sets, $f: V(G) \rightarrow V(G')$, such that:
%\begin{itemize}
%\item For any vertex $x$ in $G$, the labels of $x$ and $f(x)$ are identical.
%\item There exists an adjacency $\{ x_{\alpha}, y_{\beta} \}$ in $G$ if and only if the adjacency $\{ (f(x), \alpha), (f(y), \beta) \}$ is in $G'$.
%\item There exists a branch $(x, y)$ in $G$ if and only if the branch $(f(x), f(y))$ is in $G'$.
%\end{itemize}
%%\end{definition}

%\begin{definition}
A branch whose child is unlabeled and unattached is referred to as having a \emph{free-child}. 
A branch whose parent is unlabeled, unattached and a root with a single child is referred to as having a \emph{free-parent}. A vertex is \emph{isolated} if it has no incident adjacencies or branches.
%\end{definition}
%\begin{definition}
A \emph{reduction operation} is an operation upon a history graph that either:
\begin{itemize}
\item Deletes an adjacency, an isolated vertex or the label of a vertex.
\item Contracts a branch with a free-child or free-parent.
\end{itemize}
See Figure \ref{reductionOperations}(A-E) for examples.
%Note, we sometimes refer to the contraction of a branch as the \emph{pull-up} of its head.
The inverse of a reduction operation is an \emph{extension operation}.
%, i.e. if and only if $G$ is the result of a reduction operation on $G'$, then a modification of $G$ that creates a graph isomorphic to $G'$ is an extension operation.
%\end{definition}

\begin{figure}[h!]
\begin{center}
\includegraphics[width=8cm]{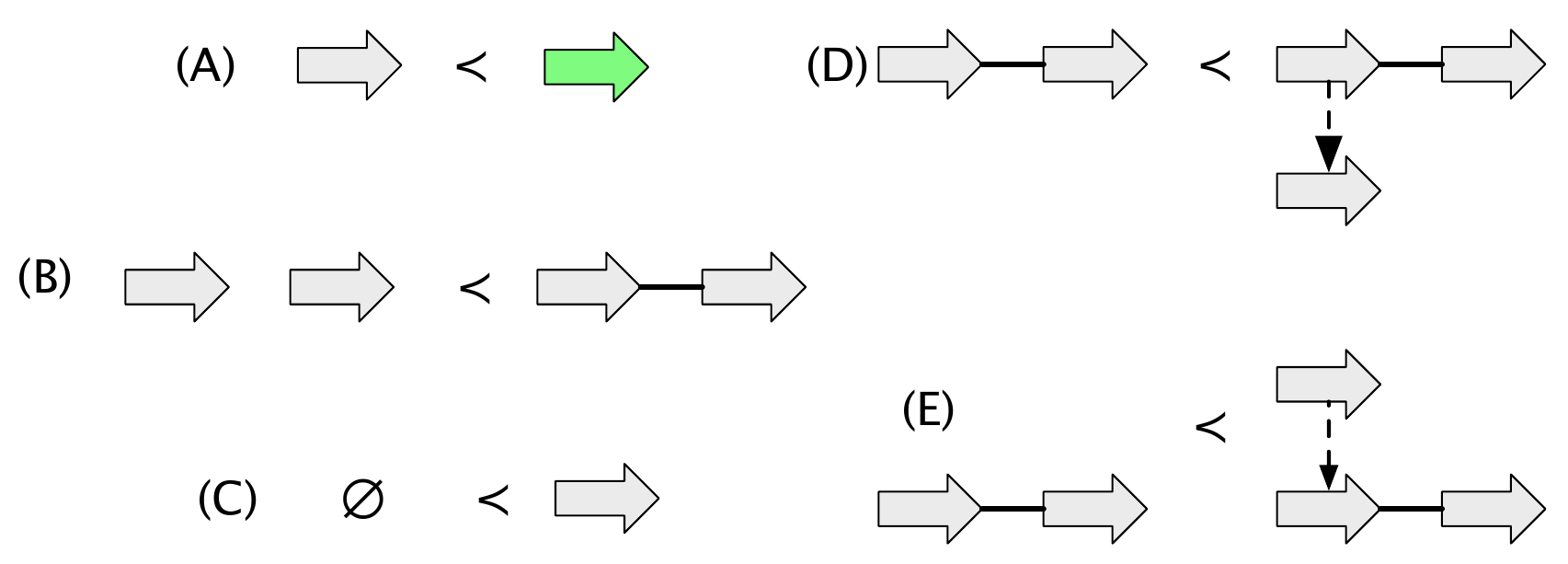}
\caption{\textbf{(A-E)} Reduction operations. For each case the graph on the left is a reduction of the graph on the right.  \textbf{(A)} A label deletion. \textbf{(B)} An adjacency deletion. \textbf{(C)} A vertex deletion. \textbf{(D)} A contraction of a branch with a free-child. \textbf{(E)} A contraction of a branch with a free-parent.}
\label{reductionOperations}
\end{center}
\end{figure}

\begin{lemma}
The result of a reduction operation is itself a history graph.
\end{lemma}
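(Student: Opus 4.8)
Write $G$ for the given history graph and $G'$ for the result of one reduction operation. We must check the three defining conditions of a history graph for $G'$: that $G'$ is a thread graph (each side incident to at most one adjacency, each connected component a thread), that every vertex of $G'$ has at most one incoming branch, and that the event graph $D(G')$ is acyclic. The plan is to treat the five operations in turn, dispatching the first two conditions quickly and concentrating on acyclicity.

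The first two conditions are routine. A label deletion changes only the labeling and leaves $(V_G,E_G,B_G)$, hence also $D(G)$, unchanged. An adjacency deletion or an isolated-vertex deletion only removes graph elements, and the properties ``at most one adjacency per side'', ``each component is a thread'', and ``at most one incoming branch per vertex'' are all preserved under removal of elements. For a branch contraction of $e=(x,y)$: by the definition of free-child (resp.\ free-parent) exactly one of $x,y$ is unattached, so the merged vertex inherits all of its adjacencies from a single endpoint and hence still meets at most one adjacency on each side; moreover $e$ is the unique incoming branch of $y$ (by the hypothesis on $G$), and in the free-parent case $x$ is a root, so the merged vertex carries at most one incoming branch. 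Merging two vertices cannot split a component, so $G'$ remains a thread graph.

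For acyclicity the key is to track how the event graph changes. Contracting the adjacencies of $G'$ differs from contracting those of $G$ in exactly one way per operation: a label deletion gives $D(G')=D(G)$; an isolated-vertex deletion gives $D(G')=D(G)$ minus an isolated vertex; an adjacency deletion ``splits'' one vertex of $D(G)$ into the two adjacency-components that the deleted adjacency formerly joined, distributing the incident branch-edges and (since $D(G)$ is loopless) adding no edge between the two new vertices; and a branch contraction of $e=(x,y)$ contracts in $D(G)$ the single edge $\bar e$ that is the image of $e$. In the splitting case a directed cycle in $D(G')$ would map, under the map re-merging the two new vertices, to a closed directed walk of positive length in $D(G)$, which a DAG does not admit; so $D(G')$ is acyclic.

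The one delicate point is the branch-contraction case, because contracting an arbitrary edge of a DAG can create a cycle --- precisely when a second directed walk of length at least two joins its two endpoints --- and this is exactly what the free-child/free-parent conditions rule out. If $e=(x,y)$ has a free-child, then $y$ is unattached, so $\{y\}$ is its own vertex in $D(G)$, and because $y$ has at most one incoming branch in $G$ the only edge of $D(G)$ entering $\{y\}$ is $\bar e$; hence no walk of length $\geq 2$ in $D(G)$ ends at the head of $\bar e$, and contracting $\bar e$ creates no cycle. Symmetrically, if $e$ has a free-parent, then $x$ is unattached and is a root with a single child, so $\{x\}$ is its own vertex in $D(G)$ with $\bar e$ its only outgoing edge; hence no walk of length $\geq 2$ in $D(G)$ starts at the tail of $\bar e$, and again contracting $\bar e$ creates no cycle. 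In all cases $D(G')$ is a DAG. I expect the bulk of the work to be the bookkeeping that identifies each reduction of $G$ with the corresponding vertex-split or edge-contraction of $D(G)$, together with the elementary observation about when an edge-contraction preserves acyclicity.
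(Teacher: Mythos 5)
The paper itself offers no proof of this lemma (it is stated and left as a routine verification), so there is nothing to compare against at the level of approach; your write-up supplies exactly the kind of argument the paper leaves implicit, and its structure is sound: the thread-graph and single-incoming-branch conditions survive each of the five operations for the reasons you give, and the acyclicity question reduces to tracking $D(G)$ — unchanged under label deletion, an isolated vertex removed, a vertex possibly split under adjacency deletion (handled by projecting a putative cycle back to a closed walk in the DAG $D(G)$), and an edge contraction under branch contraction. You also isolate the right key fact: the free-child (resp.\ free-parent) condition forces the head (resp.\ tail) of the contracted edge $\bar{e}$ to be a singleton thread-vertex whose unique in-edge (resp.\ out-edge) is $\bar{e}$ itself.

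One step, however, is wrong as written, and it is the step that carries the contraction case. The assertion ``no walk of length $\ge 2$ in $D(G)$ ends at the head of $\bar{e}$'' (and its mirror for the tail) is false in general: take three pairwise unattached vertices $a,x,y$ with branches $(a,x)$ and $(x,y)$ and $y$ a free child; then $\{a\}\rightarrow\{x\}\rightarrow\{y\}$ is a walk of length $2$ ending at the head of $\bar{e}$. What you actually need — and what does follow from the uniqueness you established — is that there is no directed walk from the tail $U$ to the head $W$ of $\bar{e}$ other than $\bar{e}$ itself: since the only edge entering $W=\{y\}$ is $\bar{e}$, any $U$-to-$W$ walk of length $\ge 2$ would end with $\bar{e}$ and hence contain a closed directed walk at $U$, contradicting that $D(G)$ is a DAG; symmetrically in the free-parent case, a second walk out of $U=\{x\}$ toward $W$ would have to begin with $\bar{e}$ and yield a closed walk at $W$. (Note this also rules out a parallel edge from $U$ to $W$, which your ``length at least two'' criterion quietly ignores.) Two further small corrections: deleting an adjacency inside a circular thread does not split the corresponding vertex of $D(G)$ — but then $D(G')=D(G)$ and acyclicity is immediate — and ``exactly one of $x,y$ is unattached'' should read ``at least one'', which is all your argument uses.
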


%\begin{proof}
%Easily verified.
%\end{proof}

%\begin{definition}
A history graph $G$ is a \emph{reduction} of another history graph $G'$ if $G$ is isomorphic to a graph that can be obtained from $G'$ by a sequence of reduction operations, termed a \emph{reduction sequence}.
%\end{definition}

\begin{lemma}
The reduction relation is a partial order.
\end{lemma}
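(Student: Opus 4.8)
The plan is to verify the three defining properties of a partial order for the reduction relation: reflexivity, antisymmetry, and transitivity.

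First I would establish reflexivity. A history graph $G$ is a reduction of itself because the empty reduction sequence produces $G$, and $G$ is trivially isomorphic to itself. Transitivity is equally direct: if $G$ is a reduction of $G'$ and $G'$ is a reduction of $G''$, then concatenating the two reduction sequences (composing the isomorphism into the second sequence appropriately) exhibits $G$ as a reduction of $G''$. Both of these rely only on the fact that the composition of reduction sequences is a reduction sequence and that isomorphism is itself an equivalence relation, which lets us work up to isomorphism throughout. The previous lemma guaranteeing that each reduction operation yields a history graph ensures that intermediate graphs in any such sequence are legitimate.

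The substantive step, and the main obstacle, is antisymmetry: if $G$ is a reduction of $G'$ and $G'$ is a reduction of $G$, then $G$ and $G'$ are isomorphic. The idea is to find a monovariant — a quantity that strictly decreases under every reduction operation. Inspecting the list, each reduction operation removes either an adjacency, an isolated vertex, a label, or a branch (via contraction). A natural candidate is a weighted count such as $|V_G| + |E_G| + |B_G| + |\{x \in V_G : l(x) \neq \emptyset\}|$, i.e. the total number of vertices, adjacencies, branches, and labeled vertices. Each of the four reduction operation types strictly decreases this quantity: an adjacency deletion drops $|E_G|$; an isolated-vertex deletion drops $|V_G|$; a label deletion drops the labeled-vertex count; and a branch contraction drops $|B_G|$ (and $|V_G|$ by one, since two vertices merge into one, though one must check the labeled-vertex and adjacency counts do not increase — in a free-child or free-parent contraction the merged-away vertex is unlabeled and unattached, so neither count goes up). Hence any nonempty reduction sequence strictly decreases this integer monovariant, so it cannot return to a graph with the same value. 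Since isomorphic graphs have equal monovariant values, if $G$ is a reduction of $G'$ and vice versa, both reduction sequences must be empty up to isomorphism, forcing $G \cong G'$.

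I would then assemble these pieces: reflexivity and transitivity are routine once phrased in terms of concatenating reduction sequences modulo isomorphism, and antisymmetry follows from the monovariant argument. The one place demanding care is confirming that branch contraction does not inflate any term of the monovariant — in particular that merging the endpoints of a contracted free-child or free-parent branch cannot create a labeled vertex or a new adjacency — which follows directly from the definitions of free-child and free-parent (the contracted-away endpoint is unlabeled and unattached, and in the free-parent case it is additionally a root with a single child, so no incoming branch is lost improperly). With that checked, the monovariant is genuinely strictly decreasing and the proof is complete.
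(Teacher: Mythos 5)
Your proof is correct. The paper actually states this lemma without giving a proof, and your argument is the natural one that fills that gap: reflexivity and transitivity via the empty sequence and concatenation of reduction sequences (composing isomorphisms), and antisymmetry via the strictly decreasing integer quantity $|V_G|+|E_G|+|B_G|+|\{x : l(x)\neq\emptyset\}|$, with the check that free-child/free-parent contractions cannot raise the label or adjacency counts. The only caveat worth making explicit is that, since the relation is defined up to isomorphism, antisymmetry is established on isomorphism classes of history graphs (your conclusion $G\cong G'$), which is exactly the sense in which the paper's relation is a partial order.
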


%\begin{proof}
%Easily verified.
%\end{proof}

%\begin{definition}
We write $G \redeq G'$ to indicate that $G$ is a reduction of $G'$ and $G \red G'$ to indicate that $G$ is a reduction of $G'$ not equal to $G'$.
Like reduction and extension operations, if $G$ is a reduction of $G'$, $G'$ is an \emph{extension} of $G$.
%\end{definition}
An examination of the reduction relation is in the discussion section and Figure \ref{justifyingReduction}.

\subsection{History Graph Cost}

Using the parsimony principle, we now extend parsimony cost functions, previously defined on simple histories, to all history graphs.

%\begin{definition}
A simple history $\textbf{H}$ that is an extension of a history graph $G$ is called a \emph{realisation} of $G$. The set $\mathcal{H}(G)$ is the realisations of G.
%\end{definition}
%\begin{definition}
For a given cost function $c$ the \emph{cost} of a history graph $G$ is\footnote{Note: while $\mathcal{H}(G)$ is infinite we show in the sequel that the infimum of this set of costs is always achieved by a history, hence the infimum is the minimum.} \[ C(G, c) = \underset{\textbf{H} \in \mathcal{H}(G)}{\min\mbox{ }}  c(s(\textbf{H}), r(\textbf{H})). \] % where $\mathcal{H}(G)$ is the set of all realisations of G. 

%\end{definition}

\begin{lemma}
The problem of finding the cost of a history graph is NP-hard.
\end{lemma}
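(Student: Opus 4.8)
The plan is to reduce a known NP-hard problem to the problem of computing $C(G,c)$. The cleanest candidate is the maximum parsimony problem on matrix alignments (small parsimony / Steiner tree in Hamming space, or large parsimony), which the introduction already cites as NP-hard (\cite{DAY:1987cs}, \cite{Chindelevitch:2006vm}), or alternatively the NP-hard DCJ-based multiple rearrangement / median problems (\cite{Caprara:1999uj}, \cite{Wang:2001cs}). I would pick one of these — say the maximum parsimony phylogeny problem for binary characters — and show that an instance of it can be encoded as a history graph $G$ whose cost under a suitable cost function $c$ equals (a simple function of) the optimal parsimony score.

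First I would describe the encoding: given the input sequences $s_1,\dots,s_k$ over the column set of an alignment, build a history graph $G$ consisting of one thread per input sequence, with one vertex per alignment column carrying the corresponding base label, and adjacencies wiring each thread into a path in column order. Crucially I leave $G$ with \emph{no branches at all} (or only the minimal branch structure forced by the definitions), so that every realisation $\textbf{H}\in\mathcal{H}(G)$ must, via extension operations, introduce internal (ancestral) vertices and branches connecting the leaves into branch-trees — i.e. $\textbf{H}$ is forced to supply a phylogenetic forest relating the threads. Because the adjacency structure of every thread is the same simple path, no rearrangements are needed in the optimal realisation (one can always extend to a simple history whose rearrangement epochs are trivial), so for a cost function that weights substitutions positively the minimising realisation has rearrangement cost $0$ and substitution cost exactly the parsimony score of the best tree/forest on the given leaf labels. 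I would then argue both directions: any realisation yields a valid phylogeny with substitution count $\ge$ its parsimony score (substitutions live on branches and count label changes), and conversely any optimal parsimony tree, together with its Fitch/Sankoff-optimal ancestral labellings, extends $G$ to a simple history of exactly that substitution cost and zero rearrangement cost. Hence $C(G,c)$ determines the optimal parsimony score, and a polynomial-time algorithm for the former would solve the NP-hard parsimony problem.

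The main obstacle — and the step I would spend the most care on — is controlling the rearrangement cost and the acyclicity/epoch-structure bookkeeping: I must verify that the extension of $G$ built from a phylogenetic tree really is a legal \emph{simple history} (partitionable into rearrangement and $n$-way replication epochs with matching leaf/root layers, event graph acyclic, the "at most one child with a different label" condition on replication epochs, etc.), possibly by interposing rearrangement epochs that do nothing and splitting multifurcations or non-binary replications into chains of $2$-way replication epochs, and that none of this bookkeeping forces extra substitutions or any rearrangement cycles of positive size. A secondary subtlety is that $\mathcal{H}(G)$ is infinite, so I need the remark already flagged in the paper's footnote — that the infimum over realisations is attained — or at least an argument that bounded-size realisations suffice, which follows because beyond the parsimony-optimal tree additional structure can only increase or preserve cost. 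Once the encoding is shown to be faithful and polynomial-time computable, NP-hardness of computing $C(G,c)$ follows immediately.
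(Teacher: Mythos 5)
Your overall strategy---reducing a known NP-hard substitution-parsimony problem by exhibiting it as a special case of the history-graph cost problem, with a cost function under which the optimal realisation needs no rearrangements---is the same route the paper takes (its proof is a one-line citation of \cite{DAY:1987cs} for substitutions and \cite{Tannier:2009jp} for rearrangements). However, your concrete encoding has a genuine gap: you propose to leave $G$ with \emph{no branches at all} and expect realisations to ``introduce internal (ancestral) vertices and branches connecting the leaves into branch-trees.'' In this framework that cannot happen. Extension operations are exactly the inverses of the four reduction operations, and the only branch-creating extensions are the inverses of contracting a branch with a free-child or a free-parent; both of these split a single existing vertex into a parent/child pair, so they can grow an existing branch-tree but can never merge two existing branch-trees or insert a branch between two vertices already present in the graph. (This is deliberate: the Discussion and Figure \ref{justifyingReduction}(A) point out that branch deletion is forbidden in reductions precisely so that extensions cannot invent homology.) Consequently, if $G$ consists of $k$ disjoint, unrelated threads, every realisation keeps those threads in separate branch-trees, no substitutions or rearrangement cycles are ever forced, and $C(G,c)=0$ regardless of the input sequences---so the reduction as described computes nothing.

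The fix is to encode the alignment's homology in $G$ itself, e.g.\ one branch-tree per alignment column whose leaves are the corresponding labeled vertices of the input threads (this is exactly the form of the starting graphs in Figure \ref{dnaHistoryGraphs}(A) and in the paper's simulations); realisations then refine these trees by vertex splitting and supply ancestral labels, and the substitution cost of an optimal realisation recovers the parsimony score. Your parenthetical hedge ``or only the minimal branch structure forced by the definitions'' does not rescue the argument, because that branch structure is not forced by any definition---it is the essential content of the encoding. With homology encoded this way, the issues you flagged as secondary (that zero-rearrangement realisations exist and are optimal, and that they enforce a single topology shared across columns, since in a simple history columns residing in the same thread inherit a common replication structure unless rearrangements separate them) do indeed become the remaining bookkeeping; but the missing branch structure is the step that breaks the proof as written.
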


\begin{proof}
There are parsimony problems on either substitutions or rearrangements alone that are NP-hard and can be formulated as special cases of the problem of finding the minimum cost realisation of a history graph (\cite{DAY:1987cs}, \cite{Tannier:2009jp}). 
%See Tannier et al. (\cite{Tannier:2009jp}) for an outline of several such rearrangement problems. For substitutions, the maximum parsimony problem is NP-hard (constructing the most parsimonious phylogenetic tree for substitutions) (\cite{DAY:1987cs}). 
\end{proof}

\subsection{The Lifted Graph}

Although determining the cost of a history graph is NP-hard, we will show that the cost can be bounded such that the bounds become tight for a broad, characteristic subset of history graphs.
To do this we introduce the concept of lifted labels and adjacencies, which are used to project information about labels and adjacencies from descendant to ancestral vertices and are useful in reasoning about the cost of a history graph.

%\begin{definition}
The \emph{free-roots} of a history graph $G$ are a set of additional vertices such that a single, unique free-root is assigned to each root vertex in $G$ (see the top of Figure \ref{liftedGraph}(A)).
%there exists a unique free-root vertex for each root vertex in $G$.
%there exists a bijective mapping between the branch-trees of $G$ and the free-roots, i.e. there exists a unique free-root vertex for each root vertex in $G$.
%\end{definition}
%\begin{definition}
For a vertex $x$, its \emph{lifting ancestor} $A(x)$ is the most recent labeled ancestor of $x$, else if no such vertex exists, the free-root of the branch-tree containing $x$.
For a side $x_{\alpha}$ its \emph{lifting ancestor} (overloading notation) $A(x_{\alpha})$ is its most recent attached ancestor, else if no attached ancestor exists, its ancestral side in the free root of the branch tree containing it. 
%\end{definition}

%\begin{definition}
For a labeled vertex $y$, a \emph{lifted label} is a label identical to $l(y)$ on its lifting ancestor. For a vertex the \emph{lifted labels} is therefore a multiset, because the same lifted label may be lifted to a lifting ancestor from multiple distinct descendants and each is considered an element of the multiset. 

%%Remove to supplement
%$L'_x = (L_x, N_x)$, where $L(x)$ is the set of distinct lifted labels for $x$, and for each lifted label $\rho$, $N_x(\rho)$ is the number of times $\rho$ appears as a lifted label for $x$, i.e. $L_x = \{ l(y) : A(y) = x \}\subseteq \Sigma^*$ and $N_x : L_x \rightarrow \mathbb{Z}_+$ such that $N_x(\rho) = | \{ y : A(y) = x , l(y) = \rho \} |$. %Each member of $L_x$ is a \emph{lifted label}. 
%\end{definition}

%\begin{definition}
For an adjacency $\{ x_\alpha, y_\beta \}$, a \emph{lifted adjacency} is a bidirected edge $\{ A(x_\alpha), A(y_\beta) \}$.
In analogy with the lifted labels for a vertex, the \emph{lifted adjacencies} for a side is the multiset of lifted adjacencies incident with the side. 
 %%Remove to the supplement
 %for a side $x_{\alpha}$ its \emph{lifted adjacencies} is the multiset (again overloading notation) $L'_{x_{\alpha}} = (L_{x_{\alpha}}, N_{x_{\alpha}})$, where $L(x_\alpha)$ is the set of distinct lifted adjacencies incident with $x_\alpha$, and for each lifted adjacency $\{ x_\alpha, w_\gamma \}$, $N_{x_\alpha}(\{ x_\alpha, w_\gamma \})$ is the number of sides whose lifting ancestor is $x_\alpha$, and which are connected by an adjacency to a side whose lifting ancestor is $w_\gamma$, i.e. $L_{x_{\alpha}} = \{ \{ x_\alpha=A(y_\alpha), A(z_\beta) \} : \{ y_\alpha, z_\beta \} \in E_G \}$ and $N_{x_\alpha} = L_{x_{\alpha}} \rightarrow \mathbb{Z}_+$ such that $N_{x_\alpha}(\{ x_\alpha, w_\gamma \}) = | \{ y_\alpha : \{ x_\alpha = A(y_\alpha), w_\gamma \} \in L_{x_\alpha} \} |$. 
%Each member of $L_{x_\alpha}$ is a \emph{lifted adjacency}.
%\end{definition}

%\begin{definition}
A history graph $G$ with free-roots, lifted labels and lifted adjacencies is a \emph{lifted graph} $\textbf{L}(G)$.
%\end{definition}
Figure \ref{liftedGraph}(A) shows an example lifted graph that outlines these concepts. 

%%Remove to supplement
%Note that for a side $x_\alpha$, $N_{x_{\alpha}}(\{x_\alpha, w_\gamma \})$ gives the multiplicity of lifted adjacency incidences with $x_\alpha$, not the multiplicity of $\{x_\alpha, w_\gamma \}$. In particular, if two sides $x_\alpha$ and $x'_\alpha$ are attached and share the same lifting ancestor $A(x_\alpha)$, then $N_{A(x_{\alpha})}(\{A(x_\alpha), A(x_\alpha) \})$ is incremented by 2. On the contrary, if $x_\alpha$ is connected to $w_\gamma$ and $A(x_\alpha)$ is distinct from $A(w_\gamma)$, then both $N_{A(x_{\alpha})}(\{A(x_\alpha), A(w_\gamma) \})$ and $N_{A(w_{\gamma)}}(\{A(x_\alpha), A(w_\gamma) \})$ are incremented by 1.

\begin{figure}[h!]
\begin{center}
\includegraphics[width=12cm]{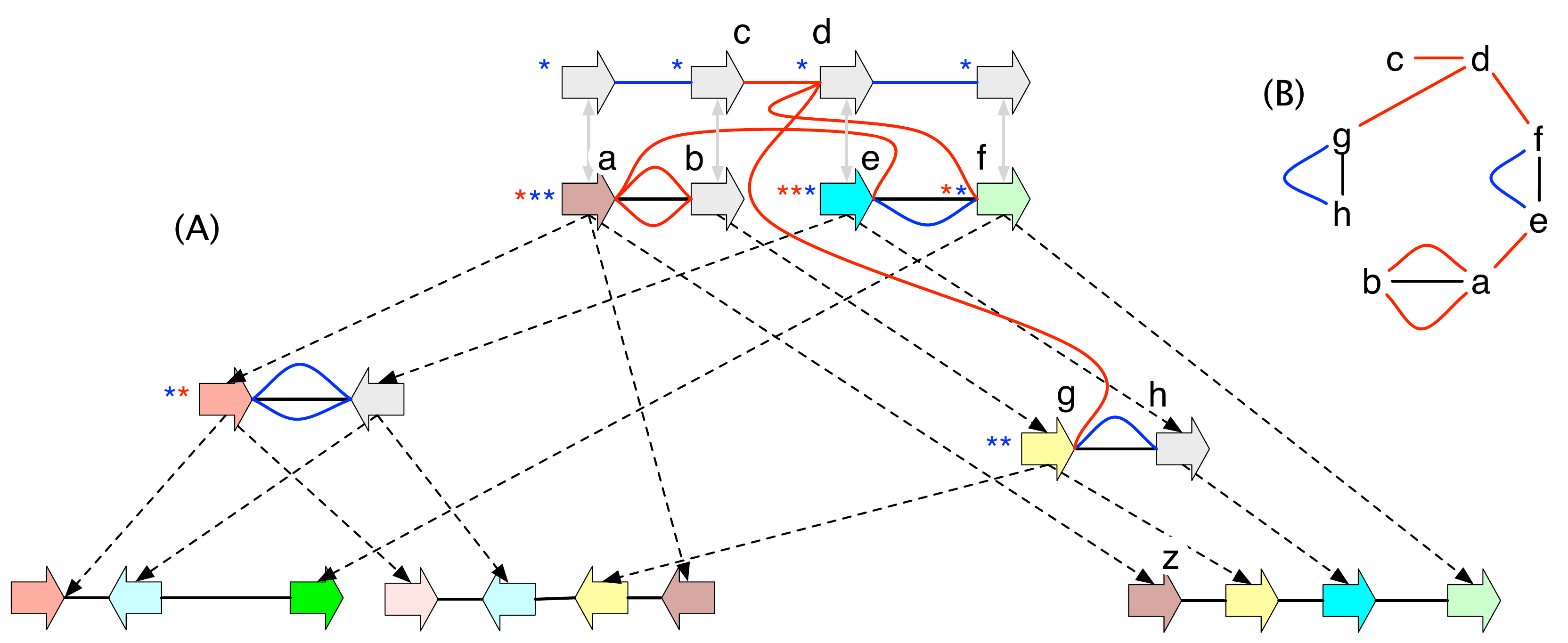}
\caption{\textbf{(A)} The lifted graph for the history graph in Figure \ref{dnaHistoryGraphs}(B). 
The blue and red lines represent, respectively, trivial and non-trivial lifted adjacencies. Similarly, the blue and red stars represent, respectively, trivial and non-trivial lifted labels. The free-roots are shown as a set of vertices above the other vertices, with a grey line identifying their matching branch-tree. \textbf{(B)} The module in (A) containing non-trivial lifted edges. Lower case letters are used to identify the sides. }
\label{liftedGraph}
\end{center}
\end{figure}

%\begin{definition} 
%\end{definition}
%\begin{definition}

Some lifted elements do not imply change between descendant and ancestral states, while others do. To formalise such a notion we define trivial and non-trivial labels and and adjacencies. 
A lifted label $\rho$ of a labeled vertex $x$ is \emph{trivial} if $l(x) = \rho$. A lifted label $\rho$ on an unlabeled vertex $x$ (necessarily a free root) is trivial if it is the only lifted label on $x$. Otherwise a lifted label is \emph{non-trivial}.

A \emph{junction side} is a most recent common ancestor (MRCA) of two attached, indirectly related sides.
For a history graph $G$, a lifted adjacency $e = \{ A(x_{\alpha}), A(y_{\beta}) \}$ is \emph{trivial} if there exists no unattached junction side on the path of branches from (but excluding) $A(x_{\alpha})$ to (but excluding) $x_{\alpha}$, or on the path of branches from (but excluding) $A(y_{\beta})$ to (but excluding) $y_{\beta}$ and either there is a (regular) adjacency between $A(x_{\alpha})$ and $A(y_{\beta})$ in $G$ or $A(x_{\alpha})$ and $A(y_{\beta})$ are free roots, else $e$ is \emph{non-trivial}.
See Figure \ref{liftedGraph}(A) for examples of trivial and non-trivial labels and adjacencies.

%%Move to supplement
%For a vertex (resp. side) $x$ the multi-set of non-trivial lifted labels (adjacencies) is $\tilde{L}'_{x} = (\tilde{L}_x, \tilde{N}_x) \subseteq L'_{x}$. 
%\end{definition}

\subsection{Ancestral Variation Graphs}

We can now define a broad class of history graphs for which cost can be computed in polynomial time. 
To do this we will define ambiguity, information that is needed to allow the tractable assessment of cost. There are two types of ambiguity.

%\begin{definition}
The \emph{substitution ambiguity} of a history graph $G$,  denoted $u_s(G)$, is the total number of non-trivial lifted labels in excess of one per vertex.
%: \[u_s(G) = \sum_{x \in V_{\textbf{L}(G)}} max(0, |\tilde{L}'_x|-1),\]
%\end{definition}
Substitution ambiguity reflects uncertainty about MRCA bases. The substitution ambiguity of the history graph in Figure \ref{dnaHistoryGraphs}(B) is 1, as there exists one vertex with two non-trivial lifted labels. 

%\begin{definition}
The \emph{rearrangement ambiguity} of a history graph $G$, denoted $u_r(G)$, is the total number of non-trivial lifted adjacency incidences in excess of one per side.
%:\[u_r(G) = \sum_{x \in V_{\textbf{L}(G)}} max(0, |\tilde{L}'_{x_{head}}|-1) + max(0, |\tilde{L}'_{x_{tail}}|-1),\]
%\end{definition}
Rearrangement ambiguity reflects uncertainty about MRCA adjacencies. The rearrangement ambiguity of the history graph in Figure \ref{dnaHistoryGraphs}(B) is 5, because two sides have three incident non-trivial lifted edges and one side has two incident non-trivial lifted edges.

%\begin{definition}
The \emph{ambiguity} of a history graph $G$ is $u(G) = u_s(G) + u_r(G)$. An \emph{ancestral variation graph} (AVG) $H$ is a history graph such that $u(H) = 0$, i.e. an unambiguous history graph.
%\end{definition}

\begin{lemma}
Simple histories are AVGs.
\end{lemma}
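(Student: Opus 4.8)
The plan is to show directly that every simple history $\mathbf{H}$ (an evolutionary history) satisfies $u_s(\mathbf{H})=u_r(\mathbf{H})=0$. The single observation that drives the whole argument is: \emph{in a simple history, every ancestor of an attached side is itself attached, so the most recent attached ancestor of an attached non-root side is just its parent side}. This follows by induction along branch-trees from the epoch condition that any set of homologous sides is all attached or all unattached: a branch $(u\to v)$ lies inside a single epoch in which $u_\alpha$ and $v_\alpha$ are homologous, so $v_\alpha$ attached forces $u_\alpha$ attached. The label analogue is immediate, since in a simple history every non-free-root vertex is labeled, so the lifting ancestor $A(y)$ of a vertex $y$ is just its parent (or, if $y$ is a root, the free root of its branch-tree). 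Consequently the lifted labels at a vertex $v$ are exactly the multiset $\{\,l(y): y \text{ a child of } v\,\}$, and the lifted adjacencies incident with an attached side $s$ come exactly from the adjacencies on the child sides of $s$ (together with, when $s$ is a free-root side, the single adjacency on the root side it is assigned to). Recall also that the epoch-layers partition $V_{\mathbf H}$, so each vertex is a root of at most one epoch.

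First I would bound $u_s$. For a real vertex $v$: if $v$ is a root of a rearrangement epoch it has one child, hence at most one lifted label; if $v$ is a root of a replication epoch the condition ``a root vertex has at most one child with a label different from its own'' gives at most one lifted label differing from $l(v)$, hence at most one non-trivial lifted label; if $v$ lies in the final layer it has no children and no lifted labels. For a free-root $r$ the only label lifted to $r$ is $l(X)$, where $X$ is the unique root of the corresponding branch-tree, so $r$ carries exactly one lifted label, which is therefore trivial. In all cases the number of non-trivial lifted labels per vertex is at most one, so $u_s(\mathbf{H})=0$.

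Next I would bound $u_r$. Fix an attached side $s$. If $s=X_\alpha$ for a real vertex $X$ that is a root of a rearrangement epoch, then $X$ has a single child, so $s$ is the lifting ancestor of a single side and carries at most one lifted adjacency. If $s$ is a free-root side, then by the observation above it can only be the lifting ancestor of the (single) side $X_\alpha$ with $X$ the unique root of the branch-tree, and since $\mathbf{H}$ is a thread graph there is at most one adjacency on $X_\alpha$, so $s$ again carries at most one lifted adjacency. The remaining case, $s=X_\alpha$ with $X$ a root of a replication epoch, is the crux: here $s$ is the lifting ancestor of the sides $x^1_\alpha,\dots,x^n_\alpha$ of the children of $X$. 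Since $s$ is attached it is joined, by a root-layer adjacency, to some root side $W_\delta$; the replication-epoch condition then forces each $x^i_\alpha$ to be attached to a child $w^{k(i)}_\delta$ of $W_\delta$, whose lifting ancestor is $W_\delta$, so every lifted adjacency incident with $s$ equals $\{X_\alpha,W_\delta\}$. As there is a regular adjacency between $X_\alpha$ and $W_\delta$ and the relevant branch paths are single branches with no intermediate (hence no unattached junction) side, each such lifted adjacency is trivial. Thus $s$ carries no non-trivial lifted adjacency, and $u_r(\mathbf{H})=0$. Combining, $u(\mathbf{H})=0$, i.e. $\mathbf{H}$ is an AVG.

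The main obstacle is precisely the replication-epoch case of the $u_r$ bound: it requires unwinding the definition of a trivial lifted adjacency (both the ``regular adjacency between the lifting ancestors'' clause and the ``no unattached junction side on the lifting paths'' clause) and verifying that the replication-epoch adjacency condition collapses all lifted adjacencies emanating from a root side to the single trivial one. Everything else is bookkeeping once the ``attached implies all ancestors attached'' observation and the layered structure of the epoch decomposition are in hand.
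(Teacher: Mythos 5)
Your proof is correct. Note that the paper itself states this lemma without proof, treating it as an immediate consequence of the definitions, so there is no argument of the authors' to compare against; your write-up supplies exactly the verification the paper leaves implicit. The driving observation --- that in a simple history every vertex is labeled and every ancestor side of an attached side is attached (via the all-or-none condition within each epoch), so lifting ancestors are simply parents (or free roots) --- is the right one, and it correctly reduces $u_s(\mathbf{H})=0$ to the one-child condition of rearrangement epochs and the ``at most one differently labeled child'' condition of replication epochs, and $u_r(\mathbf{H})=0$ to the replication-epoch adjacency condition, with the triviality of the resulting lifted adjacencies checked against both clauses of the definition (empty lifting paths, hence no unattached junction side, plus the regular adjacency between the two lifting ancestors). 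Two trivial bookkeeping remarks: free-root sides are never attached, so they sit awkwardly under your ``fix an attached side $s$'' framing even though your treatment of them is correct, and sides of final-layer vertices (no children, hence no lifted incidences) deserve a one-line mention; neither affects the validity of the argument. Your handling also implicitly relies on the standard reading of the epoch decomposition in which consecutive epochs share a layer and the adjacencies within that layer count for both epochs; this matches the paper's usage (e.g., its description of rearrangement cycles) and is the only reading under which the induction ``attached implies parent attached'' goes through.
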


%\begin{proof}
%Easily verified.
%\end{proof}

While simple histories are AVGs, so are many other history graphs that are far less detailed. For example, the AVG in Figure \ref{AVGExample} is not a simple history.

\begin{figure}[h!]
\begin{center}
\includegraphics[width=12cm]{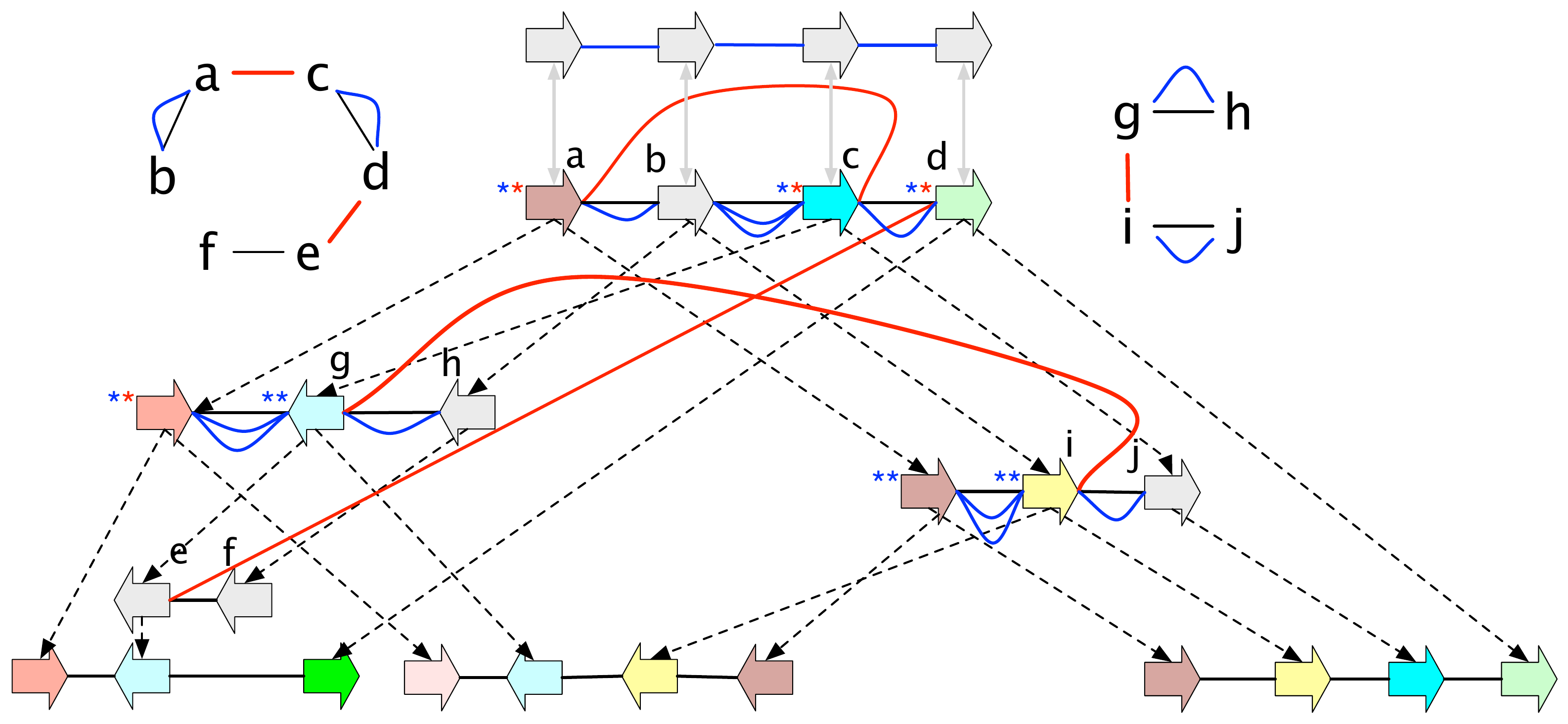}
\caption{The lifted graph for an AVG with (simple) modules containing non-trivial lifted adjacencies highlighted, using the same notation as in Figure \ref{liftedGraph}(A).}
\label{AVGExample}
\end{center}
\end{figure}

\
\subsection{Bounds on Cost}

We provide trivially computable lower and upper bound cost functions for history graphs that are tight for AVGs.
%For history graphs we demonstrate lower and upper bound cost functions for substitutions and rearrangements that are tight for AVGs.3

%\begin{definition}
%%Move to supplement
%Let $\delta_{a,b}$ be the Kronecker delta, i.e. $\delta_{a,b} = 1$ if $a = b$, else $0$.
The \emph{lower bound substitution cost} (LBSC) of a history graph $G$, denoted $s_l(G)$,  is the total number of distinct (not counting duplicates in the multiset) nontrivial lifted labels at all vertices minus the number of unlabeled vertices with non-trivial lifted labels (necessarily free roots). The LBSC of the history graph in Figure \ref{dnaHistoryGraphs}(B) is 4.
%:\[s_l(G) = \sum_{x \in V_{\textbf{L}(G)}} max(0, |\tilde{L}_x| - \delta_{l(x),\emptyset}),\] % where $\delta_{l(x),\emptyset}$ is 1 if $x$ is unlabeled else 0. 
%\end{definition}

%\begin{definition}
The \emph{upper bound substitution cost} (UBSC) of a history graph $G$, denoted $s_u(G)$, is the total number of nontrivial lifted labels at all vertices minus the maximum number of identical lifted labels at each unlabeled vertex with non-trivial lifted labels (again, necessarily free roots).
% \[s_u(G) = \sum_{x \in V_{\textbf{L}(G)}} \left( |\tilde{L}'_{x}| - \mbox{ }\delta_{l(x),\emptyset} \times \mbox{ } \underset{\rho \in \tilde{L}_{x}}{\max\mbox{ }} N_{x}(\rho) \right),\]
%\end{definition}
The UBSC of the history graph in Figure \ref{dnaHistoryGraphs}(B) is 5. For the AVG in Figure \ref{AVGExample}, LBSC = UBSC = 4.

%\begin{definition}
The \emph{module graph} of a history graph $G$ is a multi-graph in which the vertices are the sides of vertices in $L(G)$ that have incident real or lifted adjacencies and the edges are the real and lifted adjacencies in $L(G)$ incident with these sides.
Each connected component in a module graph is called a \emph{module}. 
The set of modules in the module graph for $G$ is denoted $M(G)$.
%\end{definition}
%The modules represent the adjacency/lifted adjacency connected components in the lifted graph. 
Figure \ref{liftedGraph}(B) shows the modules for Figure \ref{liftedGraph}(A). %It is easily verified that for a history graph that is a simple history, this definition is consistent with our earlier definition.

%\begin{definition}
The \emph{lower bound rearrangement cost} (LBRC) for a history graph $G$ is: \[ r_l(G) = \sum_{M \in M(G)} (\ceil[\Big]{\frac{|V_M |}{2}} - 1).\]
%\end{definition}

For a history graph that is a simple history this definition is equivalent to the earlier definition of rearrangement cost for simple histories. 

%%\begin{definition}
%An \emph{ancestral variation graph} (AVG) $H$ is a history graph such that $u(H) = 0$. 
%%\end{definition}

%\begin{definition}
The \emph{upper bound rearrangement cost} (UBRC) of a history graph $G$, denoted $r_u(G)$, is the total number of non-trivial lifted adjacencies in $L(G)$ minus the number of modules in $M(G)$ in which every side has exactly one incident non-trivial lifted edge.
%: \[r_u(G) = \sum_{M \in M(G)} \left( \frac{1}{2}\sum_{x_{\alpha} \in V_M} | \tilde{L}'_{x_{\alpha}}| - \prod_{x_{\alpha} \in V_M} \delta_{1,|\tilde{L}'_{x_{\alpha}}|} \right)\] 
%\end{definition}
The LBRC of the history graph in Figure \ref{dnaHistoryGraphs}(B) is 3 and its UBRC is 6. For the AVG in Figure \ref{AVGExample} LBRC = UBRC = 3.

\begin{theorem}
\label{avgTheorem}
For any history graph $G$ and any cost function $c$, $c(s_l(G), r_l(G)) \le C(G, c) \le c(s_u(G), r_u(G))$ with equality if $G$ is an AVG.
\end{theorem}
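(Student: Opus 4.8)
The plan is to treat the lower bound, the upper bound, and the AVG equality separately, in each case exploiting that a cost function $c$ is nondecreasing in each of its two arguments. The lower bound reduces to proving $s_l(G)\le s(\textbf{H})$ and $r_l(G)\le r(\textbf{H})$ for every realisation $\textbf{H}\in\mathcal{H}(G)$, since then $c(s_l(G),r_l(G))\le c(s(\textbf{H}),r(\textbf{H}))$ for all $\textbf{H}$ and we may take the minimum. The upper bound reduces to exhibiting one realisation $\textbf{H}^{\ast}$ with $s(\textbf{H}^{\ast})\le s_u(G)$ and $r(\textbf{H}^{\ast})\le r_u(G)$, since then $C(G,c)\le c(s(\textbf{H}^{\ast}),r(\textbf{H}^{\ast}))\le c(s_u(G),r_u(G))$. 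Finally, if $u(G)=0$ it suffices to add the purely combinatorial identities $s_l(G)=s_u(G)$ and $r_l(G)=r_u(G)$, because then the two outer expressions coincide and squeeze $C(G,c)$ between them.

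For the lower bound I would first show that $s_l$ and $r_l$ are monotone nonincreasing under each of the five reduction operations. For $s_l$ this is a short case check: deleting an adjacency changes no label, and deleting an isolated vertex, deleting a label, or contracting a branch at an unlabeled unattached vertex can only remove or coalesce lifted-label information, never producing a new distinct nontrivial lifted label (using that a free root carrying two or more lifted labels has them all nontrivial). For $r_l$ the analogous cases are routine except adjacency deletion: removing $\{x_\alpha,y_\beta\}$ deletes that edge together with its lift from the module graph and relifts the real adjacencies of the attached descendants of $x_\alpha$ and of $y_\beta$ onto single common higher anchors, and one checks that the net effect on the module graph is only to delete edges and sides or to split a module, with $\lceil a/2\rceil+\lceil b/2\rceil-1\le\lceil (a+b)/2\rceil$ showing that splitting cannot raise $r_l$. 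Since every realisation satisfies $G\redeq\textbf{H}$, monotonicity gives $s_l(G)\le s_l(\textbf{H})$ and $r_l(G)\le r_l(\textbf{H})$; combined with $s_l(\textbf{H})\le s(\textbf{H})$ (each distinct nontrivial lifted label is witnessed by at least one differing-label branch) and the already-noted identity $r_l(\textbf{H})=r(\textbf{H})$ for simple histories, the lower bound follows.

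For the upper bound I would build $\textbf{H}^{\ast}$ in two stages. First, extend $G$ to an AVG $G^{\dagger}$ by repeatedly resolving a single unit of ambiguity: to separate the several nontrivial lifted labels at a vertex, splice in a labeled intermediate vertex under which the relevant descendants regroup; to separate the several incident nontrivial lifted adjacencies at a side, splice in an attached but unlabeled intermediate vertex that becomes their common lifting ancestor. Because inserting an unattached vertex leaves all lifted adjacencies unchanged and inserting an unlabeled vertex leaves all lifted labels unchanged, the two resolution processes do not interfere, and an appropriate choice of the inserted label (resp.\ the inserted adjacencies) strictly decreases $u(G)$ while increasing neither $s_u$ nor $r_u$; hence the process terminates at an AVG $G^{\dagger}$ with $s_u(G^{\dagger})\le s_u(G)$ and $r_u(G^{\dagger})\le r_u(G)$. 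Second, extend $G^{\dagger}$ to a simple history $\textbf{H}^{\ast}$ realising its cost, placing exactly one substitution on the parent branch of each labeled vertex carrying a nontrivial lifted label and exactly the required double-cut-and-join operations, sequenced into a chain of rearrangement and replication epochs; acyclicity of the event graph supplies the topological order making such a sequencing legitimate. Then $s(\textbf{H}^{\ast})=s_l(G^{\dagger})=s_u(G^{\dagger})\le s_u(G)$ and $r(\textbf{H}^{\ast})=r_l(G^{\dagger})=r_u(G^{\dagger})\le r_u(G)$, using the AVG identities of the next paragraph.

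For the AVG equality I would unfold the definitions. When $u_s(G)=0$, every vertex carries at most one nontrivial lifted label and no free root carries two lifted labels, so both LBSC and UBSC equal the number of labeled vertices bearing a nontrivial lifted label. When $u_r(G)=0$, every side meets at most one incident nontrivial lifted adjacency, so the nontrivial lifted adjacencies inside a module form a matching; a short structural argument then shows each module containing a nontrivial lifted edge is a simple alternating path or cycle of real and nontrivial lifted adjacencies (trivial lifted edges adding no extra sides), for which $\lceil |V_M|/2\rceil-1$ equals the number of its nontrivial lifted edges, minus one exactly when every side of the module meets exactly one of them, which is precisely the bookkeeping defining UBRC; hence LBRC $=$ UBRC. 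Together with the two inequalities this gives $c(s_l(G),r_l(G))=C(G,c)=c(s_u(G),r_u(G))$. I expect the two main obstacles to be the adjacency-deletion case of the $r_l$ monotonicity argument, because of the relifting of descendant adjacencies, and the epoch-scheduling step that converts an AVG into a simple history of exactly the right cost without violating the epoch constraints or acyclicity.
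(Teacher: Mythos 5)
Your overall decomposition is the same as the paper's: the lower bound via monotonicity of $s_l$ and $r_l$ under the reduction operations together with their agreement with cost on simple histories, the upper bound via a two-stage construction (ambiguity-resolving extensions to an AVG that do not increase $s_u,r_u$, then completion of that AVG to a realisation of equal cost), and tightness for AVGs via $s_l=s_u$ and $r_l=r_u$. However, the two steps you yourself flag as obstacles are exactly where the argument, as written, does not go through. In the adjacency-deletion case of $r_l$-monotonicity your structural claim is false: deleting $\{x_\alpha,y_\beta\}$ makes those sides unattached, so lifted adjacencies that previously terminated at $x_\alpha$ or $y_\beta$ are re-lifted to higher attached ancestors or free roots, and the re-lifted edges can land in a \emph{different} module, merging it with the remnant of the old one; the effect is not restricted to deleting sides/edges and splitting. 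Merging is the dangerous direction, since $\lceil (a+b)/2\rceil-1$ can exceed $(\lceil a/2\rceil-1)+(\lceil b/2\rceil-1)$, and the inequality you cite only covers splitting. The paper closes this case by rewriting $r_l(G)=|E_G|+Q(G)-|M(G)|$, where $Q(G)=\sum_{M}\lceil q(M)/2\rceil$ counts unattached (free-root) sides, and then showing that a single adjacency deletion can decrease $|M|$ by at most $2$ and increase $Q$ by at most $1$, with the extreme cases mutually exclusive, so the loss of the edge term is never overcome. Some accounting of this kind is needed; ``delete or split'' alone does not suffice.

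The second gap is the passage from the AVG $G^{\dagger}$ to a simple history of the same cost, which you assert rather than construct. Invoking acyclicity and ``one substitution per nontrivial lifted label'' is not enough: one must attach every side, supply every missing child and parent, and label every vertex so that the graph partitions into legal rearrangement and replication epochs (homologous sides uniformly attached or unattached, adjacencies of a replicated root mirrored in its children) without creating any new non-trivial lifted label or adjacency and without raising $r_l$. The paper devotes a ten-case construction to precisely this: padding roots and leaves, forcing every module to contain an even number of sides, and creating adjacencies only between sides whose threads are equal or unrelated (to preserve acyclicity of the event graph) and whose lifting ancestors are adjacent or carry no non-trivial lifts (to preserve the cost counts), before interpolating a full layer of vertices on every branch. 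Since these two steps carry the real content of the lower- and upper-bound halves respectively, the proposal as it stands has genuine gaps there, even though its skeleton matches the paper's proof.
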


%\begin{proof}
The proof is given in Appendix A.
%\end{proof}

Theorem \ref{avgTheorem} demonstrates that LBSC and LBRC are lower bounds on cost, UBSC and UCRC are upper bounds on cost, and that all these bounds become tight at the point of zero ambiguity. This implies that to assess cost of an arbitrary history graph $G$ we need only search for extensions of $G$ to the point that they have zero ambiguity and not the complete set of simple history realisations of $G$. For an AVG $H$, as the lower and upper bounds on cost are equivalent, we write $r(H) = r_l(H) = r_u(H)$ and $s(H) = s_l(H) = s_u(H)$.

\subsection{$G$-Optimal AVGs}

We now explore the process of sampling AVG extensions of an initial starting graph.
Though it is possible to start from any history graph, in practice we are likely to start from a history graph $G$ based on sequence alignments, such as that shown in Figure \ref{dnaHistoryGraphs}(A). If $G$ is already an AVG, by Theorem \ref{avgTheorem}, it is trivial to assess its cost. If not we sample AVG extensions of $G$ in order to assess cost and explore the set of most parsimonious realisations of $G$. With the aim of restricting this search, ultimately to a finite space, we first define the set of \emph{$G$-optimal AVGs}.
%, which we will later prove is always finite.

%\begin{definition}
An AVG extension $H$ of a history graph $G$  is \emph{$G$-parsimonious} w.r.t. a cost function $c$ if  $C(G,c) = c(s(H), r(H))$.
% is minimal among all AVG extensions of $G$. 
%\end{definition}
%The set of $G$-parsimonious AVGs contains those AVGs that are most parsimonious for some cost function. 
%In devising a sampling approach for addressing the history graph cost problem it is the $G$-parsimonious AVGs that we are concerned with sampling.
The set of $G$-parsimonious AVGs is necessarily infinite: it is always possible to add arbitrary vertices without affecting substitution or rearrangement costs. To avoid the redundant sampling of AVG extensions of $G$ and their own extensions we define the notion of minimality.
%To avoid sampling an infinite space we define the notion of minimality.

%\begin{definition}
An AVG extension $H$ of $G$ is \emph{$G$-minimal} if there is no other AVG $H'$ such that $G \red H' \red H$.
%\end{definition}
The set of $G$-minimal AVGs contains those AVGs that can not be reduced without either ceasing to be AVGs or extensions of $G$. 
This set is also infinite for some DNA history graphs (Lemma \ref{infiniteMinimalExtensions} below). 

%\begin{definition}
An AVG is \emph{$G$-optimal} w.r.t. a cost function $c$ if it is both $G$-parsimonious w.r.t. to $c$ and $G$-minimal.
%\end{definition}
We establish below that the set of $G$-optimal AVGs is finite for any history graph $G$. 
By definition, any $G$-parsimonious AVG is either $G$-minimal or has a $G$-minimal reduction therefore we can implicitly represent and explore the set of parsimonious realisations of $G$ by sampling just the $G$-optimal AVGs. 
%it is sufficient to sample $G$-optimal to address the history graph cost problem.
 
 \subsection{$G$-Bounded History Graphs}

Unfortunately, because the history graph cost problem is NP-hard, it is unlikely that there exists an efficient way to sample only $G$-optimal. Instead, we now define a finite bounding set that contains $G$-optimal and can be efficiently searched. Conveniently this bounding set is the same for all cost functions.

%\begin{definition}
A label of a vertex $x$ is a \emph{junction} (overloading the term junction, but using it analogously) if $x$ has more than one lifted label, else it is a \emph{bridge} if $x$ has one lifted label, its lifted label is non-trivial, the most recent labeled ancestor of $x$ is labeled the same as $x$ and this ancestor has at least one non-trivial lifted label (see Figure \ref{nonMinimalElements}(A,B)).
%\end{definition}

%\begin{definition}

A side $x_\alpha$ is a \emph{bridge side} if it is not a junction, is incident with one non-trivial lifted adjacency and an adjacency $e$ that defines a trivial lifted adjacency $e'$ whose $A(x_{\alpha})$ endpoint is a junction side incident with a non-trivial lifted adjacency, and such that if $e$ is deleted at least one endpoint of $e'$ in the original graph remains a junction side in the resulting graph (see Figure \ref{nonMinimalElements}(C,D)). An adjacency is a \emph{junction} (again, overloading the term junction) if either of its endpoints are junctions, else it is a \emph{bridge} (overloading bridge) if either of its endpoints are bridge sides.

An element is \emph{non-minimal} if it is a branch with a free-child or free-parent, an isolated vertex, or label or adjacency that is not a junction or bridge.
%\end{definition}

\begin{figure}[h!]
\begin{center}
\includegraphics[width=12cm]{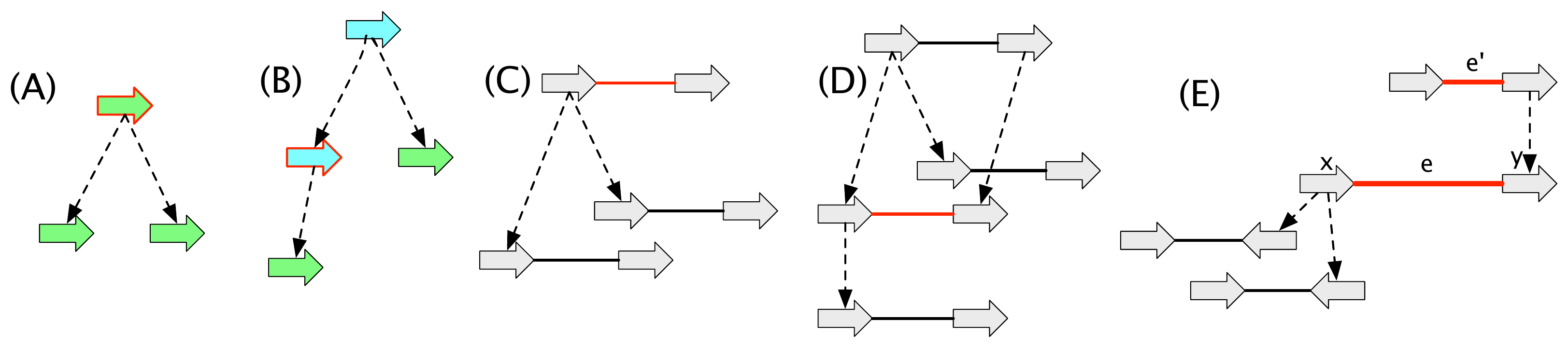}
\caption{\textbf{(A)} A junction label. \textbf{(B)} A bridge label. \textbf{(C)} A junction adjacency. \textbf{(D)} A bridge adjacency. \textbf{(E)} An example of a pair of ping-pong adjacencies. The named elements are outlined in red. } %\textbf{(E)} An extension of the history graph in Figure \ref{dnaHistoryGraphs}(C), outlined in red are G-reducible elements that are either non-minimal or which become non-minimal after the removal of other non-minimal elements. After the iterative removal of the non-minimal elements the graph contains no G-reducible non-minimal elements and is a $G$-bounded AVG.}
\label{nonMinimalElements}
\end{center}
\end{figure}

%\begin{definition}
%For history graph $G'$ and a reduction $G$
For $G \redeq G'$, an element in $G'$ is \emph{$G$-reducible} if there exists a reduction operation in a reduction sequence from $G'$ to $G$ that either deletes the element if it is an adjacency, label or vertex or contracts it if it is a branch. 
%An element in $G'$ that is not $G$-reducible is termed \emph{$G$-irreducible}. 
%\end{definition}
%Figure \ref{nonMinimalElements}(E) highlights $G$-reducible non-minimal elements in an AVG extension of the graph in Figure  \ref{nonMinimalElements}(C).
We are interested in the set of $G$-reducible elements of an extension of $G$, as they are the elements which may be added and removed during an iterative sampling procedure. 
%Note, a $G$-irreducible element in $G'$ is not deleted or contracted in any reduction sequence for $G \redeq G'$, but a $G$-reducible element is not necessarily deleted or contracted in every reduction sequence for $G \redeq G'$. 

%\begin{definition}
For $G \redeq G'$, the \emph{$G$-unbridged graph} of $G'$ is the reduction resulting from the deletion of all $G$-reducible bridge adjacencies in $G'$.
%\end{definition}
%\begin{definition}
A side $x_\alpha$ that has no attached descendants is a \emph{hanging side}. 
A pair of adjacencies $e$ and $e'$, each with a hanging side, and such that $e$ has an endpoint whose most recent attached ancestor is incident with $e'$, form a pair of \emph{ping-pong adjacencies}. We call $e$ the \emph{ping adjacency} and $e'$ the \emph{pong adjacency} (Figure \ref{nonMinimalElements}(E)).
%\end{definition}
%\begin{definition}

A history graph $G'$ is \emph{$G$-bounded} if it is an extension of $G$ that does not contain a $G$-reducible non-minimal element and its $G$-unbridged graph does not contain a $G$-reducible ping adjacency.  
%\end{definition}

\begin{theorem}
\label{containsGOptimal}
The set of $G$-bounded AVGs contains the $G$-optimal AVGs for every cost function.
\end{theorem}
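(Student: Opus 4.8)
The plan is to show the containment $\{G\text{-optimal AVGs}\} \subseteq \{G\text{-bounded AVGs}\}$ by contrapositive: if an AVG extension $H$ of $G$ fails to be $G$-bounded, then it fails to be $G$-optimal, i.e. it is either not $G$-minimal or not $G$-parsimonious. By definition a $G$-optimal AVG is $G$-minimal and $G$-parsimonious, so it suffices to produce, from any violation of $G$-boundedness, a strictly smaller AVG $H'$ with $G \red H' \redeq H$ that has substitution and rearrangement costs no greater than those of $H$ (establishing non-minimality, or if costs strictly decrease, contradicting $G$-parsimony of $H$ directly; either way $H$ is not $G$-optimal). So the whole argument reduces to a case analysis over the two ways $H$ can fail to be $G$-bounded.

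First I would handle the case that $H$ contains a $G$-reducible non-minimal element $\eta$. By definition $\eta$ is a branch with a free-child/free-parent, an isolated vertex, or a label or adjacency that is not a junction or bridge, and $G$-reducibility means some reduction sequence from $H$ to $G$ removes/contracts $\eta$. The key sublemma here is that deleting (or contracting) $\eta$ preserves the AVG property — this is where one must check that removing a non-junction, non-bridge label does not create a second non-trivial lifted label somewhere, and removing a non-junction, non-bridge adjacency does not create an MRCA junction side with excess non-trivial lifted adjacencies; the definitions of junction and bridge are precisely engineered so that non-junction/non-bridge elements are "removable without introducing ambiguity." One also checks costs do not increase: removing an adjacency or label cannot raise LBSC/LBRC or UBSC/UBRC in a way that increases $c$ (monotonicity of $c$ is used here), and in fact for AVGs $s$ and $r$ are well-defined single values, so I would argue $s(H') \le s(H)$ and $r(H') \le r(H)$ directly. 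Since $G \red H' \redeq H$ (the operation is legitimate because $\eta$ is $G$-reducible, so $H'$ still extends $G$), $H$ is not $G$-minimal, hence not $G$-optimal.

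Next I would handle the case that $H$ is an extension of $G$ with no $G$-reducible non-minimal element, but its $G$-unbridged graph $H^-$ contains a $G$-reducible ping adjacency $e$ (with pong partner $e'$). The idea is that the ping--pong configuration encodes a "wasteful" pair of rearrangements: $e$ has a hanging side (no attached descendants) and points back at the most recent attached ancestor incident with $e'$, so intuitively the adjacency $e$ could be rerouted or deleted and re-created more cheaply, i.e. $H$ spends a rearrangement on $e$ that a strictly smaller AVG extension of $G$ can avoid. Here I would carefully produce the smaller AVG: delete (or relocate) the $G$-reducible adjacency $e$ in $H$ — legitimacy again from $G$-reducibility — argue the result is still an AVG (the hanging-side hypothesis controls what lifted adjacencies vanish, so no new ambiguity appears), and compare costs via the module-graph formula for LBRC and the UBRC count, showing $r$ does not increase (and $s$ is untouched). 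The subtlety is that $e$ was $G$-reducible in $H^-$, not necessarily in $H$; so I must first restore/track the bridge adjacencies that were removed to form $H^-$ and argue a reduction sequence in $H$ that deletes $e$ still exists, or work within $H^-$ and then re-extend — this bookkeeping between $H$ and $H^-$ is a technical nuisance but the $G$-unbridged graph is defined exactly to make bridge adjacencies (which are "cheap" redundancies) invisible so that the genuinely wasteful ping structure is exposed.

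The main obstacle I expect is the cost-comparison bookkeeping in both cases: proving that the single reduction step does not increase $s$ or $r$ for the resulting AVG requires re-deriving how LBSC/LBRC/UBSC/UBRC (which by Theorem \ref{avgTheorem} coincide for AVGs) change under deletion of a non-junction/non-bridge label or adjacency, or under deletion of a ping adjacency. The junction/bridge definitions are tailored so that the "excess over one per vertex/side" counts behave monotonically, but verifying this cleanly — especially that removing an adjacency incident to a free root, or one whose lifting ancestors are MRCAs, does not disconnect a module in a way that raises $\ceil{|V_M|/2}-1$ summed over the new modules — is the delicate combinatorial core. The ping--pong case is the harder of the two, because one must exhibit the smaller AVG explicitly and the hanging-side condition has to be leveraged precisely to guarantee both the AVG property and the cost inequality; I would expect to need an auxiliary lemma (likely proved in the appendix alongside Theorem \ref{avgTheorem}) characterizing exactly which lifted adjacencies are destroyed when a ping adjacency is removed.
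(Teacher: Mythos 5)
Your first case (a $G$-reducible non-minimal element) follows the paper's route: the paper's Lemma \ref{noNonMinimalElementsLemma} shows that deleting such an element from an AVG yields an AVG reduction that still extends $G$, so $H$ is not $G$-minimal; note that no cost comparison is needed there at all, since non-minimality alone already kills $G$-optimality.

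The genuine gap is in the ping--pong case. Your plan is to produce a witness $H'$ with $G \red H' \redeq H$, i.e.\ a \emph{reduction} of $H$, by deleting the ping adjacency $e=\{x_\alpha,y_\beta\}$ and arguing that ``no new ambiguity appears.'' That claim is false in general: once $e$ is deleted, $x_\alpha$ becomes unattached, so lifted adjacencies that previously terminated at $x_\alpha$ now lift past it to its most recent attached ancestor (which is incident with the pong adjacency), and $x_\alpha$ may become an unattached junction side that turns previously trivial lifted adjacencies non-trivial; either effect can create rearrangement ambiguity, so the deletion need not yield an AVG, and no reduction of $H$ need serve as the witness. The paper's Lemma \ref{noPingsElementsLemma} instead deletes $e$ \emph{and re-attaches} $x_\alpha$ to a newly created vertex (joined by a branch to the vertex adjacent to $A(x_\alpha)$ when such an ancestor exists), producing an AVG extension $H'''$ of $G$ that is \emph{not} a reduction of $H$ but whose most parsimonious realisation has strictly one fewer rearrangement and the same substitutions; monotonicity plus the non-zero rearrangement cost then contradicts $G$-parsimony. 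Your fallback argument also only claims ``$r$ does not increase,'' which is too weak: since the rerouted graph is not a reduction of $H$, a non-strict inequality yields neither non-minimality nor non-parsimony, so without establishing the strict decrease (and without the compensating attachment that restores the AVG property) the ping--pong half of the theorem does not go through. Finally, the bookkeeping you flag between $H$ and its $G$-unbridged graph is handled in the paper by performing the modification in the unbridged graph and observing that the result is the $G$-unbridged graph of a valid AVG extension, relying on Lemma \ref{onlyJunctionAdjacenciesLemma} to know the ping is a junction adjacency; this step is part of the argument, not an optional nuisance.
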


%\begin{proof}
The proof is given in Appendix B.
%\end{proof}

Importantly, the following theorem demonstrates that there is a constant $k$ such that any $G$-bounded history graph is at most $k$ times the cardinality of $G$.

\begin{theorem} 
\label{finiteGBoundedTheorem}
A $G$-bounded history graph contains less than or equal to $\max(0, 10n - 8)$ $G$-reducible adjacencies and $\max(0, 2m - 2, 20n - 16, 20n + 2m - 18)$ additional vertices, where $n$ is the number of adjacencies in $G$ and $m$ is the number of labeled vertices in $G$. This bound is tight for all values of $n$ and $m$.
%A $G$-bounded history graph contains less than or equal to $\max(0, 10n - 8)$ more adjacencies than $G$ and $\max(0, 2m - 2, 20n - 16, 20n + 2m - 18)$ more vertices than $G$, where $n$ is the number of adjacencies in $G$ and $m$ is the number of labeled vertices in $G$. This bound is tight for all values of $n$ and $m$.
\end{theorem}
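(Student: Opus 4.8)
The plan is to prove the two cardinality bounds by separate but interlocking charging arguments, both of which feed off the structural restrictions packed into the definition of a \emph{$G$-bounded} graph $G'$: no $G$-reducible element is non-minimal (so every $G$-reducible adjacency is a junction or a bridge adjacency, every $G$-reducible label is a junction or a bridge label, every $G$-reducible branch has a child that is not a free-child and a parent that is not a free-parent, and there are no $G$-reducible isolated vertices), and the $G$-unbridged graph of $G'$ contains no $G$-reducible ping adjacency. Throughout I would fix a reduction sequence from $G'$ to $G$ and track how the lifted graph evolves along it.

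First I would bound the $G$-reducible adjacencies. Every lifted adjacency of $\mathbf{L}(G')$ is the lift of a real adjacency, and since the real adjacencies of $G'$ are exactly the $n$ adjacencies of $G$ together with the $G$-reducible ones, each contributing two lifted-adjacency incidences, a junction side (an MRCA carrying $\ge 2$ lifted adjacencies) consumes at least two incidences. A $G$-reducible \emph{bridge} adjacency is, by definition, trivial and lies immediately below a junction side whose junction status survives its removal, so I would charge it to that junction side; the no-ping-pong condition on the $G$-unbridged graph is precisely what prevents an unbounded stack of bridge adjacencies hanging below a single junction side, since after the reducible bridges are deleted a hanging side beneath such a stack would pair with the surviving adjacency above it to form a forbidden ping adjacency — so the charge per junction side is $O(1)$. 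A $G$-reducible \emph{junction} adjacency is charged to its junction endpoint, and the number of junction sides is itself $O(n)$ because each one must trace, through the reduction sequence, back to a pair of earlier adjacencies that ultimately descend from the $n$ adjacencies of $G$. Assembling these charges and tracking the constants case by case is what yields $\max(0,10n-8)$.

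Next I would bound the additional vertices. Because a $G$-reducible branch cannot have a free-child, any additional vertex that is the child of a $G$-reducible branch is labeled or attached; because it cannot have a free-parent, an added root is labeled, attached, or has $\ge 2$ children; together with acyclicity this forbids long runs of unlabeled, unattached, non-branching additional vertices on branch paths, so every additional vertex lies within bounded branch-distance of an \emph{anchor}, namely a labeled additional vertex, an endpoint of a $G$-reducible adjacency, a branching point of the branch forest, or a vertex of $G$. Labeled additional vertices are junction or bridge labels, and — by the same stream-merging idea used for junction adjacencies, with bridge labels stacking only in bounded chains by minimality — their number is $O(m)$; attached additional vertices are endpoints of the at most $10n-8$ $G$-reducible adjacencies; additional branching points are bounded by additional leaves, which are themselves labeled or attached. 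Propagating these estimates, paying attention to which term dominates, gives $\max(0,2m-2,20n-16,20n+2m-18)$. For tightness I would finally exhibit explicit $G$ together with $G$-bounded AVGs for each regime of the maximum: a single thread of $m$ labeled segments for the $2m-2$ term, a configuration of $n$ adjacencies whose ancestral resolution is forced to full depth for the $20n-16$ term, and a superposition of the two for the mixed term.

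The hard part will be extracting the \emph{exact} constants rather than mere $O(n)$ and $O(m)$ bounds: this requires a delicate amortized analysis over a reduction sequence, a careful enumeration of the sub-cases of junction versus bridge adjacencies (and of junction versus bridge labels), and a precise accounting of how the ping-pong condition on the $G$-unbridged graph curtails chains of bridges; the vertex count then compounds these constants, so matching the four-way maximum and verifying that it is simultaneously tight is where the real effort lies.
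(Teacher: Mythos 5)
Your overall shape---classify the $G$-reducible adjacencies as junctions or bridges, bound the junctions in terms of $n$, bound the bridges against the junctions, bound the labels by roughly $2m$, treat attached additional vertices as endpoints of reducible adjacencies, and finish tightness with explicit extremal families---does match the paper's strategy. But there is a genuine gap: the entire content of this theorem is the exact constants, and your argument stops at $O(n)$ and $O(m)$ ``charges'' with an explicit admission that extracting the precise bounds is deferred. The paper's quantitative engine, which your sketch does not contain, is a double-counting framework on the lifted graph: each adjacency is assigned a \emph{received incidence} and a \emph{projected incidence} whose totals must agree; the incidence sum of $G$ is at most $2n-2$; in the $G$-unbridged graph every $G$-reducible adjacency is a junction and hence has non-positive incidence transmission; this yields at most $2n-1$ adjacencies with negative transmission (or $G$-irreducible with zero transmission), and, using the no-ping condition, at most $3n-3$ $G$-reducible zero-transmission junction adjacencies, hence at most $5n-4$ junction adjacencies in total. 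Bridges are then bounded by another $5n-4$ not by a local charge but by an exchange argument: if two $G$-reducible bridge adjacencies lifted to the same adjacency, one could construct a graph $H$ with the same number of adjacencies as $G$ but one more junction adjacency, contradicting the $5n-4$ bound; this is what gives $10n-8$ and then $20n-16$ attached vertices.

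Your specific mechanism for the bridges is also misplaced. You charge bridge adjacencies to junction sides and claim the ping-pong prohibition bounds the ``stack'' of bridges beneath a single junction, but the ping condition is imposed only on the $G$-unbridged graph, i.e.\ after all $G$-reducible bridge adjacencies have been deleted, so it cannot directly limit how many bridges hang below a junction side; in the paper it is used instead to control hanging junction adjacencies of zero incidence transmission. Without the exchange argument there is no route from your charging scheme to $10n-8$ with tightness. Similarly, for the vertex bound you gesture at ``anchors,'' but the four-way maximum $\max(0,\,2m-2,\,20n-16,\,20n+2m-18)$ requires a separate contradiction argument for additional vertices that are both unlabeled and unattached (each such vertex is a root with two or more labeled-or-attached children, and its existence would produce a pair $(H,H')$ violating the adjacency or label bounds); this step is absent from the proposal.
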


%\begin{proof}
The proof is given in Appendix C.
%\end{proof}

The set of $G$-bounded history graphs and, by inclusion, the set of $G$-optimal AVGs are therefore finite.

\subsection{The $G$-bounded Poset}

Finally we demonstrate how to navigate between $G$-bounded history graphs using a characteristic set of operations that define a hierarchy between these graphs.

\begin{figure}[h!]
\begin{center}
\includegraphics[width=12cm]{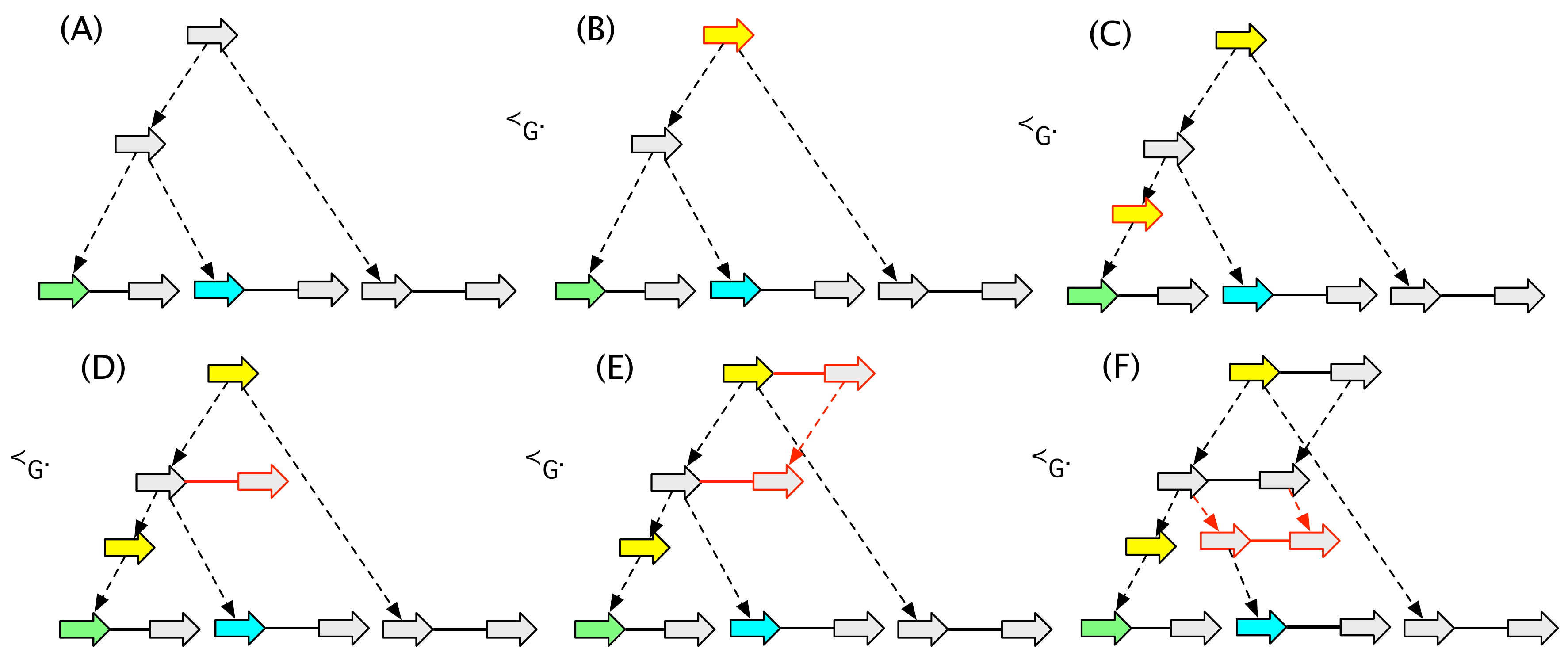}
\caption{A sequence of $G$-bounded extension operations that convert the graph in \textbf{(A)} into the AVG in \textbf{(F)}. 
}
\label{walkingForwardsExample}
\end{center}
\end{figure}

%\begin{definition}
For a vertex $x$ in a $G$-bounded history graph the \emph{composite minimisation} of $x$ is as follows:
\begin{itemize}
\item If $x$ is unattached and unlabeled and has a $G$-reducible parent branch, the contraction of the parent branch, renaming the resulting merged vertex $x$.
\item If $x$ is then an unattached, unlabeled root and has a single $G$-reducible child branch, the contraction of the child branch, renaming the resulting merged vertex $x$.
\item The deletion of $x$ if subsequently isolated, unlabeled and $G$-reducible.
\end{itemize}
%\end{definition}
%\begin{definition}
A \emph{$G$-bounded reduction operation} on a $G$-bounded history graph is one of the following operations, provided it results in a $G$-bounded history graph.
\begin{itemize}
\item  a \emph{label detachment}: the deletion of a $G$-reducible label on a vertex $x$, followed by the composite minimisation of $x$ (Figure \ref{walkingForwardsExample}(A-C)).
\item an \emph{adjacency detachment}: the deletion of a $G$-reducible adjacency $\{ x_{\alpha}, y_{\beta} \}$ followed by the composite minimisation of $x$ and $y$ (Figure \ref{walkingForwardsExample}(D-F)). The inverse of an adjacency detachment is an \emph{adjacency attachment}.
\item a \emph{lateral-adjacency detachment}: the adjacency detachment of a pair of $G$-reducible junction adjacencies $\{ x_{\alpha}, y_{\beta} \}$ and $\{ A(x_{\alpha}), A(y_{\beta}) \}$, and a subsequent adjacency attachment that creates an adjacency that includes $x_{\alpha}$ or $y_{\beta}$ as an endpoint (Figure \ref{walkingForwardsExample}(D-E)).
\end{itemize}

Note that the first two $G$-bounded reduction operations are combinations of reduction operations, while the lateral-adjacency detachment, which proves necessary to avoid creating intermediate graphs with $G$-reducible ping-pong edges, involves both reduction and extension operations, but always reduces the total number of adjacencies. 
As with reduction operations, the inverse of a $G$-bounded reduction operation is a \emph{$G$-bounded extension operation}.
%\end{definition}
%\begin{definition}
A $G$-bounded history graph $G'$ is a \emph{$G$-bounded reduction} (resp. extension) of another $G$-bounded history graph $G''$ if $G'$ is isomorphic to a graph that can be obtained from $G''$ by a sequence of $G$-bounded reduction (resp. extension) operations.
%\end{definition}

\begin{lemma}
The $G$-bounded reduction relation is a partial order. 
\end{lemma}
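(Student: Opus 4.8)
The plan is to verify the three defining properties of a partial order---reflexivity, antisymmetry, and transitivity---for the $G$-bounded reduction relation on the set of $G$-bounded history graphs, reusing as much as possible the already-established fact that the ordinary reduction relation is a partial order.

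First I would dispatch reflexivity and transitivity, which are essentially immediate from the definition. Reflexivity holds because the empty sequence of $G$-bounded reduction operations witnesses $G' \redeq_{G\text{-bnd}} G'$ (up to isomorphism). Transitivity holds because concatenating a $G$-bounded reduction sequence from $G''$ to $G'$ with one from $G'''$ to $G''$ yields a $G$-bounded reduction sequence from $G'''$ to $G'$; the only thing to check is that each intermediate graph along the concatenation is itself $G$-bounded, which is guaranteed since every $G$-bounded reduction operation is by definition only permitted when its output is a $G$-bounded history graph, so all intermediate graphs in both subsequences already have this property.

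The main obstacle is antisymmetry: if $G' \redeq_{G\text{-bnd}} G''$ and $G'' \redeq_{G\text{-bnd}} G'$, I must show $G' \cong G''$. The natural route is to show that each $G$-bounded reduction operation induces an ordinary reduction $G' \redeq G''$ in the sense already defined, so that antisymmetry of the ordinary reduction partial order (an earlier lemma) finishes the job. For label detachment and adjacency detachment this is clear: each is a composition of a label/adjacency deletion followed by a composite minimisation, and a composite minimisation is itself a sequence of branch contractions (of free-parent or free-child type) and a vertex deletion---all reduction operations---so the composite is a reduction sequence and hence $G'' \red G'$ strictly (the operation removes a label or adjacency, so it cannot be an isomorphism). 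The subtle case is the lateral-adjacency detachment, which deletes a pair of adjacencies but then \emph{attaches} a new adjacency, and an attachment is an extension, not a reduction. Here I would use the parenthetical remark in the text: a lateral-adjacency detachment ``always reduces the total number of adjacencies.'' Concretely, it deletes two adjacencies $\{x_\alpha,y_\beta\}$ and $\{A(x_\alpha),A(y_\beta)\}$ and adds back only one, so the net effect strictly decreases the adjacency count. Therefore, under any of the three operation types, passing from $G''$ to $G'$ strictly decreases a nonnegative integer-valued monotone quantity---for label detachment the number of labels, and for the two adjacency-type operations the number of adjacencies (and label detachment does not change the adjacency count, adjacency/lateral detachment do not increase the label count). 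So define a potential $\Phi(H) = (\text{number of adjacencies of } H) + (\text{number of labeled vertices of } H)$; every $G$-bounded reduction operation strictly decreases $\Phi$.

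With this potential in hand antisymmetry is immediate: if $G' \redeq_{G\text{-bnd}} G''$ via a nonempty sequence then $\Phi(G') < \Phi(G'')$, and likewise a nonempty sequence the other way forces $\Phi(G'') < \Phi(G')$, a contradiction; hence at least one of the two reduction sequences is empty, giving $G' \cong G''$. I would close by noting explicitly that $\Phi$ is invariant under isomorphism so the argument is well-defined on isomorphism classes, and that finiteness of $G$-bounded history graphs (Theorem \ref{finiteGBoundedTheorem}) is not even needed---the strictly-decreasing potential argument suffices on its own. The one place to be careful is confirming that the composite minimisation steps inside a label or adjacency detachment never \emph{increase} $\Phi$: branch contractions leave both adjacency and label counts unchanged, and the final vertex deletion only removes an isolated unlabeled vertex, so indeed $\Phi$ is non-increasing across the minimisation and strictly decreased by the initial deletion; I would state this as the key verification.
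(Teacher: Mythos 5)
Your proof is correct. Note that the paper itself states this lemma without any proof (it is evidently regarded as routine), so there is no written argument to compare against; what can be said is that your write-up supplies the one point that is genuinely not routine and supplies it correctly. The naive strategy of inheriting antisymmetry from the ordinary reduction partial order fails precisely because a lateral-adjacency detachment is not a composition of reduction operations (it re-attaches an adjacency), and you identify this and repair it with the potential $\Phi(H)=\#\text{adjacencies}+\#\text{labeled vertices}$: label detachment drops the label count by one and leaves adjacencies unchanged, adjacency detachment drops the adjacency count by one, and lateral-adjacency detachment deletes two adjacencies and restores one, so $\Phi$ strictly decreases under every $G$-bounded reduction operation — exactly the fact the paper gestures at with its remark that the lateral move ``always reduces the total number of adjacencies.'' Your check that composite minimisation (branch contractions of free-parent/free-child type and deletion of an isolated unlabeled vertex) never alters $\Phi$ is the right verification, and your observation that Theorem \ref{finiteGBoundedTheorem} is not needed is accurate. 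The only cosmetic gap is in transitivity: since the relation is defined up to isomorphism, concatenating the two sequences formally requires transporting the second sequence along the isomorphism, which is standard and harmless but worth a clause; the same convention also covers reflexivity via the empty sequence, and $\Phi$ being an isomorphism invariant makes the antisymmetry argument well defined, as you note.
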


%\begin{proof}
%Easily verified.
%\end{proof}

%\begin{definition}
The \emph{$G$-bounded poset} is the set of $G$-bounded history graphs with the $G$-bounded reduction relation. We write $\red_G$ to denote the $G$-bounded reduction relation and $\red\cdot_G$ to denote its covering relation (i.e. $A \red\cdot_G B$ iff $A \red_G B$ and there exists no $C$ such that $A \red_G C \red_G B$). 

\begin{theorem}
\label{posetTheorem}
The $G$-bounded poset is finite, has a single least element $G$, and its maximal elements are all AVGs. Also, $G' \red\cdot_G G''$ iff there exists a single $G$-bounded reduction operation that transforms $G''$ into $G'$.
%the set of $G$-bounded reduction operations by the covering relation of the partial order is a quotient set.
\end{theorem}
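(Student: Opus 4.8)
# Proof Proposal for Theorem \ref{posetTheorem}

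The plan is to establish the four assertions in sequence, reusing the infrastructure already built. \textbf{Finiteness} follows immediately from Theorem \ref{finiteGBoundedTheorem}: up to isomorphism there are only finitely many $G$-bounded history graphs, since each has a bounded number of additional vertices and $G$-reducible adjacencies over $G$, and every element is determined by a bounded amount of combinatorial data. \textbf{$G$ is the least element}: $G$ is trivially $G$-bounded (it contains no $G$-reducible elements at all, hence no $G$-reducible non-minimal element and no $G$-reducible ping adjacency), and for any $G$-bounded $G'$ we must exhibit a sequence of $G$-bounded reduction operations from $G'$ down to $G$. The strategy here is a descent argument: as long as $G' \neq G$, it contains some $G$-reducible element; I would show that one can always apply \emph{some} $G$-bounded reduction operation (label detachment, adjacency detachment, or lateral-adjacency detachment) that strictly decreases a well-chosen potential — e.g. the number of $G$-reducible adjacencies, with the number of $G$-reducible labels or additional vertices as a tiebreaker. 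The composite minimisation built into each operation ensures that after detaching a label or adjacency we clean up any newly-created free-child/free-parent branches and isolated vertices, so the result stays free of $G$-reducible non-minimal elements; the lateral-adjacency detachment is precisely the tool that lets us avoid leaving behind a $G$-reducible ping adjacency in the $G$-unbridged graph. The main work is a careful case analysis showing that the "provided it results in a $G$-bounded history graph" proviso can always be met by at least one available move — i.e. $G$-boundedness of $G'$ together with $G' \neq G$ guarantees a legal $G$-bounded reduction step exists.

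\textbf{Maximal elements are AVGs}: I would prove the contrapositive — if $G'$ is $G$-bounded but \emph{not} an AVG, then $G'$ is not maximal, i.e. it admits a $G$-bounded extension operation. Since $u(G') > 0$, there is either substitution ambiguity (some vertex with two or more non-trivial lifted labels) or rearrangement ambiguity (some side with two or more non-trivial lifted adjacency incidences). In each case I would explicitly construct an extension: insert a new labeled or attached vertex on an appropriate branch so as to "resolve" one unit of ambiguity, exactly mirroring how evolutionary histories achieve $u = 0$ (Lemma: evolutionary histories are AVGs). One must check that the chosen extension (i) genuinely reduces ambiguity, (ii) is the inverse of one of the three $G$-bounded reduction operations — here the lateral-adjacency detachment's inverse is what handles the subtle rearrangement-ambiguity cases where a junction adjacency must be split — and (iii) lands inside the $G$-bounded set, using Theorem \ref{containsGOptimal} or the size bounds of Theorem \ref{finiteGBoundedTheorem} to confirm we have not escaped the bounding set. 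Combined with the least-element argument, this also shows every maximal element sits above $G$, so the poset is connected with $G$ at the bottom.

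\textbf{The covering relation}: the "if" direction is immediate — a single $G$-bounded reduction operation from $G''$ to $G'$ gives $G' \red_G G''$, and we must rule out an intermediate $C$ with $G' \red_G C \red_G G''$; this follows because each of the three operations changes the graph "minimally" (a label detachment removes exactly one label plus forced cleanup, an adjacency or lateral-adjacency detachment changes the adjacency count by exactly one), so no $G$-bounded graph can lie strictly between. The "only if" direction is the substantive one: given $G' \red\cdot_G G''$, we have by definition a sequence of $G$-bounded reduction operations from $G''$ to $G'$; if this sequence has length $\geq 2$ we must show its first intermediate graph $C$ is itself $G$-bounded (true by definition of the operation) and satisfies $G' \red_G C \red_G G''$ with $C \neq G', C\neq G''$, contradicting the covering property — unless the first operation already reaches $G'$. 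The only gap is the possibility that $G' \red_G G''$ via a length-$\geq 2$ path but \emph{no} length-one path exists; ruling this out requires showing any minimal (covering) step is realizable by one operation, which is where I expect the \textbf{main obstacle}: one must argue that a composite reduction cannot be "atomic" in the poset unless it coincides with one of the three designated operations, i.e. that the three operation types are rich enough to generate all covering relations. This reduces to checking that whenever $G' \red_G G''$ and no proper intermediate exists, the difference between $G''$ and $G'$ is exactly the footprint of a single label detachment, adjacency detachment, or lateral-adjacency detachment — a finite case analysis on which $G$-reducible elements distinguish the two graphs, handled by the same classification of elements (junction/bridge/non-minimal) used throughout.
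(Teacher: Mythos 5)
Your proposal takes essentially the same route as the paper: finiteness from Theorem \ref{finiteGBoundedTheorem}, the least element via a descent by detachment operations (the paper's Lemma \ref{walkingBackwardsLemma}), maximality of AVGs via the existence of a $G$-bounded label/adjacency/lateral-adjacency attachment whenever $u(G')>0$ (the paper's Lemma \ref{walkingForwardsLemma}), and the covering relation via the no-intermediate argument, with the needed case analyses correctly identified even if deferred. The one place you over-complicate is the ``only if'' direction of the covering relation, which you flag as the main obstacle: since $\red_G$ is by definition generated by the three operations and each operation strictly decreases the total number of labels plus adjacencies, any witnessing sequence of length at least two yields a strictly intermediate $G$-bounded graph, so no separate argument about composite reductions being ``atomic'' is required.
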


%\begin{proof}
The proof is given in  Appendix D.
%\end{proof}

As the $G$-bounded poset is finite, it can be represented by a Hasse diagram whose nodes are the $G$-bounded history graphs and whose edges, which are the covering relation, represent equivalence classes of $G$-bounded operations. 
Figure \ref{hasseDiagram} shows a simple $G$-bounded poset Hasse diagram. 

\begin{figure}[h!]
\begin{center}
\includegraphics[width=10 cm]{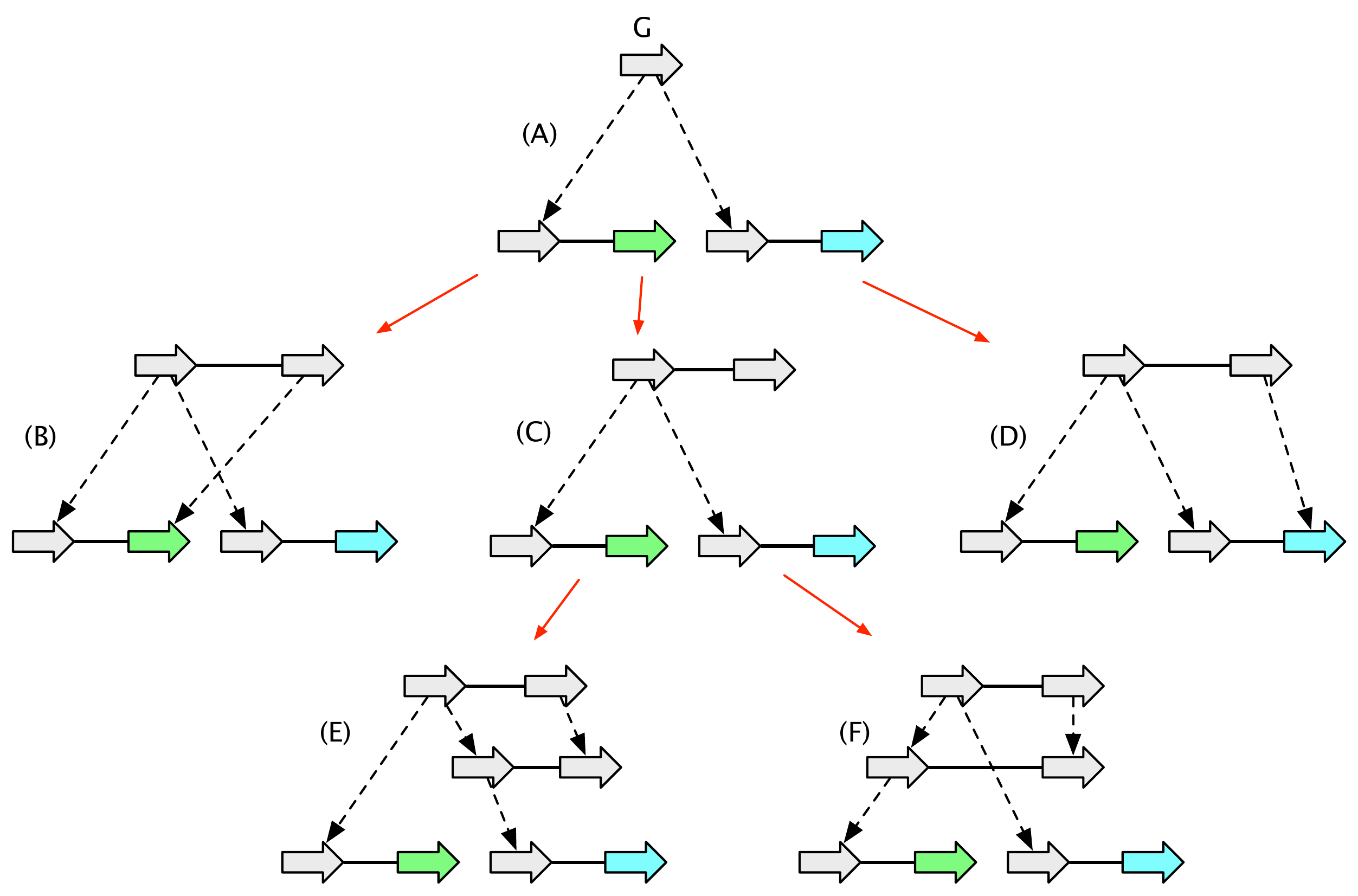}
\caption{A Hasse diagram of the G-bounded poset for an example history graph.}
\label{hasseDiagram}
\end{center}
\end{figure}  

\subsection{A Basic Implementation}

The previous four theorems establish the mechanics of everything we need to sample the finite set of $G$-optimal AVGs, and thus, amongst other things, determine the cost of a history graph. 
Although it will require further work to establish practical and efficient sampling algorithms, we have implemented a simple graph library in Python that for an input history graph $G$ iteratively generates $G$-bounded AVGs (\url{https://github.com/dzerbino/pyAVG}) through sequences of $G$-bounded extension operations. 

To test the library we used simulations. For each simulation we generated a simple history $\textbf{H}$ by forward simulation, starting from a genome with 5 vertices in a single thread and simulating 4 epochs in which either whole chromosome replication or rearrangements occurred and substitutions were made at a constant rate at each branch. The labels in the simulation correspond to single DNA bases. To ensure complexity, we selected histories with substitutions, rearrangements and at least two epochs of replication. 
We created a reduction $G$ of $\textbf{H}$ by removing from $\textbf{H}$ all labels of internal vertices and adjacencies incident on internal vertices and finally contracting the parent branch of all internal vertices. As a result, the reduced history contained only the leaf threads and branch trees that, containing no internal vertices, simply indicate the homologies between the vertices. To simulate incomplete genome assemblies, we randomly removed, on average, 10\% of the adjacencies, labels and vertices from these leaf threads. 
To test our library we enumerated sequences of $G$-bounded history graphs starting at $G$,  at each step picking at random a possible $G$-bounded extension operation until we reached an AVG. 
%Figure \ref{rearrangementCostConvergence} shows the upper and lower bound rearrangement cost of members of the sequences of $G$-bounded extensions, until we reached zero ambiguity, for each of 2,000 starts. We observe that the most parsimonious AVGs sampled involve relatively few extension operations, while the least parsimonious generally have more.
We sampled 20,000 starts for each of 20 randomly sampled pairs of history and starting graph. To make the search strategy efficient, we restarted the search if we reached an extension with a higher total sum of lower bound substitution and rearrangement costs than $s_u(G) + r_u(G)$, initially, and then subsequently the sum of the substitution and rearrangement costs of the best AVG found up to that point.  Tables  \ref{subsExpTable} and \ref{rearrangeExpTable} show the results of these 20 sampling runs. Figure \ref{samplingExample} shows one example of $\textbf{H}$, $G$ and a sampled AVG.
 
For these simulations the minimum rearrangement cost of any sampled AVG is often close or equal to $r_l(G)$, while the maximum rearrangement cost of any sampled AVG is generally slightly greater than $r_u(G)$. Notably, we found that AVG extensions sometimes had lower cost than the original simple history, this occurring because of the information loss that resulted from reducing $\textbf{H}$ to $G$. 

Repeating these experiments with histories that started with 10 root vertices in the simple history, but which were otherwise simulated identically, demonstrates that the naive random search procedure implemented here fails to find reasonable histories within a set of only 20,000 random samples (data not shown), so, as might be expected, more intelligent sampling strategies will be needed to find parsimonious interpretations of even moderately complex datasets. However, with more efficient sampling algorithms, a history graph sampling algorithm could be applied to find solutions to various established parsimony problems, such as the DCJ median problem, or be used for less explored problems, such as the inference of gene trees incorporating synteny information.

%\begin{figure}[h!]
%\begin{center}
%\includegraphics[width=14cm]{figures/rearrangementCostConvergence.pdf}
%\caption{Top main panel: The UBRC vs. the number of $G$-bounded extension operations performed for sampled $G$-bounded extensions. Bottom main panel: As top, but showing LBRC instead of UBRC.  The contour lines show the number of extensions. The two labeled red paths show the cost bounds for individual extension sequences, one which results in a minimum rearrangement cost extension, and one which results in a maximum amongst those sampled rearrangement cost extension.}
%\label{rearrangementCostConvergence}
%\end{center}
%\end{figure}

\begin{table}
\centering
\rowcolors{1}{}{lightgray}
\begin{tabular}{cccccccc}
exp. & $s(\textbf{H})$ & $u_s(G)$ & $s_l(G)$ & $s_u(G)$ & $s(H_{smin})$ & $s(H_{smax})$ \\
1 & 3 & 10 & 1 & 1 & 1 & 2\\
2 & 1 & 14 & 1 & 2 & 2 & 3\\
3 & 2 & 15 & 2 & 3 & 3 & 3\\
4 & 3 & 12 & 2 & 2 & 2 & 4\\
5 & 2 & 13 & 2 & 2 & 2 & 4\\
6 & 2 & 12 & 2 & 2 & 2 & 5\\
7 & 2 & 10 & 1 & 1 & 1 & 2\\
8 & 1 & 13 & 1 & 1 & 1 & 2\\
9 & 3 & 11 & 0 & 0 & 0 & 0\\
10 & 4 & 8 & 2 & 2 & 2 & 3\\
11 & 2 & 10 & 2 & 2 & 2 & 3\\
12 & 2 & 13 & 1 & 1 & 1 & 1\\
13 & 2 & 11 & 1 & 2 & 2 & 3\\
14 & 2 & 11 & 2 & 2 & 2 & 4\\
15 & 3 & 14 & 2 & 2 & 2 & 2\\
16 & 2 & 10 & 1 & 1 & 1 & 1\\
17 & 2 & 30 & 1 & 1 & 1 & 1\\
18 & 3 & 13 & 1 & 1 & 1 & 1\\
19 & 2 & 10 & 0 & 0 & 0 & 0\\
20 & 1 & 9 & 1 & 1 & 1 & 1\\
\end{tabular}
\caption{Simulation results assessing substitution ambiguity and cost.  Each row represents a separate initial history.
The cost $s(\textbf{H})$ is the substitution cost of the simple history from which $G$ is derived.  Also given are the ambiguity $u_s(G)$, lower $s_l(G)$, and upper $s_u(G)$ substitution cost bounds for $G$. The minimum and maximum substitution costs of $G$-bounded AVG extensions found by sampling are denoted $s(H_{smin})$ and $s(H_{smax})$, resp.}
\label{subsExpTable}
\end{table}

\begin{table}
\centering
\rowcolors{1}{}{lightgray}
\begin{tabular}{cccccccc}
exp. & $r(\textbf{H})$ & $u_r(G)$ & $r_l(G)$ & $r_u(G)$ & $r(H_{rmin})$ & $r(H_{rmax})$ \\
1 & 2 & 12 & 2 & 10 & 2 & 9 \\
2 & 2 & 20 & 2 & 14 & 2 & 14 \\
3 & 2 & 20 & 2 & 14 & 2 & 12 \\
4 & 2 & 20 & 2 & 14 & 2 & 14 \\
5 & 2 & 18 & 1 & 13 & 1 & 11 \\
6 & 2 & 8 & 2 & 7 & 2 & 6 \\
7 & 2 & 8 & 0 & 7 & 0 & 4 \\
8 & 2 & 18 & 1 & 13 & 2 & 10 \\
9 & 2 & 10 & 1 & 7 & 1 & 7 \\
10 & 2 & 14 & 0 & 11 & 0 & 8 \\
11 & 2 & 6 & 0 & 6 & 0 & 4  \\
12 & 2 & 6 & 1 & 7 & 1 & 4  \\
13 & 2 & 16 & 0 & 12 & 0 & 9  \\
14 & 2 & 20 & 2 & 14 & 4 & 12  \\
15 & 2 & 20 & 1 & 14 & 1 & 10  \\
16 & 2 & 6 & 0 & 5 & 0 & 5  \\
17 & 1 & 26 & 1 & 17 & 1 & 13  \\
18 & 2 & 18 & 1 & 13 & 1 & 11  \\
19 & 2 & 6 & 0 & 6 & 0 & 5  \\
20 & 2 & 4 & 2 & 5 & 2 & 2  \\
\end{tabular}
\caption{Simulation results assessing rearrangement ambiguity and cost.  Each row represents a separate initial history.
The cost $r(\textbf{H})$ is the rearrangement cost of the simple history from which $G$ is derived.  Also given are the ambiguity $u_r(G)$, lower $r_l(G)$, and upper $r_u(G)$ rearrangement cost bounds for $G$. The minimum and maximum rearrangement costs of $G$-bounded AVG extensions found by sampling are denoted $r(H_{rmin})$ and $r(H_{rmax})$, resp.}
\label{rearrangeExpTable}
\end{table}

\begin{figure}[h!]
\begin{center}
\includegraphics[width=13cm]{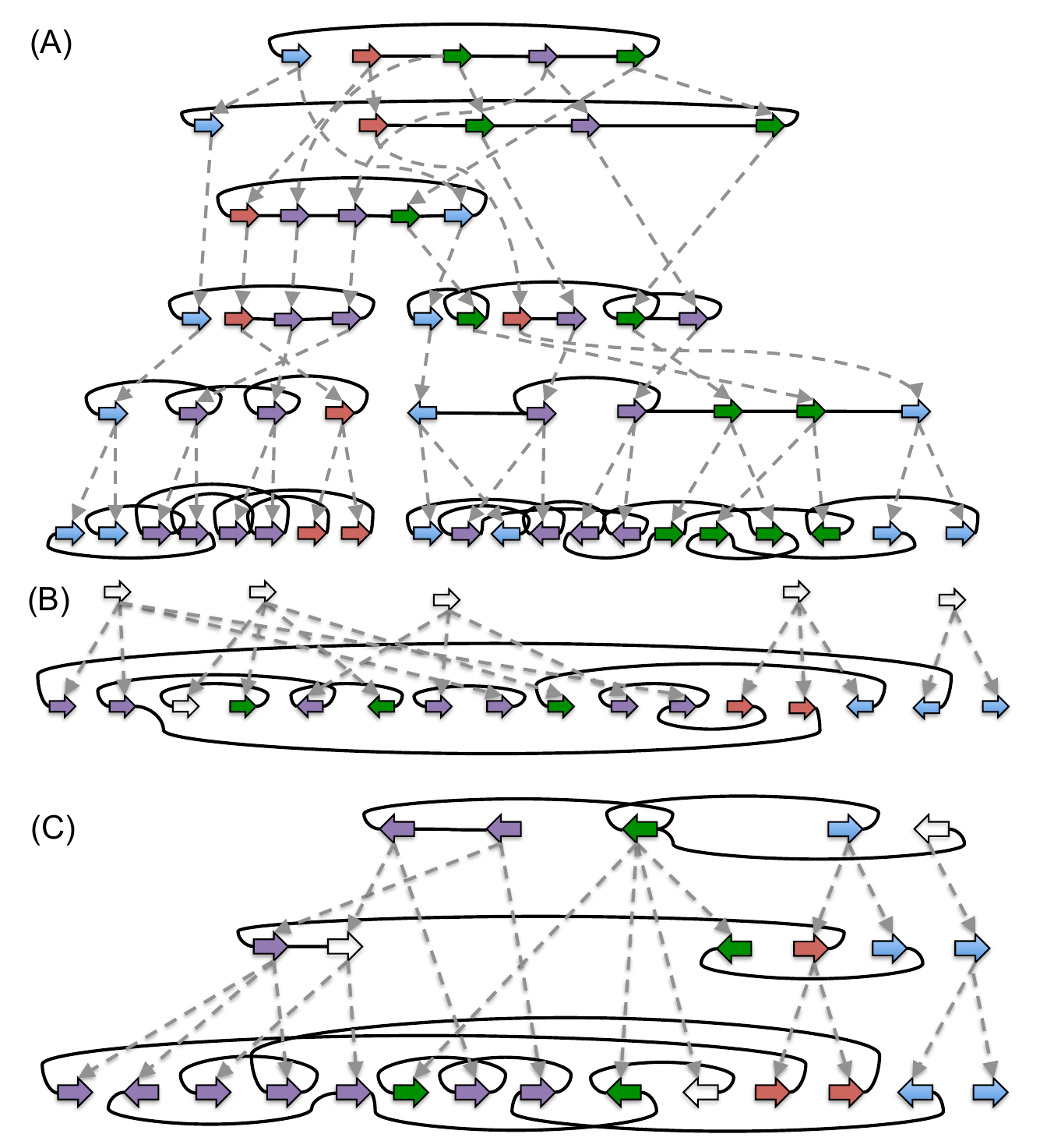}
\caption{History graph examples generated by simulation. \textbf{(A)} \textbf{H}, \textbf{(B)} $G$, \textbf{(C)} An example of $H_{rmin}$ and  $H_{smin}$. Example corresponds to experiment 1 in Tables \ref{subsExpTable} and \ref{rearrangeExpTable}. The $G$-bounded extension sequence from $G$ to this AVG involved the creation of just 7 adjacencies, 5 vertices and 7 labels. Graph layouts were computationally derived.}
\label{samplingExample}
\end{center}
\end{figure}

\section{Discussion}

We have introduced a general model for genome evolution under parsimony, but the reduction relation and the definition of the $G$-bounded set may appear arbitrary. We highlight below the reasons for our choice of reduction relation, how reduction relates to other orderings over graphs, and how we can easily approximate a set of $G$-reducible elements, something critical to the sampling of $G$-bounded extensions of a given graph. We then briefly discuss the possibilities of yet more compact graphical representations.

%There are multiple possible reduction relations therefore it is important to justify the definition.
In the reduction relation, we allow the deletion of vertices, vertex labels and adjacencies, but forbid branch deletion. Otherwise, extensions would allow the invention of homology between vertices (see Figure \ref{justifyingReduction}(A)).
Conversely, branches can be contracted but not adjacencies, otherwise extensions could create interstitial vertices without any rearrangement (see Figure \ref{justifyingReduction}(B)). 

We disallow the non-trivial contraction of the incoming branch of attached or labeled vertices, with the one exception for branches with free-parents, because it would allow a reduction to merge previously separate threads (see Figure \ref{justifyingReduction}(C)), and because vertices could be reduced to become ancestors of originally indirectly related vertices (see Figure \ref{justifyingReduction}(D)). We allow the one exception for the contraction of the incoming branch of attached or labeled vertices when the branch has a free-parent because disallowing it would forbid reductions that removed information from root vertices (see Figure \ref{justifyingReduction}(E)) and allowing it does not permit the issues highlighted in Figures \ref{justifyingReduction}(C-D).

It is informative to consider the relationship between reduction operations and the reduction relation.
When a graph contains multiple copies of isomorphic structures, distinct reduction operations can result in isomorphic reductions (see Figure \ref{justifyingReduction}(F-I)), therefore each possible reduction in the covering set (transitive reduction) of the reduction relation represents an equivalence class of reduction operations. 

%\begin{definition}
A \emph{valid permutation} of a reduction sequence is a permutation in which all operations remain reduction operations when performed in sequence. 
%\end{definition}
Clearly not all permutations of a reduction sequence have this property, however the following lemma illustrates the relationship between valid permutations.

\begin{lemma} \label{validPermutationsLemma}
All valid permutations of a reduction sequence create isomorphic reductions.
\end{lemma}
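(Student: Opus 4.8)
The plan is to reduce the claim to a local commutation property between adjacent operations, then invoke a standard diamond/confluence argument. Concretely, let $R = (o_1, o_2, \dots, o_k)$ be a reduction sequence applied to a history graph $G'$, and let $R'$ be a valid permutation of $R$. Since any permutation is a composition of adjacent transpositions, and a valid permutation by definition keeps every operation a legal reduction operation at the moment it is applied, it suffices to prove the following local statement: if $o$ and $o'$ are two reduction operations that can be applied in the order $o$ then $o'$ to some history graph $H$, and the swapped order $o'$ then $o$ is also legal on $H$, then the two resulting graphs are isomorphic. Granting this, one walks $R$ to $R'$ through a sequence of valid permutations, each differing by one adjacent swap; at each swap the intermediate sequence is valid (this is what ``valid permutation'' buys us, suitably interpreted — see the obstacle paragraph), so the local statement applies and the reduction obtained is unchanged up to isomorphism at every stage.

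For the local commutation step I would do a finite case analysis on the (unordered) pair of operation types, using the list of four reduction operations: delete an adjacency, delete an isolated vertex, delete a label, contract a branch with a free-child or free-parent. Most pairs act on disjoint parts of the graph (different vertices/edges), and there the two operations trivially commute — applying them in either order deletes/contracts the same set of graph elements and yields literally the same graph, not merely an isomorphic one. The interesting cases are where the two operations interact: e.g., deleting a label of vertex $x$ and contracting a branch incident with $x$; or deleting an adjacency incident with $x$ and contracting a branch whose child is $x$; or two branch contractions sharing a vertex. In each such case I would check (i) that if both orders are legal then the free-child/free-parent/isolated-vertex side conditions are still met after the other operation is applied, and (ii) that the resulting graphs are isomorphic — typically the isomorphism is the identity on the surviving vertices, or the obvious renaming when a contraction merges a vertex that the other operation touched. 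Acyclicity of the event graph is preserved throughout by Lemma (the result of a reduction operation is a history graph), so no case falls outside the history-graph category.

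The main obstacle is subtler than the commutation algebra: it is establishing that one can actually route from $R$ to $R'$ through a chain of valid permutations via adjacent transpositions, i.e. that ``validity'' is preserved along the path of transpositions and not just at the endpoints. An adjacent transposition swapping $o_i$ and $o_{i+1}$ changes the prefix graph seen by $o_{i+1}$ (now applied earlier) and by $o_i$ (now applied later); I need that both remain legal reduction operations. This is exactly the hypothesis of the local lemma, but one must argue it is available: the cleanest route is to show that whenever a permutation $R'$ of $R$ is valid, then for any adjacent transposition that moves $R$ ``toward'' $R'$ (in the sense of reducing the number of inversions relative to the target order), the resulting sequence is still a valid permutation — which follows because the operations being swapped are comparable in both $R$-order and $R'$-order, and the local commutation lemma then tells us the prefix graphs after the swap are isomorphic to ones under which the remaining operations were already known to be legal. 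So the argument is genuinely an induction on inversion count, with the local commutation lemma supplying both the inductive step's legality and the isomorphism invariant. A minor additional care point: because distinct reduction operations on isomorphic substructures can produce isomorphic (not identical) results, all equalities of graphs in the induction should be read up to isomorphism, and one should fix, at each step, an explicit isomorphism and transport the remaining operations along it — bookkeeping that is routine but must be mentioned so the ``isomorphic'' in the conclusion is honestly earned.
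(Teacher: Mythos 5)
The paper states this lemma without a written proof, so the comparison here is purely on the merits of your argument, and there is a genuine gap at its load-bearing step. Your local commutation lemma is fine (when both orders of two operations are legal, the results coincide, since each operation just deletes or contracts specified elements), but the whole strategy hinges on the routing claim: that any valid permutation $R'$ of $R$ can be reached from $R$ by adjacent transpositions such that \emph{every intermediate sequence is valid}. You correctly flag this as the main obstacle, but the justification you give is circular: saying that ``the prefix graphs after the swap are isomorphic to ones under which the remaining operations were already known to be legal'' does not establish that the two swapped operations themselves are legal in their new positions, because legality in $R'$ was certified under a different prefix. The difficulty is concrete: the side condition for contracting a branch is disjunctive (free-child \emph{or} free-parent), so a contraction may be licensed in $R$ by one prior operation (e.g.\ a contraction that leaves the parent with a single child) and in $R'$ by a different one (e.g.\ a label deletion that frees the child). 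With such context-dependent, disjunctive prerequisites, the set of valid orderings is not the set of linear extensions of a fixed poset, and the existence of an inversion-reducing swap that preserves validity is exactly what needs proof; your sketch assumes it.

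The gap is also avoidable, because the transposition machinery is unnecessary. Reduction operations only remove or contract elements and never create any, so the graph obtained at the end of a valid sequence is determined, up to isomorphism, by the multiset of operations alone: it is the graph obtained from $G'$ by deleting precisely the adjacencies, labels and isolated vertices that the sequence deletes and taking the quotient under precisely the branches it contracts (contraction of a set of edges is order-independent, deletions are order-independent, and the two kinds of moves act on disjoint element types, with the usual renaming of merged vertices handled up to isomorphism). Any two valid permutations apply the same multiset, hence produce isomorphic reductions; validity is needed only to guarantee each intermediate step is a legal reduction operation, not to control the outcome. If you prefer an inductive phrasing, induct on the length of the sequence by moving the first operation of $R'$ to the front of $R$, arguing that this cannot invalidate the remaining operations because they only require the \emph{absence} of elements (which removal preserves) and the \emph{presence} of the elements they act on, and any conflict there would already make $R'$ invalid; but either way the argument should rest on the monotone, element-removing nature of reductions rather than on an unproven connectivity property of the set of valid permutations.
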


%\begin{proof}
%Easily verified.
%Consider the original history graph $G$. Every reduction operation identifies a single element in $G$ to delete or contract, irrespective of reduction sequence ordering. All such elements in any valid permutation are contracted/deleted and no element outside of this set is affected, so the results are always isomorphic. 
%\end{proof}

Reduction is somewhat analogous to a restricted form of the graph minor. 
Importantly, the graph minor is a well-quasi-ordering (WQO) (\cite{Bienstock:1994tw}), i.e. in any infinite set of graphs there exists a pair such that one is the minor of the other. 

\begin{lemma}
Reduction is not a WQO. 
\end{lemma}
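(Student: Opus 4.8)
The plan is to exhibit an explicit infinite family of history graphs that is an antichain with respect to reduction, which by definition witnesses that reduction is not a well-quasi-ordering (a WQO would require \emph{some} pair, one a reduction of the other, in every infinite set; an infinite antichain is a fortiori a counterexample). First I would observe what reduction can and cannot do to the ``shape'' of a graph: branch edges can never be deleted, only contracted along free-child or free-parent branches, and adjacencies and labels can be deleted but never created. In particular, the underlying branch-forest of a graph can only shrink under reduction, and it can only shrink by contracting degree-respecting free edges at the fringe; so if $G \redeq G'$ then the number of branches in $G$ is at most the number in $G'$, and moreover the branch-tree structure of $G$ is obtained from that of $G'$ by contracting certain leaf/near-root edges. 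This gives a crude but robust invariant to play with.

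The candidate family I would use is a sequence of ``pure'' structures on which no non-trivial reduction operation is available at all — i.e. graphs that are already minimal in a strong sense: every branch-tree is a single branch whose parent is a labeled attached root (so no free-parent), every child is labeled and attached (so no free-child), there are no isolated vertices, and the adjacency structure forms a single thread of a prescribed length. For instance, let $G_k$ be the history graph consisting of a thread on $k$ labeled vertices together with a second, homologous thread on $k$ labeled vertices, with a perfect matching of branches between corresponding vertices and with adjacencies chosen so that every side is attached and the two threads are rearranged copies of each other in a way that is rigid. On such a graph there is \emph{no} applicable reduction operation except deleting an adjacency or a label — and I would choose the labels and adjacencies so that any such deletion destroys a property that the other $G_j$ do not have, or more simply, argue directly that $G_j$ is a reduction of $G_k$ only if it is obtained by deleting edges/labels from $G_k$, hence has strictly fewer attached sides or labeled vertices, hence $G_j \neq G_k$ forces $j \ne k$ to fail in one direction; and symmetry of the construction handles the other direction. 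The cleanest instantiation is to make all $G_k$ pairwise non-comparable by an invariant that is neither monotone up nor down under reduction across the family — e.g. ``number of vertices equals $2k$ and every vertex is attached and labeled and no branch is contractible'': reduction from $G_k$ can only reach graphs with $< 2k$ vertices or with an unattached/unlabeled vertex, so it can never reach $G_j$ for $j \ne k$, and it trivially cannot reach $G_k$ from $G_j$ with $j<k$ for the same count reason.

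Concretely the steps are: (1) state the invariant lemma that reduction weakly decreases the count of adjacencies, of labeled vertices, and of branches, and that any strict reduction step strictly decreases at least one of these or introduces an unattached-and-unlabeled vertex; this follows by inspecting the four reduction operations. (2) Define $G_k$ explicitly as above, checking it is a genuine history graph — in particular that its event graph is acyclic, which holds because each branch-tree is a single edge so $D(G_k)$ is bipartite between the two ``generations'' with all branch-edges pointing one way. (3) Verify that $G_k$ admits no strict reduction that preserves the invariant ``$2k$ vertices, all attached, all labeled'': deleting an adjacency makes a side unattached (and with the right rigid adjacency pattern, leaves a vertex unattached), deleting a label makes a vertex unlabeled, deleting a vertex is impossible since none is isolated, contracting a branch is impossible since every branch has a labeled attached child and a labeled attached root parent. (4) Conclude that for $j \ne k$, neither $G_j \redeq G_k$ nor $G_k \redeq G_j$: whichever has fewer vertices cannot be an extension-target of the larger without the vertex-count going the wrong way, and equal-vertex-count incomparable cases are ruled out because a non-identity reduction between graphs of the same cardinality would have to be by label/adjacency deletion, contradicting the preservation of ``all attached, all labeled.'' Hence $\{G_k : k \ge 2\}$ is an infinite antichain and reduction is not a WQO.

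The main obstacle I anticipate is step (3): making sure the chosen thread/adjacency pattern is genuinely rigid, so that there is \emph{no} sneaky reduction sequence — possibly passing through intermediate graphs that temporarily have unattached or unlabeled vertices — that lands back on some $G_j$. The invariant lemma of step (1) is what defeats this worry: because the relevant counts are monotone along \emph{every} reduction sequence (not just the first step), once a vertex becomes unattached-and-unlabeled or the vertex count drops it can never be restored, so no reduction sequence out of $G_k$ can ever reach a graph meeting the $G_j$ invariant for any $j$. The only real care needed is to confirm that deleting a single adjacency from $G_k$ indeed yields a vertex that is unattached \emph{or} reduces the count of attached sides in a way the invariant detects — which is why I would build $G_k$ from a single thread (two unattached ends at most) with a deliberately chosen rearrangement between the two copies, rather than from a circular thread, and would phrase the invariant in terms of ``number of attached sides $=$ maximum possible for $2k$ vertices'' so that any adjacency deletion is visibly detected.
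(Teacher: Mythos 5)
There is a genuine gap, and it sits exactly where you predicted trouble: the direction $G_j \redeq G_k$ for $j < k$. Your invariant from step (1) (adjacency, label, branch and vertex counts weakly decrease; attachment and labels are never restored) does not rule this direction out, because the vertices that lose their labels and adjacencies along the way can simply be contracted or deleted, after which \emph{every surviving} vertex is again attached and labeled. Concretely, for the natural instantiations of your $G_k$ (two homologous linear threads with uniform labels and the matching taken in order, or in reverse order), $G_{k-1}$ \emph{is} a reduction of $G_k$: delete the label and adjacency of the top end vertex, contract its now free-parent branch, then delete the adjacency and label of the merged vertex and delete it once isolated; what remains is isomorphic to $G_{k-1}$. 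So your family is a chain, not an antichain, unless the ``deliberately chosen rearrangement'' is specified and proved rigid --- and that rigidity claim is precisely what the proposal leaves unargued; the sentence ``no reduction sequence out of $G_k$ can ever reach a graph meeting the $G_j$ invariant'' is false as stated, since the $G_j$ invariant allows the vertex count to drop.

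The missing idea is a structure whose proper reductions can never again be a member of the family, and this is exactly what the paper's one-line proof supplies: it takes the infinite set of \emph{cyclic} threads. A reduction can only delete adjacencies and (once isolated) vertices, so the adjacency structure of any proper reduction of a cycle is a disjoint union of paths and hence never a smaller cycle, while the larger cycle cannot be a reduction of the smaller one by counting; thus the cyclic threads form an infinite antichain with no side conditions to verify. Ironically, you explicitly rejected circular threads in favour of linear ones ``so that any adjacency deletion is visibly detected,'' but the circular case is the one where deletions are fatal in the sense needed: no choice of rearrangement pattern or label scheme has to be engineered or checked. Your overall strategy (exhibit an infinite antichain, using monotonicity of reduction) is the same as the paper's; the construction itself is what fails.
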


\begin{proof}
Consider the infinite set of cyclic threads, they are not reductions of one another. 
\end{proof}

%Concordant with the reduction relation not being a WQO, we have demonstrated there sometimes exists infinite sets of minimal AVG extensions of a starting graph none of which are reductions of one another. 
An ordering is a WQO if every set has a finite subset of minimal elements. In contrast, it can be shown that for the reduction relation, even the set of AVG extensions of a single base history $G$ can have an infinite set of minimal elements.

%%%%%%The $G$-minimal AVGs exclude an infinite set of AVG extensions that contain extraneous material. However, perhaps surprisingly, this set is not always finite.

\begin{lemma} \label{infiniteMinimalExtensions}
There exists a history graph $G$ with an infinite number of $G$-minimal extensions. 
\end{lemma}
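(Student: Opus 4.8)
## Proof Proposal for Lemma \ref{infiniteMinimalExtensions}

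The plan is to exhibit a single fixed history graph $G$ together with an explicit infinite family $\{H_k\}_{k \geq 1}$ of AVG extensions of $G$, and then to show that each $H_k$ is $G$-minimal, i.e. that no AVG lies strictly between $G$ and $H_k$ in the reduction order. Since distinct members of the family will not be reductions of one another (they will be pairwise non-isomorphic and incomparable), this produces the required infinite antichain of $G$-minimal extensions. A natural candidate for $G$ is the minimal graph that already forces rearrangement ambiguity: take two or three attached leaf vertices whose branch-trees share a common root structure so that their adjacencies lift to a single junction side, mimicking the configuration highlighted in Figure \ref{liftedGraph}(B) and Figure \ref{AVGExample}. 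The idea, analogous to the proof that reduction is not a WQO via cyclic threads, is that resolving the ambiguity at that junction requires interposing an ancestral ``resolution'' structure, and there are arbitrarily large such structures (think of a ring of intermediate ancestral vertices of length $k$) none of which can be shortcut while remaining an AVG.

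First I would pin down $G$ concretely: a handful of labeled leaf vertices with a prescribed set of adjacencies and a branch forest chosen so that $L(G)$ has a module in which some side carries two or more non-trivial lifted adjacencies, so $u_r(G) > 0$ and $G$ itself is not an AVG. Second, for each $k$ I would describe $H_k$ as the extension of $G$ obtained by adding a chain (or cycle) of $k$ fresh unlabeled ancestral vertices and branches that re-route the conflicting lifted adjacencies through distinct ancestors, so that in $L(H_k)$ every side has at most one incident non-trivial lifted adjacency and every vertex at most one non-trivial lifted label — verifying $u(H_k) = 0$ is then a direct check against the definitions of substitution and rearrangement ambiguity. Third, and this is the crux, I would argue $G$-minimality: suppose $H'$ is an AVG with $G \red H' \red H_k$; then $H'$ is obtained from $H_k$ by a nonempty reduction sequence, and I must show every such reduction either destroys an element needed to realize $G$ as a reduction of $H'$ (contradicting $G \red H'$) or re-introduces ambiguity (contradicting that $H'$ is an AVG). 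The key sub-claim is that the interposed ancestral vertices in $H_k$ are not ``free'' in the sense of the reduction operations — each has an attached or labeled descendant structure forcing its parent branch to be non-contractible, and each adjacency in the chain is a junction or bridge adjacency whose deletion re-creates a side with two incident non-trivial lifted adjacencies — so no reduction operation applies that keeps us both above $G$ and at zero ambiguity.

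The main obstacle I anticipate is exactly this $G$-minimality verification: I have to rule out \emph{all} reduction operations (label deletions, adjacency deletions, vertex deletions, branch contractions with free-child or free-parent) applied in \emph{any} order, not just the ``obvious'' ones, and reductions can interact — contracting one branch may turn a previously non-free child into a free one. Lemma \ref{validPermutationsLemma} helps here, letting me reason about reduction sequences up to valid permutation, but I would still need a careful case analysis showing that in $H_k$ the only reducible elements are ones whose removal drops the graph below $G$. To keep this manageable I would choose the chain construction so that every added vertex is attached (carries a real adjacency inherited from the rerouting) — then the free-child/free-parent contraction rules are simply inapplicable to the new vertices, and I only have to worry about adjacency and label deletions, each of which I can show restores rearrangement or substitution ambiguity. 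Finally I would note that $H_k \not\redeq H_{k'}$ for $k \neq k'$ because a cycle of $k$ ancestral vertices cannot be obtained from one of length $k'$ by reduction operations (reductions never lengthen a cycle, and contracting along the cycle would require free-parent branches that do not exist), so the family is a genuine infinite set of $G$-minimal extensions, completing the proof.
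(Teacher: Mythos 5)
Your overall strategy is the one the paper uses in Appendix E: fix a small history graph $G$ whose labeled leaves make certain adjacencies $G$-irreducible, build an infinite family of ever-larger AVG extensions by stacking unlabeled but attached ancestral structure, and argue minimality by noting that free-child/free-parent contractions are unavailable and that removing added adjacencies re-creates unattached junction sides and hence ambiguity. So the approach matches; the problem is that the proof is left as a plan at exactly the points where the content lies.

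Two concrete gaps. First, the lemma is an existence claim and you never exhibit the witness: ``two or three attached leaf vertices whose branch-trees share a common root structure'' and ``a chain (or cycle) of $k$ fresh unlabeled ancestral vertices that re-route the conflicting lifted adjacencies'' are not specified enough to check that each $H_k$ is actually an AVG (the added ancestral sides must not themselves become unattached junction sides, which is why the paper needs both a carefully paired repeating subunit $w^i, x^i, y^i, z^i$ with primed partners and terminal elements capping the most ancestral copies), nor that acyclicity holds, nor that every added vertex is attached. Second, and more importantly, your minimality argument is per-operation: you plan to show that each single deletion either breaks extension of $G$ or restores ambiguity. But $G$-minimality quantifies over arbitrary reduction sequences from $H_k$ to a candidate AVG $H'$ with $G \prec H' \prec H_k$, and the intermediate graphs in such a sequence need not be AVGs; ambiguity introduced by one deletion can be eliminated by later deletions (e.g., removing an entire upper portion of the tower at once). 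Ruling this out is the crux, and it is exactly what the paper's inductive cascade does: any AVG reduction of $H_i$ that remains an extension of $G$ must retain the $G$-irreducible adjacencies between the labeled leaves; these force the level-$0$ sides to be junctions, junction sides must be attached in an AVG, reduction cannot create adjacencies so the existing level-$0$ adjacencies must be retained, which makes the level-$1$ sides junctions, and so on up the layers, until one concludes that every adjacency of $H_i$ is retained and hence $H' = H_i$. Without a concrete graph on which this end-state argument is verified, the existence statement is not established. (Minor point: your final incomparability argument is unnecessary — once each $H_k$ is $G$-minimal, comparability is automatically excluded, and distinct sizes already give infinitely many non-isomorphic graphs.)
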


%\begin{proof}
The proof is given in  Appendix E.
%\end{proof}  

%However, the sometimes infinite cardinality of the set of $G$-minimal AVGs does not imply that the set of $G$-optimal AVGs is ever infinite. We now work to prove this by constructing a finite, bounding set.
%%%%%%
%Therefore, even without non-minimal elements there can exist infinitely many distinct histories of an input history graph.

\begin{figure}[h!]
\begin{center}
\includegraphics[width=13cm]{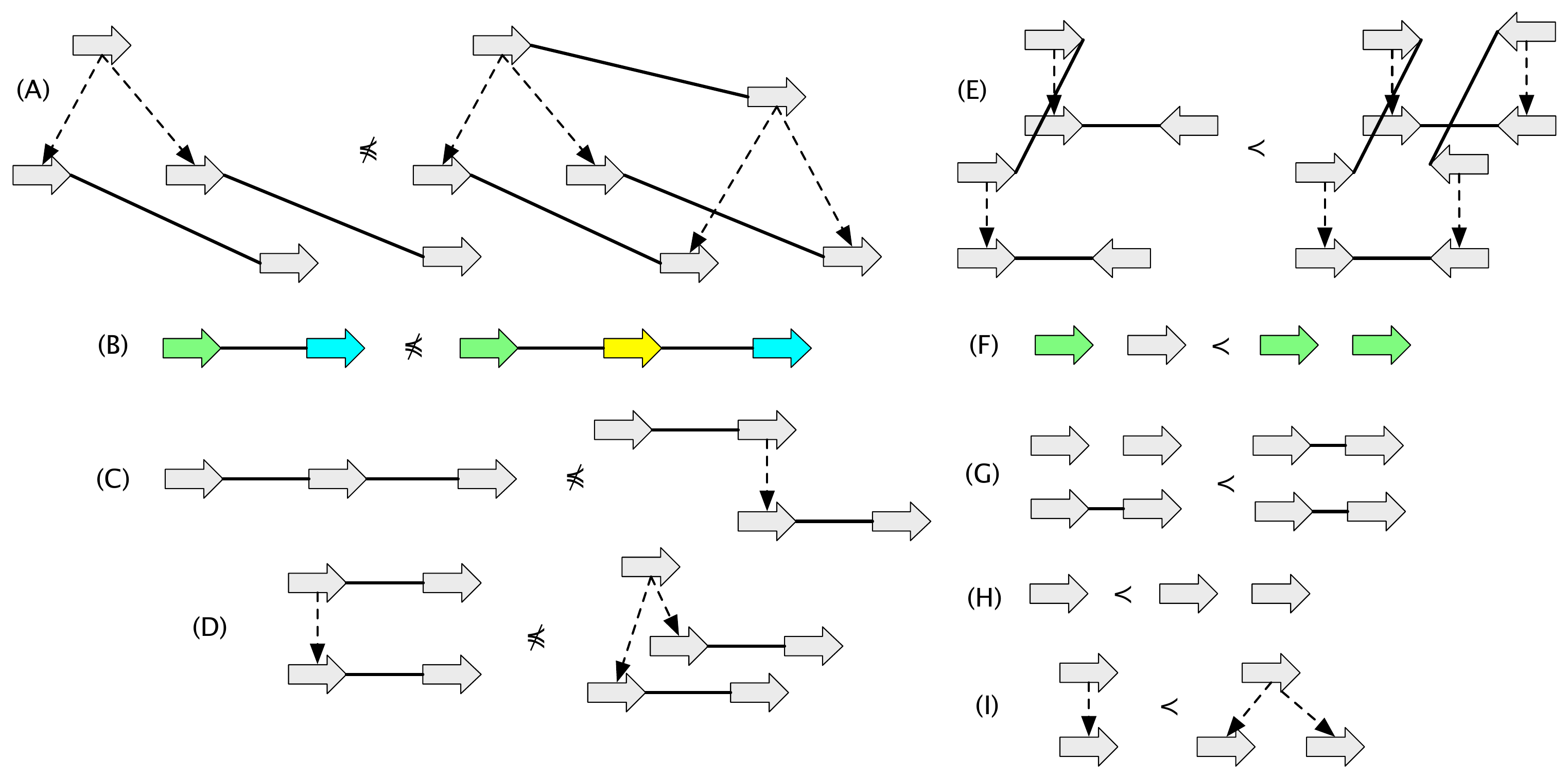}
\caption{\textbf{(A,B,C,D)} The graphs on the left side are not reductions of the graphs on the right. \textbf{(E)} The graph on the left is a reduction of the graph on the right. \textbf{(F,G,H,I)} Examples of equivalence classes of reduction operations, where multiple distinct reduction operations result in the same reduction.}
\label{justifyingReduction}
\end{center}
\end{figure}

One barrier to exploring the $G$-bounded poset is deciding for a pair of history graphs $G$ and $G'$ such that $G \redeq G'$ if an element is $G$-reducible. This problem is of unknown complexity, and may well be NP-hard.
To avoid the potential complexity of this problem we can define an alternative notion of reducibility.
%\begin{lemma}
%For $G \redeq G'$, deciding if an element in $G'$ is $G$-reducible is NP-hard.
%\end{lemma}
%\begin{definition}
A \emph{fix} for $(G, G')$, where $G \redeq G'$, is a history subgraph of $(V_{G'}, E_{G'}, B_{G'}^+)$  isomorphic to $G$, where $B_{G'}^+$ is the transitive closure of $B_{G'}$.
%, i.e. there exists an edge $(x, y)$ in $B_{G'}^+$ if there exists a path of branches from $x$ to $y$ in $B_{G'}$.
%A \emph{fix} for $G \redeq G'$ is a set of elements in $G'$ that when deleted/contracted result in a graph isomorphic to $G$.
%\end{definition}
Starting from an input history graph $G$ and a fix isomorphic to it, we can easily update the fix as we create extensions of $G$. 
For an extension of $G$, elements in the fix become the equivalent of $G$-irreducible, while elements not in the fix become the equivalent of $G$-reducible. From a starting graph we can therefore explore a completely analogous version of $G$-bounded, replacing the question of $G$-reducibilty with membership of the fix.

%Starting from an input history graph $G$ and an empty fix $X$ we can create extensions of $G$ putting elements added to $G$ in $X$. Elements in $X$ become the equivalent of $G$-reducible, while elements not in $X$ become the equivalent of $G$-irreducible. From a starting graph we can therefore explore a completely analogous version of $G$-bounded, replacing the question of $G$-reducibilty with membership of $X$.

Following from Lemma \ref{validPermutationsLemma}, there is a bijection between the set of fixes for $G \redeq G'$ and the set of equivalence classes of reduction sequences that are all valid permutations of each other. This is the limitation of considering membership of a fix instead of assessing if an element is $G$-reducible, it limits us to considering only a single equivalence class of reduction sequences in exploring the analogous poset to $G$-bounded.

It is in general possible to reduce the size of the set $G$-bounded while still maintaining the properties that it can be efficiently sampled and contains $G$-optimal. However, this is likely to be at the expense of making the definition of $G$-bounded more complex. One approach is to add further ``forbidden configurations" to the definition of $G$-bounded, like the $G$-reducible ping adjacencies that are forbidden in the current definition of $G$-bounded. Forbidding these was essential to making $G$-bounded finite, but we might consider also forbidding other configurations just to make $G$-bounded smaller.

It is possible to consider a graph representation of histories that use fewer vertex nodes if we are willing to allow for the possibility that a subrange of the sequence of a vertex be ancestral to a subrange of the sequence of another vertex. This is a common approach in ancestral recombination graphs (\cite{Song:2005vj}). Such a representation entails the additional complexity of needing to specify the sequence subranges for every branch, but may in some applications be a worthwhile trade off for reducing the number of vertices in the graph. The theory of such graphs is mathematically equivalent to the theory of the history graphs presented here, but the implementation would differ.

%Finally, although the thread graphs described here are analogous to many representations found in the literature, they rely on the assumption that genomes are necessarily well differentiated sequences, each assembled individually. In related, unpublished work, we are considering the problem of sampling genome histories when the input genomes are provided as an unphased mixture, as is typical with current generation sequencing data. A linear algebra framework is used to describe genome mixtures and evolutionary operations upon them. In that framework, the requirement that each vertex side have at most one incident adjacency is relaxed, effectively merging threads into probabilistic mixtures of threads. However, by allowing for this uncertainty, some events such as homologous recombination go undetected. A truly general model which could handle indeterminacy yet subsume all the features of the history graph described here could potentially open the way for algorithms that simultaneously assemble and compare genomes, producing assemblies that take into account the evolutionary context. 

\section{Conclusion}

We have introduced a graph model in which a set of chromosomes evolves via the processes of whole chromosome replication, gain and loss, substitution and DCJ rearrangements. We have demonstrated upper and lower bounds on maximum parsimony cost that are trivial to compute despite the intractability of the underlying problem. Though these cost bounding functions are relatively crude and can almost certainly be tightened for many cases, they become tight for AVGs. This implies that we only need to reach AVG extensions to assess cost when sampling extensions.

To our knowledge, this is the first fully general model of chromosome evolution by substitution, replication, and rearrangement. However, it has its limitations. For example, it treats common rearrangements, such as recombinations and indels as any other rearrangement, and only takes into account maximum parsimony evolutionary histories. We anticipate future extensions that incorporate more nuanced cost functions, as well as probabilistic models over all possible histories.

The constructive definition of the $G$-bounded poset, coupled with the upper and lower bound functions, suggests simple branch and bound based sampling algorithms for exploring low-cost genome histories. To facilitate the practical exploration of the space of optimal and near optimal genome histories, we expect that more advanced sampling strategies across the $G$-bounded poset could be devised.

\section{Methods}
 
\subsection{Appendix A}

In this section we prove Theorem \ref{avgTheorem}.

%\begin{smallTheorem}
%For any history graph $G$ and any cost function $c$, $c(s_l(G), r_l(G)) \le C(G, c) \le c(s_u(G), r_u(G))$ with equality if $G$ is an AVG.
%\end{smallTheorem}

We first define some convenient notations to describe lifted labels and edges.
For a vertex $x$ let $L'_x = (L_x, N_x)$ be its multiset of lifted labels, where $L(x)$ is the set of distinct lifted labels for $x$, and for each lifted label $\rho$, $N_x(\rho)$ is the number of times $\rho$ appears as a lifted label for $x$, i.e. $L_x = \{ l(y) : A(y) = x \}\subseteq \Sigma^*$ and $N_x : L_x \rightarrow \mathbb{Z}_+$ such that $N_x(\rho) = | \{ y : A(y) = x , l(y) = \rho \} |$. 

For a side $x_{\alpha}$, and overloading notation, let $L'_{x_{\alpha}} = (L_{x_{\alpha}}, N_{x_{\alpha}})$ be its multiset of lifted edges, where $L(x_\alpha)$ is the set of distinct lifted adjacencies incident with $x_\alpha$, and for each lifted adjacency $\{ x_\alpha, w_\gamma \}$, $N_{x_\alpha}(\{ x_\alpha, w_\gamma \})$ is the number of sides whose lifting ancestor is $x_\alpha$, and which are connected by an adjacency to a side whose lifting ancestor is $w_\gamma$, i.e. $L_{x_{\alpha}} = \{ \{ x_\alpha=A(y_\alpha), A(z_\beta) \} : \{ y_\alpha, z_\beta \} \in E_G \}$ and $N_{x_\alpha} = L_{x_{\alpha}} \rightarrow \mathbb{Z}_+$ such that $N_{x_\alpha}(\{ x_\alpha, w_\gamma \}) = | \{ y_\alpha : \{ x_\alpha = A(y_\alpha), w_\gamma \} \in L_{x_\alpha} \} |$. 

Note that for a side $x_\alpha$, $N_{x_{\alpha}}(\{x_\alpha, w_\gamma \})$ gives the multiplicity of lifted adjacency incidences with $x_\alpha$, not the multiplicity of $\{x_\alpha, w_\gamma \}$. In particular, if two sides $x_\alpha$ and $x'_\alpha$ are attached and share the same lifting ancestor $A(x_\alpha)$, then $N_{A(x_{\alpha})}(\{A(x_\alpha), A(x_\alpha) \})$ is incremented by 2. On the contrary, if $x_\alpha$ is connected to $w_\gamma$ and $A(x_\alpha)$ is distinct from $A(w_\gamma)$, then both $N_{A(x_{\alpha})}(\{A(x_\alpha), A(w_\gamma) \})$ and $N_{A(w_{\gamma)}}(\{A(x_\alpha), A(w_\gamma) \})$ are incremented by 1.

For a vertex (resp. side) $x$ the multi-set of non-trivial lifted labels (adjacencies) is $\tilde{L}'_{x} = (\tilde{L}_x, \tilde{N}_x) \subseteq L'_{x}$. 

\subsubsection*{The Equivalence of LBSC to UBSC and LBRC to UBRC for AVGs}

\begin{lemma} \label{equivalentSubstitutionBoundsLemma}
For any AVG $H$, $s_l(H) = s_u(H)$.
\end{lemma}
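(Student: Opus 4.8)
The plan is to show that the correction terms subtracted in the definitions of LBSC and UBSC coincide for an AVG, so that the two counts of non-trivial lifted labels agree. Recall LBSC counts distinct non-trivial lifted labels over all vertices, then subtracts the number of unlabeled vertices (necessarily free roots) carrying non-trivial lifted labels; UBSC counts all non-trivial lifted labels with multiplicity, then subtracts, for each such free root, the maximum multiplicity of any one of its lifted labels. So the identity $s_l(H)=s_u(H)$ is equivalent to the assertion that, in an AVG, every vertex carrying a non-trivial lifted label carries exactly one such label (counting multiplicity), \emph{except possibly} at an unlabeled free root, where the multiset of non-trivial lifted labels must consist of copies of a single label. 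Both facts should fall directly out of $u_s(H)=0$.

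First I would unwind the definition of substitution ambiguity: $u_s(H)$ is the total number of non-trivial lifted labels in excess of one per vertex, i.e. $u_s(H)=\sum_{x}\max(0,\,\tilde{N}_x^{\mathrm{tot}}-1)$ where $\tilde{N}_x^{\mathrm{tot}}=\sum_{\rho\in\tilde L_x}\tilde N_x(\rho)$ is the total multiplicity of non-trivial lifted labels at $x$ (using the Appendix~A notation $\tilde L'_x=(\tilde L_x,\tilde N_x)$). Since $u_s(H)=0$ and each summand is non-negative, every vertex $x$ has $\tilde{N}_x^{\mathrm{tot}}\le 1$. Now split into cases. If $x$ is labeled, then by definition a lifted label equal to $l(x)$ is trivial, so $\tilde N_x^{\mathrm{tot}}\le 1$ means $x$ contributes at most one distinct non-trivial lifted label and with multiplicity exactly that amount; hence its contribution to the distinct count (LBSC) equals its contribution to the with-multiplicity count (UBSC), and no correction term applies to labeled vertices in either formula. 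If $x$ is unlabeled, it is a free root; a single lifted label on it is \emph{trivial} by definition (``a lifted label $\rho$ on an unlabeled vertex $x$ is trivial if it is the only lifted label on $x$''), so to have \emph{any} non-trivial lifted label $x$ must carry at least two lifted labels total. But here one must be careful: the bound $\tilde N_x^{\mathrm{tot}}\le1$ is on \emph{non-trivial} lifted labels, and the notion of trivial/non-trivial at a free root is global (if there are $\ge 2$ lifted labels, which are non-trivial?). This is the step I expect to be the main obstacle, and I would handle it by arguing that at a free root with two or more lifted labels, either all of them are non-trivial, or — more to the point — the excess-over-one count forces all but one of the distinct labels to be ``absorbed'': concretely, $\tilde N_x^{\mathrm{tot}}\le 1$ combined with the requirement that a non-trivial label exist forces the multiset of lifted labels at $x$ to be $(k)$ copies of a single label $\rho$ plus at most... — in fact the cleanest route is to observe that with $u_s(H)=0$, a free root either has $\tilde N_x^{\mathrm{tot}}=0$ (contributes nothing to either bound, no correction) or $\tilde N_x^{\mathrm{tot}}=1$, in which case it contributes $1$ to the distinct non-trivial count in LBSC and $1$ to the with-multiplicity count in UBSC, while the correction terms are: LBSC subtracts $1$ (it is an unlabeled vertex with a non-trivial lifted label), and UBSC subtracts the maximum multiplicity of identical lifted labels at $x$, which is $1$ since $\tilde N_x^{\mathrm{tot}}=1$ (we need to check the max is taken over all lifted labels, trivial included — but a free root with exactly one non-trivial lifted label and possibly one trivial one still has max identical-label multiplicity $1$, or if it had $2$ identical labels those would both be... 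I would verify this does not overshoot). So the corrections cancel term-by-term.

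Assembling: $s_u(H) = \sum_x \tilde N_x^{\mathrm{tot}} - \sum_{x\ \mathrm{free\ root,}\ \tilde N_x^{\mathrm{tot}}>0}(\text{max identical multiplicity at }x)$ and $s_l(H) = \sum_x |\tilde L_x| - \#\{x\ \mathrm{free\ root}: \tilde N_x^{\mathrm{tot}}>0\}$, and by the AVG hypothesis $\tilde N_x^{\mathrm{tot}}\le 1$ for all $x$, so $\tilde N_x^{\mathrm{tot}}=|\tilde L_x|$ everywhere and the max identical multiplicity at each relevant free root equals $1$; hence the two expressions are literally equal. I would present this as: (i) reduce $u_s(H)=0$ to ``$\tilde N_x^{\mathrm{tot}}\le1$ for every vertex $x$''; (ii) conclude $\sum_x|\tilde L_x|=\sum_x\tilde N_x^{\mathrm{tot}}$; (iii) check that for each free root with a non-trivial lifted label the UBSC correction equals $1$, matching the LBSC correction; (iv) subtract. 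The only genuinely delicate point, as noted, is pinning down the trivial/non-trivial bookkeeping at free roots, and I would spend the bulk of the proof making that case airtight, likely with a short sub-claim isolating the structure of the lifted-label multiset at a free root in an AVG.
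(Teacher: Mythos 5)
Your proposal is correct and follows essentially the same route as the paper: reduce $u_s(H)=0$ to the statement that every vertex carries at most one non-trivial lifted label (so distinct and with-multiplicity counts agree), check the per-vertex identity between the LBSC and UBSC contributions including the free-root correction terms, and sum. The one point you flag as delicate resolves immediately from the definitions: at an unlabeled free root a single lifted label is trivial and two or more lifted labels are all non-trivial, so in an AVG a free root carries no non-trivial lifted labels at all and both correction terms vanish, which is why neither bound can ``overshoot''.
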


\begin{proof}
For a vertex $x$ without substitution ambiguity there is at most one non-trivial lifted label, that, if it exists, has a multiplicity of one, therefore $|\tilde{L}_x| = |\tilde{L}'_{x}| = 0$ or $1$. Let $\delta_{a,b}$ be the Kronecker delta, i.e. $\delta_{a,b} = 1$ if $a = b$, else $0$. It is easily verified for every possible case:
\[ max(0, |\tilde{L}_x| - \delta_{l(x),\emptyset}) =   |\tilde{L}'_{x}| - \mbox{ }\delta_{l(x),\emptyset} \times \mbox{ } \underset{\rho \in \tilde{L}'_{x}}{\max\mbox{ }} N_{x}(\rho), \] summing over modules, therefore: \[ s_l(H) = \sum_{x \in V_{\textbf{L}(H)}} max(0, |\tilde{L}_x| - \delta_{l(x),\emptyset}) = \sum_{x \in V_{\textbf{L}(H)}} |\tilde{L}'_{x}| - \mbox{ }\delta_{l(x),\emptyset} \times \mbox{ } \underset{\rho \in \tilde{L}'_{x}}{\max\mbox{ }} N_{x}(\rho) = s_u(H). \]
\end{proof}

%\begin{definition} \label{upperRearrangementCost}
A module is \emph{simple} if each side has at most one incidence with a non-trivial lifted adjacency.
%\end{definition}

\begin{lemma}
All modules in an AVG are simple.
\end{lemma}

\begin{proof}
Follows from definition of rearrangement ambiguity.
\end{proof}

%By definition all modules in an AVG are simple. 

%\begin{property}
%A simple history is an AVG. 
%\end{property}

\begin{lemma} \label{equivalentRearrangementBoundsLemma}
For an AVG $H$, $r_l(H) = r_u(H)$.
\end{lemma}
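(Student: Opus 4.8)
The plan is to show that for an AVG $H$, every module $M \in M(H)$ contributes the same amount to LBRC and to UBRC, and then sum over modules. Recall that all modules in $H$ are simple (by the preceding lemma), so in each module every side has at most one incident non-trivial lifted adjacency. The LBRC contribution of a module $M$ is $\ceil{|V_M|/2} - 1$, while the UBRC contribution, restricted to this module, is the number of non-trivial lifted adjacencies in $M$ minus $1$ if every side of $M$ has exactly one incident non-trivial lifted edge, and minus $0$ otherwise. So the goal reduces to a purely combinatorial statement about simple modules.

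First I would analyse the structure of a simple module $M$ in an AVG. Since $M$ is a connected component of the module graph, and each vertex (side) of $M$ has at most one incident non-trivial lifted edge, I would argue that $M$ consists of a bounded mix of trivial (real or free-root) adjacencies and non-trivial lifted adjacencies, and that connectivity plus the AVG (acyclicity/unambiguity) constraints force $M$ to be essentially a simple path or a simple cycle on its sides — i.e. each side has degree at most two in the module graph, with non-trivial lifted edges alternating appropriately with trivial ones. The key structural claim I would establish is: in a simple module, the number of non-trivial lifted adjacencies equals $\lceil |V_M|/2 \rceil$ when the module is "open" (some side has no incident non-trivial lifted edge, i.e. is an endpoint of a path), and equals $|V_M|/2$ when the module is "closed" (every side has exactly one incident non-trivial lifted edge, so $|V_M|$ is even and the non-trivial lifted edges form a perfect matching / cycle structure).

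With that structural dichotomy in hand, the two cases match up directly. In the closed case, UBRC contributes $(|V_M|/2) - 1 = \ceil{|V_M|/2} - 1$, which equals the LBRC contribution. In the open case, UBRC contributes $\lceil |V_M|/2 \rceil - 0 = \lceil |V_M|/2 \rceil$; here I need to double-check the indexing, because this would have to equal $\lceil |V_M|/2\rceil - 1$, so in fact the count of non-trivial lifted adjacencies in the open case must be $\lceil |V_M|/2\rceil - 1$ — I would recompute the path structure carefully (a path alternating trivial/non-trivial edges on $|V_M|$ sides has one fewer non-trivial edge than a cycle on the same number of sides). Once the per-module counts are verified in both cases, summing $r_l(H) = \sum_{M \in M(H)} (\ceil{|V_M|/2} - 1)$ and the analogous sum for $r_u(H)$ term by term gives $r_l(H) = r_u(H)$.

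The main obstacle I anticipate is the structural classification of simple modules: proving rigorously that in an AVG a simple module must be a simple path or simple cycle on sides with the trivial/non-trivial edges alternating, rather than some more complicated graph. This requires using the definitions of trivial lifted adjacency (the path-of-branches condition involving junction sides) together with acyclicity of the event graph and the zero-ambiguity hypothesis, to rule out branching or repeated edges within a module. The arithmetic matching of the two per-module formulas afterward should be routine bookkeeping, modulo getting the "$-1$" offsets and the parity/ceiling in the open-versus-closed cases exactly right.
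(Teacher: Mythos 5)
Your proposal follows essentially the same route as the paper: the paper likewise observes that a simple module is a path or a cycle and encodes your open/closed case analysis in the per-module identity $\lceil (|V_M|-k_M)/2 \rceil - 1 = -\prod_{x_\alpha \in V_M} \delta_{1,|\tilde{L}'_{x_\alpha}|}$ (with $k_M$ the number of sides having one non-trivial incidence), then sums over modules. Your corrected open-case count ($\lceil |V_M|/2 \rceil - 1$ non-trivial lifted adjacencies) is right, and the path-or-cycle structural claim you flag as the main obstacle is in fact simply asserted, not proved, in the paper's own argument.
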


\begin{proof}
Let $M$ be a simple module and let $k_M =  \sum_{x_{\alpha} \in V_M}  \delta_{1,|\tilde{L}'_{x_{\alpha}}|}$, i.e. the number of sides in $V_M$ with a single incidence with a non-trivial lift. 

As the module is simple it is a path or a cycle, and hence $|V_M| - k_M = 0,  1$ or $2$, from which it is easily verified that: \[ \lceil \frac{|V_M| - k_M}{2} \rceil - 1 = -\prod_{x_{\alpha} \in V_M} \delta_{1,|\tilde{L}'_{x_{\alpha}} |}. \]

Summing over modules in H, which are all simple, therefore:  \[ \sum_{M \in M(H)} \lceil \frac{|V_M| - k_M}{2} \rceil - 1 = \sum_{M \in M(H)} -\prod_{x_{\alpha} \in V_M} \delta_{1,|\tilde{L}'_{x_{\alpha}} |}. \]

As all modules of $H$ are simple, $k_M$ is always even and $k_M = \sum_{x_{\alpha} \in V_M} | \tilde{L}'_{x_{\alpha}}|$, therefore: 

\[\sum_{M \in M(H)} (\lceil \frac{|V_M|}{2} \rceil - 1) - \frac{1}{2}\sum_{x_{\alpha} \in V_M} | \tilde{L}'_{x_{\alpha}}| = \sum_{M \in M(H)} -\prod_{x_{\alpha} \in V_M} \delta_{1,|\tilde{L}'_{x_{\alpha}} |}, \] therefore, for an AVG $H$

\[r_l(H) = \sum_{M \in M(H)} (\lceil \frac{|V_M|}{2} \rceil - 1) = \sum_{M \in M(H)} \left( \frac{1}{2}\sum_{x_{\alpha} \in V_M} | \tilde{L}'_{x_{\alpha}}| -\prod_{x_{\alpha} \in V_M} \delta_{1,|\tilde{L}'_{x_{\alpha}} |} \right)  = r_u(H). \]

\end{proof}

\subsubsection*{A Bounded Transformation of a History Graph into an AVG}

In this section we will prove that any history graph $G$ has an AVG extension $H$ such that $s_u(G) \ge s_u(H)$ and $r_u(G) \ge r_u(H)$. To do this we define sequences of extension operations that when applied iteratively and exhaustively construct such an extension.
%\begin{lemma} \label{dnaToAVGLemma}
%Any history graph $G$ has an AVG extension $H$ such that $s_u(G) \ge s_u(H)$ and $r_u(G) \ge r_u(H)$.
%\end{lemma}
 %Lemma \ref{dnaToAVGLemma}.
%The following two extension types are used for eliminating substitution ambiguity.

%\begin{definition}
A vertex or side $x$ is \emph{ambiguous} if $|\tilde{L}'_{x} |> 1$.
%\end{definition}
%\begin{definition}
For an ambiguous free-root $x'$ and unlabeled root vertex $x$ such that $A(x) = x'$, a \emph{root labeling extension} is a labeling of $x$ with a member of the set $\underset{\rho \in \tilde{L}_{x'}}{\arg\max\mbox{ }} N_{x'}(\rho)$ (See Figure  \ref{appendixDNAToAVG}(A)).
%\end{definition}

\begin{lemma}
For any history graph $G$ containing an ambiguous free-root there exists a root labeling extension $G'$ of $G$ such that $s_u(G) = s_u(G')$, $r_u(G) = r_u(G')$ and $u(G) > u(G')$. 
\end{lemma}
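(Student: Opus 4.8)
The plan is to analyze what a root labeling extension does to the lifted graph and verify the three claimed quantities directly. Let $x'$ be the ambiguous free-root, $x$ the unlabeled root vertex with $A(x) = x'$, and let $\rho^* \in \arg\max_{\rho \in \tilde{L}_{x'}} N_{x'}(\rho)$ be the label we assign to $x$. First I would observe that this operation is indeed an extension operation: labeling a vertex is the inverse of a label deletion, and the result is still a history graph since labelings are unconstrained in the definition of history graphs (the event graph is unchanged).

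Next I would track how the lifted labels change. Before the extension, every descendant $y$ of $x$ with $A(y) = x'$ lifts its label to $x'$; after labeling $x$ with $\rho^*$, the lifting ancestor of such a $y$ becomes $x$ (if $x$ lies on the branch path from $x'$ to $y$, which it does since $A(x) = x'$ means $x$ is a child-side root under $x'$) — more precisely, $x$ now intercepts all the lifted labels that previously passed through it to $x'$. The key bookkeeping points are: (i) $x'$ loses exactly the lifted labels coming from descendants of $x$, and in particular its multiplicity $N_{x'}(\rho^*)$ drops by the number of descendants of $x$ labeled $\rho^*$; (ii) $x$ acquires these as its multiset of lifted labels, but since $l(x) = \rho^*$ now, the copies equal to $\rho^*$ become \emph{trivial} at $x$, so $x$'s non-trivial lifted labels are exactly the non-$\rho^*$ labels lifted from its descendants. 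Because $\rho^*$ was a maximizer of $N_{x'}$, the ``$-\max_\rho N_x(\rho)$'' term in UBSC at $x'$ before equals the total number of $\rho^*$-copies we just removed from consideration, and after the move those same copies are killed by triviality at $x$; a short case analysis (did $x'$ remain unlabeled with non-trivial labels, or did it drop to one lifted label?) shows UBSC is exactly preserved. For the adjacencies: $x$ was unattached and remains unattached, so no lifting ancestors of sides change, hence the lifted adjacency multigraph and $r_u$ are untouched.

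Finally, for the strict decrease in ambiguity $u(G) > u(G')$: rearrangement ambiguity $u_r$ is unchanged (same argument as for $r_u$). For substitution ambiguity, I would compare $u_s$ at the affected vertices $x$ and $x'$. The total count of non-trivial lifted labels summed over these two vertices is unchanged except that all copies equal to $\rho^*$ that were non-trivial at $x'$ become trivial at $x$; since $x'$ was ambiguous, $N_{x'}(\rho^*) \ge 1$ copies move and become trivial, and this strictly reduces the count of (vertex, non-trivial-lifted-label) incidences. Because $u_s$ counts non-trivial lifted labels in excess of one per vertex, removing incidences at an ambiguous vertex (where the ``excess'' is positive) strictly decreases $u_s$ — one has to check the redistribution at $x$ doesn't create new excess elsewhere, but $x$ was unlabeled and unattached before with no lifted labels of its own, so any new non-trivial lifted labels at $x$ were already counted (as non-trivial) at $x'$ before, and the arithmetic of ``excess of one'' is monotone under this transfer.

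The main obstacle I anticipate is the case analysis around the special rule for trivial lifted labels on unlabeled free roots (``a lifted label on an unlabeled free root is trivial iff it is the only one''): one must be careful that when $x'$ drops from several lifted labels down to exactly one, that remaining label flips from non-trivial to trivial, which actually helps (further reduces $u_s$ and could reduce $s_u$) — so the delicate point is showing $s_u$ is \emph{exactly} preserved rather than possibly decreasing, which forces the choice of $\rho^*$ as an $\arg\max$ to be used in an essential way. I would handle this by splitting into the cases $|\tilde{L}_{x'}| = 2$ (so $x'$ becomes unambiguous, possibly with its last label going trivial) versus $|\tilde{L}_{x'}| \ge 3$, and in each verifying the UBSC contributions of $x$ and $x'$ sum to the original UBSC contribution of $x'$.
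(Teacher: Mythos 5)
Your proposal is correct in substance. Note that the paper itself gives no proof of this lemma (it is one of the Appendix~A lemmas left as a routine verification feeding into Lemma~\ref{dnaToAVGLemma}), and your direct bookkeeping on the lifted graph is exactly the intended verification: after labeling $x$ with $\rho^*$, every label that formerly lifted to the free-root $x'$ now lifts to $x$, the copies equal to $\rho^*$ become trivial at $x$, and $x'$ retains exactly one lifted label (namely $\rho^*$ lifted from $x$ itself), which is trivial by the free-root rule. Hence the UBSC contribution of the pair $(x',x)$ is $T-N^*$ both before and after (where $T=|\tilde{L}'_{x'}|$ and $N^*=N_{x'}(\rho^*)$), which is where the $\arg\max$ choice is used, and the substitution ambiguity drops from $T-1$ to $\max(0,T-N^*-1)$, a strict decrease since $N^*\ge 1$ and $T\ge 2$. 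Two small points to tighten: (i) your claim that $x$ ``was unattached and remains unattached'' is not justified by the definition of a root labeling extension and is unnecessary --- $r_u$ and $u_r$ are unchanged simply because lifted adjacencies and their triviality are determined by attachments and adjacencies only, which a labeling does not alter; (ii) the case you flag (``did $x'$ remain unlabeled with non-trivial labels?'') cannot occur, since $x$ is the root of its branch-tree and therefore intercepts every label previously lifting to $x'$, so the case analysis collapses to the single computation above.
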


%\begin{proof}
%Easily verified. 
%\end{proof}

%\begin{definition}
For a branch $(x, x')$ an \emph{interpolation} is the extension resulting from the creation of a new vertex $x''$ and branches $(x, x'')$ and $(x'', x')$ and the deletion of $(x, x')$. 
%\end{definition}
%\begin{definition}
Let $x$ be a labeled and ambiguous vertex and $x'$ be a labeled vertex such that $A(x') = x$ and $l(x) \not= l(x')$. A \emph{substitution ambiguity reducing extension} is the interpolation of a vertex $x''$ along the parent branch of $x'$ labeled with $l(x)$ (See Figure \ref{appendixDNAToAVG}(B)).
%\end{definition}

\begin{lemma}
For any history graph $G$ containing no ambiguous free-roots and such that $u_s(G) > 0$, there exists a substitution ambiguity reducing extension $G'$ of $G$ such that $s_u(G) = s_u(G')$, $r_u(G) = r_u(G')$ and $u(G) > u(G')$.
\end{lemma}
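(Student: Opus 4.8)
The plan is to produce a valid instance of the substitution ambiguity reducing extension and then verify the three conditions by a careful accounting of lifted labels. First I would locate a suitable ambiguous vertex. Since $u_s(G) > 0$, some vertex $v$ satisfies $|\tilde{L}'_v| \ge 2$, i.e.\ is ambiguous. This $v$ cannot be a free-root, since $G$ has no ambiguous free-roots, and it cannot be an unlabeled non-free-root, since lifted labels accumulate only at lifting ancestors, which are labeled vertices or free-roots. Hence $v$ is a labeled ambiguous vertex $x$, and --- being labeled --- it receives non-trivial lifted labels from at least two distinct descendants $y_1 \ne y_2$ with $A(y_i) = x$ and $l(y_i) \ne l(x)$. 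Taking $x' = y_1$, the pair $(x, x')$ meets the hypotheses of the definition, so the corresponding $G'$, obtained by interpolating a vertex $x''$ labeled $l(x)$ along the parent branch of $x'$, exists; moreover $G'$ is a genuine extension of $G$, since deleting the label of $x''$ and then contracting the resulting free-child branch is a reduction sequence from $G'$ back to $G$.

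The hard part is the lifted-label accounting, because interpolating a \emph{labeled} vertex can in general reroute the lifting ancestor of many descendants. The key observation I would establish is that the descendants supplying non-trivial lifted labels to $x$ form an antichain: if $y_i$ were an ancestor of $y_j$ (necessarily strict, as $l(y_i) \ne l(x)$ forces $y_i \ne x$ and forces $y_i$ to lie strictly below $x$), then $y_i$ would be a labeled ancestor of $y_j$ more recent than $x$, contradicting $A(y_j) = x$. Consequently, after interpolation, the only non-trivial source of $x$ whose lifting ancestor becomes $x''$ is $x'$ itself; every other $y_j$ is not a descendant of $x'$ and keeps lifting ancestor $x$; and $x''$, being labeled $l(x)$, contributes only a \emph{trivial} lifted label, to $x$. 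Thus $\tilde{L}'_x$ loses exactly one element, $\tilde{L}'_{x''}$ gains exactly one (non-trivial, since $l(x') \ne l(x) = l(x'')$), and no other lifted-label multiset changes. Since $x''$ is unattached, every most-recent-attached-ancestor of a side is unchanged, so the lifted adjacencies and the entire module graph are literally unchanged.

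From this the three claims are immediate. Because $G$ has no ambiguous free-roots, no free-root carries a non-trivial lifted label (one lifted label at a free-root is trivial, two or more make it ambiguous), so the UBSC subtraction term vanishes and $s_u(G) = \sum_v |\tilde{L}'_v|$; the same holds in $G'$ and this total is preserved, giving $s_u(G') = s_u(G)$. The module graph is unchanged, so $r_u(G') = r_u(G)$ and $u_r(G') = u_r(G)$. Finally, with $m = |\tilde{L}'_x| \ge 2$, the joint $u_s$-contribution of $x$ and $x''$ drops from $m - 1$ to $\max(0, m-2) = m - 2$ while every other vertex is unaffected, so $u_s(G') = u_s(G) - 1$ and hence $u(G') = u(G) - 1 < u(G)$. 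A last routine point is that $G'$ is a history graph: acyclicity is preserved because $x''$ merely subdivides a branch in the event graph and is incident with no adjacencies. I expect the antichain observation to be the crux --- it is exactly what prevents the construction from merely relocating the ambiguity from $x$ to $x''$ rather than decreasing it.
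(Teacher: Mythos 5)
Your proof is correct and is essentially the verification the paper has in mind: the paper states this lemma without proof (treating it as easily verified en route to Lemma \ref{dnaToAVGLemma}), and your argument — locate a labeled ambiguous vertex $x$ with a differently-labeled contributor $x'$, interpolate $x''$ labeled $l(x)$, and account for the single rerouted lifted label — is exactly the intended construction, with the antichain observation correctly ensuring only $x'$ reroutes to $x''$. The one detail worth adding: $r_u$ and $u_r$ depend not just on the identity of the lifted adjacencies but on their triviality, and triviality is defined via unattached junction sides on branch paths, which now pass through $x''$; since $x''$ has a single child its sides cannot be junction sides (a junction side is an MRCA of two attached, indirectly related sides), so no trivial lifted adjacency becomes non-trivial and your claims $r_u(G')=r_u(G)$ and $u_r(G')=u_r(G)$ indeed hold.
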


%\begin{proof}
%Easily verified. 
%\end{proof}

The following is used for eliminating rearrangement ambiguity.
%\begin{definition}
For an unattached junction side $x_{\alpha}$ a \emph{junction side attachment extension} is the extension resulting from the following: If $x_{\alpha}$ has no attached ancestor, the creation of a new vertex and adjacency connecting a side of the new vertex to $x_{\alpha}$ (see Figure \ref{appendixDNAToAVG}(C) for an example), else $\{ A(x_{\alpha}), y_{\beta} \} \in E_G$ and the extension is the creation of a new vertex $y'$, branch $(y, y')$ and adjacency $\{ x_{\alpha}, y'_{\beta} \}$ (See Figure \ref{appendixDNAToAVG}(D)).
%\end{definition}

\begin{lemma}
For any history graph $G$ containing an unattached junction side, there exists a junction side attachment extension $G'$ of $G$ such that $s_u(G) = s_u(G')$, $r_u(G) \ge r_u(G')$, $u(G) \ge u(G')$ and $G'$ contains one less unattached junction side than $G$.
\end{lemma}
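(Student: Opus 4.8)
The plan is to verify the four claimed properties ($s_u$ unchanged, $r_u$ non-increasing, $u$ non-increasing, and one fewer unattached junction side) by a careful local analysis of how the lifted graph changes under the two sub-cases of the extension. First I would handle the simpler invariants. Since the extension never adds or removes labels except possibly creating one unlabeled vertex (in the first sub-case) or one unlabeled child $y'$ of an already-labeled vertex $y$ (second sub-case), and since lifting ancestors of existing labeled vertices are unaffected, the multiset of lifted labels at every pre-existing vertex is unchanged; the new vertices carry no labels and hence contribute nothing to UBSC. Thus $s_u(G) = s_u(G')$ is essentially immediate, and I would write it as a one-line observation that the lifted-label data $\tilde L'_x$ is untouched for all $x$.

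Next I would argue the side count. By definition of acyclicity and the construction, the only side whose "attached" status changes is $x_\alpha$: it goes from unattached to attached, via the new adjacency $e = \{x_\alpha, z_\gamma\}$ (where $z$ is the new vertex, or $z = y'_\beta$). The key point is that $x_\alpha$ was an unattached \emph{junction} side, i.e.\ an MRCA of two attached indirectly-related sides; after attachment $x_\alpha$ is no longer unattached, so it is removed from the tally of unattached junction sides. I must then check that the extension does not \emph{create} a new unattached junction side. The new adjacency $e$ lifts to $\{A(x_\alpha), A(z_\gamma)\}$; in the first sub-case $A(x_\alpha) = x_\alpha$ (it has no attached ancestor) and $z$ is a fresh root, so no new MRCA-of-indirectly-related-attached-sides appears above; in the second sub-case the construction deliberately routes $e$ so that $A(x_\alpha) = x_\alpha$ still (it was unattached, hence is its own lifting ancestor going forward) and the lifted adjacency lands exactly on the existing $\{A(x_\alpha), y_\beta\}$ adjacency. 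So the set of unattached junction sides strictly decreases by exactly one. I expect this bookkeeping — confirming no spurious junction side is born, and that acyclicity is preserved so the result really is a history graph — to be the main obstacle, and I would lean on the ``free-child/free-parent'' style arguments and the acyclicity of $D(G)$ already established.

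Finally, for $r_u(G) \ge r_u(G')$ and $u(G) \ge u(G')$, I would track the module containing $x_\alpha$ in the module graph of $\textbf{L}(G')$. Adding the real adjacency $e$ and its lift adds one vertex ($x_\alpha$, now having an incident real/lifted adjacency) or connects $x_\alpha$ into the module already containing $A(x_\alpha)$'s partner, and crucially the new lifted adjacency is \emph{trivial} by construction (there is a genuine adjacency between $A(x_\alpha)$ and $A(z_\gamma)$, or they are free roots). Hence it contributes nothing to the count of non-trivial lifted adjacencies used in UBRC, while it can only turn a side that previously had multiple incident non-trivial lifts into one with fewer such incidences, or leave counts unchanged; in every case the non-trivial lifted-adjacency incidences at $x_\alpha$ do not increase and may drop, so both the rearrangement ambiguity $u_r$ and UBRC are non-increasing. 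Combined with $u_s$ being unchanged (the label data is untouched), this gives $u(G) \ge u(G')$. I would organize the write-up as: (i) labels untouched, (ii) the module/lift analysis for $r_u$ and $u_r$, (iii) the unattached-junction-side count, citing Figure~\ref{appendixDNAToAVG}(C,D) for the two sub-cases and deferring the acyclicity check to the observation that interpolating fresh leaf/root vertices cannot create a directed cycle in $D(G')$.
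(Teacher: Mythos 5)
The paper states this lemma without proof, so the comparison is against what a correct argument must contain; measured that way, your plan has the right shape (casewise local analysis of the lifted graph) but it breaks at the triviality step. You assert the new lifted adjacency is ``trivial by construction (there is a genuine adjacency between $A(x_\alpha)$ and $A(z_\gamma)$, or they are free roots)'', but that is only half the definition: triviality also requires that no unattached junction side lie on the branch path strictly between $A(x_\alpha)$ and $x_\alpha$. Note also that $A(x_\alpha)$ is never $x_\alpha$ itself --- in the first sub-case it is the free-root side, in the second it is a proper attached ancestor --- so your statements ``$A(x_\alpha)=x_\alpha$'' misread the definition. If the chosen $x_\alpha$ has another unattached junction side among its proper ancestors below $A(x_\alpha)$, the new lift $\{A(x_\alpha), y_\beta\}$ is non-trivial, and one can check that $u_r$ can then strictly increase (the excesses at $A(x_\alpha)$ and $x_\alpha$ cancel, while $y_\beta$ gains a non-trivial incidence it may already have had one of), so the inequalities $u(G)\ge u(G')$ and $r_u(G)\ge r_u(G')$ can fail for an arbitrary unattached junction side. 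The lemma is an existence statement precisely because of this: a correct proof must pick $x_\alpha$ suitably (e.g.\ an unattached junction side with no unattached junction side strictly between it and $A(x_\alpha)$, such as a most ancestral one), and your plan never makes or uses such a choice.

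Two further steps are mis-stated or missing. First, your incidence bookkeeping at $x_\alpha$ is backwards: before the extension the unattached side $x_\alpha$ carries no lifted incidences at all, and afterwards it inherits the $k\ge 2$ non-trivial lifts from its attached, indirectly related descendants that previously landed on $A(x_\alpha)$; the correct argument is a transfer-and-compensation count (those lifts remain non-trivial but merely change endpoint, the loss of excess at $A(x_\alpha)$ offsets the new excess at $x_\alpha$), together with a check that the number of modules in which every side has exactly one non-trivial lift does not decrease, since UBRC subtracts that count --- none of this is in your plan, and ``incidences at $x_\alpha$ do not increase'' is simply false. Second, your justification that no new unattached junction side appears rests on where the new adjacency lifts, which is irrelevant to junction creation; the actual reason is an MRCA argument: any ancestor side of $x_\alpha$ (or of $y'_\beta$ above $y_\beta$) that would become an MRCA of two attached indirectly related sides because of the newly attached side was already such an MRCA, via the attached descendants that witness $x_\alpha$ being a junction (respectively via the attached side $y_\beta$), and the only side that can newly become a junction is $y_\beta$, which is attached and so does not affect the count of unattached junction sides.
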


%\begin{proof}
%Easily verified. 
%\end{proof}

%\begin{definition}
Let $\{ x_{\alpha}, y_\beta \} $ and $\{ A(x_\alpha), z_\gamma \}$ be a pair of adjacencies and $A(x_\alpha)$ be ambiguous. A \emph{rearrangement ambiguity reducing extension} is the interpolation along the parent branch of $x$ a vertex $x'$, the creation of a new vertex $z'$, new branch $(z, z')$ and new adjacency $\{ x'_{\alpha}, z'_{\gamma} \}$ (See Figure \ref{appendixDNAToAVG}(E)).
%\end{definition}

\begin{lemma}
For any history graph $G$ containing no unattached junction sides and such that $u_r(G) > 0$, there exists a rearrangement ambiguity reducing extension $G'$ of $G$ such that $s_u(G) = s_u(G')$, $r_u(G) \ge r_u(G')$ and $u(G) > u(G')$.
\end{lemma}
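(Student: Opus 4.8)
The plan is to verify the three conclusions directly from the definition of a rearrangement ambiguity reducing extension, after first arguing that the hypotheses guarantee that the configuration needed to perform such an extension is present.

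\textbf{Finding the configuration.} Since $u_r(G)>0$, some side $t$ carries at least two non-trivial lifted adjacency incidences. Each of these comes from a real adjacency at a distinct attached proper descendant of $t$; any two such sides $u_\alpha,u'_\alpha$ are neither an ancestor of the other (otherwise one would be a more recent attached ancestor of the same orientation than $t$) and hence are attached and indirectly related. Their MRCA must be $t$ itself, because a strictly more recent common ancestor $m_\alpha$ would be either attached -- contradicting $A(u_\alpha)=t$ -- or an unattached junction side, which $G$ does not contain. So $t$ is a junction side, and by hypothesis it is attached: write $t=A(x_\alpha)$, with $\{A(x_\alpha),z_\gamma\}$ a real adjacency of $G$ and $x_\alpha$ chosen so that its adjacency $\{x_\alpha,y_\beta\}$ is one of those lifting to a non-trivial incidence at $t$. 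Let $G'$ be the rearrangement ambiguity reducing extension built from this configuration (interpolate $x'$ on the parent branch of $x$, add $z'$ with branch $(z,z')$, and add $\{x'_\alpha,z'_\gamma\}$).

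\textbf{Effect on labels and on liftings.} The new vertices $x',z'$ are unlabeled, no label or root is added, so every labeled vertex and every free root keeps its multiset of lifted labels; hence $s_u(G')=s_u(G)$ and $u_s(G')=u_s(G)$. For adjacencies, the only side whose lifting ancestor changes is $x_\alpha$, whose new most recent strict attached ancestor of orientation $\alpha$ is the now-attached $x'_\alpha$ (no attached side of orientation $\alpha$ lies strictly between $x$ and $A(x_\alpha)$, and every strict descendant of $x$ already lifts, at orientation $\alpha$, to the attached side $x_\alpha$, which sits below $x'$; and $z'$ has no descendants). Hence $\{x_\alpha,y_\beta\}$ now lifts to $\{x'_\alpha,A(y_\beta)\}$, still non-trivial (its only route to triviality would be a regular adjacency between $x'_\alpha$ and $A(y_\beta)$, but $x'_\alpha$'s sole adjacency is $\{x'_\alpha,z'_\gamma\}$, and neither endpoint is a free root); the new adjacency $\{x'_\alpha,z'_\gamma\}$ lifts to $\{A(x_\alpha),z_\gamma\}$, which is trivial since $\{A(x_\alpha),z_\gamma\}$ is already a regular adjacency of $G'$ and $G'$ has no unattached junction side; all other liftings are unchanged. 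One also checks that the extension creates no unattached junction side: $x'$ has a single child so no side of $x'$ is an MRCA of indirectly related sides, $z'$ is a leaf, and any old side newly made a junction side is the MRCA of a pair of old attached sides and so was already an attached junction side.

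\textbf{Ambiguity strictly decreases.} Combining the above, the only changes to non-trivial lifted adjacency incidence counts are that $A(x_\alpha)$ loses one incidence (from $\{x_\alpha,y_\beta\}$), $x'_\alpha$ gains exactly one, and $A(y_\beta)$'s count is unchanged (only its partner moved). Since $A(x_\alpha)$ had at least two such incidences in $G$, its contribution to $u_r$ drops by exactly $1$ while $x'_\alpha$ contributes $0$; so $u_r(G')=u_r(G)-1$, and with $u_s$ unchanged, $u(G')=u(G)-1<u(G)$.

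\textbf{The upper bound does not increase (the main obstacle).} Here $r_u$ is the number of non-trivial lifted adjacencies minus the number of modules in which every side has exactly one incident non-trivial lifted adjacency. Relocating $\{A(x_\alpha),A(y_\beta)\}$ to $\{x'_\alpha,A(y_\beta)\}$ is a bijection on non-trivial lifted adjacencies, so the first term is unchanged; and since only the module of $G$ containing $A(x_\alpha)$ (equivalently, containing $A(y_\beta)$) is altered, and that module was not of the special type in $G$ because $A(x_\alpha)$ held two non-trivial incidences there, the count of special modules cannot decrease. Hence $r_u(G')\le r_u(G)$. I expect the careful comparison of the module graphs of $G$ and $G'$ -- handling the possible splitting of the affected module and the extra parallel trivial lifted adjacency $\{A(x_\alpha),z_\gamma\}$ contributed by the new edge -- to be the most delicate bookkeeping; the conceptual crux, however, is the first step, which is precisely where the absence of unattached junction sides is used.
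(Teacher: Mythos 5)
Your proof is correct, and since the paper states this lemma without an explicit proof (it is asserted and then used in the algorithm of Lemma \ref{dnaToAVGLemma}), your argument simply supplies the intended verification of the defined rearrangement ambiguity reducing extension, using the no-unattached-junction-side hypothesis exactly where it is needed (to make the ambiguous side an attached junction and to keep the new lifted adjacency $\{A(x_\alpha), z_\gamma\}$ trivial). The ``delicate bookkeeping'' you defer at the end is in fact already closed by your own observations: the relocation is a bijection on non-trivial lifted adjacencies, the only affected module is the one containing $A(x_\alpha)$, and that module was not of the special type, so the special-module count cannot drop and $r_u(G') \le r_u(G)$ follows.
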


%\begin{proof}
%Easily verified. 
%\end{proof}

\begin{figure}[h!]
\begin{center}
\includegraphics[width=10cm]{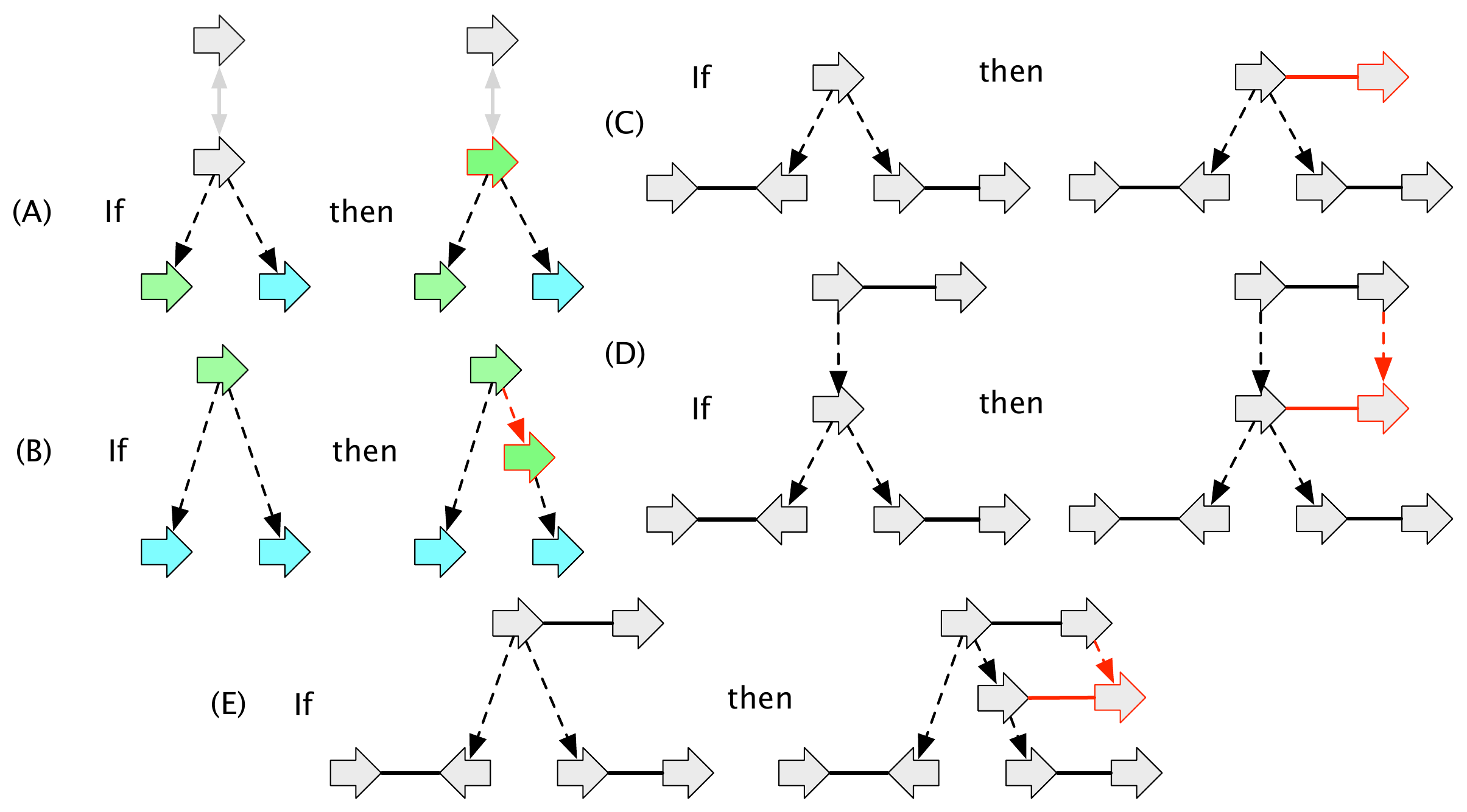}
\caption{(A) A root labeling extension. (B) A substitution ambiguity reducing extension. (C-D) Junction side attachment extensions. (E) A rearrangement ambiguity reducing extensions. Elements in red / outlined in red are those added in extension.}
\label{appendixDNAToAVG}
\end{center}
\end{figure}

We can now prove the desired lemma.

\begin{lemma} \label{dnaToAVGLemma}
Any history graph $G$ has an AVG extension $H$ such that $s_u(G) \ge s_u(H)$ and $r_u(G) \ge r_u(H)$.
\end{lemma}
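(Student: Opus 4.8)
The plan is to build $H$ from $G$ by a terminating, greedy sequence of the four extension operations already introduced (root labeling, substitution-ambiguity reducing, junction-side attachment, and rearrangement-ambiguity reducing extensions), stopping once no ambiguity remains. At each stage, if $u(G)>0$ I would select an operation by the following priority: if $G$ has an ambiguous free-root, apply a root labeling extension; otherwise, if $G$ has an unattached junction side, apply a junction-side attachment extension; otherwise, if $u_s(G)>0$, apply a substitution-ambiguity reducing extension (legal since there is now no ambiguous free-root); otherwise $u_r(G)>0$ and, since $G$ has no unattached junction side, a rearrangement-ambiguity reducing extension applies. The first thing to verify is that this case split is exhaustive whenever $u(G)=u_s(G)+u_r(G)>0$: an ambiguous free-root is the only obstruction to the substitution-reducing lemma when $u_s>0$, an unattached junction side is the only obstruction to the rearrangement-reducing lemma when $u_r>0$, and the junction-side attachment lemma has no precondition beyond existence of such a side; since $u_s>0$ or $u_r>0$, some branch is enabled.

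The second and crucial step is termination. The naive potential $u(G)$ fails, because the junction-side attachment lemma guarantees only $u(G)\ge u(G')$, not strict decrease. I would instead use the lexicographic potential $\Phi(G)=(u(G),\,j(G))\in\mathbb{N}^2$, where $j(G)$ is the number of unattached junction sides of $G$ and $\mathbb{N}^2$ carries the (well-founded) lexicographic order. Root labeling, substitution-ambiguity reducing, and rearrangement-ambiguity reducing extensions each strictly decrease $u(G)$, so $\Phi$ strictly decreases regardless of what they do to $j$; a junction-side attachment either strictly decreases $u(G)$, or leaves it fixed while strictly decreasing $j(G)$ (the lemma yields one fewer unattached junction side), so $\Phi$ again strictly decreases. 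Well-foundedness then forces the procedure to halt after finitely many steps. When it halts, no branch is enabled, so $G$ has no ambiguous free-root, no unattached junction side, and $u_s(G)=u_r(G)=0$; hence $u(G)=0$ and the terminal graph $H$ is an AVG. Since extension operations compose to extension operations (extension is the inverse of reduction, and the reduction relation is a partial order), $H$ is an extension of $G$, and since each of the four lemmas keeps $s_u$ fixed and $r_u$ non-increasing, an induction on the number of steps gives $s_u(G)\ge s_u(H)$ and $r_u(G)\ge r_u(H)$.

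The main obstacle is precisely the termination argument: recognizing that $u$ alone is not monotone-decreasing under all four operations, and that pairing it lexicographically with the count of unattached junction sides produces a single well-founded measure that every operation respects. The exhaustiveness of the case analysis and the bookkeeping that $s_u$ and $r_u$ never increase are routine once the lemmas above are invoked.
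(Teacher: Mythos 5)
Your proposal is correct and follows essentially the same route as the paper, which likewise builds $H$ by iterating the same four extension lemmas under a priority scheme until $u(H)=0$ (the paper applies substitution-ambiguity reduction before junction-side attachment, but with the lemmas' preconditions checked the order is immaterial). Your explicit lexicographic potential $(u(G), j(G))$ makes precise the termination argument that the paper leaves as ``easily verified,'' but it is a refinement of the same proof, not a different one.
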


\begin{proof}

Using the previous 4 lemmas it is easily verified the result of the following algorithm is an AVG extension $H$ for a history graph $G$ such that $s_u(G) \ge s_u(H)$ and $r_u(G) \ge r_u(H)$.

%\begin{algorithm}
\begin{algorithmic}
\STATE $H \leftarrow G$
\WHILE {$u(H) > 0$}
\IF {$H$ contains an ambiguous free-root}
\STATE $H \leftarrow$ root labeling extension of $H$.
\ELSE
\IF {$u_s(H) > 0$}
\STATE $H \leftarrow$ substitution ambiguity reducing extension of $H$.
\ELSE
\IF {$H$ contains an unattached junction side}
\STATE $H \leftarrow$ junction side attachment extension of $H$.
\ELSE
%\COMMENT {$u_r(H) > 0$}
\STATE $H \leftarrow$ rearrangement ambiguity reducing extension of $H$.
\ENDIF
\ENDIF
\ENDIF
\ENDWHILE
\end{algorithmic}
%\end{algorithm}
\end{proof}

\subsubsection*{A Bounded Transformation of an AVG into a Realisation}

In this section we will prove that any AVG $H$ has a realisation $\textbf{H}$ such that $s_l(H) = s(\textbf{H})$ and $r_l(H) = r(\textbf{H})$.

%\begin{lemma}  \label{avgToRealisation}
%Any AVG $H$ has a realisation $\textbf{H}$ such that $s_l(H) = s(\textbf{H})$ and $r_l(H) = r(\textbf{H})$.
%\end{lemma}

%\begin{definition}
A vertex connected by an adjacency to another vertex with more child branches has  \emph{missing children}.
%A vertex with $n$ children connected by an adjacency to a vertex with $n+m$ children has $m$ \emph{missing children}.
A root vertex that is connected to a non-root vertex has a \emph{missing parent}. 
Missing parents and missing children are collectively \emph{missing branches}.
An unattached side with homologous attached sides has a \emph{missing adjacency}.
%\end{definition}

We will define a series of extension types that when combined iteratively create an extension in which all vertices are labeled and no elements have missing adjacencies or branches.  For each extension type defined below Figure \ref{appendixAVGToRealisation} shows an example.

%\begin{definition}
For an attached root vertex $x$, the creation of a new vertex $x'$ and branch $(x', x)$ is a \emph{case 1 extension}.
%\end{definition}
The case 1 extension is used iteratively to initially ensure all roots are unattached. 

%\begin{definition}
For an attached leaf vertex, the creation of a new vertex $x'$ and branch $(x, x')$ is a \emph{case 2 extension}.
%\end{definition}
The case 2 extension is used iteratively to initially ensure all leaves are unattached. 

%\begin{definition}
For a side $x_{\alpha}$ if $A(x_{\alpha})$ is in a module $M$, $x_{\alpha}$ is in the \emph{face} of $M$. 
%\end{definition}
%\begin{definition}
Let $M$ be a simple module containing an odd number of sides and let $x_{\alpha}$ be an unattached root side in the face of $M$. The following is a \emph{case 3 extension}: the creation of a pair of vertices $y$ and $y'$, an adjacency connecting a side of $y$ to $x_{\alpha}$ and the branch $(y, y')$. 
%\end{definition}
The case 3 extension is used iteratively to ensure all modules contain an even number of sides. 

%\begin{definition}
Similarly to vertices and sides, a thread $X$ is \emph{ancestral} to a thread $Y$ in a history graph $G$, and reversely $Y$ is a \emph{descendant} of $X$, if there exists a directed path in $D(G)$ from the vertex representing $X$ to the vertex representing $Y$, otherwise two threads are \emph{unrelated} if they do not have an ancestor/descendant relationship.
%\end{definition}
%\begin{definition}
For a vertex $x$, $T(x)$ is the thread it is part of.
%\end{definition}
%\begin{definition}
For a pair of unattached root sides $x_{\alpha}$ and $y_{\beta}$ in the face of a simple module such that $T(x) = T(y)$ or $T(x)$ and $T(y)$ are unrelated, the creation of a new adjacency $\{ x_{\alpha}, y_{\beta} \}$ is a \emph{case 4 extension}.
%\end{definition}
The case 4 extension is used iteratively to ensure all modules contain attached root sides. 

%\begin{definition}
Let $x_{\alpha}$ be a side in the face of a simple module $M$ such that $x_{\alpha}$ is internal, unattached and has an attached parent. Let $(y, y')$ be a branch such that $y'_{\beta}$ is a side in the face of $M$, $T(y)$ is not descendant of $T(x)$, if $T(y) = T(x)$ then $y$ is unattached, $T(y')$ is descendant or unrelated to $T(x)$, and the sides $A(x_{\alpha})$ and $A(y'_{\beta})$ in $M$ are connected by a path containing an odd number of adjacencies/lifted adjacencies. If $y_{\beta}$ is unattached and $T(y)$ is unrelated or equal to $T(x)$ then the creation of the adjacency $\{ x_{\alpha}, y_{\beta} \}$ is the \emph{case 5 extension}, else the interpolation of a vertex $y''$ on the branch $(y, y')$ and creation of the adjacency $\{ x_{\alpha}, y_{\beta}'' \}$ is the \emph{case 5 extension}.
%\end{definition}
The case 5 extension is used iteratively to ensure all internal vertices are attached.

%\begin{definition}
For an adjacency $\{ x_{\alpha}, y_{\beta} \}$ such that $y$ has fewer children than $x$, the creation of a new vertex $y'$ and branch $(y, y')$ is a \emph{case 6 extension}. 
%\end{definition}
The case 6 extension is used iteratively to ensure there are no vertices with missing children. 

%\begin{definition}
Let $x_{\alpha}$ and $y_{\beta}$ be a pair of unattached leaf sides in the face of a simple module $M$ such that $T(x)$ and $T(y)$ are unrelated or equal, $A(x_{\alpha})$ and $A(y_\beta)$ are attached and are either connected by an adjacency or both not incident with a non-trivial lifted adjacency. The creation of a new adjacency $\{ x_{\alpha}, y_{\beta} \}$ is a \emph{case 7 extension}.
%\end{definition}
The case 7 extension is used iteratively to ensure there are no leaf vertices with missing adjacencies.

%\begin{definition}
For a branch-tree containing no labeled vertices, the labeling of any single vertex in the branch-tree with a member of $\Sigma^*$ is a \emph{case 8 extension}.
%\end{definition}
%\begin{definition}
For a branch $(x, y)$, such that $y$ is labeled and $x$ is unlabeled the labeling of $x$ with the label of $y$ is a \emph{case 9 extension}.
%\end{definition}
%\begin{definition}
For a branch $(x, y)$, such that $x$ is labeled and $y$ is unlabeled the labeling of $y$ with the label of $x$ is a \emph{case 10 extension}.
%\end{definition}
The case 8, 9 and 10 extensions are used iteratively to ensure there are no unlabeled vertices. 

\begin{figure}[h!]
\begin{center}
\includegraphics[width=10cm]{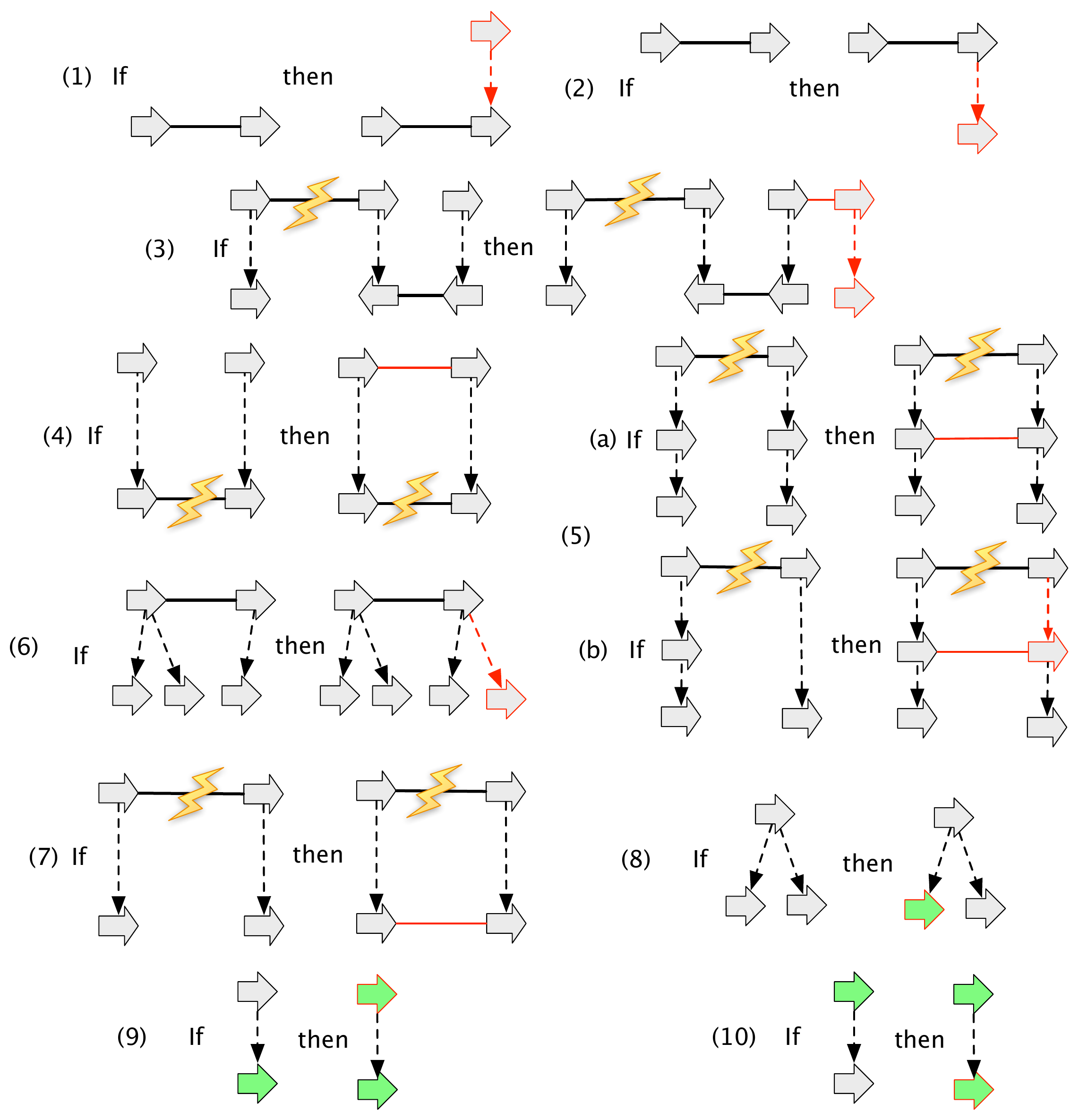}
\caption{Case 1 to 10 extensions. Adjacencies with lightning bolts may be expanded to include additional elements within the simple module. Elements in red / outlined in red are those added in extension.}
\label{appendixAVGToRealisation}
\end{center}
\end{figure}

\begin{lemma}
\label{extensionIsAvg}

For an AVG $H$, if $H'$ is obtained from $H$ by any of the 10 extensions cases above then  $s_l(H) = s_l(H')$ and $r_l(H) = r_l(H')$.
%A case $i$ ($i \in [i,10]$) extension of an AVG H produces an AVG extension of $H$ with identical substitution and rearrangement complexity.
%For an AVG $H$ and integer $i \in [1, 10]$, a case $i$ extension $H'$ is an AVG extension of $H$ such that $s_l(H) = s_l(H')$ and $r_l(H) = r_l(H')$.
\end{lemma}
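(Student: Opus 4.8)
The plan is to verify the claim one extension case at a time, showing in each case that no new non-trivial lifted labels or non-trivial lifted adjacencies are created (which would affect $s_l$ or $r_l$) and that any lifted objects whose multiset structure changes do so only in ways that preserve the distinct-count and module-size quantities defining $s_l$ and $r_l$. The central bookkeeping tool is the explicit notation $L'_x = (L_x, N_x)$ and $L'_{x_\alpha} = (L_{x_\alpha}, N_{x_\alpha})$ from Appendix A, together with the definitions of LBSC (distinct non-trivial lifted labels, adjusted at unlabeled free roots) and LBRC ($\sum_M (\lceil |V_M|/2 \rceil - 1)$). Since $H$ is an AVG, by the earlier lemmas all its modules are simple and $s_l(H)=s_u(H)$, $r_l(H)=r_u(H)$; I will use simplicity of modules freely, as each extension case is explicitly stated relative to a simple module.

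First I would dispatch the ``structural'' cases 1, 2, 3, 4, 5, 6, 7. For cases 1, 2, 6 (adding a leaf/root vertex hanging off an existing vertex by a fresh branch), the new vertex is unlabeled and its new side is either unattached or newly attached only within the context already controlled; one checks that the lifting ancestor map $A(\cdot)$ is unchanged on all pre-existing attached/labeled descendants, so no lifted label or lifted adjacency multiset is altered, and the new vertex contributes no non-trivial lifted objects. For the module-editing cases 3, 4, 5, 7, the key point is that each is designed so that the affected module $M$ either gains two sides (case 3 adds a vertex $y$ with a side in the face of $M$ plus a pendant branch — but note the parity hypothesis ``$M$ has an odd number of sides'' means $\lceil |V_M|/2\rceil$ is unchanged when $|V_M|$ increases by one) or merges/extends in a way that keeps the alternating path/cycle structure intact and the relevant adjacency trivial. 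I would argue case by case that $\lceil |V_M|/2\rceil - 1$ summed over modules is invariant: in case 3, $|V_M|$ odd $\to$ even changes nothing; in cases 4 and 7, connecting two unattached root (resp. leaf) sides in the faces of simple modules either merges two modules whose combined ceiling-count equals the sum of the separate ones, or closes a path into a cycle of the same ceiling value; in case 5 the ``odd number of adjacencies between $A(x_\alpha)$ and $A(y'_\beta)$'' hypothesis is exactly what guarantees the new adjacency is trivial (its lifted image closes an even cycle or connects into an existing alternating structure), so no non-trivial lifted adjacency is born and $r_l$ is unchanged; the acyclicity side-conditions on threads ($T(x),T(y)$ unrelated or equal, etc.) are there only to keep $H'$ a valid history graph and do not enter the cost computation.

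Then I would handle the labeling cases 8, 9, 10. Case 8 labels a vertex in a branch-tree with no labeled vertices: before, that branch-tree contributed nothing to any lifted-label multiset (all its members are unlabeled, so they lift nothing); after, the single new label lifts to itself or to the free root as a \emph{trivial} lifted label by definition (a lifted label on a vertex equal to its own label is trivial; the unique lifted label on an unlabeled free root is trivial), so $s_l$ is unchanged. Cases 9 and 10 copy a label across a branch between a labeled and an unlabeled vertex: I must check that the lifting ancestor of every already-labeled descendant is unchanged (it is, since the newly labeled vertex lies on the branch path but the most-recent-labeled-ancestor relation for those descendants is unaffected if the copied label equals the label at the old ancestor), and that the newly labeled vertex $x$ (case 9) or $y$ (case 10) acquires only a trivial lifted label, because its label equals that of its adjacent relative along the branch, hence its lifted label (lifted from below, or its own) is trivial. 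I expect the main obstacle to be case 5 together with cases 9 and 10: in case 5 one must carefully track how interpolating $y''$ on $(y,y')$ redirects the lifting ancestor $A(y'_\beta)$ and confirm the parity condition makes the resulting lifted adjacency trivial rather than non-trivial (this is where the delicate definition of triviality via junction sides and the odd/even path length is essential), and in cases 9/10 one must confirm no previously-trivial lifted label at some other vertex is promoted to non-trivial by the insertion of a new labeled vertex into a branch path — which follows because the copied label matches the existing label on that path, so no vertex sees a \emph{new, distinct} lifted label. All of these are finite case checks with no genuine difficulty once the invariance of $A(\cdot)$ and the triviality criteria are pinned down, so the proof is a (lengthy but routine) verification; I would present it as a table-driven case analysis referencing Figure \ref{appendixAVGToRealisation}.
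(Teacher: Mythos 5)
The paper states Lemma \ref{extensionIsAvg} without any proof, so there is no written argument to compare yours to beyond the implicit claim that a case-by-case check works; your overall plan (verify each of the ten cases by tracking lifted labels, lifted adjacencies and module sizes) is indeed the only sensible route, and your treatment of the purely structural cases 1, 2, 6 and of the labelling cases 8--10 (new labels are copied so every new lifted label is trivial, and in an AVG no redistribution of lifted labels can split a non-trivial label across two vertices) is essentially sound.

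The genuine gap is in the rearrangement half of the module-editing cases 3, 4, 5, 7. Your argument there is ``no \emph{non-trivial} lifted adjacency is born, hence $r_l$ is unchanged,'' but $r_l$ is blind to triviality: LBRC is $\sum_{M}(\lceil |V_M|/2\rceil - 1)$ over the modules of the graph whose edges are \emph{all} real and lifted adjacencies, trivial or not. What actually has to be controlled is the change in module vertex sets: when a previously unattached side $x_\alpha$ becomes attached, every lifted adjacency that formerly terminated at $A(x_\alpha)$ and originates below $x_\alpha$ migrates down to $x_\alpha$, the new real adjacency contributes two new sides to the module graph (often as a fresh two-vertex module of cost $0$), and a new lifted adjacency $\{A(x_\alpha),A(y_\beta)\}$ appears among the old module vertices; none of this bookkeeping appears in your sketch. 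Moreover your stated mechanism for case 5 is factually wrong: when the odd path between $A(x_\alpha)$ and $A(y'_\beta)$ has length $\ge 3$ there is no regular adjacency between these sides, so the new lifted adjacency is \emph{non-trivial}; the odd-path hypothesis must instead be spent in the ceiling arithmetic for the affected module, and this---together with the migration of lifted edges onto $x_\alpha$ and $y''_\beta$, which can enlarge or split the module---is precisely the delicate part of the verification (and the part where the particular choice of the branch $(y,y')$ matters). Similarly, your claim for cases 4 and 7 that connecting the two sides may ``merge two modules whose combined ceiling-count equals the sum of the separate ones'' is false arithmetic in general (two modules of size 2 merge into one of cost 1); those cases only go through because both sides lie in the face of the \emph{same} module, so the module's vertex count is preserved (case 4: the free-root sides are replaced by $x_\alpha,y_\beta$ and a separate two-vertex module appears; case 7: the new real adjacency is its own two-vertex module and the new lifted edge is a chord of $M$), an argument you would need to make explicitly. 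As written, therefore, the $r_l$ part of your proposal does not establish the lemma for cases 3--5 and 7, and a correct proof must be rebuilt around how $V_M$ changes rather than around triviality of the created lifted adjacencies.
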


%\begin{proof}
%Easily verified. 
%\end{proof}

\begin{lemma}
\label{finiteExtensionsLemma}
For an AVG $H$, each of the ten types of extensions above can only be applied consecutively a finite number of times until there are no more opportunities in the graph to apply an extension of that type. 
%There exists only finite iterative sequences of case $i$ ($i \in [i,10]$) extensions of a starting AVG.
%For an AVG $H$ and integer $i \in [1, 10]$, there exists only finite iterative sequences of case $i$ extensions.
\end{lemma}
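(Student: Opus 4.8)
The plan is to show, for each of the ten extension cases, a non-negative integer-valued potential function on history graphs that strictly decreases with every application of that extension case (and does not increase under previously-applied cases, so that the argument can be iterated in the order listed). Since the potentials are bounded below by zero, each case can only fire finitely often. I would proceed case by case in the same order the cases are introduced, since the ordering is exactly engineered so that later cases do not reintroduce opportunities for earlier ones.

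First I would handle the "structural" cases 1--7 with counting arguments on vertices and sides. For case 1 (resp. case 2) the potential is the number of attached root sides (resp. attached leaf sides): each application replaces an attached root $x$ by a fresh unattached root $x'$ above it, strictly decreasing the count, and it is routine that this quantity is finite to begin with. For case 3 the potential is the number of simple modules with an odd number of sides (using Lemma that all modules of an AVG are simple, which is preserved because each of these extensions only adds vertices/adjacencies in ways that keep modules simple — this preservation claim is the one routine fact I would check carefully); adding the pair $y,y'$ and the adjacency to $x_\alpha$ turns an odd module even without creating new odd modules. For cases 4 and 5 the potential is the number of unattached internal or root sides in the face of a simple module that still need attaching (more precisely, the number of internal vertices with an unattached side, plus the number of modules lacking an attached root side); each application strictly reduces it because it attaches a previously unattached side, and by the time case 4 runs all modules already have even size so case 3 is not retriggered, etc. Case 6 uses the number of vertices with missing children, and case 7 the number of leaf sides with missing adjacencies.

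For the "labeling" cases 8--10 I would use a potential counting unlabeled vertices, but split by branch-tree. Case 8 strictly decreases the number of branch-trees containing no labeled vertex; once every branch-tree has at least one labeled vertex, case 8 never applies again and cases 9 and 10 each strictly decrease the total number of unlabeled vertices while never creating an unlabeled vertex or an all-unlabeled branch-tree. Since $\Sigma^* $ is used only to supply a label and never consulted for counting, finiteness is immediate.

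The main obstacle I anticipate is not any single potential but the bookkeeping of non-interference: verifying that applying an extension of a later case never creates a fresh opportunity for an earlier case (so the ``consecutively'' in the statement is genuinely achievable and the whole construction in Lemma \ref{dnaToAVGLemma}'s analogue terminates), and in particular that every one of these extensions preserves simplicity of modules and the AVG property (which is exactly Lemma \ref{extensionIsAvg}, already available). I would organize this as a short table: for each case, list which quantities it can increase, and check each such increase is in a strictly-later case's potential. Once that table is in place, the finiteness of each individual case follows from its potential being a non-negative integer that strictly drops on each application.
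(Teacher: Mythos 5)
Your overall strategy (a per-case monovariant that strictly decreases with each application of that case) is the natural one, and the paper itself offers no proof of this lemma to compare against; but two of your potentials do not actually decrease, and these are exactly the cases where the extension manufactures new vertices. For case 5, in the interpolation sub-case the extension creates a new internal vertex $y''$ whose side $y''_{\beta}$ is attached to $x_{\alpha}$ but whose opposite side is unattached, so ``the number of internal vertices with an unattached side'' does not drop: it can stay constant or even increase (e.g.\ when the other side of $x$ is also unattached), and even the count of unattached internal sides is exactly conserved by an interpolation step. Refining the potential to ``unattached internal sides with an attached parent in the face of a module'' does not rescue the argument, because attaching $x_{\alpha}$ gives every unattached child side of $x_{\alpha}$ an attached parent and a lifting ancestor inside a module, so the candidate set can grow; termination of consecutive case 5 applications therefore needs a genuinely different argument (for instance, first observing that the number of unattached sides never increases, so eventually every step is an interpolation step, and then controlling where interpolated vertices can recur), and your proposal does not supply it. Case 6 has a similar, though more easily repaired, defect: adding a child to $y$ can make the vertex attached to $y$'s other side newly deficient, so ``the number of vertices with missing children'' need not decrease; what does work is noting that during a run of case 6 no adjacencies are created, only the fixed finite set of attached vertices ever gains children, each such vertex's child count is capped by its neighbour's and hence by the initial maximum, so the total child count increases by one per step but is bounded.

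Two smaller points. The cross-case ``non-interference'' table you plan is not needed for this statement: the lemma only asserts that each single type, applied consecutively, terminates; the bookkeeping that later cases do not resurrect earlier ones is handled separately in the proof of Lemma \ref{noMissingLemma} via the ``case $i$ complete extension'' statements, so you are proving more than required there while leaving the real difficulty (case 5) unproved. Your potentials for cases 1--4 and 7--10 are essentially fine (for cases 4 and 7 the cleaner choice is simply the number of unattached root, respectively leaf, sides, which drops by two per application and is never increased), and your case 3 count of odd-sized modules does decrease by one per application, provided you verify, as you note, that the surgery only alters the one module containing $x_{\alpha}$.
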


%\begin{proof}
%Easily verified. 
%\end{proof}

%\begin{lemma}
%\label{wonkyLemma}
%For an AVG $H$ and integer $i \in (1, 10]$, a case $i$ extension of $H$ has the same number of possible case integer $j \in [1, i)$ extensions as $H$, with the exceptions that a case 4 extension of $H$ may have more possible case 1 extensions than $H$ and a case 7 extension of $H$ may have more possible case 2 extensions than $H$.
%\end{lemma}

%\begin{proof}
%Easily verified. 
%\end{proof}

\begin{lemma} \label{noMissingLemma}
Any AVG $H$ has an AVG extension $H'$ with no missing labels, adjacencies or branches and such that $s_l(H) = s_l(H')$ and $r_l(H) = r_l(H')$.
\end{lemma}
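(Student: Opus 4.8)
The plan is to construct $H'$ from $H$ by applying the ten families of extension operations (cases 1--10) exhaustively, phase by phase, in exactly the order in which they are listed. Each single operation is an extension, and the composition of extensions is again an extension by transitivity of the reduction order, so the terminal graph $H'$ is an extension of $H$. By Lemma \ref{extensionIsAvg} every one of these operations keeps us inside the class of AVGs and leaves both $s_l$ and $r_l$ unchanged, so whatever graph the procedure halts on is automatically an AVG extension $H'$ of $H$ with $s_l(H')=s_l(H)$ and $r_l(H')=r_l(H)$. Thus only two things remain to establish: (i) the procedure halts, and (ii) the graph it halts on has no missing label, adjacency, or branch.

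For (i), I would run the phases strictly in sequence: case 1 until no root is attached; then case 2 until no leaf is attached; then case 3 until every (necessarily simple) module has an even number of sides; then case 4 until every module contains an attached root side; then case 5 until every internal vertex is attached; then case 6 until no vertex has missing children; then case 7 until no leaf has a missing adjacency; and finally cases 8, 9, 10 until every vertex is labeled. Lemma \ref{finiteExtensionsLemma} says each individual phase exhausts after finitely many steps, so the whole procedure is finite. The substance of this part is the bookkeeping that each phase's goal, once achieved, survives all later phases: for example, cases 2--10 never attach a root side to a non-root side (cases 3 and 4 attach a root side only to a freshly created root, and cases 5 and 7 attach internal or leaf sides), so ``no missing parent'' is stable once case 1 finishes; cases 7--10 only add adjacencies between leaves or add labels, hence create no vertex with missing children, so case 6's goal is stable; and cases 9--10 only add labels, so no earlier structural goal is touched by the labeling phase.

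The hard part will be (ii): showing that when all phases are exhausted the three defect types are \emph{simultaneously} absent. The condition ``no missing adjacency'' is the delicate one, because it is split across cases 3, 4, 5 and 7 (module parity first, then internal sides, then leaf sides), so one must show that after case 5 no internal side with an attached homolog is left unattached, that the even-parity invariant established in case 3 is still in force when cases 4 and 5 run (the relevant real and lifted adjacencies inside a module's face are not disturbed by attaching sides that already lie in that face), and that case 7 can then close the remaining leaf sides without reopening anything. A parallel argument is needed for ``no missing branch'': case 6 eliminates missing children, but one must check that a case-6 step does not manufacture a missing-children or missing-parent defect that cannot be cleared before that phase ends (the added child $y'$ is a fresh unattached leaf, and the only new imbalance it can create propagates along existing adjacencies and is again removed by case 6, finitely by Lemma \ref{finiteExtensionsLemma}), and that the vertex interpolations performed in case 5 are compatible with the root/leaf-unattachedness already secured. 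Once these monotonicity and compatibility facts are pinned down, the terminal graph is by construction an AVG extension of $H$ with no missing labels, adjacencies or branches and with $s_l(H')=s_l(H)$ and $r_l(H')=r_l(H)$, which is the claim.
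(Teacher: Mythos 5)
Your proposal is correct and follows essentially the same route as the paper: run the ten extension cases to exhaustion in listed order, use Lemma \ref{finiteExtensionsLemma} for termination and Lemma \ref{extensionIsAvg} for preservation of the AVG property and of $s_l$, $r_l$, then verify phase-by-phase (the paper's ``compounding statements'') that each phase's goal persists through later phases. The stability facts you flag as the hard part are exactly the ones the paper checks, with essentially the same justifications (e.g.\ cases 3--4 only attach root sides to newly created roots, later cases never reintroduce missing children or parents).
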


\begin{proof}
We will demonstrate that the following algorithm converts an AVG into an AVG with no missing adjacencies or branches or unlabeled vertices.

%\begin{algorithm} 
\begin{algorithmic}   
\STATE $H' \leftarrow H$
\STATE $i \leftarrow 1$
\WHILE{$i \le 10$}
\WHILE{$H' \mbox{ has a case } i \mbox{ extension}$}
\STATE $ H' \leftarrow$ case $i$ extension of $H'$
\ENDWHILE
\STATE $i \leftarrow i + 1$
\ENDWHILE
\end{algorithmic}  
%\end{algorithm}

It follows from Lemma \ref{finiteExtensionsLemma} that the algorithm always terminates and from Lemma \ref{extensionIsAvg} that $H'$ is an AVG such that $s_l(H) = s_l(H')$ and $r_l(H) = r_l(H')$. 

It remains to prove that $H'$ has no missing branches or adjacencies or unlabeled vertices. Call the AVG extension resulting at the end of the ith loop of line 5 of the algorithm the \emph{case i complete extension}. The following series of compounding statements are straightforward to verify. %, using Lemma \ref{wonkyLemma} and induction. 
\begin{itemize}
\item The case 3 complete extension contains no modules with an odd number of sides. 

The case 2 extensions ensure that all root vertices are unattached, and every case 3 extension attaches a root vertex in a module with an odd number of sides to a newly created root vertex, so ensuring the module contains an even number of sides, so for every module with an odd number of sides there exists a case 3 extension. 
\item The case 4 complete extension additionally contains no root sides with missing adjacencies or root vertices with missing parents. 

The case 3 extensions ensure that there always 0 or 2 unattached root sides in a module, so any unattached root side in a module always has a potential unattached partner root side within the module. The requirement that sides connected in a case 4 extension be in the same or unrelated threads prior to connection does not prevent any root side within the face of a module from becoming attached, because the case 1 extensions ensure that all root vertices are unrelated, the case 2 extensions do not effect root vertices and the case 3 and 4 vertices only result in root vertices being connected to one another.
\item The case 5 complete extension additionally contains no internal vertices with missing adjacencies. 

The case 4 extensions ensure that all root sides within modules are attached. The case 2 extensions ensure that all attached sides have children and the case 3, 4 and 5 extensions ensure this remains true. Given this, and that every module has an even number of sides within it (as a case 3 complete extension), it is straightforward to verify that there is always a case 5 extension in a sequence of such extensions for any internal side within the face of a module.
\item The case 6 complete extension additionally contains no vertices with missing child branches. 
\item The case 7 complete extension additionally contains no leaf sides with missing adjacencies, and therefore has no missing branches or adjacencies. 

Analogously with the case 4 extensions, the requirement that sides connected in a case 7 extension be in the same or unrelated threads does not prevent any leaf side within the face of a module from becoming attached by a case 7 extension, this is because the case 2 extensions ensure all leaf vertices are unrelated, the case 3, 4, 5 and 6 extensions do not connect leaf vertices, and the case 7 extensions only connect leaf sides to one another.
\item The case 8 complete extension additionally contains no branch-trees without any labeled vertices. 
\item The case 9 complete extension additionally contains no unlabeled ancestral vertices that have labeled descendants. 
\item The case 10 complete extension additionally contains no unlabeled vertices, and therefore has no missing adjacencies, branches or labels. 
\end{itemize}
\end{proof}

We can now prove the desired lemma.

\begin{lemma} \label{avgToRealisation}
Any AVG $H$ has a realisation $\textbf{H}$ such that $s_l(H) = s(\textbf{H})$ and $r_l(H) = r(\textbf{H})$.
\end{lemma}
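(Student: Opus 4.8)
The plan is to reduce Lemma \ref{avgToRealisation} to Lemma \ref{noMissingLemma} by showing that an AVG with no missing labels, adjacencies, or branches is essentially already a simple history, or can be trivially turned into one by a vertex-partitioning argument. First I would invoke Lemma \ref{noMissingLemma} to obtain an AVG extension $H'$ of $H$ with no missing labels, adjacencies, or branches and with $s_l(H) = s_l(H')$ and $r_l(H) = r_l(H')$. Since extension preserves realisations (a realisation of $H'$ is also a realisation of $H$, because the reduction relation is a partial order and $H \redeq H'$), it suffices to exhibit a realisation $\textbf{H}$ of $H'$ with $s(\textbf{H}) = s_l(H')$ and $r(\textbf{H}) = r_l(H')$.

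The key step is to show that $H'$, being fully ``saturated'' in the above sense, can be vertex-partitioned into a sequence of epochs, i.e. that $H'$ itself is (isomorphic to, or a trivial extension of) a simple history. The intuition is: having no missing branches means every adjacency connects vertices at the ``same level'' and every branch-tree is locally either a single parent-child branch or a balanced multi-furcation consistent with its attached neighbours; having no missing adjacencies means homologous sides are uniformly attached or unattached within a level; and having all vertices labeled kills the remaining degeneracy. I would formalize ``level'' via the longest directed branch-path from a root in the event graph $D(H')$ (which is acyclic), then argue that the vertices at a fixed level, together with those at the next level, form a rearrangement epoch or an $n$-way replication epoch: the no-missing-children condition forces a uniform branching factor across each adjacency-connected component of a level, the replication-epoch adjacency-inheritance condition ($x_\alpha$ attached to $y_\beta$ implies each child of $x_\alpha$ is attached to a child of $y_\beta$) follows from no missing adjacencies, and the ``at most one child with a differing label'' condition can be arranged by interpolating substitution vertices if needed (a bounded operation that does not change $s_l$ or $r_l$, analogous to the substitution ambiguity reducing extension). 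One must also split any branch-tree node with $n \ge 2$ children that mixes replication with subsequent rearrangement into separate epochs by interpolation along branches. The rearrangement cost bookkeeping is handled by the remark after the LBRC definition, which states that for a simple history the module-based formula $r_l$ coincides with the rearrangement-cycle definition of rearrangement cost; so once $H'$ is a simple history we get $r(\textbf{H}) = r_l(H') = r_l(H)$, and likewise $s(\textbf{H}) = s_l(H')$ since in a fully labeled graph every substitution is witnessed by a branch whose endpoints differ, contributing exactly to a distinct non-trivial lifted label.

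I expect the main obstacle to be the bookkeeping needed to carve $H'$ cleanly into epoch layers: in a general AVG with no missing elements the branch-trees need not be uniform in depth, so one cannot simply ``slice horizontally''. The fix is to first equalize depths by interpolating unlabeled-then-labeled chains of branches (so that all leaves sit at a common maximal level and every internal branch-tree node's children sit one level below it), then slice. Each such interpolation is a bounded extension operation of a type already shown (in Lemma \ref{extensionIsAvg} / the case extensions) to preserve AVG-ness, $s_l$, and $r_l$, so the cost invariants survive. A secondary subtlety is separating replication from rearrangement at a node: an epoch is purely one or the other, so a branch-tree node whose children differ both by adjacency rearrangements and by being multiple copies must be expanded into a replication epoch followed by a rearrangement epoch, again by interpolation. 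Once these normalizations are in place, verifying that each resulting layer satisfies the definition of a rearrangement epoch or an $n$-way replication epoch, and that the leaf layer of one epoch is the root layer of the next, is a routine check against the definitions, completing the proof.
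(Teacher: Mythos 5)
Your proposal follows essentially the same route as the paper: both invoke Lemma \ref{noMissingLemma} to obtain a no-missing-element AVG extension $H'$ preserving $s_l$ and $r_l$, and then convert $H'$ into a simple history by interpolating vertices along branches, with cost equality following because interpolation preserves $s_l$ and $r_l$ and these coincide with $s$ and $r$ on simple histories (the paper's version interpolates one vertex on every branch, labels it identically to its parent, and attaches the interpolated vertices so that their adjacencies mirror those of $H'$, which in one stroke accomplishes your depth-equalization and your separation of replication from rearrangement). The only point to tighten is that your interpolated chain vertices must also be labeled and attached in a way that mirrors their parents for the sliced layers to satisfy the epoch conditions (all vertices labeled, homologous sides uniformly attached, adjacency inheritance), which is precisely what the paper's explicit labeling and adjacency-connection steps provide.
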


\begin{proof}
Lemma \ref{noMissingLemma} demonstrates there exists an AVG extension $H'$ of $H$ with no missing labels, adjacencies or branches such that $s_l(H) = s_l(H')$ and $r_l(H) = r(H')$. $H'$ is converted to simple history with the same cost as follows. 
\begin{itemize}
\item On every branch of $H'$ interpolate a vertex.  
\item Label each interpolated vertex identically to its parent. 
\item Connect the sides of the interpolated vertices to one another such that for any adjacency $\{ x_\alpha, y_\beta \}$ connecting interpolated vertices, $\{ A(x_\alpha), A(y_\beta) \} \in E_{H'}$. 
\end{itemize}
It is easily verified that the result is an AVG that can be edge partitioned into rearrangement and replication epochs and hence is a simple history.
\end{proof}

\subsubsection*{LBSC and LBRC are Lower Bounds}

\begin{lemma}
LBSC is a lower bound on substitution cost.
\label{LBSCLemma}

\end{lemma}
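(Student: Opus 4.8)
The plan is to prove directly that $s_l(G)\le s(\textbf{H})$ for every realisation $\textbf{H}\in\mathcal H(G)$, by constructing an edge-disjoint family of subtrees of the branch forest of $\textbf{H}$, one for each labeled vertex and each free-root of $G$, and charging the non-trivial lifted labels at such a ``lifting ancestor'' against substitutions lying inside its subtree. I would first recast $s_l(G)$ in a charging-friendly form. Only labeled vertices and free-roots ever receive a lifted label; writing $\Lambda(v)$ for the set of distinct labels lifted to such a $v$, the label $l(v)$ of a labeled $v$ always lies in $\Lambda(v)$ and is its unique trivial lift, while on a free-root every lift is non-trivial unless it is the only one. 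Expanding the definition of LBSC together with the ``minus one per unlabeled vertex carrying non-trivial lifts'' correction then yields the identity $s_l(G)=\sum_v\max(0,|\Lambda(v)|-1)$, summed over all labeled vertices and free-roots of $G$ (for labeled $v$ the term is simply $|\Lambda(v)|-1$).

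Next I fix $\textbf{H}\in\mathcal H(G)$, so $G\redeq\textbf{H}$, and unwind a reduction sequence from $\textbf{H}$ to $G$. Label deletion never changes the label of a surviving vertex, and a free-child or free-parent contraction always merges an \emph{unlabeled} vertex into its neighbour, so there is an injective, label-preserving map $\iota$ from the labeled vertices of $G$ into those of $\textbf{H}$; call the vertices in its image \emph{marked}. Branch contraction and isolated-vertex deletion neither create nor destroy branch-ancestry between surviving vertices, so $\iota$ carries the branch-ancestor order of $G$ into that of $\textbf{H}$, and one checks that ``$A_G(y)=x$'' is equivalent to ``$\iota(y)$ is a descendant of $\iota(x)$ in $\textbf{H}$ with no marked vertex strictly in between'' (and similarly with $x$ replaced by a free-root and ``with no marked ancestor''). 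Now assign to each branch edge $e=(p,q)$ of $\textbf{H}$ the lifting ancestor $v$ whose image $\iota(v)$ is the most recent marked ancestor of $p$ (allowing $\iota(v)=p$), or, when $p$ has no marked ancestor, the free-root of the branch-tree of $p$; let $S_v$ be the set of edges so assigned to $v$. By construction the $S_v$ partition the edge set of the branch forest of $\textbf{H}$; each $S_v$ is a subtree rooted at $\iota(v)$ (at the root of the relevant branch-tree, in the free-root case); and, using the equivalence above, the marked vertices on the boundary of $S_v$ are exactly the $\iota(u)$ with $A_G(u)=v$, so that the labels carried by the root and the marked boundary of $S_v$ are precisely $\Lambda(v)$ (together, in the free-root case, with the one extra --- possibly already present --- label of that tree's root).

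The last ingredient is the elementary observation that in any finite labeled tree the number of edges whose two endpoints carry different labels is at least $\bigl(\text{number of distinct labels occurring on a prescribed set of nodes}\bigr)-1$: contract every edge whose endpoints carry equal labels; no label disappears, so the contracted tree has at least that many nodes, hence at least that many minus one edges, each of them a substitution. Applying this inside each $S_v$ with the prescribed set taken to be the root together with the marked boundary vertices gives at least $|\Lambda(v)|-1$ substitutions in $S_v$ when $v$ is a labeled vertex, and at least $\max(0,|\Lambda(v)|-1)$ when $v$ is a free-root (the bound $\ge 0$ being automatic). Summing over the disjoint $S_v$ gives $s(\textbf{H})\ge\sum_v\max(0,|\Lambda(v)|-1)=s_l(G)$, which is the claim.

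I expect the real work to be the verification in the second paragraph: that $\iota$, the branch-ancestor order, and the ``nothing labeled strictly in between'' relation all survive each reduction operation, with the free-parent contraction --- which can change which vertex is the root of a branch-tree, and which forces one to treat free-roots on the same footing as labeled vertices --- needing the most attention; and that the $S_v$ genuinely partition the edges with the stated boundary in the presence of vertices added in the extension and of branch-trees of $G$ containing no labeled vertex. It is worth recording that the tempting shortcut --- showing $s_l$ is monotone under extension and invoking that a simple history realises its own $s_l$ --- fails: $s_l$ is \emph{not} monotone under extension (giving a label to a pivotal unlabeled vertex can strictly decrease it), which is precisely why the subtree-charging argument is needed.
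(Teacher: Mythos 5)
Your recast of $s_l$ in the first paragraph is wrong, and the error propagates to the final inequality. It is not true that $l(v)$ always lies in $\Lambda(v)$: take $G$ to be a labeled root with label $A$ and two children labeled $C$ and $G$; then $\Lambda = \{C,G\}$, both lifts are non-trivial, so $s_l(G)=2$, while your formula $\max(0,|\Lambda(v)|-1)$ gives $1$. Consequently the bound you actually derive, $s(\textbf{H})\ge\sum_v\max(0,|\Lambda(v)|-1)$, can be strictly smaller than $s_l(G)$ and does not establish the lemma as written. The damage is local: the correct contribution of a labeled vertex $v$ to $s_l(G)$ is $|\Lambda(v)\setminus\{l(v)\}| = |\Lambda(v)\cup\{l(v)\}|-1$, and your own charging step already delivers exactly this --- applying the contraction bound inside $S_v$ with the prescribed set being the root $\iota(v)$ (which carries $l(v)$) together with the marked boundary (which carries $\Lambda(v)$) gives at least $|\Lambda(v)\cup\{l(v)\}|-1$ substitutions, not merely $|\Lambda(v)|-1$; you simply discarded the root's label when translating back to $s_l$. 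With that repair (the free-root case really is $\max(0,|\Lambda(f)|-1)$, so it needs no change), and granting the structural verifications you defer in your second paragraph, the subtree-charging argument is a sound, genuinely different proof.

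Your closing claim that the ``tempting shortcut'' fails is mistaken --- and that shortcut is precisely the paper's proof. The paper argues that every history graph has a realisation, that $s(\textbf{H})=s_l(\textbf{H})$ for a simple history, and that no extension of $G$ has smaller $s_l$, the last by induction over a reduction sequence with a case analysis of the four operation types (only label deletion needs an argument: after deleting $l(x)$, the lifts at $x$ re-route to $A(x)$, and the number of distinct non-trivial lifts at $A(x)$ afterwards is at most the sum of the previous counts at $x$ and at $A(x)$). In the extension direction, labeling an unlabeled vertex $x$ with $\rho$ re-routes the lifts that previously passed through $x$ so that they stop at $x$, and adds a single lift of $\rho$ to $A(x)$; a short case check, including the free-root correction, shows the total count of distinct non-trivial lifts never decreases, so the ``pivotal vertex'' counterexample you allude to does not exist. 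So your decomposition, once corrected as above, is a valid alternative with a more hands-on combinatorial flavour, but the monotonicity route you dismissed is sound and is the one the paper takes.
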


\begin{proof}
From Lemmas \ref{dnaToAVGLemma} and \ref{avgToRealisation} it follows that every history graph has a realisation. It is sufficient therefore to further prove that for any simple history $\textbf{H}$, $s(\textbf{H}) = s_l(\textbf{H})$  and that a history graph $G$ has no extension $G'$ such that $s_l(G) > s_l(G')$. The former is easily verified and we now prove the latter.

Let $(G = G_n) \red G_{n-1} \red \ldots G_2 \red (G_1 = G')$ be a sequence of $n$ history graphs for a reduction sequence of $n-1$ reduction operations.
For some integer $i \in [1, n)$ if the $i$th reduction operation is a vertex deletion, adjacency deletion or branch contraction, as these each have no impact on the calculation of LBSC, $s_l(G_{i+1}) = s_l(G_i)$.
Else the $i$th reduction operation is a label deletion. Let $x$ be the vertex whose label is being deleted. As the number of non-trivial lifted labels for $A(x)$ after the deletion of $x$ is less than or equal to the sum of non-trivial lifted labels for $x$ and $A(x)$, it follows that $s_l(G_{i+1}) \le s_l(G_i)$. Therefore by induction $s_l(G) \le s_l(G')$.
\end{proof}

\begin{lemma}
\label{LBRCLemma}
LBRC is a lower bound on rearrangement cost.
\end{lemma}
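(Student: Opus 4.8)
Proof plan. I would mirror the structure of Lemma~\ref{LBSCLemma}. By Lemmas~\ref{dnaToAVGLemma} and \ref{avgToRealisation} every history graph has a realisation, so it suffices to establish (i) that $r(\mathbf{H}) = r_l(\mathbf{H})$ for every simple history $\mathbf{H}$, and (ii) that a history graph $G$ has no extension $G'$ with $r_l(G) > r_l(G')$, i.e. that $r_l$ is non-decreasing under extension. Granting (i) and (ii), for any realisation $\mathbf{H} \in \mathcal{H}(G)$ we get $r_l(G) \le r_l(\mathbf{H}) = r(\mathbf{H})$, so $r_l(G)$ is a lower bound on the rearrangement cost. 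Claim (i) is the remark made just after the definition of LBRC and is routine to check from the epoch partition of $\mathbf{H}$ (in a rearrangement epoch the module graph restricted to the root-layer sides is a breakpoint graph whose component sizes give $\sum_M(\lceil |V_M|/2\rceil-1)$, and replication epochs contribute nothing), so all the work is in claim (ii).

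For (ii) I would induct along a reduction sequence $G' = G_1 \to G_2 \to \cdots \to G_k = G$, showing that a single reduction operation never increases $r_l$. Three of the four operation types are immediate. Deleting a vertex label does not touch the module graph at all, since lifted adjacencies depend only on the attached ancestors of a side, not on its labeled ancestors. Deleting an isolated vertex changes nothing, because an isolated vertex is unattached and is the lifting ancestor of no attached side, hence contributes no side to the module graph. Contracting a branch with a free-child or a free-parent removes only an unattached vertex — again a non-vertex of the module graph — and alters neither the most-recent-attached-ancestor of any side nor the number of free roots, so the module graph is unchanged up to isomorphism. In all three cases $r_l$ is literally unchanged.

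The substantive case is the deletion of an adjacency $a = \{x_\alpha, y_\beta\}$. Acyclicity prevents $y$ from being an ancestor of $x$ (or vice versa), so after deletion the new lifting ancestors $z := A'(x_\alpha)$ and $w := A'(y_\beta)$ — the nearest strictly-ancestral attached sides, or the relevant free-root sides — are well defined, distinct from $x_\alpha$ and $y_\beta$, and joined in $M(G)$ by the (trivial) lifted edge that $a$ generated. The effect on the module graph is then: delete the edge $a$ and the edge $\{z, w\}$; re-route every remaining lifted edge incident with $x_\alpha$ onto $z$ and every remaining one incident with $y_\beta$ onto $w$; and delete $x_\alpha, y_\beta$, which are now incident with no edge. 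I would bound the change in $r_l = \sum_M(\lceil |V_M|/2\rceil - 1)$ using the elementary inequalities $\lceil a/2\rceil + \lceil b/2\rceil \le \lceil (a+b)/2\rceil + 1$ (so splitting a module, in particular deleting an edge, never increases the sum, even when vertices become isolated and drop out) and $\lceil (a+b)/2\rceil \le \lceil a/2\rceil + \lceil b/2\rceil$ (so fusing two modules increases the sum by at most one): two vertices always leave the module graph and two edges are always removed, and each re-routing is a single change-of-endpoint.

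The one delicate point, which I expect to be the main obstacle, is exactly this possibility that re-routing $x_\alpha$'s incident lifted edges onto $z$ (or $y_\beta$'s onto $w$) fuses two previously-distinct modules, an operation that in isolation can raise the sum by one. I would close this by using the structural fact that an attached side carries exactly one incident real adjacency, so the re-routed bundle is, modulo the deletion of $x_\alpha$ and $y_\beta$, just the pendant subgraph that used to hang off $x_\alpha$ and $y_\beta$; transplanting that pendant subgraph onto $z$ and $w$ while simultaneously deleting $\{z,w\}$ and the two vertices $x_\alpha, y_\beta$ is a net non-increase. A clean way to organise the casework is on whether $z$ (resp. $w$) is a free-root side or an attached side and on whether $x_\alpha, y_\beta, z, w$ already lay in a single module of $G$: the ``all in one module'' case gives a strict decrease of at least one (two vertices are removed from one module), and every other case reduces to the transplant bound with the extra deleted edge $\{z,w\}$ supplying the slack for the at-most-one merge. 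This bookkeeping is the only part that requires care; everything else is mechanical.
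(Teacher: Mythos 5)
Your proposal is correct and follows the paper's skeleton exactly --- as in Lemma~\ref{LBSCLemma}, reduce to (i) $r(\mathbf{H})=r_l(\mathbf{H})$ on simple histories and (ii) monotonicity of $r_l$ along a reduction sequence, with label deletion, isolated-vertex deletion and branch contraction dismissed because they leave the module graph unchanged --- but you settle the one substantive case, adjacency deletion, by a different decomposition. The paper rewrites $r_l(G)=|E_G|+Q(G)-|M(G)|$, where $Q(G)$ sums $\lceil q(M)/2\rceil$ over the unattached (free-root) sides in each module, and then argues that deleting an adjacency drops $|E|$ by one while $|M|$ can fall by at most two and $Q$ can rise by at most one, the extreme cases excluding one another. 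You instead perform the surgery on the module graph directly: the deleted adjacency $\{x_\alpha,y_\beta\}$ and the one lifted copy of $\{A(x_\alpha),A(y_\beta)\}$ it generated disappear, the remaining lifted edges at $x_\alpha$ and $y_\beta$ are re-routed to $A(x_\alpha)$ and $A(y_\beta)$, and $x_\alpha,y_\beta$ (possibly also $A(x_\alpha),A(y_\beta)$) leave the vertex set. Since $r_l$ depends only on the multiset of component sizes, and every affected vertex lies in the at most two modules containing $\{x_\alpha,y_\beta\}$ and $\{A(x_\alpha),A(y_\beta)\}$, the post-surgery contribution of the affected part is at most $\lceil (m_1+m_2-2)/2\rceil-1\le\lceil m_1/2\rceil+\lceil m_2/2\rceil-2$, which is the non-increase required (with a strict decrease in your ``all in one module'' case). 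Your route is arguably more elementary --- no bookkeeping of unattached sides or module counts --- while the paper's identity makes the case analysis mechanical once it is in place.

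One caution on how you phrase the slack: it is supplied by the two vertices $x_\alpha,y_\beta$ leaving the module graph, not by the deleted lifted edge $\{A(x_\alpha),A(y_\beta)\}$; because $r_l$ is a function of component sizes alone, edge deletions can only split components and never directly lower the sum. Likewise ``at most one merge'' is only true at the level of the two original modules: the remnant of the module that held $x_\alpha,y_\beta$ may fall into several pieces, each grafted separately onto $A(x_\alpha)$ or $A(y_\beta)$, so a naive elementary-merge count can exhibit more than one $+1$ step. The initial-versus-final component-size comparison (your ``transplant bound'') sidesteps both issues, so write the argument in that form rather than merge-by-merge; with that phrasing the plan closes the lemma.
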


\begin{proof}
Analogously to the proof of Lemma \ref{LBSCLemma}, from Lemmas \ref{dnaToAVGLemma} and \ref{avgToRealisation} it follows that every history graph has a realisation. It is sufficient therefore to further prove that for any simple history $\textbf{H}$, $r(\textbf{H}) = r_l(\textbf{H})$  and that a history graph $G$ has no extension $G'$ such that $r_l(G) > r_l(G')$. The former is easily verified and we now prove the latter.

Let $(G = G_n)  \red G_n-1 \red ... G_2 \red (G_1 = G')$ be a sequence of $n$ history graphs for a reduction sequence of $n-1$ reduction operations. For some integer  $i \in [1, n)$ if the $i$th reduction operation is a label deletion, vertex deletion or contraction of a branch with a free-parent, as each removes an element that has no effect on the calculation of the LBRC, $r_l(G_{i+1}) = r_l(G_{i})$.

Else if the $i$th reduction operation is a contraction of a branch with a free-child, as the child is unattached the only possible effect on the LBRC calculation is the conversion of non-trivial lifted adjacencies into trivial lifted adjacencies, therefore $r_l(G_{i+1}) \le r_l(G_i)$ (see Figure \ref{appendixLowerBound}(A)).

Let $q(M)$ and $p(M)$ be the number of unattached and attached sides in a module $M$, as $q(M)+p(M) = V_M$:

\[ r_l(G) = \sum_{M \in M(G)} \lceil (q(M) + p(M))   / 2 \rceil - 1. \]

As each side may be incident with at most one adjacency $p(M)$ is even and $p(M)/2$ is the number of adjacencies in $M$, therefore:

\[ r_l(G) = | E_G | + \sum_{M \in M(G)} \lceil q(M) / 2 \rceil - 1. \]

Hence $r_l(G) = | E_G | + Q(G) - |M(G)|$, where $Q(G) = \sum_{M \in M(G)} \lceil q(M) / 2 \rceil$. Suppose $r_l(G_{i+1}) > r_l(G_i)$. If the $i$th reduction operation is an adjacency deletion, $|E_{G_{i+1}}| + 1 = |E_{G_{i}}|$, therefore $Q(G_{i+1}) - |M(G_{i+1})| \ge Q(G_i) - |M(G_i)| + 2$. 

\begin{figure}[h!]
\begin{center}
\includegraphics[width=12cm]{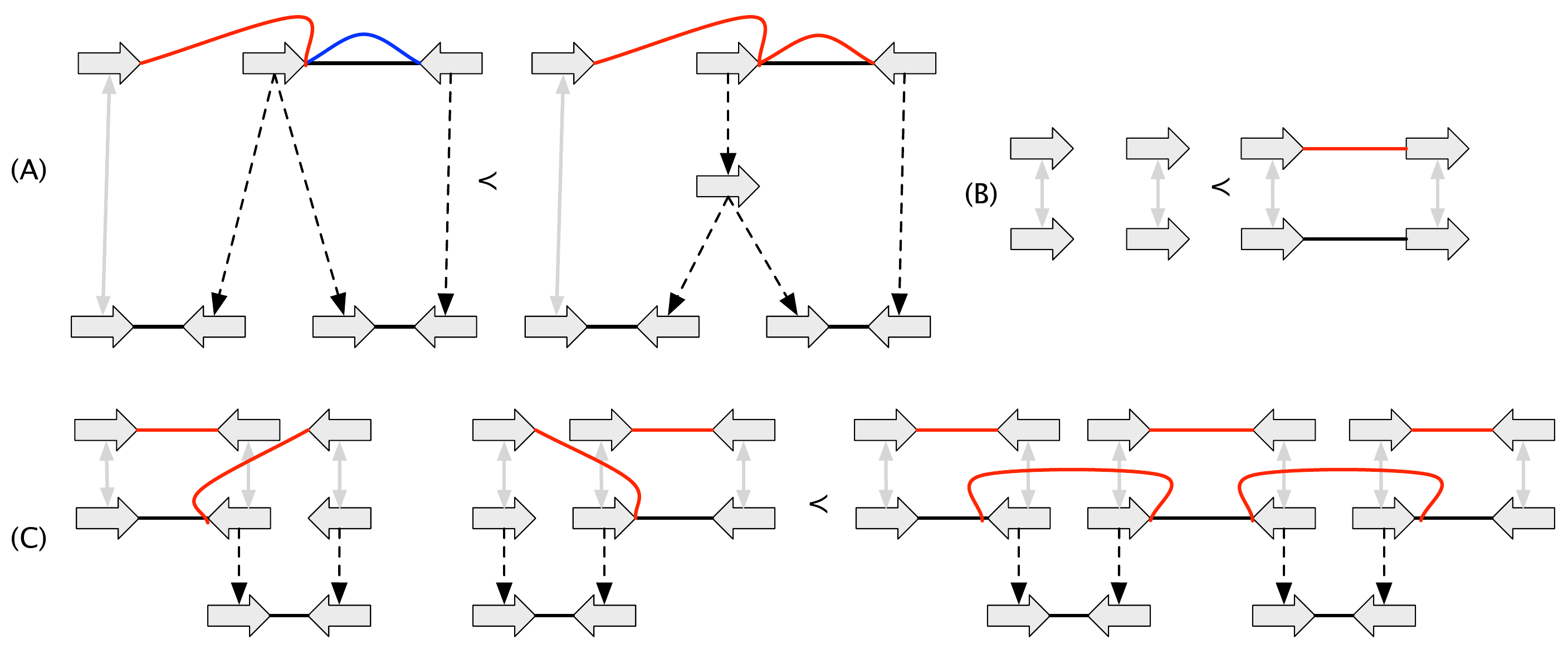}
\caption{(A) A contraction of a branch with a free-child can only possibly result in non-trivial adjacencies becoming trivial.  (B) A adjacency deletion can at most reduce the number of modules by 2, and if the number of modules decrease by two then the number of unattached sides in a modules decreases by 2. (C) An example of the deletion of an adjacency redistributing two unattached sides.}
\label{appendixLowerBound}
\end{center}
\end{figure}

The removal of an adjacency can reduce the number of modules by at most two, therefore $|M(G_i)| - |M(G_{i+1})| \le 2$. The number of modules decreases by the maximum of two only when the adjacency to be deleted connects two sides that each have no incident lifted adjacencies (see Figure \ref{appendixLowerBound}(B)). However, in this case $Q(G_i) = Q(G_{i+1}) + 1$, as the number of unattached sides in a module decreases by $2$, therefore if $|M(G_i)| - |M(G_{i+1})| = 2$ then $r_l(G_{i+1}) \le r_l(G_i)$. 

An unattached side in a module is the side of a free-root, and such a free-root side has incident lifted adjacencies. The side of a free-root with no incident lifted adjacencies can not become part of a module by the removal of any adjacency from the associated history graph, as by definition the homologous sides in its associated branch-tree are all unattached. The removal of an adjacency can therefore only decrease or leave the same the total number of unattached sides in modules. The only way for $Q(G_{i+1}) - Q(G_i)$ to be positive is therefore by the redistribution of unattached sides between modules to exploit the ceiling function. As in the removal of a single adjacency at most two unattached sides can be redistributed from a single module (see Figure \ref{appendixLowerBound}(C)), therefore $Q(G_{i+1}) - Q(G_i) \le 1$. But if $Q(G_{i+1}) - Q(G_i) = 1$ then it is easily verified $|M(G_i)| - |M(G_{i+1})| \le 0$.
This is all the cases, therefore $r_l(G_{i+1}) \le r_l(G_i)$, by induction therefore $r_l(G) \le r_l(G')$.
\end{proof}

%\subsubsection*{Theorem 1}

\newtheorem*{theorem1}{Theorem 1}

\begin{theorem1}
%\label{avgTheorem}
For any history graph $G$ and any cost function $c$, $c(s_l(G), r_l(G)) \le C(G, c) \le c(s_u(G), r_u(G))$ with equality if $G$ is an AVG.
\end{theorem1}

\begin{proof}
Follows from Lemmas  \ref{equivalentSubstitutionBoundsLemma}, \ref{equivalentRearrangementBoundsLemma}, \ref{dnaToAVGLemma}, \ref{avgToRealisation}, \ref{LBSCLemma} and \ref{LBRCLemma}.
\end{proof}

\subsection{Appendix B}

In this section we will prove Theorem \ref{containsGOptimal}. Towards this aim we classify non-minimal adjacencies and labels.

%\begin{lemma}
%\label{noNonMinimalElementsLemma}
%If an AVG extension contains a $G$-reducible non-minimal branch or vertex, then it is not $G$-minimal.
%\end{lemma}

%\begin{proof}
%Deletion of any single non-minimal vertex or contraction of any non-minimal branch from an AVG results in a reduction that is also an AVG. 
%\end{proof}

%\begin{definition}
A non-minimal label of a vertex $x$ is (see Figure \ref{nonMinimalElementsClassification}(A)):
\begin{itemize}
\item A \emph{leaf} if $L'_{x} = \{ \}$, 
\item else, as it is not a junction, $|L'_{x}| = 1$ and:
\begin{itemize} 
\item the label is \emph{redundant} if $\tilde{L}'_{x} = \{ \}$,
\item else \emph{complicating} if $l(A(x)) \not= l(x)$, 
\item else $l(A(x)) = l(x)$ and, as it is not a bridge, then $\tilde{L}_{A(x)} = \{ \}$ and it is an \emph{unnecessary bridge}.
\end{itemize}
\end{itemize}
%If it is a bridge label it is a \emph{necessary bridge} if $\tilde{L}_{A(x)} \not= \{ \}$, else it is an \emph{unnecessary bridge}.
%All labels except junction and bridge labels are \emph{non-minimal labels}.
%\end{definition}

%\begin{lemma}
%\label{noNonMinimalLabelsLemma}
%If an AVG extension contains a $G$-reducible non-minimal label, then it is not $G$-minimal.
%\end{lemma}

%\begin{proof}
%Deletion of any single non-minimal label from an AVG results in a reduction that is also an AVG. 
%\end{proof}

%\begin{definition}

A non-minimal adjacency $\{ x_{\alpha}, y_{\beta} \}$ is (see Figure \ref{nonMinimalElementsClassification}(B)):
\begin{easylist}[itemize]
& a \emph{leaf} if $L'_{x_{\alpha}} \cup L'_{y_{\beta}} = \{ \}$, 
& else,  as it is not a junction,  neither $x_{\alpha}$ or $y_{\beta}$ are junction sides and it is \emph{complex} if $| L'_{x_{\alpha}} | > 1$ or $| L'_{y_{\beta}} | > 1$, 
& else  $| L'_{x_{\alpha}} | \le 1$, $| L'_{y_{\beta}} | \le 1$ and:
&& the adjacency is \emph{redundant} if $L_{x_{\alpha}} \cup L_{y_{\beta}} = \{ \{ x_{\alpha}, y_{\beta} \} \}$,
&& else \emph{complicating} if $\{ A(x_{\alpha}), A(y_{\beta}) \}$ is a non-trivial lifted adjacency, 
&& else $\{ A(x_{\alpha}), A(y_{\beta}) \}$  is a trivial lifted adjacency and, as it is not a bridge either:
&&& $\tilde{L}'_{A(x_{\alpha})} \cup \tilde{L}'_{A(y_{\beta})} = \{ \}$ and it is an \emph{unnecessary bridge},
&&& else $(L'_{A(x_{\alpha})} \cup L'_{x_\alpha}) \setminus  \{ \{ x_\alpha, y_\beta \} \} \le 1$ and $(L'_{A(y_{\beta})} \cup L'_{y_\beta}) \setminus \{ \{ x_\alpha, y_\beta \} \} \le 1$ and it is a \emph{removable bridge}.
\end{easylist}

%If it is a bridge adjacency it is a \emph{necessary bridge adjacency} if $\tilde{L}'_{A(x_{\alpha})} \cup \tilde{L}'_{A(y_{\beta})} \not= \{ \}$, else it is an \emph{unnecessary bridge}.
%All adjacencies except junction, complex and bridge adjacencies are \emph{non-minimal adjacencies}.
%\end{definition}

%By definition, an AVG contains no complex adjacencies.

\begin{lemma}
\label{noNonMinimalElementsLemma}
A $G$-minimal AVG contains no $G$-reducible non-minimal elements.
%If an AVG extension of $G$ contains a $G$-reducible non-minimal element then it is not $G$-minimal.
\end{lemma}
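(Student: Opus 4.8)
The plan is to argue contrapositively: if a $G$-minimal AVG $H$ contained a $G$-reducible non-minimal element, we could reduce it while staying inside the set of $G$-extensions that are AVGs, contradicting $G$-minimality. So the task reduces to showing, for each type of non-minimal element, that deleting/contracting it (or a small cascade of such operations) produces a smaller history graph $H'$ with $G \red H' \red H$ that is still an AVG. Since the excerpt has just furnished a complete case classification of non-minimal labels and adjacencies (leaf, redundant, complicating, unnecessary bridge, removable bridge, plus complex labels/adjacencies and branches with free-child/free-parent and isolated vertices), the proof is essentially a finite case check that each reduction operation preserves AVG-hood, i.e. does not create substitution or rearrangement ambiguity.

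**The steps** I would carry out, in order: (1) Dispose of the easy structural cases first — an isolated vertex, or a branch with a free-child or free-parent: deleting/contracting these changes no lifted labels or lifted adjacencies at any surviving vertex/side, so $u_s$ and $u_r$ are unchanged and $H'$ is still an AVG; also $G \redeq H'$ because the $G$-reducibility hypothesis gives a reduction sequence through exactly this operation. (2) Handle non-minimal \emph{labels}. For a leaf label ($L'_x = \{\}$) or a redundant label ($\tilde L'_x = \{\}$): deleting $l(x)$ can only remove a (trivial or empty) lifted label from $A(x)$ and cannot increase the non-trivial lifted-label count anywhere, so $u_s$ does not increase; since $|L'_x|\le 1$ here, no new ambiguity is created. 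For a complicating label ($l(A(x))\neq l(x)$, so the lifted label at $A(x)$ is non-trivial) and an unnecessary bridge ($l(A(x))=l(x)$ but $\tilde L_{A(x)}=\{\}$): deleting $l(x)$ removes this one non-trivial lifted label from $A(x)$, possibly replacing it with lifted labels from descendants of $x$ — but these were already being lifted \emph{through} $x$ to $A(x)$, so they contribute at most what $x$'s single lifted label contributed; checking each subcase shows $A(x)$ ends with $\le 1$ non-trivial lifted label, preserving $u_s=0$. (3) Handle non-minimal \emph{adjacencies} symmetrically using the easylist classification: leaf, complex (ruled out in an AVG since all modules are simple, Lemma on simple modules), redundant, complicating, unnecessary bridge, and removable bridge. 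In each case, deleting $\{x_\alpha,y_\beta\}$ removes the lifted adjacency $\{A(x_\alpha),A(y_\beta)\}$ from the module graph and redirects the lifted adjacencies that were passing through $x_\alpha$ (resp.\ $y_\beta$) to their next attached ancestor; the side-conditions defining "removable bridge" ($(L'_{A(x_\alpha)}\cup L'_{x_\alpha})\setminus\{\{x_\alpha,y_\beta\}\}\le 1$, similarly for $y$) are precisely what guarantees the resulting sides retain $\le 1$ non-trivial incident lifted adjacency, so $u_r$ stays $0$. (4) In every case verify $G\redeq H'$: $G$-reducibility of the element means some reduction sequence $H \to \cdots \to G$ performs exactly this operation (or this cascade), and by Lemma~\ref{validPermutationsLemma} we may permute it to the front, giving $H\red H'\redeq G$, hence $G\red H'\red H$. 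This contradicts $G$-minimality of $H$, completing the proof.

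**The main obstacle** is the adjacency case, specifically the "removable bridge" and "complicating" subcases: here deleting the adjacency not only removes the lifted edge $\{A(x_\alpha),A(y_\beta)\}$ but causes every lifted adjacency currently routed through $x_\alpha$ to be re-lifted to $x_\alpha$'s \emph{own} lifting ancestor, and one must carefully track the resulting incidence counts at both $A(x_\alpha)$ and $A(y_\beta)$ to confirm no side is pushed above one non-trivial incident lifted adjacency — this is where the technical bookkeeping with $L'$, $N$, and $\tilde L'$ from Appendix A really does the work, and where the precise inequalities baked into the definition of "removable bridge" are needed. A secondary subtlety is confirming in the label cases that lifting descendants' labels past a deleted intermediate vertex cannot turn a previously-trivial situation non-trivial; but because trivially-lifted labels on a labeled vertex coincide with that vertex's own label and the most-recent-labeled-ancestor relation only lengthens under deletion, this is handled by the same observations already used in the proof of Lemma~\ref{LBSCLemma}. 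I would relegate the routine diagram-chasing in each subcase to "it is easily verified," mirroring the paper's style, and spell out only the removable-bridge computation in full.
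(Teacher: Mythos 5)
Your overall strategy is the paper's: argue the contrapositive and run a case check over the classification of non-minimal elements (isolated vertices, free-child/free-parent branches, the label cases, the adjacency cases), verifying in each case that the deletion/contraction yields a reduction that is still an AVG and still an extension of $G$, hence contradicting $G$-minimality. Two points of divergence deserve attention. First, a small one: you dismiss complex adjacencies by citing the simple-modules lemma, but that lemma only bounds \emph{non-trivial} lifted incidences, whereas ``complex'' is defined via $|L'_{x_\alpha}|>1$ counting all incident lifted adjacencies; the paper instead argues directly that a complex non-minimal adjacency cannot occur in an AVG because ambiguity is zero, and making your version airtight needs the extra observation that a non-junction side with two incident lifts forces an unattached junction side below it and hence two non-trivial lifts.

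Second, and more substantively, the removable-bridge case — the one you explicitly defer with ``I would spell out only the removable-bridge computation in full'' — is precisely the case the paper singles out as an exception where a single deletion is \emph{not} claimed to leave an AVG. The paper instead deletes $\{x_\alpha,y_\beta\}$ and then iterates: if $\{A(x_\alpha),A(y_\beta)\}$ has itself become a non-minimal (possibly removable-bridge) adjacency it is deleted in turn, then $\{A(A(x_\alpha)),A(A(y_\beta))\}$ is considered, and so on up the lifting-ancestor chain until an AVG is reached. Your claim that the side-conditions in the definition of removable bridge alone guarantee each of $A(x_\alpha)$, $A(y_\beta)$ retains at most one non-trivial incidence after a single deletion is exactly the assertion the paper declines to make, and you have not supplied the computation; carrying it out requires, at minimum, the auxiliary fact that an AVG contains no unattached junction sides (so that lifts redirected onto the surviving real adjacency $\{A(x_\alpha),A(y_\beta)\}$ are trivial) together with careful multiset bookkeeping of which incidences are excluded by the set difference in the definition. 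Either supply that computation in full or adopt the paper's cascade — and if you adopt the cascade, note that you must also check the higher adjacencies removed along the chain do not destroy the property that the result remains an extension of $G$. As written, the crux case is asserted rather than proved, so the proposal has a gap exactly where the paper's proof does its real work.
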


\begin{proof}
We prove the contrapositive. It is easily verified that the deletion of any single non-minimal vertex or contraction of a non-minimal branch from an AVG results in a reduction that is also an AVG. It is also easily verified that the deletion of each possible type of non-minimal label/adjacency from an AVG results in a reduction that is also an AVG, with the exceptions of a complex non-minimal adjacency, which can not be present within an AVG (because such an edge implies ambiguity), and a removable bridge adjacency. After deletion of a removable bridge adjacency $\{ x_\alpha, y_\beta \}$ the adjacency $\{ A(x_\alpha), A(y_\beta) \}$ ceases to be a junction adjacency, and may either become a bridge, in which case the resulting graph is an AVG, or it may become a non-minimal adjacency. If it becomes a non-minimal adjacency, then, by the prior argument, if it is not a removable bridge adjacency then its deletion results in an AVG, else if it is a removable bridge then after the deletion of $\{ A(x_\alpha), A(y_\beta) \}$, the process of considering if $\{ A(A(x_\alpha)), A(A(y_\beta)) \}$ is non-minimal and deleting if necessary is repeated iteratively until the resulting graph is an AVG.
\end{proof}

\begin{figure}[h!]
\begin{center}
\includegraphics[width=10cm]{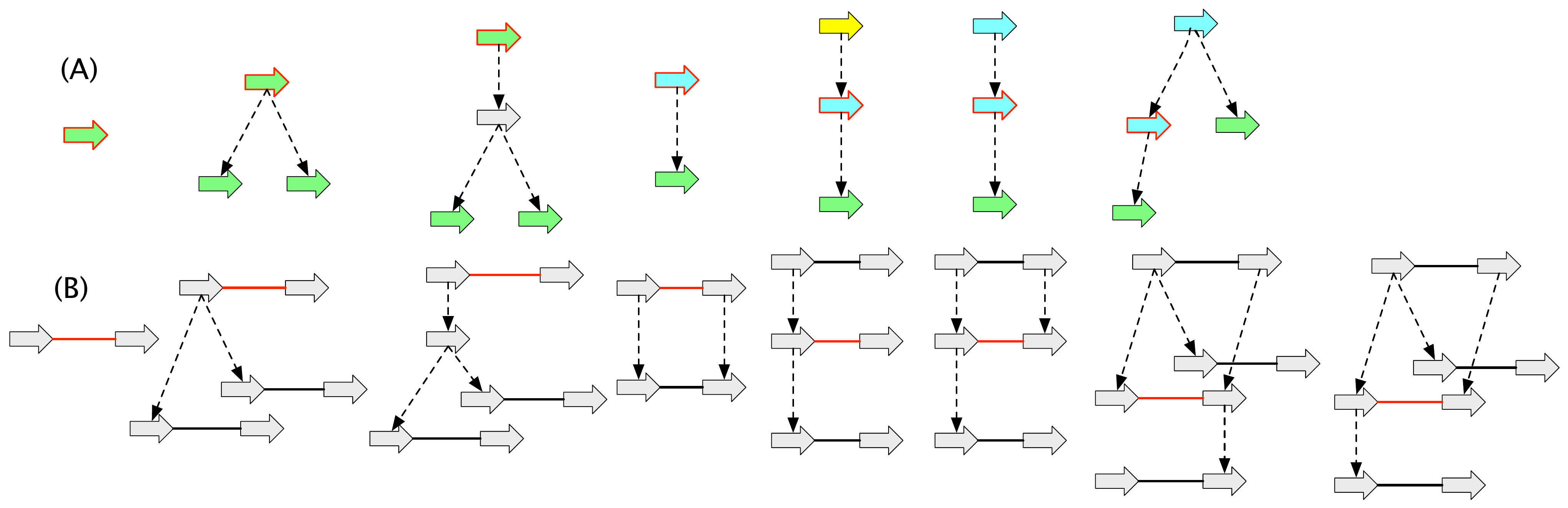}
\caption{(A) Classification of labels. From left-to-right labels of vertices outlined in red are: leaf, junction, (another) junction, redundant, complicating, unnecessary bridge and bridge. (B) Classification of adjacencies. From left-to-right adjacencies in red are: leaf, junction, complex, redundant, complicating, unnecessary bridge, removable bridge and bridge.} 
%(C) An extension of the history graph in Figure \ref{dnaHistoryGraphs}(C), outlined in red are G-reducible elements that are either non-minimal or which become non-minimal after the removal of other non-minimal elements. After the iterative removal of the non-minimal elements the graph contains no G-reducible non-minimal elements and is a $G$-bounded AVG.}
\label{nonMinimalElementsClassification}
\end{center}
\end{figure}

\begin{figure}[h!]
\begin{center}
\includegraphics[width=7cm]{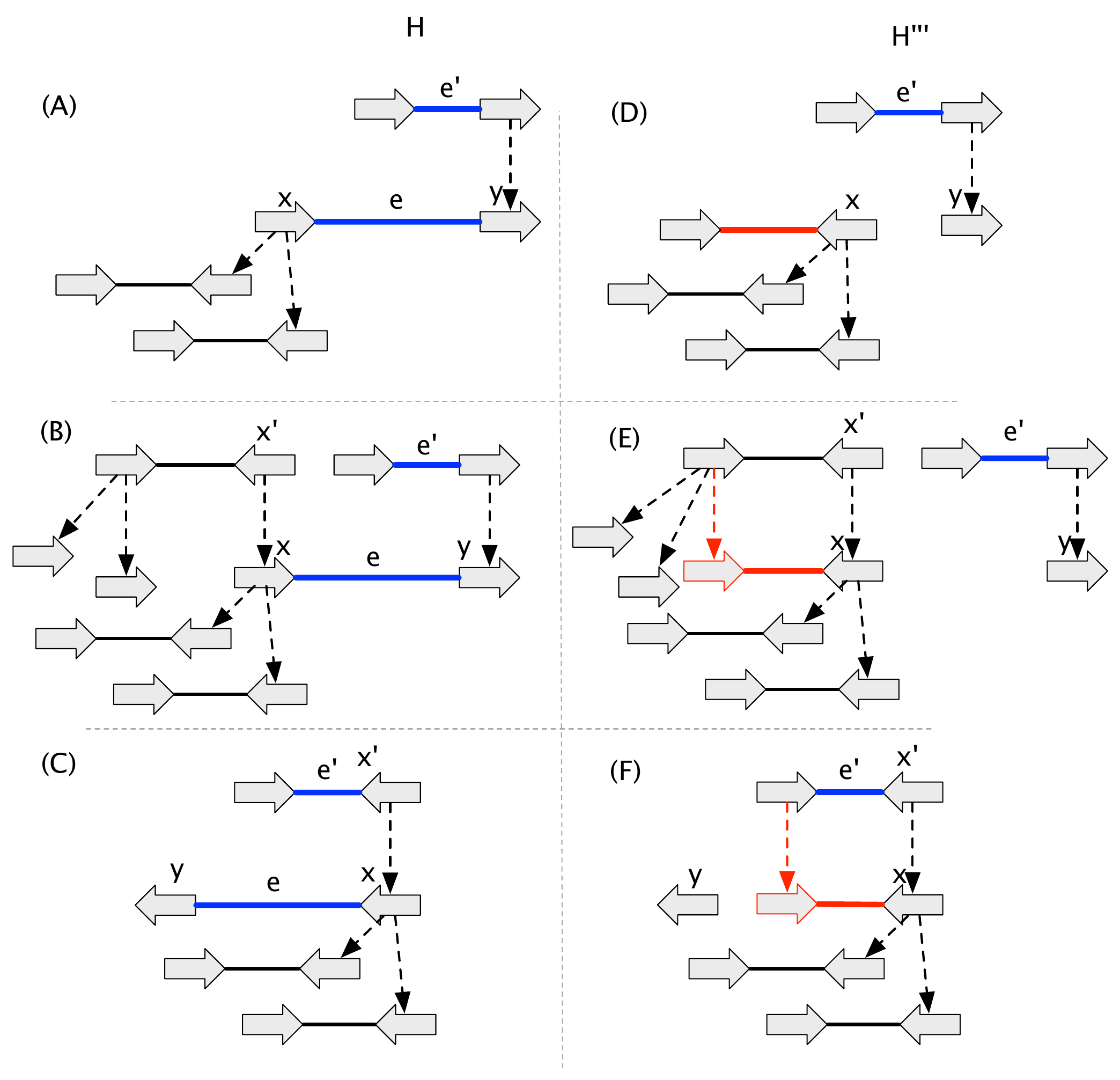}
\caption{\textbf{(A,B,C)} Examples of ping-pong adjacencies. The ping-pong adjacencies are shown in blue. \textbf{(D,E,F)} After modifications to remove the ping-pong adjacencies, for each corresponding left side case, with the added elements shown in red. }
\label{pingPongs}
\end{center}
\end{figure}

\begin{lemma}
\label{onlyJunctionAdjacenciesLemma}
The only $G$-reducible adjacencies in the $G$-unbridged graph of an extension of $G$ containing no non-minimal elements are junction adjacencies. 
\end{lemma}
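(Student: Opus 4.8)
The plan is to argue by elimination over the classification of non-minimal adjacencies established in the preceding paragraphs, combined with the observation that the $G$-unbridged graph has, by construction, removed all $G$-reducible bridge adjacencies. First I would fix an extension $G'$ of $G$ that contains no non-minimal elements, and let $G''$ denote its $G$-unbridged graph, obtained by deleting all $G$-reducible bridge adjacencies of $G'$. Consider any $G$-reducible adjacency $a = \{x_\alpha, y_\beta\}$ surviving in $G''$. Since $G'$ had no $G$-reducible non-minimal elements, $a$ is, in $G'$, either a junction adjacency or a bridge adjacency (these being the only two kinds of ``necessary'' adjacency in the non-minimal/minimal dichotomy). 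If $a$ were a bridge adjacency in $G'$, it would have been deleted in forming $G''$, so within $G'$ the surviving adjacency $a$ must have been a junction adjacency. It then remains to check that deleting other $G$-reducible bridge adjacencies from $G'$ cannot create a new $G$-reducible adjacency that is neither a junction nor a bridge, and cannot turn $a$ from a junction adjacency into something that is no longer a junction adjacency while still being $G$-reducible and non-minimal.

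The main work, and the step I expect to be the principal obstacle, is controlling how the deletion of bridge adjacencies propagates through the lifted graph. Deleting a bridge adjacency $\{u_\gamma, v_\delta\}$ changes the lifted adjacencies incident with $A(u_\gamma)$ and $A(v_\delta)$: a non-trivial lifted adjacency may be removed, which can in principle demote a junction side to an ordinary side, and a side that was ``attached'' in the lifting sense may lose an incidence. I would argue that a bridge adjacency, by definition, carries a side $x_\alpha$ whose lifted adjacency $\{A(x_\alpha), A(y_\beta)\}$ is trivial and whose lifting ancestor $A(x_\alpha)$ is a junction with $\geq 1$ non-trivial lifted adjacency; deleting it leaves at least one endpoint a junction side. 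Hence such a deletion cannot destroy the junction status of a module wholesale — it can only peel off bridge-like appendages. Using the compounding classification (leaf, complex, redundant, complicating, unnecessary bridge, removable bridge), together with the assumption that $G'$ contained none of the reducible versions of these, I would show inductively over the sequence of bridge deletions that every intermediate graph still contains no $G$-reducible non-minimal adjacency except possibly fresh bridge adjacencies, which are themselves then slated for deletion; and that the fixed point $G''$ therefore has its $G$-reducible adjacencies confined to junction adjacencies.

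Concretely, the key steps in order are: (1) recall that $G'$ has no $G$-reducible non-minimal adjacencies, so every $G$-reducible adjacency of $G'$ is a junction or bridge adjacency; (2) observe that forming $G''$ deletes exactly the $G$-reducible bridge adjacencies, so a priori the $G$-reducible adjacencies of $G''$ are the old junction adjacencies plus any adjacency whose type changed during the deletions; (3) analyze a single bridge-adjacency deletion and show, via the lifted-graph bookkeeping, that it can only (a) leave junction adjacencies as junction adjacencies, (b) convert a junction adjacency into a bridge adjacency (which is then itself $G$-reducible and deleted in a later step of forming $G''$), or (c) convert a bridge adjacency into a leaf/redundant/unnecessary non-minimal adjacency — but such an adjacency, being non-minimal and $G$-reducible, contradicts $G''$ being obtained only by deleting bridges unless it too was already a bridge; and (4) conclude by induction on the number of bridge deletions that in $G''$ no $G$-reducible adjacency is non-minimal and none is a bridge, hence every $G$-reducible adjacency of $G''$ is a junction adjacency. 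The delicate point throughout is step (3)(b)–(c): ruling out that the cascade of bridge deletions ever strands a reducible non-junction, non-bridge adjacency, which I would handle by noting that ``removable bridge'' is precisely the configuration designed to be stable under this cascade and that the $G$-unbridged operation is defined to exhaust it.
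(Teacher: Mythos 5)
Your overall route is the same as the paper's: observe that, since $G'$ has no $G$-reducible non-minimal elements, every $G$-reducible adjacency of $G'$ is a junction or a bridge; the bridges are deleted in forming the $G$-unbridged graph $G''$; so it remains to show the deletions neither create new $G$-reducible non-minimal adjacencies nor demote the surviving junction adjacencies. The gap is in how you close that last step. Your case (3)(b) — a junction adjacency becoming a bridge, ``which is then itself $G$-reducible and deleted in a later step of forming $G''$'' — rests on reading the $G$-unbridged graph as an iteratively exhausted fixed point. That is not the definition: the $G$-unbridged graph is obtained by deleting exactly the adjacencies that are $G$-reducible bridges \emph{in $G'$}. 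If a junction adjacency of $G'$ could cease to be a junction during the deletions, it would simply survive in $G''$ as a $G$-reducible adjacency that is not a junction, and the lemma as stated would be false; there is no later pass to remove it. So the cascade mechanism you lean on in (3)(b)--(c) cannot do the work, and the induction you sketch has nothing to maintain it with.

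The missing idea, which is the paper's one-sentence core, is that demotion cannot happen at all: when a bridge $\{x_\alpha, y_\beta\}$ is deleted, the non-trivial lifted adjacencies that were incident with $x_\alpha$ and $y_\beta$ re-lift to $A(x_\alpha)$ and $A(y_\beta)$ and remain non-trivial, so the junction adjacency joining those lifting ancestors keeps junction status, and no new non-minimal $G$-reducible adjacency is produced. You do cite the clause in the bridge definition guaranteeing that a \emph{single} deletion leaves $A(x_\alpha)$ or $A(y_\beta)$ a junction side, but that alone does not cover the cumulative case in which several bridges whose endpoints lift to the same pair of junction sides are all deleted; it is precisely the re-lifting of the non-trivial lifted adjacencies that keeps the ancestral sides junctions after \emph{all} bridge deletions. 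Replacing your cascade argument with this observation (and dropping the appeal to exhaustive unbridging) would make the proof correct and essentially identical to the paper's.
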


\begin{proof}
By definition, the only $G$-reducible adjacencies in an extension of $G$ with no $G$-reducible non-minimal elements are junction adjacencies and bridges. Each deletion of a $G$-reducible  bridge adjacency does not create any $G$-reducible non-minimal adjacencies, as a junction adjacency connecting sides that are the lifting ancestors of the sides connected by a bridge adjacency remains a junction adjacency after the deletion of the bridge, and the lifted adjacencies incident with the sides connected by the bridge, which are non-trivial, lift to this junction instead and therefore remain non-trivial.
\end{proof}

\begin{lemma}
The $G$-unbridged graph of a $G$-optimal AVG for any cost function contains no $G$-reducible ping adjacencies.
\label{noPingsElementsLemma}
\end{lemma}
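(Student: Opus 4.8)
I would argue by contradiction. Suppose $H$ is a $G$-optimal AVG for some cost function $c$ whose $G$-unbridged graph $H^{-}$ contains a $G$-reducible ping adjacency $e$, and let $e'$ be the associated pong adjacency. The first step is to determine the nature of $e$. Since $H$ is $G$-optimal it is $G$-minimal, so by Lemma~\ref{noNonMinimalElementsLemma} it contains no $G$-reducible non-minimal elements; hence Lemma~\ref{onlyJunctionAdjacenciesLemma} applies and $e$, being a $G$-reducible adjacency of $H^{-}$, is a junction adjacency. Concatenating the bridge deletions witnessing $H \redeq H^{-}$ with a reduction sequence $H^{-} \redeq G$ that deletes $e$ shows that $e$ is $G$-reducible in $H$ as well, and because deleting adjacencies can only destroy, never create, junction sides, $e$ is also a junction adjacency of $H$. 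Now a junction adjacency has an endpoint that is a junction side, and a junction side --- the most recent common ancestor of two attached, indirectly related descendants --- has attached descendants and hence is not hanging; since the ping adjacency $e$ has at least one hanging endpoint, it follows that exactly one endpoint of $e$, say $x_{\alpha}$, is hanging, the other, $y_{\beta}$, is a junction side, and (by the ping--pong hypothesis) the most recent attached ancestor of one of these endpoints is incident with $e'$, whose remaining endpoint is hanging.

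With $e$ identified, the plan is to contradict the $G$-optimality of $H$ by removing $e$ locally, according to the configuration of Figure~\ref{pingPongs}. If deleting $e$ from $H$ already yields an AVG --- the benign configurations --- then, since $e$ is $G$-reducible, the reduction $H''$ obtained this way satisfies $G \redeq H'' \red H$ with $H'' \neq H$; as no label is changed $s(H'') = s(H)$, and as only an adjacency carrying a trivial lifted adjacency is removed $r(H'') \le r(H)$, so $H''$ is an AVG lying strictly between $G$ and $H$, contradicting the $G$-minimality of $H$ --- unless $H'' = G$, a degenerate case that one excludes by checking that an AVG cannot acquire a junction adjacency through the addition of a single adjacency without also acquiring rearrangement ambiguity (or, if that analysis does not close it, by a direct computation of the lower-bound rearrangement cost showing $r(H) > r(G)$ and hence contradicting $G$-parsimony). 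In the remaining configurations deleting $e$ alone would leave $y_{\beta}$ unattached and force the lifted adjacencies beneath it onto a crowded new host; there one applies the repair of Figure~\ref{pingPongs}(D--F) --- re-parking the offending adjacency of $y_{\beta}$'s onto a fresh hanging vertex by a detach-and-reattach that leaves the adjacency count unchanged --- obtaining an AVG extension $H''$ of $G$ with $s(H'') = s(H)$ and $r(H'') \le r(H)$, and argues either that this $H''$ is in fact strictly cheaper, $r(H'') < r(H)$, contradicting $G$-parsimony via Theorem~\ref{avgTheorem} and the monotonicity of $c$, or, if equality persists, iterates the whole construction on a well-founded measure (for instance the number of $G$-reducible ping adjacencies remaining, after passing to a $G$-minimal reduction) until a strict cost decrease or a proper reduction is forced.

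The bulk of the work, and the main obstacle, is this middle step: the case analysis over the ping--pong configurations of Figure~\ref{pingPongs}, the choice of the correct local move in each, and the verification that the outcome is always an AVG that either properly reduces $H$ or is strictly cheaper than $H$. The delicate point is that removing $e$ must not create irreducible rearrangement ambiguity: detaching the hanging endpoint $x_{\alpha}$ is harmless, since it carries no lifted adjacency from below and touches no label, so everything turns on what happens at $y_{\beta}$ and at the ancestor that the pong $e'$ singles out --- which is exactly the data the definition of ping--pong adjacencies is engineered to expose, and which the repair move is designed to neutralise.
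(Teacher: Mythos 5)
Your high-level plan is the same as the paper's: assume a $G$-optimal AVG $H$ whose $G$-unbridged graph contains a $G$-reducible ping adjacency $e$, use Lemmas \ref{noNonMinimalElementsLemma} and \ref{onlyJunctionAdjacenciesLemma} to see that $e$ must be a junction adjacency, delete $e$, and repair the detached junction endpoint as in Figure \ref{pingPongs}(D--F) in order to contradict optimality. The gap is at the decisive step. The paper's contradiction is quantitative: each repair (a new vertex attached to the junction endpoint, placed in isolation when that endpoint has no attached ancestor, and otherwise as a child of the vertex adjacent to its most recent attached ancestor, which also covers the case where that ancestor is incident with the pong) yields the $G$-unbridged graph of an AVG extension of $G$ whose most parsimonious realisation has the same number of substitutions and \emph{exactly one fewer} rearrangement than that of $H$; monotonicity of $c$ and the non-zero rearrangement cost then contradict $G$-parsimony outright, with no appeal to minimality and no case split. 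You only establish $r(H'')\le r(H)$ and then hedge: either the inequality happens to be strict, or you ``iterate on a well-founded measure until a strict cost decrease or a proper reduction is forced.'' That fallback does not close the argument. The repaired graph contains a newly created vertex and adjacency, so it is in general not a reduction of $H$; an equal-cost AVG extension of $G$ that is neither strictly cheaper than $H$ nor a reduction of $H$ contradicts neither $G$-parsimony nor $G$-minimality, and nothing in your iteration guarantees that a strict decrease or a proper reduction is ever reached, so the process can terminate without contradicting anything.

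In other words, the ``bulk of the work'' you explicitly defer is the entire content of the lemma: verifying, configuration by configuration as in Figure \ref{pingPongs}, that because $e$ is a junction adjacency with a hanging endpoint, swapping it for the carefully placed replacement adjacency (whose lift is trivial by construction) strictly reduces the rearrangement cost of the most parsimonious realisation while leaving the substitution cost unchanged and while remaining a valid ($G$-bounded unbridged) AVG extension of $G$. Your first branch, where deleting $e$ alone already gives an AVG strictly between $G$ and $H$ and contradicts $G$-minimality, is sound in spirit, but its degenerate subcase ($H''=G$) is again only gestured at via an unproven claim about acquiring ambiguity; the paper sidesteps all of this by always performing the repair and always extracting a strict cost decrease. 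Without that strict decrease, the proof as proposed does not go through.
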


\begin{proof}
Let $H$ be a $G$-optimal AVG whose $G$-unbridged graph $H'$ contains one or more $G$-reducible ping adjacencies. Example subgraphs containing ping-pong adjacencies are shown in Figure \ref{pingPongs}(A-C). Let $e = \{ x_{\alpha}, y_{\beta} \}$ be such a $G$-reducible ping adjacency and $y_{\beta}$ a hanging endpoint in $H'$. From Lemma \ref{onlyJunctionAdjacenciesLemma}, the adjacency $e$ must be a junction. Delete $e$ from $H'$ giving $H''$, note $G \redeq H''$. If $x_{\alpha}$ has no most recent attached ancestor create a new vertex and connect it with an adjacency to $x_\alpha$ as shown in Figure \ref{pingPongs}(D), else do the same but connect the new vertex by a branch that makes it the child of the vertex connected by an adjacency to the most recent attached ancestor of $x_\alpha$, as shown in Figure \ref{pingPongs}(E). Note that it does not matter in this second case if the most recent attached ancestor of $x_\alpha$ is a pong adjacency, as demonstrated in Figures \ref{pingPongs}(C) and (F). It is easily verified that each modification defines the $G$-unbridged graph of a valid AVG extension $H'''$ of $G$ that has one fewer $G$-reducible ping adjacencies in its $G$-unbridged graph, one less rearrangement and the same number of substitutions in its most parsimonious realisation as in the most parsimonious realisation of $H$. This contradiction to the assumption that $H$ was $G$-optimal establishes the result.
\end{proof}

\newtheorem*{theorem2}{Theorem 2}

\begin{theorem2}
%\label{containsGOptimal}
The $G$-bounded AVGs contain the $G$-optimal AVGs for every cost function.
\end{theorem2}

\begin{proof}
Follows from Lemmas \ref{noNonMinimalElementsLemma} and \ref{noPingsElementsLemma}.
\end{proof}

\subsection{Appendix C}

This section will prove Theorem \ref{finiteGBoundedTheorem}.
%We will prove that every $G$-bounded history graph always has finite size and cardinality, and that therefore there are always a finite number of $G$-optimal AVG extensions for any $G$. We start by demonstrating a tight bound for the number of adjacencies a  $G$-bounded history graph can contain in terms of $G$.
In the following let $n$ be the number of adjacencies in a history graph $G$. 

 \begin{lemma} \label{nIs0Lemma}
 If $n=0$ any $G$-bounded extension of $G$ contains 0 adjacencies.
 \end{lemma}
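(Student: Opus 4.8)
\textbf{Proof proposal for Lemma \ref{nIs0Lemma}.}

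The plan is to argue that when $G$ has no adjacencies, there is simply no way for a $G$-bounded extension to contain any adjacency, because every adjacency in such an extension would have to be $G$-reducible, and the $G$-bounded conditions would then force a contradiction. First I would observe that since $n = |E_G| = 0$, any extension $G'$ of $G$ obtained by a reduction sequence from $G'$ to $G$ must delete \emph{all} adjacencies of $G'$ along the way (adjacencies are only removed, never created, by reduction operations, and $G$ has none). Hence every adjacency of $G'$ is $G$-reducible. By the definition of a $G$-bounded history graph, $G'$ contains no $G$-reducible non-minimal element, so every adjacency of $G'$ must be either a junction adjacency or a bridge adjacency.

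Next I would rule out each of these two possibilities. Suppose $G'$ has at least one adjacency. Consider its $G$-unbridged graph $G''$, obtained by deleting all $G$-reducible bridge adjacencies; since every adjacency of $G'$ is $G$-reducible, every bridge adjacency of $G'$ gets deleted, so the only adjacencies surviving in $G''$ are junction adjacencies. If $G''$ still has an adjacency, then it has a junction adjacency, hence a junction side, hence a side with at least two incident lifted adjacencies. The key step is then to locate a $G$-reducible ping adjacency in $G''$: because $G$ has no adjacencies, every branch-tree's lifting behaviour is trivial in the relevant sense, and one can track, starting from any adjacency, a side with no attached descendants (a hanging side) — walking down the branch structure from an endpoint of a junction adjacency produces such a hanging side, and pairing the adjacency realising the junction with the adjacency at the hanging end yields a ping-pong pair whose ping adjacency is $G$-reducible. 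This contradicts the definition of $G$-bounded, which forbids $G$-reducible ping adjacencies in the $G$-unbridged graph. Therefore $G''$ has no adjacencies, i.e. every adjacency of $G'$ is a $G$-reducible bridge. But a bridge adjacency by definition has an endpoint that is a bridge side, and a bridge side requires its lifting ancestor to be incident with a non-trivial lifted adjacency, which in turn requires \emph{another} adjacency to exist and be a junction. An induction on the number of adjacencies in $G'$ (or a minimal-counterexample argument: take the adjacency whose lifting image is "innermost") shows this cascade cannot bottom out, so $G'$ cannot contain any bridge adjacency either. Hence $G'$ has no adjacencies.

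The main obstacle I anticipate is the careful bookkeeping in the ping-pong step: one must verify that from a junction adjacency in the $G$-unbridged graph one can genuinely exhibit a hanging side and a correctly oriented ping-pong pair, and that the ping adjacency so produced is actually $G$-reducible (which here is automatic since \emph{all} adjacencies are $G$-reducible, simplifying matters considerably). A cleaner route, which I would try first, is to avoid the ping-pong machinery entirely: argue directly that if $G'$ has an adjacency then it has an adjacency that is \emph{neither} a junction \emph{nor} a bridge — for instance, pick an adjacency $\{x_\alpha, y_\beta\}$ minimal with respect to the ancestral order on the pair $(A(x_\alpha), A(y_\beta))$; minimality prevents it from being a bridge (a bridge needs a strictly more ancestral junction adjacency to lift to), and if it were a junction there would be an even more ancestral unattached junction side forcing yet another adjacency, eventually contradicting minimality — so it is non-minimal and $G$-reducible, contradicting $G$-boundedness. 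Either way the conclusion is that a $G$-bounded extension of an adjacency-free $G$ is itself adjacency-free.
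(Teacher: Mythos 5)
Your overall skeleton is reasonable and in fact closer to the machinery of Appendix C than to the paper's own proof, which simply cites Lemma \ref{noNonMinimalElementsLemma}: since $G$ has no adjacencies and reduction operations never create adjacencies, every adjacency of a $G$-bounded extension is indeed $G$-reducible, hence a junction or a bridge, and your observation that a bridge forces a junction adjacency between its lifting ancestors (the trivial-lift condition supplies a regular adjacency, and the ``both free roots'' escape is blocked by the requirement that one lifting ancestor remain a junction side) is sound. The genuine gap is the pivotal step where you dispose of junctions: you assert that a junction adjacency in the $G$-unbridged graph yields a $G$-reducible ping adjacency by ``walking down to a hanging side and pairing the adjacency realising the junction with the adjacency at the hanging end,'' but that pairing need not satisfy the ping-pong definition at all. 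The most recent attached ancestor of the hanging side need not be incident with the junction adjacency you started from (there may be intermediate attached sides), and, more importantly, the candidate pong adjacency must itself have a hanging side, which your construction does not provide. Your fallback ``cleaner route'' is also flawed in direction: being a junction is witnessed by attached \emph{descendants}, not by more ancestral adjacencies, so an adjacency chosen most-ancestral with respect to $(A(x_\alpha),A(y_\beta))$ can perfectly well be a junction (the topmost adjacency of any junction chain in the paper's own figures is one), and the claimed ``even more ancestral unattached junction side forcing yet another adjacency'' does not exist.

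To close the argument one needs something quantitative rather than a local pairing. For example, using the incidence-transmission bookkeeping of Appendix C: every junction adjacency has received incidence at least $2$, hence incidence transmission at most $0$; an adjacency lying in an attached thread that is minimal in the event DAG among attached threads has projected incidence $0$, hence transmission at most $-2$; since the transmissions of all adjacencies sum to zero, an all-junction unbridged graph with at least one adjacency is impossible. (Alternatively, the same sum shows every junction would have transmission exactly $0$, which forces a hanging endpoint and attached lifting ancestors on every adjacency and only then yields a genuine $G$-reducible ping pair.) Either completion is a real argument, not a one-line consequence of your pairing; as written, the existence of the ping-pong pair is asserted rather than proved, so the proposal has a gap, and it also diverges from the paper, whose proof of this lemma invokes only the exclusion of $G$-reducible non-minimal elements.
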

 \begin{proof}
 Follows from Lemma \ref{noNonMinimalElementsLemma}.
 \end{proof}
 
 As the $n=0$ case is trivial now assume that $n \ge 1$.
%\begin{definition}
For an adjacency $\{ x_{\alpha}, y_{\beta} \}$ its \emph{received incidence} is $|L'_{x_{\alpha}}| + | L'_{y_{\beta}} |$ and its \emph{projected incidence} is equal to the number of members of $\{ A(x_{\alpha}), A(y_{\beta}) \}$ that are attached, either 0, 1 or 2. For an adjacency, the difference between projected incidence and received incidence is the \emph{incidence transmission}.
%\end{definition}
A positive incidence transmission occurs when the projected incidence is greater than the received incidence number, conversely a negative incidence transmission occurs when the projected incidence is less than the received incidence.
%\begin{definition}
The \emph{incidence sum} of a history graph is the sum of the received incidences of its adjacencies, or, equivalently, the sum of the projected incidences of its adjacencies.
%\end{definition}

%Figure \ref{incidenceTransmissionExample} outlines the received and projected incidences' of adjacencies in an example.

%\begin{figure}[h!]
%\begin{center}
%\includegraphics[width=14cm]{figures/incidenceTransmissionExample.pdf}
%\caption{A history graph with each adjacency annotated with its received and projected incidence. It contains 15 adjacencies and its incidence sum is 24.}
%\label{incidenceTransmissionExample}
%\end{center}
%\end{figure}

\begin{lemma}
\label{twoNMinusTwoLemma}
The maximum possible incidence sum of $G$ is $2n - 2$.
\end{lemma}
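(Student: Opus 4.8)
The plan is to bound the incidence sum by relating it to the branch structure of $G$. Recall the incidence sum is $\sum_{\{x_\alpha,y_\beta\}\in E_G}(|L'_{x_\alpha}| + |L'_{y_\beta}|)$, which equals $\sum_{z_\gamma}|L'_{z_\gamma}|$ summed over all attached sides $z_\gamma$ (each adjacency contributes the received incidences of its two endpoints, and each attached side belongs to exactly one adjacency). Now $|L'_{z_\gamma}|$ counts the attached sides $w_\delta$ with $A(w_\delta)=z_\gamma$; equivalently, every attached side $w_\delta$ contributes exactly $1$ to the term $|L'_{A(w_\delta)}|$. Hence the incidence sum equals the total number of attached sides, which is $2n$ (each of the $n$ adjacencies has two endpoints, all attached). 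So naively the incidence sum is exactly $2n$, and the content of the lemma is that it is in fact at most $2n-2$ — meaning the bound $2n$ is never achieved, and we must lose at least $2$.

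First I would identify where the loss of $2$ comes from: an attached side $w_\delta$ contributes $1$ to $|L'_{A(w_\delta)}|$, and this only feeds into the incidence sum if $A(w_\delta)$ is itself an attached side that is an endpoint of some adjacency in $E_G$. But the lifting ancestor $A(w_\delta)$ of an attached side, by definition, is either the most recent attached proper ancestor of $w_\delta$, or — if none exists — the ancestral side in the free-root of $w_\delta$'s branch-tree. A free-root side is not a vertex of $G$ and hence is not an endpoint of any adjacency in $E_G$; so any attached side whose lifting ancestor is a free-root contributes $0$ to the incidence sum of $G$ (its lifted adjacency is incident to a free-root, not counted among received incidences of adjacencies in $E_G$). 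Therefore incidence sum $= 2n - (\text{number of attached sides whose lifting ancestor lies in a free-root})$. The key step is then to show this count is at least $2$.

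To see the count is at least $2$: consider the branch-trees of $G$ and, within the forest structure, for each attached side pick a "topmost" attached side above it; more directly, take any adjacency $\{x_\alpha,y_\beta\}$ and follow $x_\alpha$ up its chain of attached ancestors until reaching a side with no attached proper ancestor — this side exists, is attached, and its lifting ancestor is the corresponding free-root side. This gives at least one such side. To get two, I would argue that a single attached side cannot be the unique "source": if $x_\alpha$'s chain terminates at attached side $p_\alpha$ (lifting to a free-root), then consider the opposite side $p_{\bar\alpha}$ of the same vertex, or use that an adjacency has two endpoints with opposite-orientation or independent lifting chains; more robustly, the set of attached sides whose lifting ancestor is a free-root side is closed under "the other endpoint of its adjacency is also attached, giving a second such side in general position." The cleanest route is: root-level attached sides come in pairs through adjacencies in the root layer, and by acyclicity one cannot have exactly one. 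I would formalize this by a small case analysis on whether the topmost attached structure is a cycle or a path in the root layer, mirroring the module-parity arguments used elsewhere in the paper (e.g. Lemma~\ref{equivalentRearrangementBoundsLemma}).

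The main obstacle I anticipate is precisely this last step — rigorously ruling out the incidence sum equalling $2n-1$, i.e.\ showing the number of "free-root-lifting" attached sides is even, or at least $\ge 2$ rather than $\ge 1$. The parity should follow because such sides, at the top of the branch forest, are connected among themselves by adjacencies in a thread-graph (each side in at most one adjacency), so they form paths and cycles; a configuration with exactly one unmatched such side would force an attached side with no attached partner at the top level, which either contradicts acyclicity of $D(G)$ or means that side's adjacency partner also lifts to a free root. I would structure the proof as: (1) incidence sum $= 2n - k$ where $k$ is the number of attached sides lifting to free-roots; (2) $k \ge 1$ by following any attachment chain to its top; (3) $k \ge 2$ by a parity/partner argument on the top-level attached sides using the thread-graph constraint and acyclicity; concluding incidence sum $\le 2n-2$.
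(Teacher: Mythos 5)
Your counting set-up is sound: the incidence sum equals the number of attached sides whose lifting ancestor is itself an attached side of $G$, i.e. $2n-k$ where $k$ is the number of attached sides that lift to free-root sides; this is just the received-incidence dual of the paper's projected-incidence bookkeeping, and step (2) ($k\ge 1$) is fine. The gap is exactly where you anticipated it, in step (3), and the route you sketch does not close it. The claim that the free-root-lifting ("top-level") attached sides are matched among themselves by adjacencies, hence form paths and cycles and come in pairs, is false, and so is the hoped-for parity of $k$: the adjacency partner of a side with no attached proper ancestor may perfectly well have an attached proper ancestor. Concretely, take one thread with adjacency $\{c_{head},d_{head}\}$ and a second thread with adjacency $\{a_{head},b_{head}\}$, with a single branch making $b$ a child of $c$ and with $a$, $c$, $d$ rootless of attached ancestors; then the sides lifting to free roots are $a_{head}$, $c_{head}$, $d_{head}$ — an odd number — and the partner of $a_{head}$ is $b_{head}$, which lifts to $c_{head}$. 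So pairing/parity cannot rule out $k=1$, and your proposed case analysis as stated would not go through.

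What does close it — and is what the paper's one-line proof is implicitly using when it asserts that at least one adjacency has projected incidence $0$ — is to choose the topmost structure correctly rather than following an arbitrary attachment chain. Since $D(G)$ is a DAG, no vertex has an ancestor in its own thread, and ancestry induces a partial order on threads; among the threads containing at least one adjacency (nonempty since $n\ge 1$) pick a thread $T$ with no adjacency-containing thread strictly ancestral to it. Every proper ancestor of a vertex of $T$ lies in a strictly ancestral thread, hence is unattached by minimality of $T$, so for any adjacency $\{x_\alpha,y_\beta\}$ in $T$ both $A(x_\alpha)$ and $A(y_\beta)$ are free-root sides. That exhibits one adjacency of projected incidence $0$ (equivalently, two sides counted by your $k$ at once), giving incidence sum $\le 2(n-1)=2n-2$; note that $k\ge 2$ is all you need — $k$ need not be even, as the example shows. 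Like the paper, you can leave tightness of the bound to the separate remark that it is achieved for every $n$.
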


\begin{proof}
The $2n$ term is because each adjacency has a projected incidence of at most $2$,  the $-2$ term is because at least one adjacency has a projected incidence of 0. 
\end{proof}

It is trivial to show this bound can be achieved for all values of $n$. 

\begin{lemma} \label{noPositiveJunctionsLemma}
The $G$-unbridged graph $G''$ for a $G$-bounded history graph $G'$ has no $G$-reducible adjacencies with a positive incidence transmission.
\end{lemma}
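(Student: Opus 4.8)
\emph{Strategy.} The plan is to reduce everything to one structural fact about junction adjacencies and then invoke Lemma~\ref{onlyJunctionAdjacenciesLemma}. If $n=0$ there is nothing to prove: by Lemma~\ref{nIs0Lemma} the graph $G''$ has no adjacencies. So assume $n\ge 1$. Since $G'$ is $G$-bounded it contains no $G$-reducible non-minimal element, so Lemma~\ref{onlyJunctionAdjacenciesLemma} tells us that every $G$-reducible adjacency of the $G$-unbridged graph $G''$ is a junction adjacency of $G''$. It therefore suffices to prove that any junction adjacency $e=\{x_\alpha,y_\beta\}$ of $G''$ has received incidence $|L'_{x_\alpha}|+|L'_{y_\beta}|\ge 2$: since the projected incidence is at most $2$, this immediately forces the incidence transmission to be at most $0$.

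The crux is the claim that a junction side $x_\alpha$ has $|L'_{x_\alpha}|\ge 2$; we may assume $x_\alpha$ is the endpoint of $e$ witnessing that $e$ is a junction. By definition $x_\alpha$ is the MRCA of two attached, indirectly related sides $u$ and $v$; as neither is an ancestor of the other, $x_\alpha$ is a proper ancestor of both, and $u$ and $v$ lie in the subtrees of two distinct children of $x_\alpha$. Because $x_\alpha$ is an endpoint of $e$ it is attached, so descending the branch path from $x_\alpha$ towards $u$ and taking the topmost attached side $z_1$ strictly below $x_\alpha$ gives a side with no attached side strictly between it and $x_\alpha$; hence $A(z_1)=x_\alpha$. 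Symmetrically we get $z_2$ on the path towards $v$ with $A(z_2)=x_\alpha$, and $z_1\ne z_2$ because they lie in distinct child subtrees. The real adjacency incident with each $z_i$ therefore lifts to a lifted adjacency incident with $x_\alpha$, so $z_1$ and $z_2$ contribute together at least two incidences to the multiset $L'_{x_\alpha}$ (exactly two when $\{z_1,z_2\}$ is itself an adjacency, whose lift is then the loop $\{x_\alpha,x_\alpha\}$); in every case $|L'_{x_\alpha}|\ge 2$.

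Combining these, any $G$-reducible adjacency of $G''$ has received incidence $\ge|L'_{x_\alpha}|\ge 2\ge$ its projected incidence, so its incidence transmission is not positive, as required. The only point needing care is the verification of $A(z_1)=x_\alpha$ and $z_1\ne z_2$: one must keep the orientation bookkeeping straight (side-ancestry is defined relative to a fixed orientation), treat the degenerate cases where $z_i$ is a child of $x_\alpha$ or where $z_1$ and $z_2$ are joined by an adjacency, and confirm that $x_\alpha$ really is still a junction side in $G''$ rather than merely in $G'$ — which is exactly why Lemma~\ref{onlyJunctionAdjacenciesLemma}, phrased for the $G$-unbridged graph, is the right starting point. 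None of this is substantial, so the argument should be short.
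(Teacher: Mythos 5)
Your proposal is correct and follows the paper's own route: it invokes Lemma~\ref{onlyJunctionAdjacenciesLemma} to restrict attention to junction adjacencies and then observes that such adjacencies have incidence transmission at most $0$. The paper simply asserts the latter fact, whereas you spell out why a junction side carries at least two lifted-adjacency incidences, so received incidence $\ge 2 \ge$ projected incidence; this is the same argument, just made explicit.
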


\begin{proof}
By Lemma \ref{onlyJunctionAdjacenciesLemma}, the only $G$-reducible adjacencies in $G''$ are junction adjacencies. Junction adjacencies have an incidence transmission of 0 or less. 
\end{proof}

\begin{lemma}
\label{twoNMinusOneLemma}
The $G$-unbridged graph $G''$ for a $G$-bounded history graph $G'$ contains less than or equal to $2n - 1$ adjacencies that either have a negative incidence transmission, or which are $G$-irreducible and have an incidence transmission of 0.
\end{lemma}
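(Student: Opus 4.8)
Write $t(e)$ for the incidence transmission of an adjacency $e$ of $G''$. I would open with two preliminaries. First, a junction adjacency of $G''$ has received incidence at least $2$: this slightly strengthens the fact (used in the proof of Lemma~\ref{noPositiveJunctionsLemma}) that junction adjacencies have $t\le 0$, and it follows by taking a junction‑side endpoint $x_\alpha$, two attached indirectly related descendants of it, and climbing the nearest‑attached‑ancestor chain from each up to its last side strictly below $x_\alpha$; those two sides are distinct and each lifts to $x_\alpha$, giving two lifted‑adjacency incidences there. In particular, a junction adjacency with projected incidence $0$ has $t\le -2$. Second, the set of $G$-irreducible adjacencies of $G''$ has size at most $n$: since $G\redeq G''$, fix a reduction sequence from $G''$ to $G$; adjacency deletions are the only operations that change the adjacency count, so exactly $|E_{G''}|-n$ adjacencies are deleted and every $G$-irreducible adjacency is among the $n$ survivors.

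Next I would classify the adjacencies of $G''$ by whether they are $G$-irreducible ($\mathcal{I}$) or $G$-reducible ($\mathcal{R}$) and by the sign of $t$. By Lemma~\ref{onlyJunctionAdjacenciesLemma} every element of $\mathcal{R}$ is a junction adjacency, hence has $t\le 0$, so $\mathcal{I}=\mathcal{I}^+\sqcup\mathcal{I}^0\sqcup\mathcal{I}^-$ and $\mathcal{R}=\mathcal{R}^0\sqcup\mathcal{R}^-$. The quantity to be bounded is exactly $|\mathcal{I}^-|+|\mathcal{R}^-|+|\mathcal{I}^0|$ (an adjacency with $t<0$ lies in $\mathcal{I}^-$ or $\mathcal{R}^-$; a $G$-irreducible one with $t=0$ lies in $\mathcal{I}^0$). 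Since the incidence sum equals both the total received and the total projected incidence, $\sum_{e\in E_{G''}}t(e)=0$; the contributions of $\mathcal{I}^0\cup\mathcal{R}^0$ vanish, so $\sum_{e\in\mathcal{I}^+}t(e)=\sum_{e\in\mathcal{I}^-\sqcup\mathcal{R}^-}|t(e)|\ge |\mathcal{I}^-|+|\mathcal{R}^-|$. Each $t(e)$ on the left is at most $2$ (projected incidence $\le 2$, received incidence $\ge 0$), so $|\mathcal{I}^-|+|\mathcal{R}^-|\le 2|\mathcal{I}^+|$; adding $|\mathcal{I}^0|$ and using $|\mathcal{I}^+|+|\mathcal{I}^0|\le|\mathcal{I}|\le n$ gives $|\mathcal{I}^-|+|\mathcal{R}^-|+|\mathcal{I}^0|\le 2|\mathcal{I}^+|+|\mathcal{I}^0|\le n+|\mathcal{I}^+|\le 2n$.

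To improve this by one, suppose the set had size $2n$. Then every inequality above is tight: $|\mathcal{I}^+|=n$ (so $\mathcal{I}^0=\mathcal{I}^-=\emptyset$), every $e\in\mathcal{I}$ has $t(e)=2$ (hence projected incidence $2$ and received incidence $0$), and every $e\in\mathcal{R}^-$ has $|t(e)|=1$. Because $G$ has $n\ge 1$ adjacencies, $G''$ has at least one adjacency, so by the ingredient underlying Lemma~\ref{twoNMinusTwoLemma} it contains an adjacency $e_0$ of projected incidence $0$; since every element of $\mathcal{I}$ has projected incidence $2$, $e_0\in\mathcal{R}$, so $e_0$ is a junction adjacency and by the first preliminary $t(e_0)\le -2$. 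Thus $e_0\in\mathcal{R}^-$ with $|t(e_0)|\ge 2$, contradicting $|t(e)|=1$ for all $e\in\mathcal{R}^-$. Hence the set has size at most $2n-1$.

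The main obstacle is precisely this last step: the conservation/counting argument only yields $2n$, and recovering the final unit forces one to rule out the extremal configuration in which every surviving ($G$-irreducible) adjacency transmits the full amount $2$ — which is exactly where acyclicity (in the form ``some adjacency has projected incidence $0$'', the source of the $-2$ in Lemma~\ref{twoNMinusTwoLemma}) and the strengthened junction estimate are needed. A secondary point requiring care is the bookkeeping in the second preliminary, in particular verifying $G\redeq G''$ so that ``$G$-reducible'' is meaningful for the elements of $G''$.
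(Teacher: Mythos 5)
Your proof is correct, and its skeleton matches the paper's: both rest on the incidence-sum conservation identity, on restricting positive incidence transmission to the at most $n$ $G$-irreducible adjacencies via Lemmas \ref{onlyJunctionAdjacenciesLemma} and \ref{noPositiveJunctionsLemma}, on the per-adjacency cap of $2$, and on the acyclicity fact behind Lemma \ref{twoNMinusTwoLemma} that some adjacency of $G''$ has projected incidence $0$ (the paper organises this bookkeeping through the counts $k_{i,j}$ of adjacencies with projected incidence $i$ and received incidence $j$, you through a sign-and-reducibility partition, but the inequalities are the same). Where you genuinely diverge is in how the final $-1$ is extracted. The paper, after reaching $2n$, takes a projected-incidence-$0$ adjacency with positive received incidence and splits on whether that received incidence is at least $2$ or exactly $1$, disposing of the latter case by a surgery argument that builds a larger graph which is $H$-bounded for an $H$ with the same number of adjacencies as $G$ (Figure \ref{incidenceTransmissionLemma2}). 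You instead prove the sharper local fact that a junction adjacency has received incidence at least $2$ (the two attached, indirectly related sides below a junction side yield, via the most-recent-attached-ancestor chains, two distinct sides lifting to it, distinct precisely because of the MRCA property), so that projected incidence $0$ forces transmission at most $-2$; combined with your tightness analysis of the $2n$ chain, which forces every negative-transmission adjacency to have transmission exactly $-1$ and every $G$-irreducible adjacency to have transmission $2$, this kills the extremal configuration outright with no auxiliary graph construction. This buys a cleaner, more self-contained closing step, and you also make explicit two points the paper leaves implicit (that $G \redeq G''$ so $G$-reducibility is meaningful in $G''$, and that $G''$ has at most $n$ $G$-irreducible adjacencies); the only trade-off is that you establish exactly the stated bound, whereas the paper's manipulation yields the marginally stronger inequality in which the $G$-irreducible transmission-$0$ adjacencies are counted twice — but nothing downstream (in particular Lemma \ref{adjacencyCountingLemma}) uses that stronger form.
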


\begin{proof}
Let $k_{i,j}$ be the number of adjacencies in $G''$ that have a projected incidence of $i$ and a received incidence of $j$. As the sum of projected incidences equals the sum of received incidences therefore:

\[ \sum_{i=0}^{2} \sum_{j=0}^\infty i \mbox{ } k_{i,j} = \sum_{i=0}^{2} \sum_{j=0}^\infty j \mbox{ } k_{i,j}\]

Separating the contributions of adjacencies with a negative incidence transmission:

\[ \sum_{i=0}^{2} \sum_{j=0}^i i \mbox{ } k_{i,j} + \sum_{i=0}^{2} \sum_{j=i+1}^\infty i \mbox{ } k_{i,j} = \sum_{i=0}^{2} \sum_{j=0}^i j \mbox{ } k_{i,j} + \sum_{i=0}^{2} \sum_{j=i+1}^\infty j \mbox{ } k_{i,j},\]

\[ \sum_{i=0}^{2} \sum_{j=0}^{i-1} (i - j) \mbox{ } k_{i,j}  = \sum_{i=0}^{2} \sum_{j=i+1}^\infty (j - i) \mbox{ } k_{i,j},  \]

\[ \sum_{i=0}^{2} \sum_{j=0}^{i-1} (i - j) \mbox{ } k_{i,j} - \sum_{i=0}^{2} \sum_{j=i+1}^\infty (j - i - 1) \mbox{ } k_{i,j} = \sum_{i=0}^{2} \sum_{j=i+1}^\infty  k_{i,j}.  \]

The first term of the left-hand side of the equation is the total incidence transmission of all adjacencies in $G''$ with a positive incidence transmission. Using Lemma \ref{noPositiveJunctionsLemma}, these adjacencies must all be $G$-irreducible. Let $k$ be the number of $G''$-irreducible adjacencies that have an incidence transmission of 0, as: \[ \sum_{i=0}^{2} \sum_{j=0}^{i-1} k_{i,j} < n - k, \] \[ \sum_{i=0}^{2} \sum_{j=0}^{i-1} (i - j) \mbox{ } k_{i,j} \leq 2 \sum_{i=0}^{2} \sum_{j=0}^{i-1} k_{i,j} \leq 2n - 2k\] therefore by substitution:
  
  \[ 2n - \sum_{i=0}^{2} \sum_{j=i+1}^\infty (j - i - 1) \mbox{ } k_{i,j} \geq \sum_{i=0}^{2} \sum_{j=i+1}^\infty  k_{i,j} + 2k  \]
  
  The right-hand side of the inequality is the number of adjacencies with a negative incidence transmission plus two times the number of $G$-irreducible adjacencies with an incidence transmission of 0.
  
 As $ \sum_{i=0}^{2} \sum_{j=i+1}^\infty (j - i - 1) k_{i,j}$ can not  be negative, it remains only to prove that this term must be positive. Assume that there are $2n$ or more adjacencies that either have a negative incidence transmission, or which are $G$-irreducible and have an incidence transmission of 0 (i.e. a contradiction of the lemma). As $n > 0$, there must be at least one adjacency in $G''$ with a projected incidence of 0 and a received incidence of greater than 0 in some ancestral thread (i.e. $\sum_{j=1}^\infty k_{0,j} > 0$). Either such an edge has a received incidence of 2 or greater, in which case the considered term must be positive, or a larger graph exists (see Figure \ref{incidenceTransmissionLemma2}) that is $H$-bounded and $H$-unbridged with respect to a graph $H$, which has the same number of adjacencies as $G$ and an extra edge with a projected incidence of 0 and a received incidence of 2 or greater, which implies that $\sum_{i=0}^{2} \sum_{j=i+1}^\infty  k_{i,j} + 2k < 2n$. In either case we derive a contradiction to the assumption of the number of adjacencies, therefore:
  
  \[ 2n - 1 \ge  \sum_{i=0}^{2} \sum_{j=i+1}^\infty k_{i,j} + 2k. \]
  
 \end{proof}
 
 \begin{figure}[h!]
\begin{center}
\includegraphics[width=3cm]{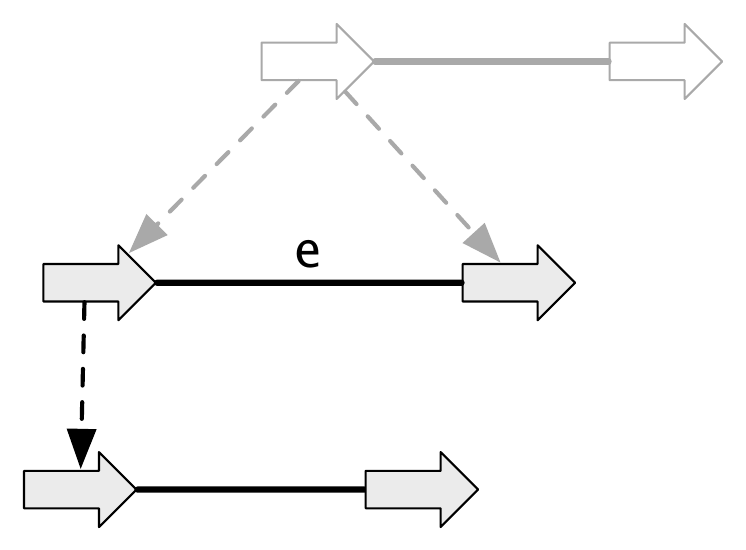}
\caption{Ignoring the grey elements, if there exists an adjacency $e$ in a $G$-bounded and $G$-unbridged graph with no projected incidences and one received incidence, it must be $G$-irreducible and a larger graph, shown by the elements in grey, exists that is $H$-bounded for a graph $H$ with the same number of adjacencies as $G$.}
\label{incidenceTransmissionLemma2}
\end{center}
\end{figure}

%- \sum_{i=0}^{2} \sum_{j=0}^i j \mbox{ } k_{i,j}

%Let $k_{i,j}$ be the number of $G$-reducible adjacencies in $G''$ that have a projected incidence of $i$ and a received incidence of $j$. Let $I$ and $J$ be the sum of projected and received incidences, respectively, of $G$-irreducible adjacencies in $G''$. As the sum of projected incidences equals the sum of received incidences therefore:

%\[I + \sum_{i=0}^{2} \sum_{j=0}^\infty i \mbox{ } k_{i,j} = J + \sum_{i=0}^{2} \sum_{j=0}^\infty j \mbox{ } k_{i,j}\]
 %\[ I - J =  \sum_{i=0}^{2} \sum_{j=0}^\infty (j - i) k_{i,j} \]
 
 %From Lemma \ref{onlyJunctionAdjacenciesLemma}, $k_{i,j}$ is only non-zero when $j$ is 2 or greater, as all $G$-reducible adjacencies are junctions, therefore we can modify the equation to state:
 
  %\[ I - J =  \sum_{i=0}^{2} \sum_{j=i+1}^\infty (j - i) k_{i,j} \]
    %\[ I - J - \sum_{i=0}^{2} \sum_{j=i+1}^\infty (j - i - 1) k_{i,j} =  \sum_{i=0}^{2} \sum_{j=i+1}^\infty k_{i,j}, \]
    
    %where the right-hand side of the equation is the number of $G$-reducible adjacencies in $G''$ with a negative incidence transmission.
  %As $I \le 2n$, neither $J$ or $ \sum_{i=0}^{2} \sum_{j=i+1}^\infty (j - i - 1) k_{i,j}$ can be negative, and because there must be at least one adjacency in $G''$ with a projected incidence of 0 either $J > 0$ or there exists a $G$-reducible junction with a projected incidence of 0 (i.e. if $J = 0$ then $\sum_{j=2}^\infty k_{0,j} > 0$), therefore $\sum_{i=0}^{2} \sum_{j=i+1}^\infty (j - i - 1) k_{i,j} > 0$, in either case therefore:
  
   %\[ 2n - 1 \ge  \sum_{i=0}^{2} \sum_{j=i+1}^\infty k_{i,j} \]
  
%\end{proof}

\begin{lemma}
\label{adjacencyCountingLemma}
The $G$-unbridged graph $G''$ of a $G$-bounded history graph $G'$ contains less than or equal to $3n - 3$ $G$-reducible adjacencies with an incidence transmission of 0. 
\end{lemma}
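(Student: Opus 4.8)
The plan is to reduce the count to a single special kind of adjacency and then finish with an incidence argument in the style of Lemma~\ref{twoNMinusOneLemma}, with the absence of $G$-reducible ping adjacencies supplying the decisive structural input. The case $n=0$ is Lemma~\ref{nIs0Lemma}, so assume $n\ge1$. By Lemma~\ref{onlyJunctionAdjacenciesLemma} every $G$-reducible adjacency of $G''$ is a junction adjacency, hence has an endpoint $x_\alpha$ that is the MRCA of two attached, indirectly related sides $u_\alpha,v_\alpha$, which diverge into distinct children of $x$; on each of the paths from $x_\alpha$ down to $u_\alpha$ and to $v_\alpha$ the closest attached descendant of $x_\alpha$ has $x_\alpha$ as its lifting ancestor (as $x_\alpha$ is attached), so $x_\alpha$ carries at least two incident lifted adjacencies and the received incidence of the adjacency is at least $2$. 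If the adjacency also has incidence transmission $0$, then its received incidence equals its projected incidence, which is at most $2$, so both are $2$, $x_\alpha$ carries exactly two incident lifted adjacencies, and the other endpoint $y_\beta$ carries none; call such an adjacency \emph{balanced} and write $N_0$ for their number, so the lemma is the claim $N_0\le 3n-3$.

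The heart of the matter is that balanced adjacencies cannot chain. Form the directed multigraph $D$ on the adjacencies of $G''$ with an arc $e\to e'$ for each endpoint of $e'$ that is the lifting ancestor of an endpoint of $e$; then in $D$ the in-degree of an adjacency is its received incidence and the out-degree is its projected incidence. For a balanced adjacency $e$ the zero-incidence endpoint $y_\beta$ has no attached descendant (a closest attached descendant would lift an adjacency onto $y_\beta$, as $y_\beta$ is itself attached), so $y_\beta$ is a hanging side, while the projected incidence of $e$ being $2$ forces both lifting ancestors of $e$'s endpoints to be attached. Now suppose $e\to e'$ with both balanced: the endpoint of $e'$ that is the lifting ancestor of an endpoint of $e$ receives a lift, so it is not the zero-incidence endpoint of $e'$; thus $e$ and $e'$ each have a hanging side and $e$ has an endpoint whose most recent attached ancestor is incident with $e'$, i.e.\ $(e,e')$ is a pair of ping-pong adjacencies with $e$ the ping. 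Since $e$ is $G$-reducible and lies in $G''$, this contradicts $G$-boundedness, so the balanced adjacencies form an independent set in $D$.

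Finally the count. Each balanced adjacency has in-degree $2$ and out-degree $2$, and by independence all $2N_0$ arcs into balanced adjacencies issue from non-balanced ones, so $2N_0\le(\#\text{arcs of }D)-2N_0=S-2N_0$, where $S$ is the incidence sum of $G''$; hence $4N_0\le S$. Lemma~\ref{twoNMinusTwoLemma} (whose proof applies verbatim to $G''$) gives $S\le 2|E_{G''}|-2$, and $|E_{G''}|=n+N_{\mathrm{neg}}+N_0$, since the $n$ $G$-irreducible adjacencies of $G$ all persist in $G''$, the balanced adjacencies are exactly the reducible junctions of transmission $0$, $N_{\mathrm{neg}}$ counts the reducible junctions of negative transmission, and there are none of positive transmission by Lemma~\ref{noPositiveJunctionsLemma}; moreover $N_{\mathrm{neg}}\le 2n-1$ by Lemma~\ref{twoNMinusOneLemma}. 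Assembling, $4N_0\le 2(n+N_{\mathrm{neg}}+N_0)-2$, so $N_0\le n+N_{\mathrm{neg}}-1\le 3n-2$. Squeezing out the last unit to reach $3n-3$ --- which is also what is needed for the tightness asserted in Theorem~\ref{finiteGBoundedTheorem} --- requires a sharper version of one of these estimates, and I expect this constant-level bookkeeping to be the main obstacle: the structural argument already forces $N_0=O(n)$, but the exact constant needs a careful case analysis of how the two received incidences of a balanced adjacency sit relative to the adjacency onto which it projects, together with the interaction with the $G$-irreducible adjacencies counted in Lemma~\ref{twoNMinusOneLemma}.
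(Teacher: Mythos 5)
Your structural analysis reproduces the first half of the paper's argument faithfully: members of the class in question (your ``balanced'' adjacencies, the paper's set $X$) are junction adjacencies with a hanging endpoint, received and projected incidence both equal to $2$, and the absence of $G$-reducible ping adjacencies in the $G$-unbridged graph forbids one from projecting onto another. Your counting is packaged a little differently (you bound $4N_0$ against the incidence sum of $G''$, whereas the paper bounds $2|X|$ against the total projected incidence of the adjacencies outside $X$, at most $2(n+2n-1)-2 = 6n-4$), but both routes land on exactly the same intermediate bound, $N_0 \le 3n-2$.

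The gap is the final unit, and it is a genuine one: the lemma asserts $3n-3$, you prove $3n-2$, and you explicitly defer the improvement. The missing idea is the second paragraph of the paper's proof. Assuming $G''$ contains at least one $G$-reducible adjacency, pick a $G$-reducible junction adjacency $e$ lying in a thread that is ancestral or unrelated to every thread containing a $G$-reducible adjacency or label (a ``most ancestral'' reducible adjacency). Then a case analysis shows the counting cannot be tight: if $e$ projects incidences at all, they land on $G$-irreducible adjacencies, i.e.\ there exist arcs of your digraph $D$ between adjacencies outside $X$, so the budget $6n-4$ is not consumed entirely by projections out of $X$; if $e$ projects nothing, it has negative transmission, and either it is hanging, in which case it must receive projections from adjacencies not in $X$ (else a $G$-reducible ping adjacency appears), or it is not hanging, and the enlargement trick of Figure \ref{incidenceTransmissionLemma}(B) produces an $H$-bounded graph for an $H$ with $n$ adjacencies forcing strictly fewer than $2n-1$ negative-transmission reducible adjacencies (i.e.\ $N_{\mathrm{neg}} \le 2n-2$ in your notation). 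In every case either your inequality $4N_0 \le S$ becomes strict or your bound on $N_{\mathrm{neg}}$ improves by one, and either improvement yields $N_0 \le 3n-3$. Without some argument of this kind --- exploiting the extremal (most ancestral) reducible adjacency --- the constant cannot be tightened, so as written the proposal does not establish the stated lemma.
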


\begin{proof}
Let $X$ be the set of $G$-reducible adjacencies with an incidence transmission of 0 in $G''$. The sum of received incidences equals the sum of projected incidences for members of $X$, therefore the sum of received incidences of other adjacencies in $G''$ ($\le 2n-1$ $G$-reducible adjacencies with positive or negative incidence transmission by Lemmas \ref{noPositiveJunctionsLemma} and \ref{twoNMinusOneLemma} and $\le n$ $G$-irreducible adjacencies) is equal to the sum of their projected incidences, which by Lemma \ref{twoNMinusTwoLemma} is at most $2(n + 2n - 1) - 2 = 6n - 4$. By Lemma \ref{onlyJunctionAdjacenciesLemma}, any adjacency $e=\{ x_{\alpha}, y_{\beta} \}$ in $X$ must be a junction adjacency, and, as it has 0 incidence transmission, must have a hanging endpoint and projected incidence of 2 (see Figure \ref{incidenceTransmissionLemma}(A)). Let $e'$ and $e''$ be the adjacencies incident with $A(x_{\alpha})$ and $A(y_{\beta})$, respectively (see Figure \ref{incidenceTransmissionLemma}(A)). As there exist no $G$-reducible ping adjacencies, $e'$ and $e''$ are either $G$-irreducible or $G$-reducible junction adjacencies with a negative incidence transmission. As $e$ projects at least one incidence to each such adjacency, $X$ has a cardinality at most $(6n - 4)/2 = 3n - 2$. It remains to prove that it must be at least one less than this bound.

Now let $e$ be a $G$-reducible junction adjacency in $G''$ that is contained in a thread that is ancestral or unrelated to all threads that contain a $G$-reducible adjacency or label. If $G''$ contains more adjacencies than $G$ then such an adjacency must clearly exist in $G''$. 

If $e$ makes projected incidences to $G$-irreducible adjacencies then it makes projected incidences to adjacencies not in $X$.
If $e$ does not make projected incidences then it has negative incidence transmission, and either $e$ is a hanging adjacency, in which case it must receive projected incidences from adjacencies that are not in $X$ (else there exists a $G$-reducible ping adjacency), or $e$ is not a hanging adjacency and a larger graph exists (see Figure \ref{incidenceTransmissionLemma}(B)) that is $H$-bounded with respect to a graph $H$ with the same number of adjacencies as $G$, in which case, using Lemma \ref{twoNMinusOneLemma}, there must be less than $2n - 1$ $G$-reducible negative transmission incidence adjacencies in $G''$. Therefore either there exist projected incidences made between adjacencies not in $X$ or there are fewer than $2n - 1$ $G$-reducible negative transmission incidence adjacencies in $G''$, either way, there are fewer than $6n - 4$ projections made from adjacencies in $X$ to adjacencies not in $X$, and as there are no projections made between adjacencies in $X$, and all adjacencies in $X$ have a projected incidence of 2, therefore $X$ has cardinality less than $3n - 2$.    
\end{proof}

\begin{figure}[h!]
\begin{center}
\includegraphics[width=5cm]{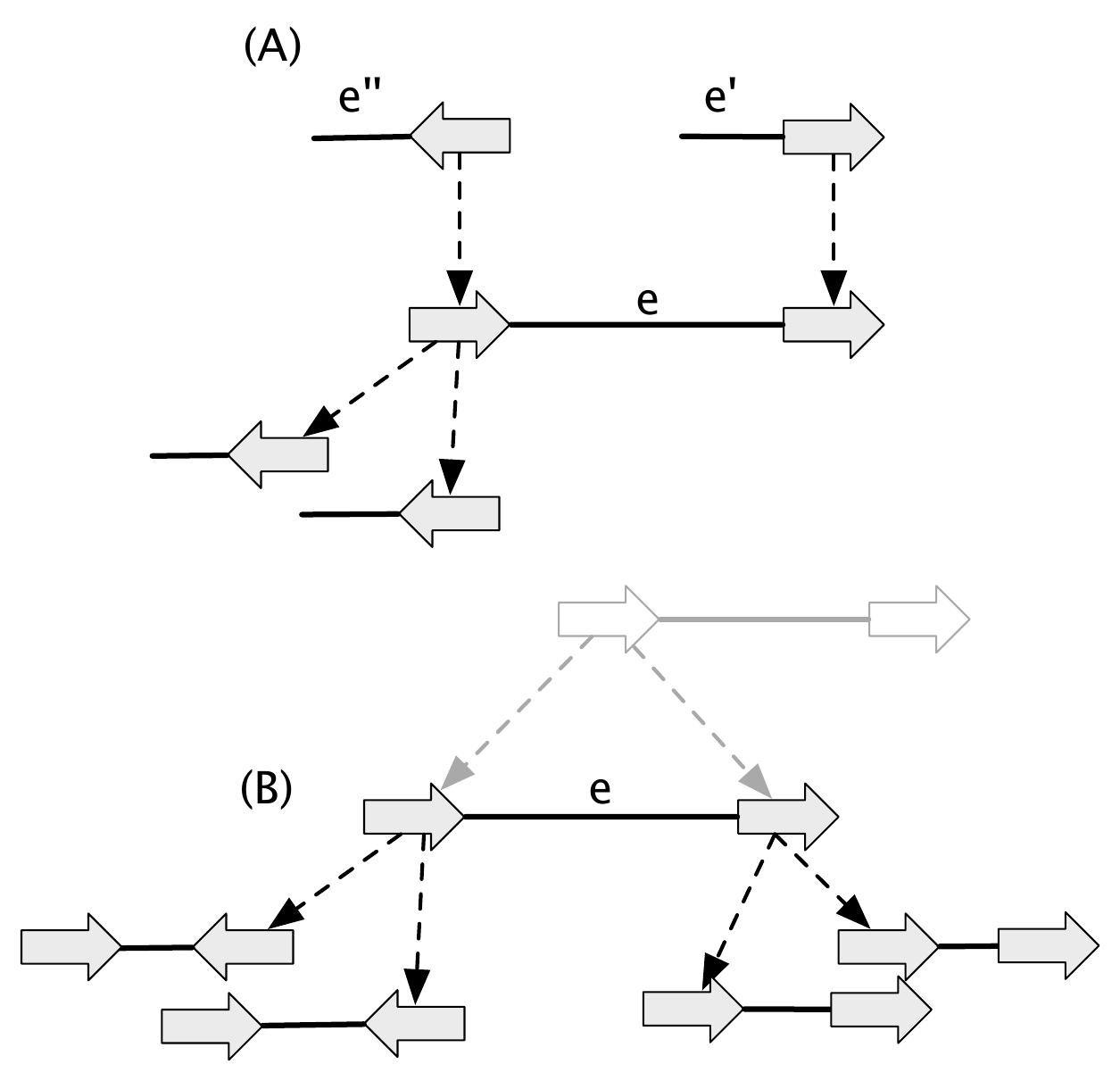}
\caption{\textbf{(A)} A junction adjacency with incidence transmission of 0 in a $G$-bounded AVG. \textbf{(B)} Ignoring the grey elements, if the adjacency $e$ in a $G$-bounded graph is a $G$-reducible junction adjacency with no projected incidences and no hanging endpoints, a larger graph, shown by the elements in grey, exists that is $H$-bounded for a graph $H$ with the same number of adjacencies as $G$.}
\label{incidenceTransmissionLemma}
\end{center}
\end{figure}

\begin{lemma}
\label{junctionAdjacenciesBoundLemma}
A $G$-bounded history graph $G'$ contains less than or equal to $5n - 4$ junction adjacencies.
\end{lemma}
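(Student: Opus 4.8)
The plan is to transfer the count to the $G$-unbridged graph $G''$ of $G'$ and then add the two preceding counting lemmas. First I would note that, since the classes of junction and bridge adjacencies are disjoint by definition, no junction adjacency of $G'$ is among the $G$-reducible bridge adjacencies deleted in passing from $G'$ to $G''$; hence every junction adjacency of $G'$ is still an adjacency of $G''$. By Lemma \ref{onlyJunctionAdjacenciesLemma} the only $G$-reducible adjacencies of $G''$ are themselves junction adjacencies, and (as observed in the proof of Lemma \ref{noPositiveJunctionsLemma}) every junction adjacency has incidence transmission at most $0$.

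The key intermediate claim is that every junction adjacency of $G'$, regarded as an adjacency of $G''$, still has incidence transmission at most $0$ when that quantity is computed in $G''$. For the adjacencies that remain $G$-reducible in $G''$ this is immediate from the two facts just cited. For the $G$-irreducible ones I would argue that deleting $G$-reducible bridge adjacencies cannot push the incidence transmission of such an adjacency above $0$: one checks, case by case, how deleting a bridge adjacency can alter the received and projected incidences of a fixed junction adjacency $e=\{x_\alpha,y_\beta\}$ of $G'$ — using that $e$ itself is never deleted, that the lifting ancestors $A(x_\alpha)$ and $A(y_\beta)$ continue to track whatever attachment $e$ still witnesses, and that the deletions merely redistribute lifted-adjacency incidences among ancestors — and concludes that the inequality (projected incidence) $\le$ (received incidence) is preserved. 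This is the step I expect to be the main obstacle, since it requires carefully following how a side's lifting ancestor migrates when one of its attached ancestors is detached, and verifying that none of these migrations can turn a junction adjacency into one of positive incidence transmission; the bridge-side and ping-adjacency restrictions built into the definition of $G$-bounded are exactly what rule out the bad configurations.

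Granting the claim, the proof finishes by a short bookkeeping argument. Every junction adjacency of $G'$ is an adjacency of $G''$ with incidence transmission $\le 0$, so it lies in exactly one of two disjoint classes: (i) it has negative incidence transmission, or it is $G$-irreducible with incidence transmission $0$; or (ii) it is $G$-reducible with incidence transmission $0$. (A $G$-reducible adjacency of $G''$ with positive incidence transmission is impossible by Lemma \ref{noPositiveJunctionsLemma}, so no junction adjacency of $G'$ escapes these two classes.) By Lemma \ref{twoNMinusOneLemma} the number of adjacencies of $G''$ in class (i) is at most $2n-1$, and by Lemma \ref{adjacencyCountingLemma} the number in class (ii) is at most $3n-3$; adding these disjoint bounds yields at most $(2n-1)+(3n-3)=5n-4$ junction adjacencies in $G'$, which is the asserted bound (recall $n\ge 1$ is assumed throughout this part of the argument).
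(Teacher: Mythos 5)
Your proposal is correct and takes essentially the same route as the paper: both arguments bound the count in the $G$-unbridged graph by combining Lemmas \ref{noPositiveJunctionsLemma}, \ref{twoNMinusOneLemma} and \ref{adjacencyCountingLemma} to obtain $(2n-1)+(3n-3)=5n-4$, and then transfer the bound back to $G'$. The only difference is in the transfer step, where the paper notes (extending Lemma \ref{onlyJunctionAdjacenciesLemma}) that junction adjacencies of $G'$ remain junction adjacencies after the bridge deletions, whereas you instead preserve the weaker property of non-positive incidence transmission; both reduce to the same observation that a deleted bridge side's non-trivial lifts redirect to its lifting ancestor while projected incidences cannot increase, so your sketch is at the same level of ``easily verified'' detail as the paper's.
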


\begin{proof}
From Lemmas \ref{noPositiveJunctionsLemma}, \ref{twoNMinusOneLemma} and \ref{adjacencyCountingLemma} it follows that the unbridged graph of $G'$ contains less than $5n - 4$ junction adjacencies. Extending the argument of Lemma \ref{onlyJunctionAdjacenciesLemma}, it is easily verified that $G'$ contains the same number of junction adjacencies as its unbridged graph. 
\end{proof}

\begin{lemma}
\label{adjacenciesBoundLemma}
A $G$-bounded history graph $G'$ contains less than or equal to $10n - 8$ $G$-reducible adjacencies and $20n - 16$ additional attached vertices. These bounds are tight for all $n \ge 1$.
\end{lemma}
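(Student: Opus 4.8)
The plan is to partition the $G$-reducible adjacencies of a $G$-bounded history graph $G'$ into junction adjacencies and bridge adjacencies and to bound each part by $5n-4$. That these two classes exhaust the $G$-reducible adjacencies is immediate from the definition of $G$-bounded: an adjacency that is neither a junction nor a bridge is a non-minimal element, and a $G$-bounded graph contains no $G$-reducible non-minimal element. We work in the regime $n \ge 1$, the case $n=0$ being Lemma \ref{nIs0Lemma}.

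The junction count is supplied by Lemma \ref{junctionAdjacenciesBoundLemma}, which gives at most $5n-4$ junction adjacencies in all of $G'$, hence at most $5n-4$ that are $G$-reducible. For the bridge adjacencies I would construct an injection into the set of junction adjacencies. If $b=\{x_\alpha, y_\beta\}$ is a $G$-reducible bridge adjacency with bridge side $x_\alpha$, then by definition $A(x_\alpha)$ carries a non-trivial lifted adjacency and $\{A(x_\alpha), A(y_\beta)\}$ is a trivial lifted adjacency; unwinding the triviality definition, $\{A(x_\alpha), A(y_\beta)\}$ is either a genuine adjacency of $G'$ --- and then, having $A(x_\alpha)$, a junction side, as an endpoint, it is a junction adjacency --- or $A(x_\alpha)$ and $A(y_\beta)$ are free-root sides, a subcase to be disposed of separately using the fact that a free root has a single child and so a free-root side is never the MRCA of two indirectly related sides and hence never a junction side. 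Send $b$ to this junction adjacency. To see the map is at most one-to-one, suppose $b' = \{x'_\alpha, y'_\beta\}$ is another $G$-reducible bridge adjacency with the same image. Then $x_\alpha$ and $x'_\alpha$ are distinct attached descendants of the junction side $A(x_\alpha)$, each having $A(x_\alpha)$ as its most recent attached ancestor, so their most recent common ancestor is an unattached junction side lying strictly below $A(x_\alpha)$ on a lifting path, contradicting the triviality of $\{A(x_\alpha), A(y_\beta)\}$ demanded of $b$. Hence there are at most $5n-4$ bridge adjacencies and at most $10n-8$ $G$-reducible adjacencies in $G'$.

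For the vertex bound, observe that a vertex of $G'$ that is additional (outside the distinguished copy of $G$) and attached must be incident with an adjacency not belonging to $G$, and such an adjacency is necessarily $G$-reducible: an attached vertex is neither a free-child nor a free-parent, so it cannot be eliminated by contracting a branch, and therefore all of its incident adjacencies must be deleted somewhere along any reduction sequence from $G'$ to $G$. Assigning to each additional attached vertex one of its incident $G$-reducible adjacencies yields a map in which every adjacency, having just two endpoints, is the image of at most two vertices; hence the number of additional attached vertices is at most $2(10n-8) = 20n-16$.

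Tightness for every $n \ge 1$ would be established by an explicit family: for each $n$, a history graph with $n$ adjacencies assembled from $n$ copies of a small gadget that individually saturates the ``one bridge per junction adjacency'' and ``two additional attached vertices per $G$-reducible adjacency'' inequalities, with one distinguished adjacency of projected incidence $0$ carrying the slack that produces the $-8$ and $-16$ constants, together with a $G$-bounded extension of it that attains both maxima simultaneously. I expect the main difficulties to be, first, making the bridge-to-junction injection fully rigorous --- reconciling the triviality condition, the single-non-trivial-lifted-adjacency condition, and the ``deletion of $b$ leaves a junction side'' condition in the bridge definition, and closing the free-root subcase --- and, second, checking that the extremal gadget constructions really do remain $G$-bounded (in particular that no $G$-reducible ping adjacency is created) while hitting the constants on the nose.
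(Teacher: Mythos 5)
Your overall architecture matches the paper's: split the $G$-reducible adjacencies into junctions and bridges, cap the junctions at $5n-4$ by Lemma \ref{junctionAdjacenciesBoundLemma}, relate each bridge to the junction adjacency $\{A(x_\alpha),A(y_\beta)\}$ it lifts to, and get the vertex bound by the factor of two from adjacency endpoints. The gap is in the injectivity of your bridge-to-junction map, and it is a genuine one. You argue that if two bridge adjacencies $\{x_\alpha,y_\beta\}$ and $\{x'_\alpha,y'_\beta\}$ had the same image, then the MRCA of $x_\alpha$ and $x'_\alpha$ would be an unattached junction side \emph{strictly below} $A(x_\alpha)$ on the lifting path, contradicting triviality. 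But the MRCA of $x_\alpha$ and $x'_\alpha$ can perfectly well be $A(x_\alpha)$ itself, which is attached; in that case no unattached junction side appears strictly between $A(x_\alpha)$ and either bridge side, the lifted adjacency remains trivial, and both adjacencies satisfy every clause of the bridge-side definition (including the ``remains a junction after deletion'' clause, since deleting one bridge leaves the other's endpoint as an attached, indirectly related descendant). The paper's Figure \ref{lessBridgesThanJunctions}(A) exhibits exactly this: a single junction adjacency with two bridge adjacencies lifting to it. So the map can be two-to-one (or worse), and your proof as written only yields a bound of the form $i + j \le i + (\text{multiplicity})\cdot i$, not $10n-8$.

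The paper closes this case not by injectivity but by an exchange argument: assuming more than $5n-4$ $G$-reducible bridges, pigeonhole over the at most $5n-4$ junction adjacencies produces two bridges $\{x_\alpha,y_\beta\}$, $\{w_\alpha,z_\beta\}$ with $\{A(x_\alpha),A(y_\beta)\} = \{A(w_\alpha),A(z_\beta)\}$, and such a configuration can be rebuilt (Figure \ref{lessBridgesThanJunctions}(B)) into an extension of a graph with the same number of adjacencies that carries one additional $G$-reducible junction adjacency, contradicting Lemma \ref{junctionAdjacenciesBoundLemma}. So the junction bound is re-used as the binding constraint, rather than a one-to-one correspondence. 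Your handling of the free-root subcase and the factor-two vertex count are fine and agree with the paper, and your tightness plan is in the right spirit (the paper gives the explicit extremal family in its figures), but to repair the proof you need either this exchange/contradiction step or some other device that controls how many bridges can share a single junction image; the triviality condition alone does not do it.
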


\begin{proof}
Let $i$ and $j$ be the numbers of $G$-reducible junction and bridge adjacencies in $G'$, respectively. As bridges and junctions are the only $G$-reducible adjacencies in $G'$, $i + j$ is equal to the total number of $G$-reducible adjacencies in $G'$. Assume that $i + j > 10n - 8$. From Lemma \ref{junctionAdjacenciesBoundLemma} it follows that $i \le 5n - 4$, therefore $j > 5n - 4$. As $j > 5n - 4$, it follows from Lemma \ref{junctionAdjacenciesBoundLemma} there exists in $G'$ a pair of $G$-reducible bridge adjacencies $\{ x_\alpha, y_\beta \}$, $\{ w_\alpha, z_\beta \}$ such that   $\{ A(x_\alpha), A(y_\beta) \} = \{ A(w_\alpha), A(z_\beta) \}$ (see Figure \ref{lessBridgesThanJunctions}(A)). However, in this case there exists  
an extension $H$ of $G$ that contains the same number of adjacencies as $G'$ but one additional $G$-reducible junction adjacency (see Figure \ref{lessBridgesThanJunctions}(B)), therefore in $H$ the number of $G$-reducible junction adjacencies is greater than $5n-4$, a contradiction of Lemma \ref{junctionAdjacenciesBoundLemma}, therefore $i + j \le 10n-8$. From this bound, trivially, the bound of the number of additional attached vertices follows. Figure \ref{sizeOfGBoundedGraph1} shows both bounds are tight for all $n$.
\end{proof}

\begin{figure}[h!]
\begin{center}
\includegraphics[width=9cm]{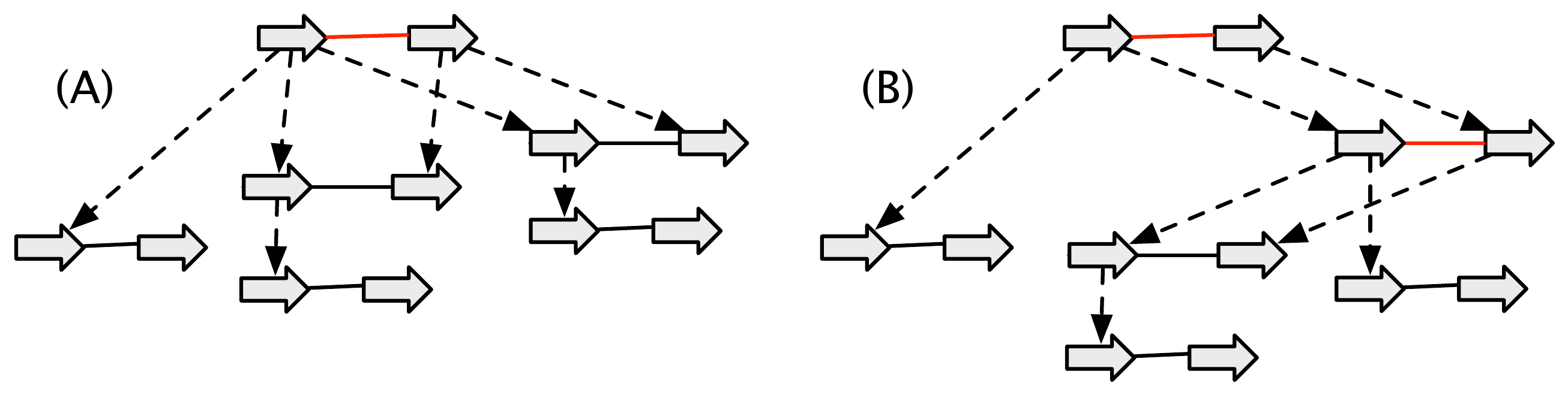}
\caption{\textbf{(A)} A graph with a single junction adjacency (coloured red) and two bridge adjacencies. \textbf{(B)} A graph with the same size and cardinality as that in (A), but with an additional junction adjacency.}
\label{lessBridgesThanJunctions}
\end{center}
\end{figure}

\begin{figure}[h!]
\begin{center}
\includegraphics[width=6cm]{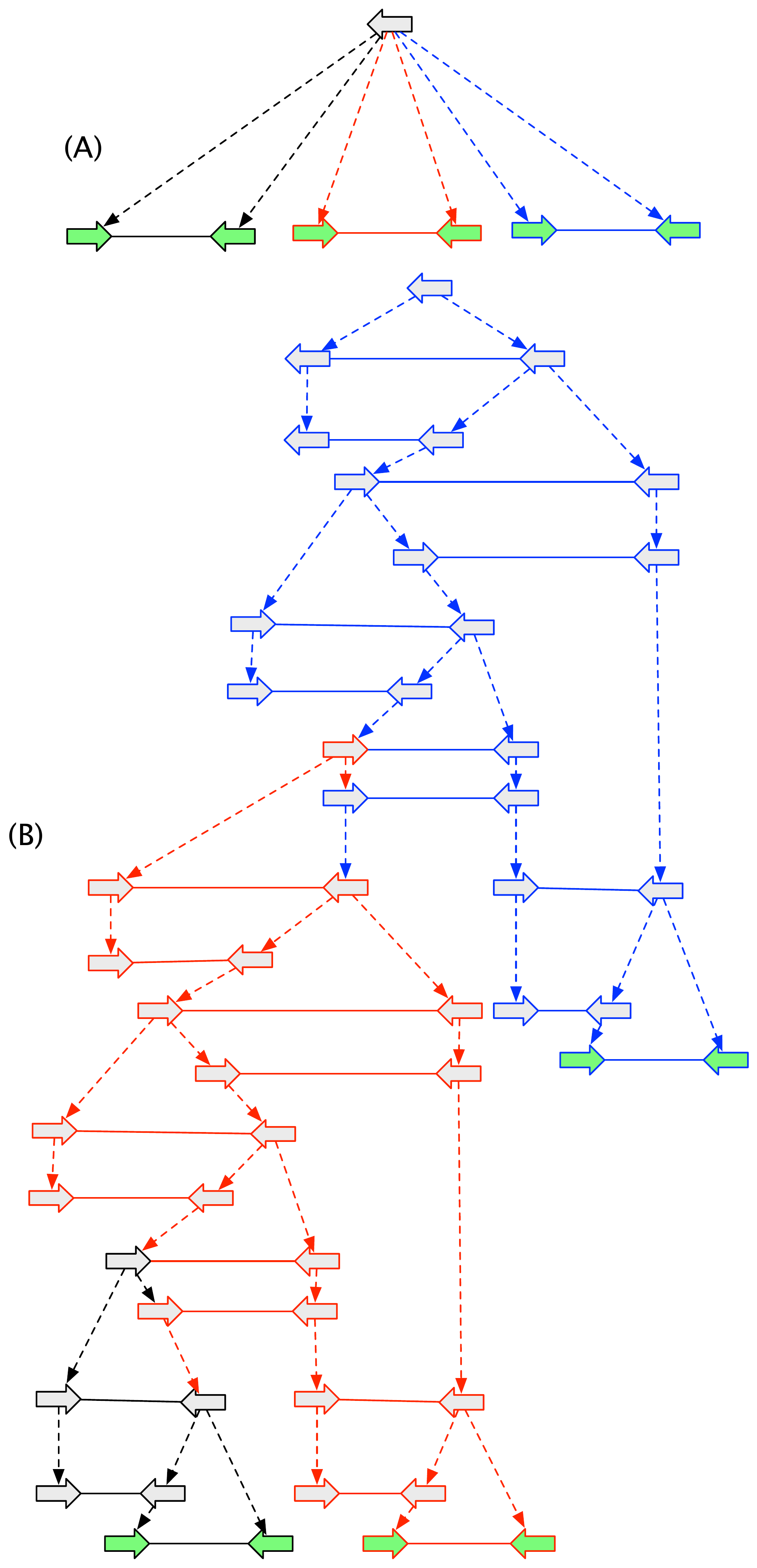}
\caption{\textbf{(A)} A history graph with 3 adjacencies. \textbf{(B)} A $G$-bounded AVG extension of the graph in (A) with 22 extra $G$-reducible adjacencies and 44 extra attached vertices, the maximum possible. The leaf adjacencies connecting the labeled leaves are $G$-irreducible, as there are no other labeled vertices in $G'$. All the other adjacencies are junctions or bridges.
The number of elements in the red subgraph of (B) corresponds to the maximum number of adjacencies and attached vertices  that can be added given the inclusion of the red subgraph in (A), and similarly for the blue subgraph. By extrapolation this demonstrates the bounds on the number of additional adjacencies and attached vertices are tight for all possible numbers of adjacencies in the original graph.}
\label{sizeOfGBoundedGraph1}
\end{center}
\end{figure}

%\textbf{(C)} A history graph with 4 labeled vertices and 0 adjacencies. \textbf{(D)} A $G$-bounded AVG extension of the graph in (C) with 6 extra vertices and 0 extra adjacencies, the maximum possible. \textbf{(E)} A history graph with 1 adjacency and 4 labeled vertices. \textbf{(F)} A $G$-bounded AVG extension of the graph in (E) with 2 extra adjacencies and 11 extra vertices, the maximum possible. 

Let $m$ be the number of labeled vertices in the history graph $G$.   As with the $n=0$ case, the $m=0$ case is similarly trivial, but in terms of the number of $G$-reducible labels. 

  \begin{lemma} \label{mIs0Lemma} 
 If $m=0$ any $G$-bounded extension of $G$ contains 0 labels.
 \end{lemma}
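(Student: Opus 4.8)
The plan is to argue by contradiction via an extremal choice of a ``deepest'' labeled vertex. Suppose $G'$ is a $G$-bounded extension of $G$, that $m = 0$, and, contrary to the claim, that $G'$ contains a labeled vertex. First I would observe that every label of $G'$ is $G$-reducible: since $G$ has no labeled vertices, along any reduction sequence from $G'$ to $G$ every label must disappear, and because deleting the label of a vertex is a reduction operation with no precondition, any such sequence can be modified so that a prescribed label is removed by an explicit label-deletion step (this only makes later branch contractions easier, since the relevant vertex becomes unlabeled). Hence each label in $G'$ is $G$-reducible, so by the definition of a $G$-bounded graph no such label can be a \emph{non-minimal} element; i.e.\ every label of $G'$ must be a junction or a bridge.

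Next I would exhibit a label that is forced to be non-minimal. Pick a branch-tree $T$ of $G'$ that contains a labeled vertex and, within $T$, choose a labeled vertex $z$ with no labeled proper descendant; such a $z$ exists because $T$ is a finite rooted tree, so repeatedly passing from a labeled vertex to a labeled proper descendant must terminate. By this choice, no labeled vertex $y \neq z$ has lifting ancestor $A(y) = z$ (such a $y$ would be a labeled proper descendant of $z$), and the lifted copy of $l(z)$ itself sits on $A(z)$, which is either a strict ancestor of $z$ or the free-root of $T$, not $z$. Therefore the multiset of lifted labels of $z$ is empty, so the label of $z$ is neither a junction (which requires more than one lifted label) nor a bridge (which requires exactly one, non-trivial, lifted label); it is a \emph{leaf} label in the classification of Appendix~B, hence non-minimal.

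Combining the two steps, the label of $z$ is a $G$-reducible non-minimal element of $G'$, contradicting $G$-boundedness; so $G'$ has no labeled vertices and therefore contains $0$ labels. I do not expect a genuine obstacle here — this is the labeling analogue of Lemma~\ref{nIs0Lemma} — but the two bookkeeping points to handle carefully are (i) promoting ``the label eventually vanishes on the way to $G$'' to ``the label is $G$-reducible'', which follows because label-deletion is always a legal reduction move, and (ii) the convention for the lifting ancestor of an already-labeled vertex, which is harmless because $l(z)$ lifts to a strict ancestor or to a free-root and so never contributes to $z$'s own lifted-label multiset. The extremal choice of $z$ is what makes the leaf case unavoidable, and is the only real idea in the proof.
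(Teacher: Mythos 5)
Your proof is correct and follows essentially the same route as the paper, which disposes of the case $m=0$ by appealing to the fact that a $G$-bounded graph contains no $G$-reducible non-minimal elements. Your extremal choice of a deepest labeled vertex (forcing a leaf label) and the promotion of ``the label vanishes'' to ``the label is $G$-reducible'' simply make explicit the details the paper's one-line citation leaves implicit.
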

 \begin{proof}
 Follows from Lemma \ref{noNonMinimalElementsLemma}.
\end{proof}
 
Now assume that $m \ge 1$ and that $n \ge 0$.

\begin{lemma}
A $G$-bounded history graph $G'$ contains less than or equal to $2m - 2$ $G$-reducible vertex labels. This bound is tight for all $m \ge 1$.
\label{labelsBoundLemma}
\end{lemma}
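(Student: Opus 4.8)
The plan is to bound the two kinds of $G$-reducible labels separately and add them. First I would dispose of $m=0$, which is Lemma~\ref{mIs0Lemma}, and assume $m\ge 1$. Since $G'$ is $G$-bounded it contains no $G$-reducible non-minimal element, so by the classification of non-minimal labels every $G$-reducible label of $G'$ is either a \emph{junction} ($|L'_x|\ge 2$) or a \emph{bridge} ($|L'_x|=1$, its lifted label non-trivial, $A(x)$ labeled with $l(x)$, and $\tilde L'_{A(x)}\neq\{\}$). The organising device is the \emph{labeled-vertex forest} $F$ of $G'$: its nodes are the labeled vertices, and the parent of $x$ is $A(x)$ when $A(x)$ is a labeled vertex, while $x$ is a root of $F$ otherwise. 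By definition $|L'_x|$ equals the number $c(x)$ of children of $x$ in $F$, so the leaves of $F$ are precisely the labeled vertices with $L'_x=\{\}$, junctions are the nodes with $c(x)\ge 2$, and bridges are among the nodes with $c(x)=1$.

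Next I would bound the junctions. A labeled vertex with $L'_x=\{\}$ has a non-minimal (leaf) label, hence in the $G$-bounded graph $G'$ it is $G$-irreducible; fixing a reduction sequence from $G'$ to $G$ (which exists since $G'$ extends $G$), that sequence only deletes labels and, because branch contractions merge a labeled vertex with an unlabeled one at most, never merges two labeled vertices, so the $G$-irreducible labels inject into the $m$ labels of $G$ and $F$ has at most $m$ leaves. Writing $\ell\le m$ for the number of leaves of $F$ and $r\ge 1$ for the number of trees of $F$, the leaf-counting identity for forests reads $\ell=r+\sum_{c(x)\ge 1}(c(x)-1)$, so $\sum_{\text{junctions }p}(c(p)-1)\le \ell-r\le m-1$; since $c(p)-1\ge 1$ for every junction, the number $J$ of junction labels satisfies $J\le m-1$. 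This is the label analogue of Lemma~\ref{junctionAdjacenciesBoundLemma}.

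The crux is the bridge bound, since the forest shape alone would give only $\sim 3m-3$. For a bridge $x$ the vertex $A(x)$ is its parent in $F$, is labeled $l(x)$, receives from $x$ a trivial lifted label, and by the bridge condition also carries a non-trivial lifted label, so $|L'_{A(x)}|\ge 2$: every bridge's parent in $F$ is a junction. Moreover, if a junction $p$ has at least one bridge child then $\tilde L'_p\neq\{\}$, so $p$ has a child $z$ in $F$ with $l(z)\neq l(p)$; since a bridge child of $p$ must be labeled $l(p)$, $z$ is not a bridge, and hence the number of bridge children of $p$ is at most $c(p)-1$ — and this also holds trivially when $p$ has no bridge child. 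Summing over junctions and reusing the inequality $\sum_{\text{junctions}}(c(p)-1)\le m-1$, the number $B$ of bridge labels satisfies $B=\sum_{\text{junctions }p}(\text{bridge children of }p)\le m-1$. Therefore $G'$ has $J+B\le 2m-2$ $G$-reducible labels. For tightness I would exhibit, for each $m$, a base graph $G$ with $m$ labeled vertices whose extension is a ``caterpillar'' of $m-1$ junction vertices sitting over the $m$ base labels, each junction additionally carrying one bridge child, with the labels chosen so that all non-triviality conditions are met; this realises $J=B=m-1$ (as in Figure~\ref{sizeOfGBoundedGraph1}, but for labels).

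I expect the main obstacle to be making the bridge step watertight, in particular the observation that a junction cannot have \emph{all} of its children be bridges — it would then carry only trivial lifted labels and its bridge children would be unnecessary bridges — which is exactly what brings the bound down from $3m-3$ to $2m-2$; and, more routinely, the bookkeeping that there are at most $m$ $G$-irreducible labels and the choice of convention for ``ancestor'' so that the labeled-vertex forest is well defined.
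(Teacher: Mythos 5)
Your proposal is correct and takes essentially the same route as the paper's proof: both rest on the facts that leaf labels (labels with no lifted labels) are non-minimal and hence $G$-irreducible, so there are at most $m$ of them, that every bridge's lifting ancestor is a junction, and that each such junction has a non-bridge lifted child because of its non-trivial lifted label, all combined with tree/forest counting on the label lifts --- the paper simply packages the count as $j \le 2m-2-i$ over child branches of junction vertices rather than your separate $m-1$ bounds for junctions and bridges in the lifted-label forest. Your tightness sketch is the construction of Figure \ref{sizeOfGBoundedGraph2}.
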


\begin{proof}
Let $i$ and $j$ be the number of junction and bridge labels, respectively, in $G'$. By Lemma \ref{noNonMinimalElementsLemma}, the total number of $G$-reducible labels in $G'$ is less than or equal to $i + j$.
The number of bridges $j$ is less than or equal to the total number of child branches of vertices that are label junctions, which, as the connected components of branches are trees, is equal or fewer than two times the number of leaf labels minus 2, and therefore equal or fewer than $2m - 2$. Furthermore, by definition, the lifting ancestor of a vertex with a bridge label has a non-trivial lifted label, which implies such a vertex's label is not a bridge, therefore $j \le 2m - 2 - i$, therefore $j + i \le 2m - 2$. Figure \ref{sizeOfGBoundedGraph2} shows this bound is tight for all $m$.
\end{proof}

\begin{figure}[h!]
\begin{center}
\includegraphics[width=7cm]{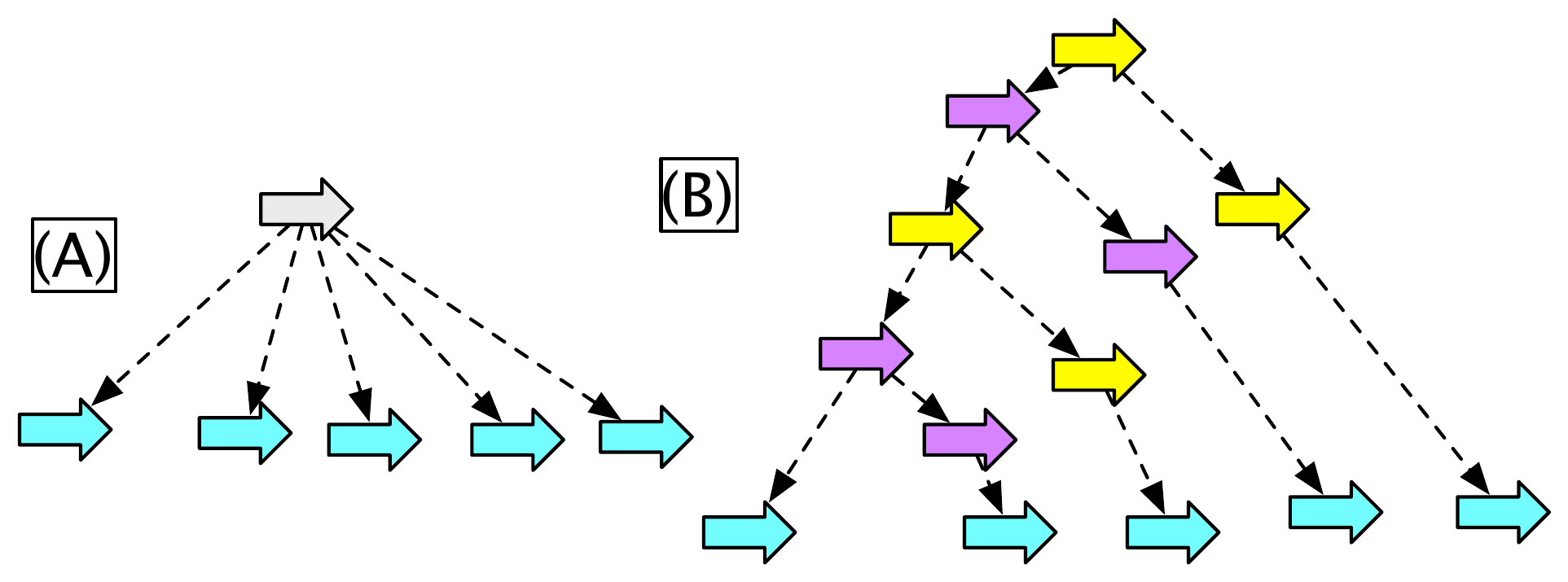}
\caption{\textbf{(A)} A history graph with 5 labeled vertices. \textbf{(B)} A $G$-bounded AVG extension of the graph in (A) with 8 extra labeled vertices, the maximum possible. As only the leaf labels have the given label colour their labels are $G$-irreducible, all the other labels in the graph are bridges or junctions. For each extra labeled leaf vertex added to (A) an extra pair of junction and bridge adjacencies can be added to (B), thus the bound is tight for all $m$.}
\label{sizeOfGBoundedGraph2}
\end{center}
\end{figure}

%\textbf{(C)} A history graph with 4 labeled vertices and 0 adjacencies. \textbf{(D)} A $G$-bounded AVG extension of the graph in (C) with 6 extra vertices and 0 extra adjacencies, the maximum possible. \textbf{(E)} A history graph with 1 adjacency and 4 labeled vertices. \textbf{(F)} A $G$-bounded AVG extension of the graph in (E) with 2 extra adjacencies and 11 extra vertices, the maximum possible. 

We are now in a position to prove the desired theorem for any value of $n$ and $m$.

\newtheorem*{theorem3}{Theorem 3}

\begin{theorem3} 
A $G$-bounded history graph contains less than or equal to $\max(0, 10n - 8)$ $G$-reducible adjacencies and $\max(0, 2m - 2, 20n - 16, 20n + 2m - 18)$ additional vertices. This bound is tight for all values of $n \ge 0$ and $m \ge 0$.
\end{theorem3}

\begin{proof}
Lemmas and \ref{nIs0Lemma} and \ref{adjacenciesBoundLemma} prove the bound on the number of $G$-reducible adjacencies, it remains to prove the bound on the number of additional vertices.  

Let $X$, $Y$ and $Z$ be the total numbers, respectively, of additional attached, labeled and both unattached and unlabeled vertices in $G'$.

From Lemmas \ref{nIs0Lemma} and \ref{adjacenciesBoundLemma} it follows that $X \le \max(0, 20n - 16)$. From Lemmas \ref{mIs0Lemma} and \ref{labelsBoundLemma} it follows that $Y \le \max(0, 2m - 2)$. Combining these results $X + Y \le \max(0, 2m - 2, 20n - 16, 20n + 2m - 18)$. 

%It remains to demonstrate that $X + Y + Z \le \max(0, 2m - 2, 20n - 16, 20n + 2m - 18)$.
Assume $X + Y + Z > \max(0, 2m - 2, 20n - 16, 20n + 2m - 18)$. As $X + Y \le \max(0, 2m - 2, 20n - 16, 20n + 2m - 18)$, $Z \ge 1$.
As $G'$ contains no non-minimal branches, $Z$ is a count of additional root vertices that are unlabeled, unattached and have two or more children, all of which are either labeled, attached or both. 
Using this information, it is straightforward to demonstrate that there exists a modified pair of history graphs $(H, H')$ such that $H$ has the same size and cardinality as $G$, and $H'$ is a $H$-bounded extension of $H$ that has more labeled or attached vertices than $G'$. The existence of $(H, H')$ contradicts either or both Lemmas \ref{labelsBoundLemma} or Lemma \ref{adjacenciesBoundLemma}.

Figure \ref{sizeOfGBoundedGraph3} shows this bound is tight for all $n$ and $m$.
\end{proof}

%\begin{figure}[h!]
%\begin{center}
%\includegraphics[width=10cm]{figures/dealingWithZ.pdf}
%\caption{Figure to accompany proof of Theorem \ref{finiteGBoundedTheorem}. \textbf{(A-B)} For the $G$-bounded history graph in (A) the  $G$-bounded graph in (B) has an additional attached vertex. \textbf{(C-D)} For the $G$-bounded history graph in (C) the  $G$-bounded graph in (D) has an additional labeled vertex.}
%\label{dealingWithZ}
%\end{center}
%\end{figure}

\begin{figure}[h!]
\begin{center}
\includegraphics[width=7cm]{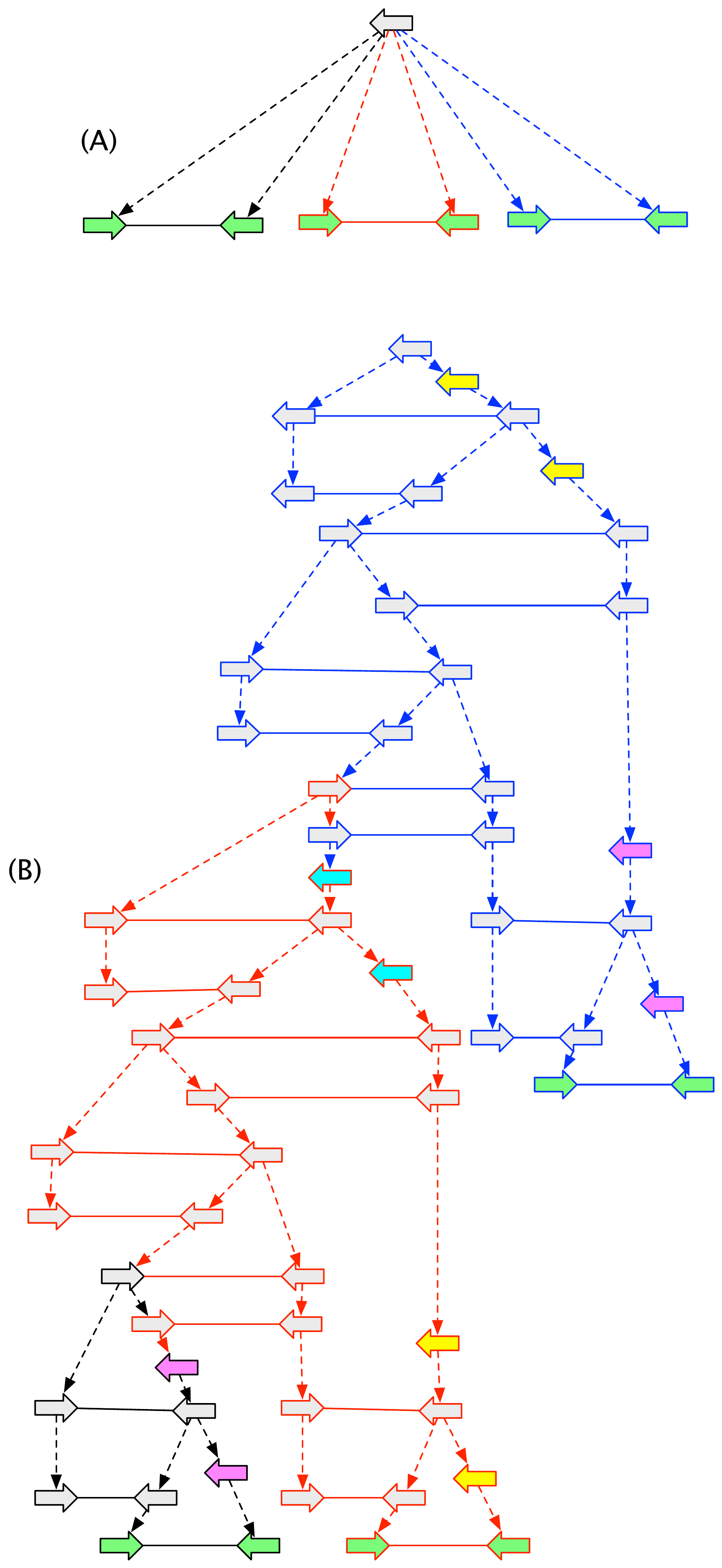}
\caption{\textbf{(A)} The combination of the history graphs in Figures \ref{sizeOfGBoundedGraph1}(A) and  \ref{sizeOfGBoundedGraph2}(B), constructed by merging their root vertices. \textbf{(B)} A $G$-bounded AVG extension of the graph in (B) with 22 extra adjacencies and 54 extra vertices, the maximum possible. The colouring of the elements is used to demonstrate the bound is tight for any combination of $m$ and $n$, and follows that used in Figure \ref{sizeOfGBoundedGraph1}.}
\label{sizeOfGBoundedGraph3}
\end{center}
\end{figure}

\subsection{Appendix D}

In this section we will prove Theorem \ref{posetTheorem}.

%\begin{definition}
A adjacency $\{ x_{\alpha}, y_{\beta} \}$ is \emph{old} if both $A(x_{\alpha})$ and  $A(y_{\beta})$ are each independently either the side of a free-root or incident with a $G$-irreducible adjacency.
%is the side of a free-root or incident with a $G$-irreducible adjacency and $A(y_{\beta})$ is the side of a free-root or incident with a $G$-irreducible adjacency.
%\end{definition}

\begin{figure}[h!]
\begin{center}
\includegraphics[width=11cm]{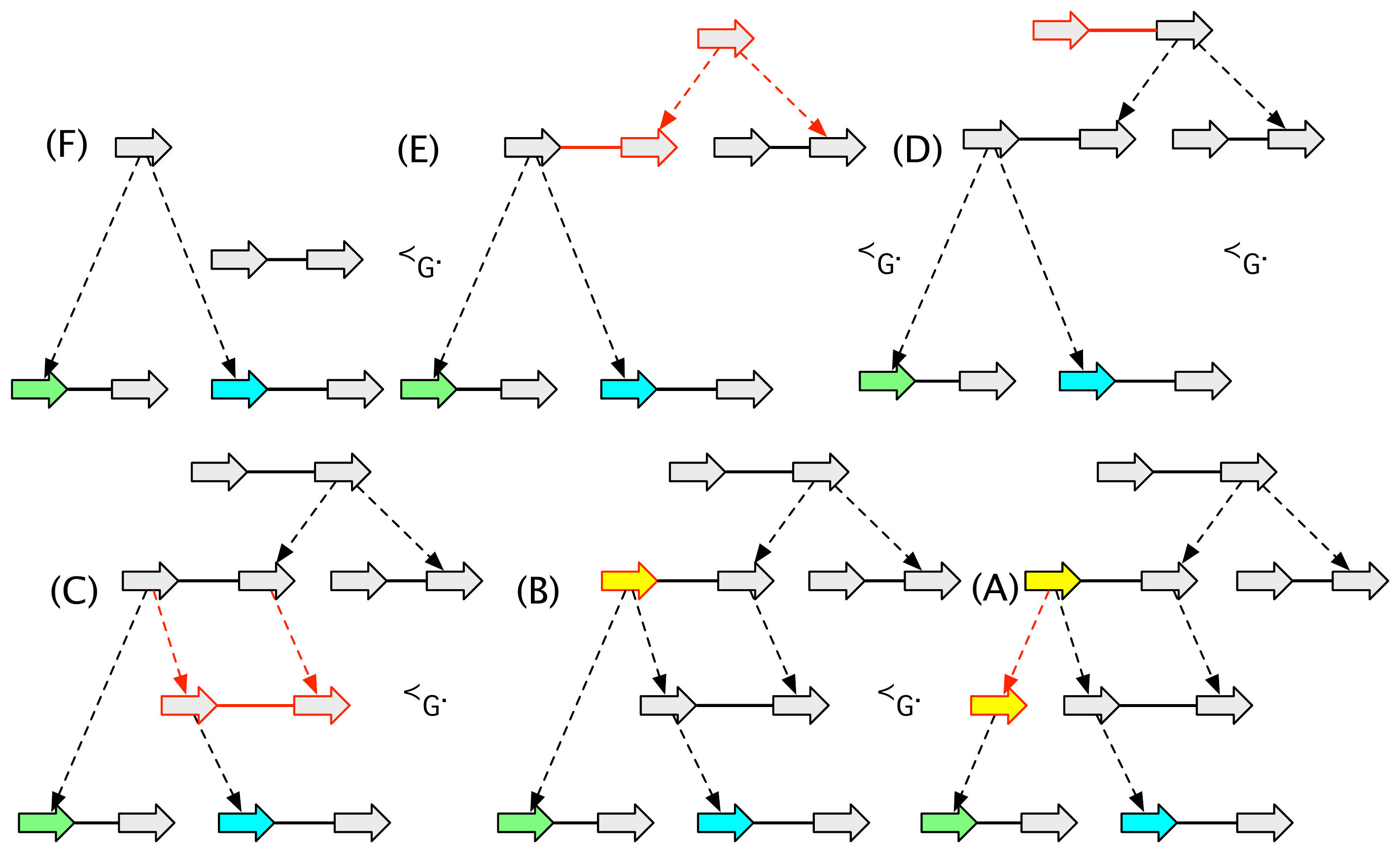}
\caption{A sequence of label/adjacency detachments that transform a $G$-bounded history graph into $G$.  Label detachments of a \textbf{(A-B)} bridge label and \textbf{(B-C)} junction label. Bond detachments of a \textbf{(C-D)} bridge adjacency and (old) \textbf{(D-E-F)} junction adjacencies. Elements outlined in red are those being removed.}
\label{walkingBackwardsExample}
\end{center}
\end{figure}

\begin{lemma}
\label{walkingBackwardsLemma}
For any $G$-bounded history graph $G'$ not isomorphic to $G$ there exists a label detachment or adjacency detachment that results in a $G$-bounded history graph. 
\end{lemma}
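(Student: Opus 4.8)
The plan is to locate a suitable \emph{minimal} $G$-reducible element of $G'$ and show that detaching it (never using a lateral-adjacency detachment) stays inside the $G$-bounded set. First I would observe that $G'$ necessarily contains a $G$-reducible label or adjacency. Since $G'\not\cong G$ there is a non-empty reduction sequence from $G'$ to $G$; its first operation acts on $G'$ itself, so it deletes an adjacency, deletes an isolated vertex, deletes a label, or contracts a branch with a free-child or free-parent. Isolated vertices and branches with a free-child or free-parent are non-minimal, and $G'$, being $G$-bounded, contains no $G$-reducible non-minimal element; hence the first operation deletes a label or an adjacency, and by definition that element is $G$-reducible. Being also minimal, it is a junction or bridge label, or a junction or bridge adjacency. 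So it suffices to exhibit, in each of a small number of cases, a detachment of such an element that produces a $G$-bounded graph. Throughout, that the detached graph $G''$ satisfies $G\redeq G''\redeq G'$ is immediate because a label (resp. adjacency) detachment is by construction a sequence of reduction operations acting only on $G$-reducible elements (the chosen label/adjacency by choice; everything touched by composite minimisation by its own $G$-reducibility guards), invoking Lemma~\ref{validPermutationsLemma} to keep $G$ a reduction.

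Case 1: $G'$ has a $G$-reducible label. I would just perform a label detachment of it. The point to verify is that the result is $G$-bounded. Deleting the label of a vertex $x$ changes no adjacency incidence, and composite minimisation of $x$ only contracts branches through an (already) unattached, unlabeled vertex and possibly deletes it once isolated; such contractions leave the most-recent-attached-ancestor relation among attached sides and the descendant relation unchanged (up to renaming), so every lifted adjacency, every adjacency classification, and every hanging side is preserved, and no new $G$-reducible ping can appear in the $G$-unbridged graph. On the label side, the only vertex whose multiset of lifted labels changes is the lifting ancestor $A(x)$, whose multiset can only grow, so no label anywhere becomes redundant, complicating or an unnecessary bridge; and any isolated vertex or free-child/free-parent branch left behind by composite minimisation is either removed by it or, if not $G$-reducible, harmless for $G$-boundedness. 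Hence $G''$ is $G$-bounded.

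Case 2: $G'$ has no $G$-reducible label. Then by the first paragraph it has a $G$-reducible adjacency, a bridge or a junction. If there is a $G$-reducible bridge adjacency, detach it; otherwise every $G$-reducible adjacency is a junction, and I would first argue that an \emph{old} $G$-reducible junction adjacency exists. Starting from any $G$-reducible junction adjacency $\{x_\alpha,y_\beta\}$ that is not old, one of $A(x_\alpha),A(y_\beta)$ is incident with a further $G$-reducible adjacency (necessarily a junction, since there are no $G$-reducible bridges or labels here); passing to that adjacency moves strictly higher in the branch-trees, and acyclicity of $D(G')$ together with finiteness forces this ascent to terminate at an adjacency both of whose endpoints' lifting ancestors are free-root sides or incident with $G$-irreducible adjacencies, i.e. at an old $G$-reducible junction adjacency. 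Detaching that adjacency (or the bridge adjacency in the other subcase) reroutes exactly the lifted adjacencies that sat at $x_\alpha$ and $y_\beta$ up to the next attached ancestor, and the defining clauses of ``bridge side'' (in particular that $A(x_\alpha)$ or $A(y_\beta)$ remains a junction side after the deletion) and of ``old'' guarantee that this rerouting lands only on junctions, bridges and $G$-irreducible sides and creates no hanging side with a new pong partner; composite minimisation of $x$ and $y$ disposes of the rest, so again $G''$ is $G$-bounded.

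The main obstacle is the $G$-boundedness verification in Case~2, and within it the ping-pong check: detaching an arbitrary $G$-reducible junction adjacency could strand a newly hanging side whose lifting ancestor is incident with another $G$-reducible adjacency, producing a $G$-reducible ping-pong pair in the $G$-unbridged graph, and it is precisely the restriction to \emph{old} adjacencies (after first exhausting all $G$-reducible labels and bridge adjacencies) that rules this out -- the same phenomenon that forced ping adjacencies to be forbidden in the definition of $G$-bounded and in Lemma~\ref{noPingsElementsLemma}. The accompanying non-minimality checks -- that no rerouted lift turns some adjacency into a leaf, redundant, complicating, unnecessary or removable bridge -- are case-heavy but, given the classification of non-minimal labels and adjacencies established in Appendix~B, routine.
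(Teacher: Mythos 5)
Your overall strategy coincides with the paper's: argue that $G'$ must contain a $G$-reducible label or adjacency that is a junction or bridge, then detach labels before adjacencies and, among adjacencies, bridges before junctions, reaching for an \emph{old} junction adjacency in the final case (your ascent argument for the existence of an old $G$-reducible junction adjacency is a reasonable elaboration of a step the paper leaves implicit). However, Case~1 has a genuine gap. You detach an \emph{arbitrary} $G$-reducible label and justify $G$-boundedness by claiming that only the lifting ancestor's multiset of lifted labels changes, so ``no label anywhere becomes redundant, complicating or an unnecessary bridge.'' This is false: the classification of a label depends not only on the vertex's own multiset of lifted labels but also on the identity, label and lifted labels of its \emph{most recent labeled ancestor}, and deleting the label of $x$ changes that ancestor for every descendant that previously lifted to $x$. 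Concretely, take $w \to x \to z$ in a branch-tree with $l(w)=\mathrm{A}$ ($G$-irreducible), $l(x)=l(z)=\mathrm{C}$ (both $G$-reducible), a leaf below $z$ labeled $\mathrm{T}$ and another leaf child of $x$ labeled $\mathrm{G}$: then $x$ carries a junction label and $z$ a bridge label, so the graph can be $G$-bounded; but detaching $x$'s junction label makes $z$'s most recent labeled ancestor $w$ with $l(w)\not=l(z)$, so $z$'s $G$-reducible label becomes complicating, i.e.\ non-minimal, and the result is not $G$-bounded.

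This is exactly why the paper's proof orders the label cases: it detaches $G$-reducible \emph{bridge} labels first, and only when every remaining $G$-reducible label is a junction label does it detach one of those -- at that point no $G$-reducible label's classification can be damaged, since junction status depends only on the vertex's own (unchanged, or only enlarged at $A(x)$) multiset of lifted labels. Your proposal is repaired simply by imposing the same bridge-before-junction ordering on labels that you already impose on adjacencies in Case~2; with that fix the argument matches the paper's, and indeed supplies more explicit detail than the paper does for the existence of an old junction adjacency and for the ping-pong check.
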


\begin{proof}
As $G$ is not isomorphic to $G'$, $G'$ contains one or more $G$-reducible elements. 
If there exists a $G$-reducible label that is a bridge then its label detachment results in a $G$-bounded reduction (see Figure \ref{walkingBackwardsExample}(A-B)). 
Else if there exists a $G$-reducible label it is a junction label and its label detachment results in a $G$-bounded reduction (see Figure \ref{walkingBackwardsExample}(B-C)).
Else if there exists a $G$-reducible adjacency that is a bridge then its adjacency detachment results in a $G$-bounded reduction (see Figure \ref{walkingBackwardsExample}(C-D)). 
Else there exists a $G$-reducible adjacency that is an old junction adjacency and whose adjacency detachment results in a $G$-bounded reduction (see Figure \ref{walkingBackwardsExample}(D-E-F)). 
\end{proof}

The previous lemma implies that for any $G$-bounded history graph there exists a sequence of label and adjacency detachments that results in $G$. We now seek the inverse, to demonstrate the existence of a sequence of moves to create a $G$-bounded AVG from any $G$-bounded history graph.

%\begin{definition}
The inverse of a label/adjacency/lateral-adjacency detachment is, respectively, a \emph{label/adjacency/lateral-adjacency attachment}.
%\end{definition}

\begin{figure}[h!]
\begin{center}
\includegraphics[width=8cm]{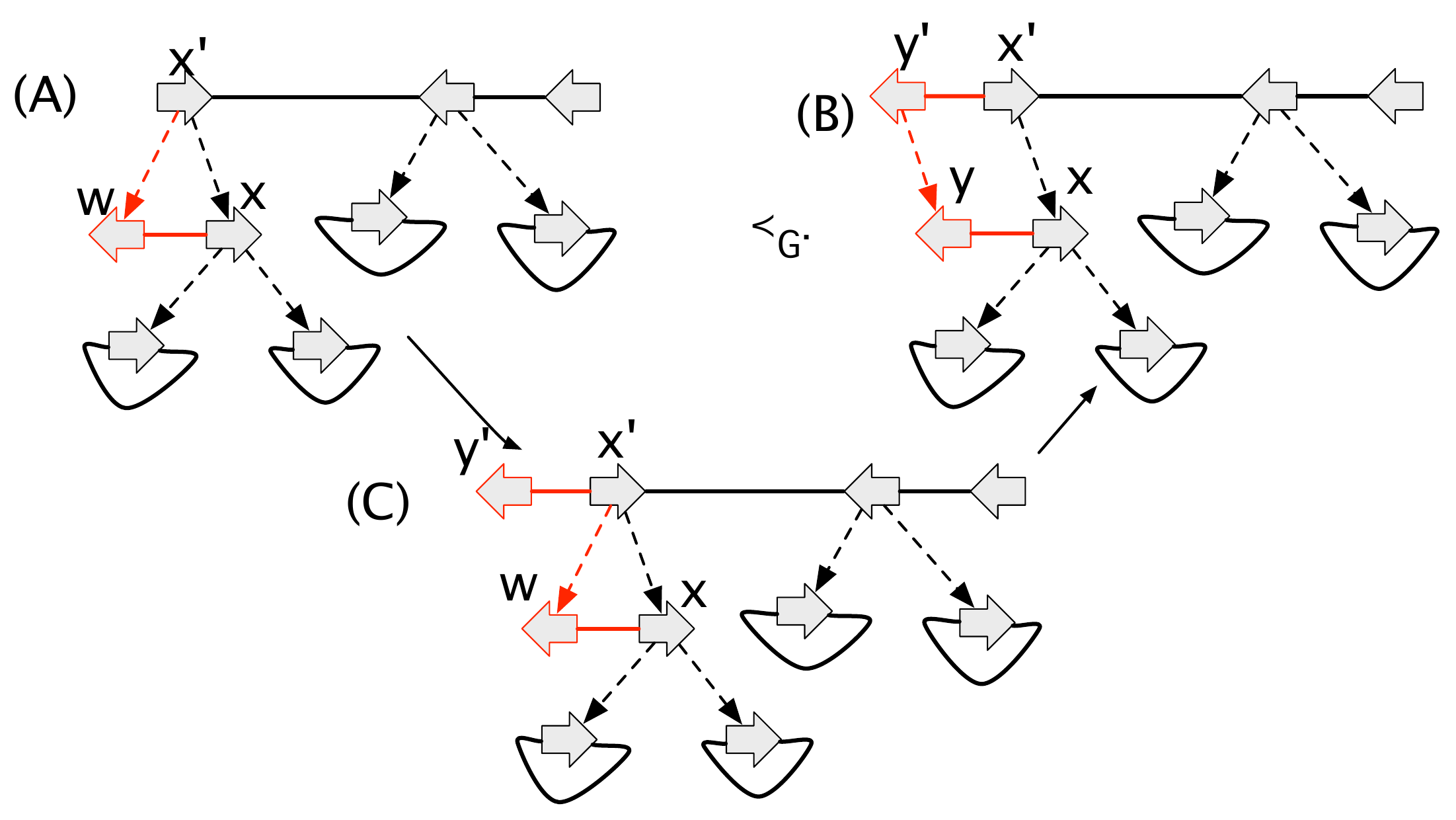}
\caption{\textbf{(A-B)} Lateral-adjacency detachment. Elements in red are $G$-reducible. \textbf{(C)} A not $G$-bounded intermediate of the lateral-adjacency detachment that contains a $G$-reducible ping adjacency.}
\label{lateralBondDetachment}
\end{center}
\end{figure}

The graph in Figure \ref{lateralBondDetachment}(A) has no valid label/adjacency attachment operation that results in a  $G$-bounded history graph, yet it is not an AVG, because it has an unattached junction side, $x'_{tail}$ and any adjacency attachment of $x'_{tail}$ results in the creation of a $G$-reducible ping adjacency. This motivates the need for the lateral-adjacency  detachment/attachment operation, that we use to avoid the creation of $G$-reducible ping adjacencies. 
Notably, while both label detachment and adjacency detachment operations define reductions, the result of a lateral-adjacency detachment, though an extension of $G$, is not necessarily a reduction of the starting graph, though it always has one fewer adjacency. 

\begin{lemma}
\label{walkingForwardsLemma}
A $G$-bounded history graph $G'$ such that $u(G) > 0$ has a label/adjacency/lateral-adjacency attachment that results in a $G$-bounded history graph.
\end{lemma}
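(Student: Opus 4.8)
The plan is to follow the constructive transformation of Appendix~A — the algorithm proving Lemma~\ref{dnaToAVGLemma} — but to realise each of its steps as one of the three $G$-bounded extension operations (a label, adjacency, or lateral-adjacency attachment) and to verify that $G$-boundedness is preserved at each step. Since $u(G')>0$ we have $u_s(G')>0$ or $u_r(G')>0$, and the argument splits along the same four cases as that algorithm: (i) $G'$ has an ambiguous free-root; (ii) $u_s(G')>0$ with no ambiguous free-root; (iii) $u_r(G')>0$ with an unattached junction side; (iv) $u_r(G')>0$ with no unattached junction side. In each case the Appendix~A lemmas already guarantee the move leaves $s_u$ and $r_u$ unchanged and strictly lowers $u$; what has to be added here is the check that the result lies in the $G$-bounded set, i.e. contains no $G$-reducible non-minimal element and has a $G$-unbridged graph with no $G$-reducible ping adjacency.

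In cases (i) and (ii) the required move is a root labeling extension, respectively a substitution ambiguity reducing extension, and I would argue each is an instance of a \emph{label attachment}: labelling the offending root vertex, respectively interpolating a vertex carrying the parent label on the parent branch of a labelled descendant (the inverse of deleting that label together with the composite minimisation it triggers). The work here is purely a bookkeeping of lifted labels: one checks that the label just introduced is a junction or a bridge, hence minimal and $G$-reducible, that no previously minimal label is pushed into a non-minimal class by the change of most-recent-labelled-ancestor relations, and that — since no adjacency or module is touched — the $G$-unbridged graph is unchanged and in particular gains no $G$-reducible ping adjacency. Case~(iv) is handled symmetrically by a rearrangement ambiguity reducing extension realised as an \emph{adjacency attachment}: interpolate a vertex on the parent branch of one endpoint of an ambiguity-creating adjacency and attach it, through a fresh child branch, to a fresh child of the other adjacency's partner; one then checks the new adjacency is a junction (hence minimal) and that, because the new endpoint is supplied with an attached ancestor by the interpolation, no $G$-reducible ping adjacency is created.

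Case~(iii) is the main obstacle. Attaching an unattached junction side $x_\alpha$ in the obvious way — the junction side attachment extension of Appendix~A, which adds a new vertex (or a new child of the vertex adjacent to $A(x_\alpha)$) together with a new adjacency — need not yield a $G$-bounded graph, because, exactly as in Figure~\ref{lateralBondDetachment}, the fresh dangling adjacency it creates can be a $G$-reducible ping adjacency in the $G$-unbridged graph. The resolution is that whenever a plain adjacency attachment would create such a ping adjacency, one instead performs a \emph{lateral-adjacency attachment}: rather than leaving a new dangling adjacency, it redirects an existing $G$-reducible junction adjacency $\{ A(x_\alpha), A(y_\beta) \}$ so that $x_\alpha$ (or $y_\beta$) becomes one of its endpoints, thereby attaching the junction side while destroying — not creating — the ping-pong configuration, and decreasing the total adjacency count. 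The remaining work is the routine but lengthy verification, pattern by pattern along the configurations of Figure~\ref{pingPongs}, that the move so chosen leaves no $G$-reducible non-minimal element and a $G$-unbridged graph free of $G$-reducible ping adjacencies, so that the result is again $G$-bounded while still having smaller ambiguity.
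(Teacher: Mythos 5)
Your proposal matches the paper's proof essentially step for step: the same case split (ambiguous free-root, remaining substitution ambiguity, unattached junction side, no unattached junction side), realising the Appendix A extensions as label/adjacency attachments, and falling back to the lateral-adjacency attachment exactly when a plain attachment of an unattached junction side would create a $G$-reducible ping adjacency because of a hanging adjacency below its most ancestral attached descendant. One small slip, which does not affect the argument: the lateral-adjacency \emph{attachment} used here (the extension direction) deletes one adjacency and creates two, so it increases the total adjacency count --- it is the lateral-adjacency detachment that always reduces it.
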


\begin{proof}
If $G'$ has a free-root $x$ such that $|L'_{x}| > 1$, then the labeling of the root of the branch-tree whose free-root is $x$ is a label attachment that results in a $G$-bounded extension that contains an additional junction label (see Figure 7(A-B) in the main text).
Else, if $G'$ has substitution ambiguity then there exists a labeled vertex with two or more non-trivial lifted labels for which there exists a label attachment that results in a $G$-bounded extension, which contains an additional bridge label (see Figure 7(B-C) in the main text).
Else $G'$ has rearrangement ambiguity. 
If $G'$ has one or more unattached junction sides, let $x_{\alpha}$ be such a side.
If the most ancestral attached descendants of $x_{\alpha}$ are not incident with hanging adjacencies then the creation of an isolated vertex $y$ and adjacency $\{ x_{\alpha}, y_{\alpha} \}$ is an adjacency attachment that results in a $G$-bounded extension (see Figure 7(C-D) in the main text). 
Else there exists a lateral-adjacency attachment that results in a $G$-bounded history graph in which $x_{\alpha}$ is incident with an adjacency with no hanging endpoints (see Figure 7(D-E), the operation is also an adjacency attachment in this example). 
Else $G'$ does not have an unattached junction side, and there exists an attached junction side with two or more incident non-trivial lifted adjacencies for which there exists an adjacency attachment that results in a $G$-bounded extension that contains an additional bridge adjacency (see Figure 7(E-F) in the main text). 
\end{proof}

Given Theorem \ref{finiteGBoundedTheorem}, the previous lemma implies that for any $G$-bounded history graph there exists a sequence of label/adjacency/lateral-adjacency attachment operations that result in a $G$-bounded AVG.

\newtheorem*{theorem4}{Theorem 4}

\begin{theorem4}
%\label{posetTheorem}
The $G$-bounded poset is finite, has a single least element G, its set of maximal elements are AVGs, and if and only if there exists a $G$-bounded reduction operation to transform $G''$ into $G'$ then $G' \red\cdot_G G''$.
%the set of $G$-bounded reduction operations by the covering relation of the partial order is a quotient set.
\end{theorem4}

\begin{proof}
That $G$-bounded is finite follows from Theorem \ref{finiteGBoundedTheorem}. 
Lemma \ref{walkingBackwardsLemma} implies it has a single least element.
As a corollary of Theorem \ref{finiteGBoundedTheorem} and Lemma \ref{walkingForwardsLemma} it follows that the set of maximal elements of the $G$-bounded poset are AVGs. 

It remains to prove $G' \red\cdot_G G''$ if and only if there exists a $G$-bounded reduction operation to transform $G''$ into $G'$. The only if follows by definition.
%If $G' \red\cdot_G G''$ there does not exist a $G'''$ such that $G' \red_G G''' \red_G G''$, therefore by definition there exists a $G$-bounded reduction operation to transform $G''$ into $G'$. However, 
If $G' \red_G G''$ but not $G' \red\cdot_G G''$  then there exists a $G'''$ such that $G' \red_G G''' \red_G G''$. If $G''$ is transformed to $G'$ by a single $G$-bounded reduction operation, to complete the proof it is sufficient to show that no such $G'''$ can exist, this is easily verified. 
\end{proof}

\subsection{Appendix E}

\newtheorem*{lemma9}{Lemma 9}

\begin{lemma9}
There exists a history graph $G$ with an infinite number of $G$-minimal extensions. 
\end{lemma9}

\begin{proof}
We will demonstrate there exists an infinite set of $G$-minimal AVG extensions of the history graph $G$ shown in Figure \ref{infiniteMinimalExtensions}(A). The extensions are composed of the repeating subgraph shown in bold in Figure \ref{infiniteMinimalExtensions}(B) and the terminal elements shown in bold in Figure \ref{infiniteMinimalExtensions}(C) that attach the most ancestral copies of $w$ and $y$.

Consider the AVG extension $H_0$ with zero copies of the repeating subunit and the terminal elements to attach $w^0$ and $y^0$, as in Figure \ref{infiniteMinimalExtensions}(C). 
As $a$, $b$, $c$ and $d$ are labeled but no other vertices are labeled, the adjacencies $\{ a_{head}, b_{head} \}$ and $\{ c_{head}, d_{head} \}$ are $G$-irreducible, because removal of either in any reduction would create a graph that can not then be an extension of $G$.  Given this observation, by definition $w^0$ and $y^0$ or any vertices produced by contracting incident branches of  $w^0$ and $y^0$ must be junctions in any $G$-minimal reduction, and therefore be attached, but by definition of the reduction relation, $\{ w^0_{head}, w^{'0}_{head} \}$ and $\{ y^0_{head}, y^{'0}_{head} \}$ can not be removed and yet $w^0$ and $y^0$ be attached in any $G$-minimal reduction. This therefore implies that the bridge adjacencies $\{ x^0_{head}, x^{'0}_{head} \}$ and $\{ z^0_{head}, z^{'0}_{head} \}$ are also not removed in a $G$-minimal reduction, but this is all the adjacencies in $H_0$, as all the vertices in $H_0$ are attached, therefore $H_0$ is $G$-minimal.

Let $H_i$ be an AVG with $i$ such layers, where $i > 0$ (Figure \ref{infiniteMinimalExtensions}(D) shows an example for $i=2$). To prove that $H_i$ is a $G$-minimal AVG extension we proceed by induction. $H_0$ is the base case. Assume the adjacencies incident with $w^{i-1}$ and $y^{i-1}$ are not removed in any $G$-minimal reduction. Using similar logic to the base case the adjacencies incident $w^i$, $x^i$, $y^i$ and $z^i$ are similarly not removed in a $G$-minimal reduction, again as this is all the added adjacencies and all vertices are attached, using the induction therefore $H_i$ is $G$-minimal.
\end{proof}

\begin{figure}[h!]
\begin{center}
\includegraphics[width=14cm]{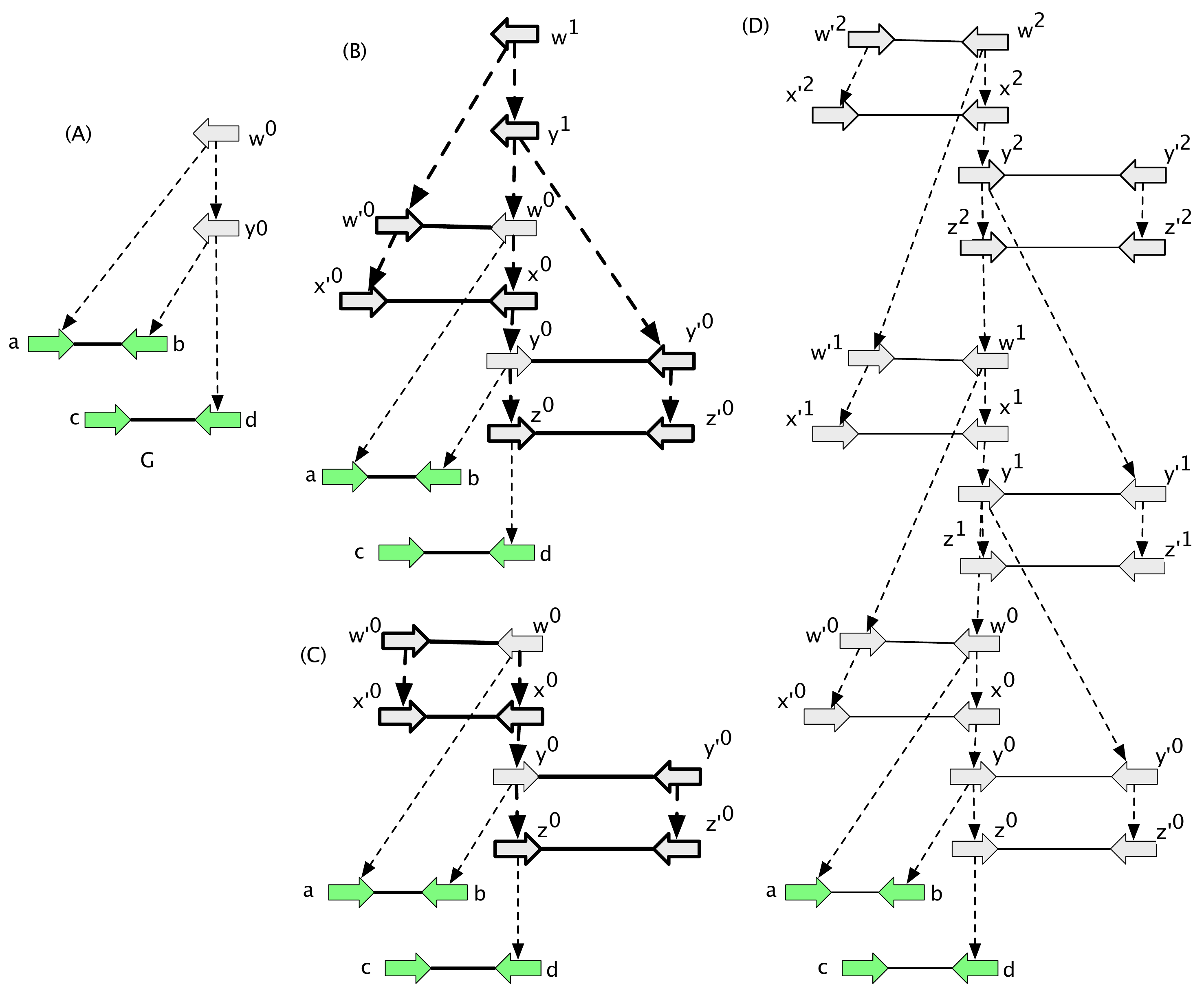}
\caption{An illustration of an infinite set of G-minimal AVG extensions. \emph{(A)} $G$. \emph{(B)} $G$ with an added single copy of the repeating unit in bold. \emph{(C)} $G$ with the added terminal elements in bold. \emph{(D)} A $G$-minimal AVG extension with 2 copies of the repeat subunit and the terminal elements that attach $w^2$ and $y^2$. Vertices are identified by lowercase letters, with superscripts used to denote distinct copies of elements in the repeating subunit.}
\label{infiniteGMinimal}
\end{center}
\end{figure}

\section{Competing interests}
  The authors declare that they have no competing interests.

\section{Author's contributions}
    BP, DZ, GH and DH developed the theory. BP and DZ implemented the theory and performed the experiments. BP wrote the paper, which was edited by DZ, GH and DH.
    
\section{Acknowledgements}
 We would like to thank Dent Earl for his help with figures and the Howard Hughes Medical Institute, Dr. and Mrs. Gordon Ringold, NIH grant 2U41 HG002371-13 and NHGRI/NIH grant 5U01HG004695 for providing funding.

% if your bibliography is in bibtex format, use those commands:
%\bibliographystyle{bmc-mathphys} % Style BST file

\bibliography{manuscript} % Bibliography file (usually '*.bib' )

\begin{thebibliography}{35}
\providecommand{\natexlab}[1]{#1}
\providecommand{\url}[1]{\texttt{#1}}
\expandafter\ifx\csname urlstyle\endcsname\relax
  \providecommand{\doi}[1]{doi: #1}\else
  \providecommand{\doi}{doi: \begingroup \urlstyle{rm}\Url}\fi

\bibitem[Alekseyev and Pevzner(2008)]{Alekseyev:2008vv}
M.~Alekseyev and P.~Pevzner.
\newblock {Multi-break rearrangements and chromosomal evolution}.
\newblock \emph{Theoretical Computer Science}, 395\penalty0 (2-3):\penalty0
  193--202, 2008.
\newblock URL \url{http://portal.acm.org/citation.cfm?id=1385699.1385881}.

\bibitem[Bader(2010)]{Bader:2010dd}
M.~Bader.
\newblock {Genome rearrangements with duplications}.
\newblock \emph{BMC bioinformatics}, 11\penalty0 (Suppl 1):\penalty0 S27--S27,
  Jan. 2010.
\newblock \doi{10.1186/1471-2105-11-S1-S27}.
\newblock URL \url{http://www.biomedcentral.com/1471-2105/11/S1/S27}.

\bibitem[Bergeron et~al.(2006{\natexlab{a}})Bergeron, Mixtacki, and
  Stoye]{Bergeron:2006gw}
A.~Bergeron, J.~Mixtacki, and J.~Stoye.
\newblock {On sorting by translocations}.
\newblock \emph{Journal of computational biology : a journal of computational
  molecular cell biology}, 13\penalty0 (2):\penalty0 567--578, Mar.
  2006{\natexlab{a}}.
\newblock \doi{10.1089/cmb.2006.13.567}.
\newblock URL
  \url{http://www.liebertonline.com/doi/abs/10.1089/cmb.2006.13.567}.

\bibitem[Bergeron et~al.(2006{\natexlab{b}})Bergeron, Mixtacki, and
  Stoye]{Bergeron:2006uj}
A.~Bergeron, J.~Mixtacki, and J.~Stoye.
\newblock {A unifying view of genome rearrangements}.
\newblock \emph{Algorithms in Bioinformatics}, LNBI 4175\penalty0 (WABI
  2006):\penalty0 163--173, 2006{\natexlab{b}}.
\newblock URL \url{http://www.springerlink.com/index/D34102N4132236H7.pdf}.

\bibitem[Bienstock and Langston(1994)]{Bienstock:1994tw}
D.~Bienstock and M.~A. Langston.
\newblock {Algorithmic Implications of the Graph Minor Theorem}, 1994.
\newblock URL
  \url{http://web.eecs.utk.edu/~langston/courses/cs594-fall2003/BL.pdf}.

\bibitem[Blanchette et~al.(2004)Blanchette, Green, Miller, and
  Haussler]{Blanchette:2004bi}
M.~Blanchette, E.~D. Green, W.~Miller, and D.~Haussler.
\newblock {Reconstructing large regions of an ancestral mammalian genome in
  silico}.
\newblock \emph{Genome Research}, 14\penalty0 (12):\penalty0 2412--2423, Dec.
  2004.
\newblock \doi{10.1101/gr.2800104}.
\newblock URL \url{http://genome.cshlp.org/cgi/content/abstract/14/12/2412}.

\bibitem[Bourque and Pevzner(2002)]{Bourque:2002vr}
G.~Bourque and P.~A. Pevzner.
\newblock {Genome-scale evolution: reconstructing gene orders in the ancestral
  species}.
\newblock \emph{Genome Research}, 12\penalty0 (1):\penalty0 26--36, 2002.
\newblock URL \url{http://www.genome.org/cgi/content/full/12/1/26}.

\bibitem[Braga et~al.(2011)Braga, Willing, and Stoye]{Braga:2011kz}
M.~D.~V. Braga, E.~Willing, and J.~Stoye.
\newblock {Double cut and join with insertions and deletions.}
\newblock \emph{Journal of computational biology : a journal of computational
  molecular cell biology}, 18\penalty0 (9):\penalty0 1167--1184, Sept. 2011.
\newblock \doi{10.1089/cmb.2011.0118}.
\newblock URL \url{http://dx.doi.org/10.1089/cmb.2011.0118}.

\bibitem[Caprara(1999)]{Caprara:1999uj}
A.~Caprara.
\newblock {Formulations and hardness of multiple sorting by reversals}.
\newblock \emph{Proceedings of the Third Annual International Conference on
  Computational Molecular Biology (RECOMB 99)}, pages 84--93, 1999.
\newblock URL \url{http://portal.acm.org/citation.cfm?id=299432.299461}.

\bibitem[Chauve et~al.(2013)Chauve, El-Mabrouk, Gueguen, Semeria, and
  Tannier]{Chauve:2013uu}
C.~Chauve, N.~El-Mabrouk, L.~Gueguen, M.~Semeria, and E.~Tannier.
\newblock \emph{{Duplication, Rearrangement and Reconcilliation: A Follow-Up 13
  Years Later}}.
\newblock Springer, models and algorithms for genome evolution edition, Sept.
  2013.
\newblock ISBN 1447152980.

\bibitem[Chindelevitch et~al.(2006)Chindelevitch, Li, Blais, and
  Blanchette]{Chindelevitch:2006vm}
L.~Chindelevitch, Z.~Li, E.~Blais, and M.~Blanchette.
\newblock {On the inference of parsimonious indel evolutionary scenarios}.
\newblock \emph{J Bioinform Comput Biol}, 4\penalty0 (3):\penalty0 721--744,
  June 2006.
\newblock URL
  \url{http://www.worldscinet.com/jbcb/04/0403/S0219720006002168.html}.

\bibitem[Darling et~al.(2010)Darling, Mau, and Perna]{Darling:2010iu}
A.~E. Darling, B.~Mau, and N.~T. Perna.
\newblock {progressiveMauve: multiple genome alignment with gene gain, loss and
  rearrangement}.
\newblock \emph{PloS one}, 5\penalty0 (6):\penalty0 e11147, 2010.
\newblock \doi{10.1371/journal.pone.0011147}.
\newblock URL
  \url{http://www.plosone.org/article/info%3Adoi%2F10.1371%2Fjournal.pone.0011147}.

\bibitem[Day(1987)]{DAY:1987cs}
W.~Day.
\newblock {Computational complexity of inferring phylogenies from dissimilarity
  matrices}.
\newblock \emph{Bulletin of Mathematical Biology}, 49\penalty0 (4):\penalty0
  461--467, 1987.
\newblock \doi{10.1016/S0092-8240(87)80007-1}.
\newblock URL
  \url{http://www.springerlink.com/index/10.1016/S0092-8240(87)80007-1}.

\bibitem[Edmonds and Johnson(1970)]{Edmonds:2003jq}
J.~Edmonds and E.~L. Johnson.
\newblock {Matching: A Well-Solved Class of Integer Linear Programs}.
\newblock pages 27--30. Springer Berlin Heidelberg, Berlin, Heidelberg, Jan.
  1970.
\newblock ISBN 978-3-540-36478-8.
\newblock \doi{10.1007/3-540-36478-1_3}.
\newblock URL \url{http://www.springerlink.com/index/10.1007/3-540-36478-1_3}.

\bibitem[El-Mabrouk(2000)]{ElMabrouk:2000ux}
N.~El-Mabrouk.
\newblock {Sorting signed permutations by reversals and insertions/deletions of
  contiguous segments}.
\newblock \emph{Journal of Discrete Algorithms (Oxford)}, 1\penalty0
  (1):\penalty0 105--121, 2000.
\newblock URL \url{http://www.ams.org/mathscinet-getitem?mr=MR1856837}.

\bibitem[El-Mabrouk and Sankoff(2012)]{ElMabrouk:2012bu}
N.~El-Mabrouk and D.~Sankoff.
\newblock {Analysis of gene order evolution beyond single-copy genes.}
\newblock \emph{Methods in molecular biology (Clifton, NJ)}, 855:\penalty0
  397--429, 2012.
\newblock \doi{10.1007/978-1-61779-582-4_15}.
\newblock URL
  \url{http://eutils.ncbi.nlm.nih.gov/entrez/eutils/elink.fcgi?dbfrom=pubmed&id=22407718&retmode=ref&cmd=prlinks}.

\bibitem[Elias(2006)]{Elias:2006fe}
I.~Elias.
\newblock {Settling the intractability of multiple alignment}.
\newblock \emph{Journal of computational biology : a journal of computational
  molecular cell biology}, 13\penalty0 (7):\penalty0 1323--1339, Sept. 2006.
\newblock \doi{10.1089/cmb.2006.13.1323}.
\newblock URL
  \url{http://www.liebertonline.com/doi/abs/10.1089/cmb.2006.13.1323}.

\bibitem[Felsenstein(2004)]{Felsenstein:2004ws}
J.~Felsenstein.
\newblock {Inferring Phylogenies}.
\newblock \emph{Sinauer Associates}, 2004.
\newblock URL
  \url{http://books.google.com/books?id=GV3cHQAACAAJ&printsec=frontcover}.

\bibitem[Hannenhalli and Pevzner(1999)]{Hannenhalli:1999tb}
S.~Hannenhalli and P.~A. Pevzner.
\newblock {Transforming cabbage into turnip: polynomial algorithm for sorting
  signed permutations by reversals}.
\newblock \emph{Journal of the ACM}, 46\penalty0 (1):\penalty0 1--27, 1999.
\newblock URL
  \url{http://www.ams.org/mathscinet/search/publications.html?pg1=MR&s1=MR1692509}.

\bibitem[Kim and Sinha(2007)]{Kim:2007en}
J.~Kim and S.~Sinha.
\newblock {Indelign: a probabilistic framework for annotation of insertions and
  deletions in a multiple alignment}.
\newblock \emph{Bioinformatics (Oxford, England)}, 23\penalty0 (3):\penalty0
  289--297, Feb. 2007.
\newblock \doi{10.1093/bioinformatics/btl578}.
\newblock URL
  \url{http://bioinformatics.oxfordjournals.org/cgi/content/abstract/23/3/289}.

\bibitem[Ma et~al.(2008)Ma, Ratan, Raney, Suh, Miller, and Haussler]{Ma:2008js}
J.~Ma, A.~Ratan, B.~J. Raney, B.~B. Suh, W.~Miller, and D.~Haussler.
\newblock {The infinite sites model of genome evolution}.
\newblock \emph{Proceedings of the National Academy of Sciences of the United
  States of America}, 105\penalty0 (38):\penalty0 14254--14261, Sept. 2008.
\newblock \doi{10.1073/pnas.0805217105}.
\newblock URL \url{http://www.pnas.org/cgi/content/abstract/105/38/14254}.

\bibitem[Medvedev and Brudno(2009)]{Medvedev:2009fb}
P.~Medvedev and M.~Brudno.
\newblock {Maximum likelihood genome assembly}.
\newblock \emph{Journal of computational biology : a journal of computational
  molecular cell biology}, 16\penalty0 (8):\penalty0 1101--1116, Aug. 2009.
\newblock \doi{10.1089/cmb.2009.0047}.
\newblock URL \url{http://www.liebertonline.com/doi/abs/10.1089/cmb.2009.0047}.

\bibitem[Miller et~al.(2007)Miller, Rosenbloom, Hardison, Hou, Taylor, Raney,
  Burhans, King, Baertsch, Blankenberg, Pond, Nekrutenko, Giardine, Harris,
  Tyekucheva, Diekhans, Pringle, Murphy, Lesk, Weinstock, Lindblad-Toh, Gibbs,
  Lander, Siepel, Haussler, and Kent]{Miller:2007gx}
W.~Miller, K.~Rosenbloom, R.~C. Hardison, M.~Hou, J.~Taylor, B.~Raney,
  R.~Burhans, D.~C. King, R.~Baertsch, D.~Blankenberg, S.~L.~K. Pond,
  A.~Nekrutenko, B.~Giardine, R.~S. Harris, S.~Tyekucheva, M.~Diekhans, T.~H.
  Pringle, W.~J. Murphy, A.~Lesk, G.~M. Weinstock, K.~Lindblad-Toh, R.~A.
  Gibbs, E.~S. Lander, A.~Siepel, D.~Haussler, and W.~J. Kent.
\newblock {28-way vertebrate alignment and conservation track in the UCSC
  Genome Browser.}
\newblock \emph{Genes Dev}, 17\penalty0 (12):\penalty0 1797--1808, Dec. 2007.
\newblock \doi{10.1101/gr.6761107}.
\newblock URL \url{http://genome.cshlp.org/content/17/12/1797.short}.

\bibitem[Paten et~al.(2008)Paten, Herrero, Fitzgerald, Beal, Flicek, Holmes,
  and Birney]{Paten:2008bp}
B.~Paten, J.~Herrero, S.~Fitzgerald, K.~Beal, P.~Flicek, I.~Holmes, and
  E.~Birney.
\newblock {Genome-wide nucleotide-level mammalian ancestor reconstruction.}
\newblock \emph{Genome Research}, 18\penalty0 (11):\penalty0 1829--1843, Nov.
  2008.
\newblock \doi{10.1101/gr.076521.108}.
\newblock URL
  \url{http://eutils.ncbi.nlm.nih.gov/entrez/eutils/elink.fcgi?dbfrom=pubmed&id=18849525&retmode=ref&cmd=prlinks}.

\bibitem[Paten et~al.(2011{\natexlab{a}})Paten, Diekhans, Earl, John, Ma, Suh,
  and Haussler]{Paten:2011fv}
B.~Paten, M.~Diekhans, D.~Earl, J.~S. John, J.~Ma, B.~Suh, and D.~Haussler.
\newblock {Cactus graphs for genome comparisons.}
\newblock \emph{Journal of computational biology : a journal of computational
  molecular cell biology}, 18\penalty0 (3):\penalty0 469--481, Mar.
  2011{\natexlab{a}}.
\newblock \doi{10.1089/cmb.2010.0252}.
\newblock URL
  \url{http://eutils.ncbi.nlm.nih.gov/entrez/eutils/elink.fcgi?dbfrom=pubmed&id=21385048&retmode=ref&cmd=prlinks}.

\bibitem[Paten et~al.(2011{\natexlab{b}})Paten, Earl, Nguyen, Diekhans,
  Zerbino, and Haussler]{Paten:2011fva}
B.~Paten, D.~Earl, N.~Nguyen, M.~Diekhans, D.~Zerbino, and D.~Haussler.
\newblock {Cactus: Algorithms for genome multiple sequence alignment.}
\newblock \emph{Genome Research}, 21\penalty0 (9):\penalty0 1512--1528, Sept.
  2011{\natexlab{b}}.
\newblock \doi{10.1101/gr.123356.111}.
\newblock URL
  \url{http://eutils.ncbi.nlm.nih.gov/entrez/eutils/elink.fcgi?dbfrom=pubmed&id=21665927&retmode=ref&cmd=prlinks}.

\bibitem[Raphael et~al.(2004)Raphael, Zhi, Tang, and Pevzner]{Raphael:2004hj}
B.~Raphael, D.~Zhi, H.~Tang, and P.~Pevzner.
\newblock {A novel method for multiple alignment of sequences with repeated and
  shuffled elements}.
\newblock \emph{Genome Research}, 14\penalty0 (11):\penalty0 2336--2346, Nov.
  2004.
\newblock \doi{10.1101/gr.2657504}.
\newblock URL \url{http://www.genome.org/cgi/content/full/14/11/2336}.

\bibitem[Shao and Lin(2012)]{Shao:2012fn}
M.~Shao and Y.~Lin.
\newblock {Approximating the edit distance for genomes with duplicate genes
  under DCJ, insertion and deletion}.
\newblock \emph{BMC bioinformatics}, 13\penalty0 (Suppl 19):\penalty0 S13,
  2012.
\newblock \doi{10.1038/nrg1895}.
\newblock URL \url{http://www.biomedcentral.com/1471-2105/13/S19/S13}.

\bibitem[Song and Hein(2005)]{Song:2005vj}
Y.~Song and J.~Hein.
\newblock {Constructing Minimal Ancestral Recombination Graphs}.
\newblock \emph{Journal of computational biology : a journal of computational
  molecular cell biology}, 2005.
\newblock URL
  \url{http://www.liebertonline.com/doi/abs/10.1089%2Fcmb.2005.12.147}.

\bibitem[Tannier et~al.(2009)Tannier, Zheng, and Sankoff]{Tannier:2009jp}
E.~Tannier, C.~Zheng, and D.~Sankoff.
\newblock {Multichromosomal median and halving problems under different genomic
  distances}.
\newblock \emph{BMC bioinformatics}, 10:\penalty0 120, 2009.
\newblock \doi{10.1186/1471-2105-10-120}.
\newblock URL \url{http://www.biomedcentral.com/1471-2105/10/120}.

\bibitem[Wang et~al.(2001)Wang, Zhang, and Zhang]{Wang:2001cs}
L.~L. Wang, K.~K. Zhang, and L.~L. Zhang.
\newblock {Perfect phylogenetic networks with recombination.}
\newblock \emph{Journal of computational biology : a journal of computational
  molecular cell biology}, 8\penalty0 (1):\penalty0 69--78, Jan. 2001.
\newblock \doi{10.1089/106652701300099119}.
\newblock URL
  \url{http://eutils.ncbi.nlm.nih.gov/entrez/eutils/elink.fcgi?dbfrom=pubmed&id=11339907&retmode=ref&cmd=prlinks}.

\bibitem[Westesson and Holmes(2009)]{Westesson:2009ja}
O.~Westesson and I.~Holmes.
\newblock {Accurate detection of recombinant breakpoints in whole-genome
  alignments}.
\newblock \emph{PLoS computational biology}, 5\penalty0 (3):\penalty0 e1000318,
  Mar. 2009.
\newblock \doi{10.1371/journal.pcbi.1000318}.
\newblock URL
  \url{http://www.ploscompbiol.org/article/info%3Adoi%2F10.1371%2Fjournal.pcbi.1000318}.

\bibitem[Xu(2009)]{Xu:2009ct}
A.~W. Xu.
\newblock {A fast and exact algorithm for the median of three problem: a graph
  decomposition approach}.
\newblock \emph{Journal of computational biology : a journal of computational
  molecular cell biology}, 16\penalty0 (10):\penalty0 1369--1381, Oct. 2009.
\newblock \doi{10.1089/cmb.2009.0087}.
\newblock URL \url{http://online.liebertpub.com/doi/abs/10.1089/cmb.2009.0087}.

\bibitem[Yancopoulos and Friedberg(2009)]{Yancopoulos:2009ej}
S.~Yancopoulos and R.~Friedberg.
\newblock {DCJ path formulation for genome transformations which include
  insertions, deletions, and duplications.}
\newblock \emph{Journal of computational biology : a journal of computational
  molecular cell biology}, 16\penalty0 (10):\penalty0 1311--1338, Oct. 2009.
\newblock \doi{10.1089/cmb.2009.0092}.
\newblock URL \url{http://dx.doi.org/10.1089/cmb.2009.0092}.

\bibitem[Yancopoulos et~al.(2005)Yancopoulos, Attie, and
  Friedberg]{Yancopoulos:2005bm}
S.~Yancopoulos, O.~Attie, and R.~Friedberg.
\newblock {Efficient sorting of genomic permutations by translocation,
  inversion and block interchange}.
\newblock \emph{Bioinformatics (Oxford, England)}, 21\penalty0 (16):\penalty0
  3340--3346, Aug. 2005.
\newblock \doi{10.1093/bioinformatics/bti535}.
\newblock URL
  \url{http://bioinformatics.oxfordjournals.org/cgi/content/abstract/21/16/3340}.

\end{thebibliography}
\bibliographystyle{plain}

% or include bibliography directly:
% \begin{thebibliography}
% \bibitem{b1}
% \end{thebibliography}

%%%%%%%%%%%%%%%%%%%%%%%%%%%%%%%%%%%
%%                               %%
%% Figures                       %%
%%                               %%
%% NB: this is for captions and  %%
%% Titles. All graphics must be  %%
%% submitted separately and NOT  %%
%% included in the Tex document  %%
%%                               %%
%%%%%%%%%%%%%%%%%%%%%%%%%%%%%%%%%%%

%%
%% Do not use \listoffigures as most will included as separate files

%%Methods figures

\end{document}